\newcommand{\goi}{{\scshape goi}\xspace}
\newcommand{\nat}{{\tt nat}\xspace}
\newcommand{\measurable}[1]{\mathcal{M}(\measure{#1})}
\keywords{Measurable Dynamics, Linear Logic, Realisability, Implicit Computational Complexity, Geometry of Interaction, Game Semantics, Interaction Graphs, Graphings}
\begin{document}

\title{Interaction Graphs: Exponentials}

\author{Thomas Seiller }
\address{CNRS and University Paris 13}
\email{seiller@lipn.univ-paris13.fr}
\thanks{This work was partially supported by the Marie Sk\l odowska-Curie Individual Fellowship (H2020-MSCA-IF-2014) 659920 - ReACT and the ANR project ANR-10-BLAN-0213 LOGOI.}

\begin{abstract}
This paper is the fourth of a series \cite{seiller-goim, seiller-goiadd, seiller-goig} exposing a systematic combinatorial approach to Girard's Geometry of Interaction (\goi) program \cite{towards}. The \goi program aims at obtaining particular realisability models for linear logic that accounts for the dynamics of cut-elimination. This fourth paper tackles the complex issue of defining exponential connectives in this framework. For that purpose, we use the notion of \emph{graphings}, a generalisation of graphs which was defined in earlier work \cite{seiller-goig}. We explain how to define a \goi for Elementary Linear Logic (\ELL) with second-order quantification, a sub-system of linear logic that captures the class of elementary time computable functions.
\end{abstract}

\maketitle

\renewcommand{\sectionautorefname}{Section}
\renewcommand{\subsectionautorefname}{Section}

\tableofcontents

\section{Introduction}

\subsection{Geometry of Interaction}

A Geometry of Interaction (\goi) model, i.e.\ a construction that fulfils the GoI research program \cite{towards}, is in a first approximation a representation of linear logic proofs that accounts for the dynamics of cut-elimination. A proof is no longer a morphism from $A$ to $B$ -- a function from $A$ into $B$ -- but an operator acting on the space $A\oplus B$. As a consequence, the modus ponens is no longer represented by composition. The operation representing cut-elimination, i.e.\ the obtention of a cut-free proof of $B$ from a cut-free proof of $A$ and a cut-free proof of $A\multimap B$, consists in constructing the solution to an equation called the \emph{feedback equation} (illustrated in \autoref{retroaction}). A \goi model hence represents both the proofs and the dynamics of their normalisation. Contrarily to denotational semantics, a proof $\pi$ and its normalised form $\pi'$ are not represented by the same object. However, those remain related, as the normalisation procedure has a semantical counterpart -- the execution formula $\Ex(\cdot)$ -- which satisfies $\Ex(\pi)=\pi'$. This essential difference between denotational semantics and \goi is illustrated in \autoref{denotgoi}. 

\begin{figure}
\centering
\subfigure[Denotational Semantics]{
\centering
\begin{tikzpicture}[x=1.5cm,y=1.5cm]
	\node (A) at (0,0) {$\pi$};
	\node (B) at (2,0) {$\Int{\pi}{}$};
	\node (C) at (0,-1.5) {$\rho$};
	\node (D) at (2,-1.5) {$\Int{\rho}{}$};
	
	\draw[->] (A) -- (B) node [midway,above] {$\Int{\cdot}{}$};
	\draw[->] (C) -- (D) node [midway,above] {$\Int{\cdot}{}$};`
	\draw[->] (A) -- (C) node [midway,above,sloped] {\small{cut}} node [midway,below,sloped] {\small{elimination}};
	\draw[red,double] (B) -- (D) {};

\end{tikzpicture}
}
\subfigure[Geometry of Interaction]{
\centering
\begin{tikzpicture}[x=1.5cm,y=1.5cm]
	\node (A) at (0,0) {$\pi$};
	\node (B) at (2,0) {$\Int{\pi}{}$};
	\node (C) at (0,-1.5) {$\rho$};
	\node (D) at (2,-1.5) {$\Int{\rho}{}$};
	
	\draw[->] (A) -- (B) node [midway,above] {$\Int{\cdot}{}$};
	\draw[->] (C) -- (D) node [midway,above] {$\Int{\cdot}{}$};`
	\draw[->] (A) -- (C) node [midway,above,sloped] {\small{cut}} node [midway,below,sloped] {\small{elimination}};
	\draw[->,red] (B) -- (D) node [midway,above,sloped] {$\Ex(\cdot)$};

\end{tikzpicture}
}
\caption{Denotational Semantics vs Geometry of Interaction}\label{denotgoi}
\end{figure}
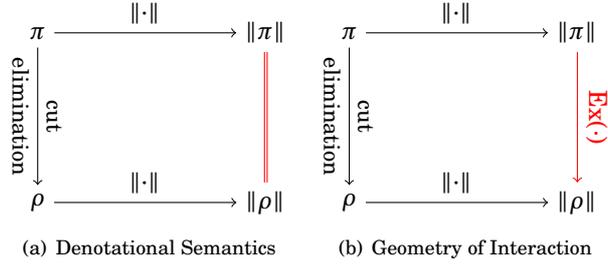

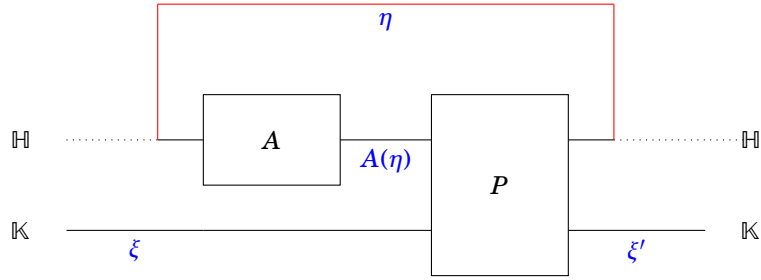
\begin{figure}
\centering
\subfigure[Formal statement]{
\centering
\framebox{
\begin{minipage}{10cm}
$P\in\B{\hil{H\oplus K}}$ represents\footnote{\kern1pt Here, $\hil{H}$ and $\hil{K}$ are separable infinite-dimensional Hilbert spaces, and $\B{\star}$ denotes the set of operators acting on the Hilbert space $\star$: bounded (or, equivalently, continuous) linear maps from $\star$ to $\star$.} a program/proof of implication\\
$A\in\B{\hil{H}}$ represents an argument.\\
$R\in\B{\hil{K}}$ represents the result of the computation if:
$$R(\xi)=\xi'\Leftrightarrow\exists \eta,\eta'\in\hil{H}, \left\{\begin{array}{lcl}P(\eta\oplus \xi)&=&\eta'\oplus \xi'\\A(\eta')&=&\eta\end{array}\right.$$
\vspace{-1em}
\end{minipage}
}
}
\subfigure[Illustration of the equation]{
\centering
\begin{tikzpicture}[x=0.6cm,y=0.6cm]
	\node (aaa) at (1,1.5) {};
	\node (eee) at (1,-5.5) {};
	\draw (1,-2) -- (2,-2) {};
	\draw[dotted] (-1,-2) -- (1,-2) {};
		\node (HHi) at (-2,-2) {$\hil{H}$};
	\draw (-1,-4) -- (2,-4) node [midway,below,blue] {$\xi$};
	\draw (2,-4) -- (7,-4) {};
		\node (HHHi) at (-2,-4) {$\hil{K}$};
	\draw (2,-1) -- (2,-3) -- (5,-3) -- (5,-1) -- (2,-1);
		\node (U) at (3.5,-2) {$A$};
	\draw (7,-1) -- (7,-5) -- (10,-5) -- (10,-1) -- (7,-1);
		\node (V) at (8.5,-3) {$P$};
	\draw (5,-2) -- (7,-2) node [midway,below,blue] {$\eta'$};
	\draw (10,-2) -- (11,-2) {};
	\draw[dotted] (11,-2) -- (14,-2) {};
		\node (HHo) at (14,-2) {$\hil{H}$};
	\draw (10,-4) -- (13,-4) node [midway,below,blue] {$\xi'$};
		\node (HHHo) at (14,-4) {$\hil{K}$};
	\draw[red] (11,-2) -- (11,1) {};
	\draw[red] (11,1) -- (1,1) node [midway,below,blue] {$\eta'$};	
	\draw[red] (1,1) -- (1,-2) {};
\end{tikzpicture}
}
\caption{The Feedback Equation}\label{retroaction}
\end{figure}

The objects under study in a \goi construction are in fact generalisations of (sequent calculus) proof -- paraproofs --, in the same sense the notion of proof structure generalises that of sequent calculus proof\footnote{We recall that all sequent calculus proof gives rise to a proof structure, but not all proof structures are obtained in this way. In particular, those proof structures that are representations of sequent calculus proofs -- called \emph{proof nets} -- are characterised by topological/geometrical properties -- or \emph{correctness criterions}.}. This is reminiscent of game semantics \cite{hylandong,abramsky94full} where not all strategies are interpretations of programs, or Krivine's classical realisability \cite{krivine1,krivine2} where terms containing continuation constants are distinguished from \enquote{proof-like terms}. Let us note here that formal connections between the interpretations of proofs in \goi and game semantics have been studied in the past \cite{baillot-phd}.

Beyond the dynamic interpretation of proofs, the \goi program has a second aim: define by realisability techniques a reconstruction of logical operations from the dynamical model just exposed. This point of view allows a reconstruction of logic as a description of how paraproofs interact; in this aspect \goi differs from game semantics tradition, but seems very close to Krivine's classical realisability constructions (although formal connections are inexistent at this point in time). In some ways a \goi model can be understood as ``discursive syntax'', in which paraproofs are opposed one to another, argue together, each one trying to make its case, to prove the other wrong. Such an argument then terminates if one of them gives up. The discussion itself corresponds to the execution formula, which describes the solution to the feedback equation and generalises the cut-elimination procedure to this generalised notion of proofs. The \enquote{result} of the argument is then described by a notion of \emph{orthogonality}: two paraproofs are orthogonal -- denoted by the symbol $\poll$ -- when this discussion (takes place and) terminates. A notion of formula is then drawn from this notion of orthogonality: a formula is a set of paraproofs $A$ equal to its bi-orthogonal closure $A^{\pol\pol}$ or, equivalently, a set of paraproofs $A=B^{\pol}$ which is the orthogonal to a given set of paraproofs $B$. This realisability construction is an example of Hyland and Schalk's (orthogonality) double-glueing \cite{doubleglueing}.

Drawing some intuitions from the Curry-Howard correspondence, one may propose an alternative reading to this construction in terms of programs. Since proofs correspond to well-behaved programs, paraproofs are a generalisation of those, representing somehow \emph{badly-behaved programs}. If the orthogonality relation represents negation from a logical point of view, it represents a notion of \emph{testing} from a computer science point of view. The notion of formula defined from it corresponds to a notion of type, defined interactively from how (para)programs behave. This point of view is still natural when thinking about programs: a program is of type $\nat\rightarrow\nat$ because it produces a natural number when given a natural number as an argument. On the logical side, this change may be more radical: a proof is a proof of the formula $\Nat\Rightarrow\Nat$ because it produces a proof of $\Nat$ each time it is cut (applied) to a proof of $\Nat$.

Once the notion of type/formula defined, one can reconstruct the connectives: from a ``low-level'' -- between paraproofs -- definition, one obtains a ``high-level'' definition -- between types. For instance, the connective $\otimes$ is first defined between any two paraproofs $\de{a,b}$, and this definition is then extended to types by defining $A\otimes B=\{\de{a}\otimes\de{b}~|~\de{a}\in A,\de{b}\in B\}^{\pol\pol}$. As a consequence, the connectives are not defined in an \emph{ad hoc} way, but their definition is a consequence of their computational meaning: the connectives are defined on proofs/programs and their definition at the level of types is just the reflection of the interaction between the execution -- the dynamics of proofs -- and the low-level definition on paraproofs. Logic thus arises as generated by computation, by the normalisation of proofs: types/formulas are not there to tame the programs/proofs but only to describe their behaviour. This is reminiscent of realisability in the sense that a type is defined as the set of its (para-)proofs. Of course, the fact that we consider a generalised notion of proofs from the beginning has an effect on the construction: contrarily to usual realisability models (except from classical realisability in the sense of Krivine \cite{krivine1,krivine2}), the types $A$ and $A^{\pol}$ (the negation of $A$) are in general both non-empty. This is balanced by the fact that one can define a notion of successful paraproofs, similar to the notion of winning strategy in game semantics. This notion on paraproofs then yields a high-level definition: a formula/type is \emph{true} when it contains a successful paraproof.

\subsection{Interaction Graphs and Graphings}

Throughout the years, a number of works have proposed more combinatorial approached to Girard's operator-based \goi models \cite{RegnierPhD,AspertiDanosLaneveRegnier94,Haghverdi_phd,LaurentTokenGoI,seiller-phd}. It should be noted however that all these approaches except for the last dealt with the proof interpretation aspect of the geometry of interaction program, but did not account for the reconstruction of models of linear logic from the dynamic. The latest of such approaches was introduced by the author under the name ``Interaction Graphs'' \cite{seiller-goim} and defined a combinatorial approach to Girard's geometry of interaction in the hyperfinite factor \cite{goi5}. The initial motivation and the main results of this first paper where that the execution formula -- the counterpart of the cut-elimination procedure -- can be computed as the set of alternating paths between graphs, and that the measurement of interaction defined by Girard using the Fuglede-Kadison determinant \cite{FKdet} can be computed as a measurement of a set of cycles. 

The setting was then extended to deal with additive connectives \cite{seiller-goiadd}, showing by the way that the constructions were a combinatorial approach not only to Girard's hyperfinite \goi construction but also to all the earlier constructions \cite{multiplicatives,goi1,goi2,goi3}. This result could be obtained by unveiling a single geometrical property, which we called the \emph{trefoil property}, upon which all the constructions of geometry of interaction introduced by Girard are built.

In a third paper, we explored a wide generalisation of the graph framework\footnote{This generalisation, or more precisely a fragment of it, already appeared in the author's PhD thesis \cite{seiller-phd}.}. We introduced the notion of \emph{graphing} which we now informally describe. If $(X,\mathcal{B},\mu)$ is a measured space and $\mathfrak{m}$ is a monoid of measurable maps\footnote{For technical reasons, we in fact consider monoids of measurable-preserving non-singular maps \cite{seiller-goig}.} $X\rightarrow X$ (the internal law is composition), then a graphing in $\mathfrak{m}$ is a countable family of restrictions of elements of $\mathfrak{m}$ to measurable subsets. These restrictions of elements of $\mathfrak{m}$ are regarded as edges of a graph \emph{realised} as measurable (partial) maps. We showed that a notion of (alternating) path in a graphing could be defined. As a consequence, one can define the \emph{execution} as the set of alternating paths between graphings, mimicking the corresponding operation of graphs. On the other hand, a more complex argument shows that one can define appropriate measures of cycles in order to insure that the trefoil property holds. From this, we obtained whole hierarchies of models of multiplicative-additive linear logic. The purpose of this paper is to exhibit a family of such models in which one can interpret Elementary Linear Logic \cite{LLL,danosjoinet} with second-order quantification.

It should be noted that since this paper was submitted, the author obtained further results using these techniques, mainly in two directions. On one hand, the author extended the constructions just mentioned in order to obtain models of full linear logic with second-order quantification \cite{seiller-goif}. The latter work however depends on the present paper, and in particular on the careful analysis of the interpretation of types when considering exponential connectives (i.e.\ (co-)perennial, or positive/negative conducts; \autoref{sec_Exp} and \autoref{polarise}).  On the other hand, as envisioned in the previous interaction graphs paper \cite{seiller-goig}, the author obtained first results using graphings-based interaction graphs constructions to characterise a family of non-deterministic sub-linear complexity classes \cite{seiller-goinda}. 

\subsection{Contributions}

The main contribution of this paper in terms of semantics is the definition of the first dynamic semantics of Elementary Linear Logic (\ELL). Although several denotational semantics were defined in the past \cite{} for this fragment of linear logic, the author is not aware of any \emph{dynamic models} being defined, i.e.\ game semantics model or geometry of interaction model. What makes the models defined in this paper particularly interesting is the lack of (explicit\footnote{Due to the natural stratification arising from \ELL syntax, one could probably stratify \emph{a posteriori} the models defined in this paper. However, no a priori stratification is considered in the construction.}) stratification.

It is also important to stress here that the purpose of this work is \emph{not} to obtain a complete model. On the contrary, the specific model described in this paper is an attempt to obtain the \emph{biggest} model of \ELL one can obtain using techniques from interaction graphs. What is shown in fact is that the model of graphings realised by arbitrary affine maps still is a model of \ELL only; recent work \cite{seiller-goif} defining models of full linear logic uses continuous dialect instead of discrete ones, and it is the author belief that this change is necessary to model exponential connectives that are strictly more expressive than the elementary exponentials. Negative results presented in this paper showing that the additional principles  (i.e.\ digging and dereliction) needed to make the model in this paper a model fo full linear logic cannot be interpreted in a satisfactory way (\autoref{nodereliction} and \autoref{nodigging}).

Moreover, the exponential connectives are constructed here in a manner which appears quite different from usual \goi and game semantics definitions. Indeed, while an exponential $\oc a$ is interpreted in geometry of interaction, game semantics, or even denotational semantics, by introducing an infinite number of copies of $a$ through a product with the natural number object $\naturalN$, we here use a product with the unit interval $[0,1]$. While the reader may argue that this also introduces an infinite number of copies of $a$ (and moreover an even greater infinity), let us stress that we are working with measure-theoretic notions. Hence while $\naturalN$ is isomorphic to $\naturalN\times\naturalN$, the space $[0,1]$ is not isomorphic (as a measure space) to $\naturalN\times [0,1]$ although it is isomorphic to products $F\times [0,1]$ where $F$ is a finite set. Therefore, while the former approach can formally be understood as collecting a (countable) infinity of copies of $a$, this is not true of the latter. This difference is essential, and should be understood as a difference between actual infinity and potential infinity: while the product with $\naturalN$ creates an infinite number of copies of $a$, performing the product with $[0,1]$ creates only a potential infinite: one can understand the result as any arbitrary finite number of copies of $a$, but shall never view it as a concrete infinite product\footnote{Of course, one could argue that $[0,1]$ can be expressed as an infinite disjoint union of intervals. Those intervals, however, cannot have all the same measure since the sun of those should converge to $1$.}.

This difference in the definition of exponential connectives provides grounds for another key aspect of the construction, namely its quantitative nature. This aspect appears formally as the fact that all the construction is parametrised by a monoid of weights $\Omega$ and an associated map $m: \Omega\rightarrow \realposN\cup\{\infty\}$. In a recent work \cite{seiller-fock}, the author showed how the monoid $\Omega$ can be formally related to the complete semiring in Laird and al. \cite{quantdenot} weighted relational models, thus allowing to model quantitative aspects. Our models are thus the first quantitative semantics for the fragment \ELL. Moreover, it is known that a crucial step in defining quantitative models is the definition of exponentials, as it usually introduces divergences. Here, divergences are avoided because of the use of measure-theoretic notions: while performing the product with $[0,1]$ (with its usual Lebesgue measure) creates a potential infinity, it does not change the overall measure of the underlying space since $[0,1]$ is of measure $1$.

Lastly, let us mention another contribution of this paper which is specific to geometry of interaction constructions, but important nonetheless. We provide a detailed study of the type constructions induced by exponential connectives, and how those interact with \emph{behaviours} -- types for multiplicative additive linear logic. These results are not related to the specific exponential connectives studied in this paper (and for which we show soundness results), but motivate the definition of the sequent calculi used in this paper. It should be noted that, in particular, the recent construction of models for full linear logic \cite{seiller-goif} makes an essential use of those results.

\subsection{Outline of the paper}

In a first section, we recall some important definitions and properties on directed weighted graphs. This allows us to introduce notation that will be used later on. We then sketch the definition of models of multiplicative-additive linear logic based on graphs \cite{seiller-goiadd}, as well as recall some properties that will be of use in the following sections. These properties are essential to the understanding of the construction of the multiplicative-additive fragment of linear logic in the setting of interaction graphs.

In \autoref{sec_Thick}, we define and study the notion of \emph{thick graphs}, and show how it can be used to interpret the contraction $\cond{\oc A\multimap \oc A\otimes \oc A}$ for some specific formulas $\cond{A}$. This motivates the definition of a \emph{perennisation} $\Omega$ from which one can define an exponential $\cond{A}\mapsto\cond{\oc_{\Omega} A}$. We then explain why it we chose to work with a generalisation of graphs, namely graphings, in order to define perennisations.

In \autoref{sec_Graphings}, we recall the notion of graphing, a generalisation of graphs introduced in an earlier paper \cite{seiller-goig} that allows for the definition of richer models of \mall~ -- allowing in particular the interpretation of second-order quantification. Following the introduction of thick graphs to model contraction as discussed in the previous section, we proceed to study the notion of \emph{thick graphings} that will be used to interpret exponential connectives. We then study the type constructions induced from the consideration of exponential connectives in interaction graphs models. 

Based on this, we introduce and study the specific model of graphings realised by affine maps on the real line (in \autoref{sec_Exp}), and give the definition of an exponential connective defined from a suitable notion of perennisation. We show for this a result which allows us to encode any bijection over the natural numbers as a measure-preserving map over the unit interval of the real line. This result is then used to encode some combinatorics as measure-preserving maps and show that functorial promotion can be implemented for the exponential connective we just defined. We end the section with negative results concerning the interpretation of dereliction and digging in this model.

We then prove a soundness result for a variant (in \autoref{ellbehav}) of Elementary Linear Logic (\ELL) in which one can only write proofs that are somehow ``intuitionnistic''. Indeed, for technical reasons explained later on, the introduction of exponentials (though so-called \emph{perennial conducts}) cannot be performed without being associated to a tensor product. Since the interpretation of elementary time functions in \ELL relies heavily on those proofs that are not intuitionnistic in this sense\footnote{This fact was pointed out to the author by Damiano Mazza.}, this result, though interesting, is not ideal.

Consequently, we introduce (in \autoref{polarise}) a notion of polarities which generalise the notion of \emph{perennial/co-perennial} formulas defined before. The discussion on polarities, which again applies to any interaction graphs models, leads to a refinement of the sequent calculus considered in the previous section which does not suffer from the drawbacks explained above. We then prove a soundness result for the specific model considered in \autoref{sec_Exp} with respect to this calculus.

\section{Interaction Graphs}\label{sec_IG}

\subsection{Basic Definitions}

Departing from the realm of infinite-dimensional vector spaces and linear maps between them, we proposed in previous work \cite{seiller-goim,seiller-goiadd} a graph-theoretical construction of \goi models. We give here a brief overview of the main definitions and results. The graphs we consider are directed and weighted, where the weights are taken in a \emph{weight monoid} $(\Omega,\cdot)$. 

\begin{definition}
A \emph{directed weighted graph} is a tuple $G$, where $V^{G}$ is the set of vertices, $E^{G}$ is the set of edges, $s^{G}$ and $t^{G}$ are two functions from $E^{G}$ to $V^{G}$, the \emph{source} and \emph{target} functions, and $\omega^{G}$ is a function $E^{G} \rightarrow \Omega$.
\end{definition}

In order to ease some later arguments, we first define the \emph{plugging} of two graphs; alternating paths between graphs $F$ and $G$ are then particular paths in the plugging of $F$ and $G$. The plugging is a simple union of graphs together with a colouring function on edges that is used to remember to which graph the edge initially belonged. Here and in the rest of the paper, we denote by $\cup$ the usual set-theoretic union, and by $\disjun$ the disjoint union of sets which, as a categorical coproduct, gives rise to a \emph{copairing} operation $\disjun$ on maps.

\begin{definition}
The \emph{plugging} $F\bicol G$ of two directed weighted graphs $F,G$ is defined as the graph
\[ (V^{F}\cup V^{G},E^{F}\disjun E^{G},s^{F}\disjun s^{G},t^{F}\disjun t^{G},\omega^{F}\disjun \omega^{G}) \] 
together with a colouring function $\delta: E^{F}\disjun E^{G}\rightarrow \{0,1\}$ defined by $\delta(e)=0$ if and only if $e\in E^{F}$.
\end{definition}

The construction is centred around the notion of alternating paths. Given two graphs $F$ and $G$, an alternating path is a path $e_{1}\dots e_{n}$ such that $e_{i}\in E^{F}$ if and only if $e_{i+1}\in E^{G}$. The set of alternating paths will be used to define the interpretation of cut-elimination in the framework, i.e.\ the graph $F\plug G$ -- the \emph{execution of $F$ and $G$} -- is defined as the graph of alternating paths between $F$ and $G$ whose source and target are in the symmetric difference $V^{F}\Delta V^{G}$. The weight of a path is naturally defined as the product of the weights of the edges it contains. 

\begin{definition}
Let $F,G$ be directed weighted graphs. The set of alternating paths between $F$ and $G$ is the set of paths $e_{0},e_{1},\dots,e_{n}$ in $F\bicol G$ such that $\forall i\in\{0,\dots,n-1\}$, $\delta(e_{i})\neq\delta(e_{i+1})$. We write $\enpaths{F,G}$ the set of such paths, and $\enpaths{F,G}_{V}$ the subset of $\enpaths{F,G}$ containing the paths whose source and target lie in $V$.
\end{definition}

\begin{definition}
The execution $F\plug G$ of two directed weighted graphs $F$ and $G$ is the graph defined by:
\begin{eqnarray*}
V^{F\plug G}&=&V^{F}\Delta V^{G}\\
E^{F\plug G}&=&\enpaths{F,G}_{V^{F\plug G}}
\end{eqnarray*}
where the source and target maps are naturally defined, and the weight of a path is the product of the weights of the edges it is composed of.
\end{definition}

As it is usual in mathematics, this notion of paths cannot be considered without the associated notion of cycle: an \emph{alternating cycle} between two graphs $F$ and $G$ is a cycle, i.e.\ a path $e_{0},e_{1},\dots,e_{n}$ such that $s(e_{0})=t(e_{n})$, which is furthermore alternating, i.e.\ $e_{0},e_{1},\dots,e_{n}$ is an alternating path and $\delta(e_{0})\neq\delta(e_{n})$. For technical reasons, we actually consider the notion of circuit, and the related notion of $1$-circuit. A \emph{circuit} is an equivalence class of cycles w.r.t. the action of translations. i.e.\ if $\pi=e_{0}\dots e_{n-1}$ is a cycle, then a translation by an integer $k$ is given as $\sigma(\pi)=e_{0\bar{+}k}\dots e_{n-1\bar{+}k}$ (the symbol $\bar{+}$ denotes addition in $\integerN/n\integerN$), which is itself a cycle. A circuit is the (finite) collection of all such cycles, in some ways a geometric notion of cycle in which the starting edge is not taken into account. A $1$-circuit is then a circuit which is not a proper power of a smaller circuit. 

\begin{definition}
A \emph{$1$-circuit} is an alternating circuit $\pi=e_{0}\dots e_{n-1}$ which is not a proper power of a smaller circuit. In mathematical terms, there do not exists a circuit $\rho$ and an integer $k$ such that $\pi=\rho^{k}$, where the power represents iterated concatenation.
\end{definition}

We denote by $\uncircuits{F,G}$ the set of $1$-circuits in the following. We showed that these notions of paths and cycles satisfy a property we call the \emph{trefoil property} which turned out to be fundamental. The (geometric) trefoil property states that there exists weight-preserving bijections:
\begin{equation*}
\uncircuits{F\plug G,H}\cup\uncircuits{F,G}\cong\uncircuits{G\plug H,F}\cup\uncircuits{G,H}\cong\uncircuits{H\plug F,G}\cup\uncircuits{H,F}
\end{equation*}

\begin{remark}
Intuitively, the trefoil property states the equivalence between three ways of defining the set of all alternating circuits between three graphs $F,G,H$. For instance, the left-hand expression above defines the set of all alternating circuits between $F$ and $G$ together with the set of all alternating circuits between $H$ and the graph of alternating paths between $F$ and $G$. The right-hand expression defines the set of all alternating circuits between $H$ and $F$ together with the alternating circuits between $G$ and the graph of alternating paths between $H$ and $F$. These two expressions are in fact equal to the set of all alternating circuits between the three graphs $F,G,H$ (that can be formally defined through a 3-coloured graph and a similar notion of alternation as in the binary case). One essential ingredient for this property to hold is however the consideration of \emph{circuits} instead of cycles, as the size of a given circuit may vary depending on the way one counts it, i.e.\ a cycle may be of length $2$ in the left-hand expression, but of length $4$ in the right-hand one (this is because alternating paths in the execution are considered as a single edge). Hence while the corresponding sets of cycles are not in bijection, the sets of circuits are.
\end{remark}

Based only on the trefoil property, we were able to define models of the multiplicative fragment of Linear Logic, models furthermore fulfilling the \goi research program. This construction is moreover parametrised by a map from the set $\Omega$ to $\mathbb{R}_{\geqslant 0}\cup\{\infty\}$, and therefore yields not only one but a whole family of models. This parameter is introduced to define the notion of orthogonality in our setting, a notion that account for linear negation. Indeed, given a tracial map\footnote{We will call those \emph{circuit-quantifying maps} in this paper. Although this notion is quite simple in the case of graphs and may not require a specific terminology, the notion of circuit-quantifying maps for graphings is quite involved and we opted for an homogeneous presentation.} $m: \Omega\rightarrow \mathbb{R}_{\geqslant 0}\cup\{\infty\}$  (i.e.\ $m(ab)=m(ba)$) and two graphs $F,G$ we define $\meas{F,G}$ as the sum $\sum_{\pi\in\uncircuits{F,G}} m(\omega(\pi))$, where $\omega(\pi)$ is the weight of the cycle $\pi$. The orthogonality is then constructed from this measurement, and the geometric trefoil property gives rise to the \emph{numerical} trefoil property:
\[ \meas{F\plug G,H}+\meas{F,G} = \meas{G\plug H,F}+\meas{G,H} = \meas{H\plug F,G}+\meas{H,F}  \]

We moreover showed how, from any of these constructions, one can obtain a $\ast$-autonomous category \catmll{} with $\parr\not\cong\otimes$ and $1\not\cong\bot$, i.e.\ a non-degenerate denotational semantics for Multiplicative Linear Logic (\MLL). This model can then be extended to deal with additive connectives by considering a generalisation of graphs named \emph{sliced graphs}. 

\begin{definition}
A \emph{sliced (directed weighted) graph} $A$ of carrier $V^{A}$ is a finite formal weighted sum $\sum_{i\in I^{A}} \alpha^{A}_{i} A_{i}$, where each $\alpha_{i}^{A}$ is a real number, and $A_{i}$ is a graph such that $V^{A_{i}}=V^{A}$. We define $\unit{A}=\sum_{i\in I^{A}}\alpha^{A}_{i}$. We will refer to the subgraphs $A_{i}$ as \emph{slices} following terminology from additive proof nets \cite{proofnets}, and we identify a graph $A$ with the sliced graph $1.A$.
\end{definition}

\noindent The measurement defined above on graphs is extended as follows to sliced graphs:
\[ \meas{A,B}=\sum_{i\in I^{A}}\sum_{j\in I^{B}} \alpha_{i}^{A}\alpha^{B}_{j}\meas{A_{i},B_{j}} \]
The corresponding numerical trefoil property is then expressed as follows:
\[ \meas{F\plug G,H}+\unit{H}\meas{F,G} = \meas{G\plug H,F}+\unit{F}\meas{G,H} = \meas{H\plug F,G}+\unit{G}\meas{H,F}  \]

The formal sum is then used to define additive connectives. However, as in all the versions of \goi dealing with those connectives, our construction of additives does not define a categorical product. We solve this issue by introducing a notion of \emph{observational equivalence} within the model. We are then able to define a categorical product from our additive connectives when considering classes of observationally equivalent objects, thus obtaining a denotational semantics for Multiplicative Additive Linear Logic (\MALL).

\subsection{Models of \MALL in a Nutshell}\label{recallmall}

We recall the basic definitions of projects, and behaviours, which will be respectively used to interpret proofs and formulas, as well as the definition of connectives.

\begin{itemize}[noitemsep,nolistsep]
\item a \emph{project} of carrier $V^{A}$ is a triple $\de{a}=(a,V^{A},A)$, where $a$ is a real number, and $A$ is a sliced graph of carrier $V^{A}$; 
\item for all set $V$, we denote by $\de{0}_{V}$ the project $(0,V,\emptyset_{V})$ where $\emptyset_{V}$ denotes the empty graph of support $V$;
\item given a project $(a,A)$ and a real number $\lambda$, we define the project $\lambda\de{a}$ as the project $(\lambda a,\lambda A)$ where $\lambda A$ stands for $\sum_{i\in I^{A}}\lambda \alpha_{i}^{A}A_{i}$;
\item given two projects $(a,A)$ and $(b,B)$ of equal support $V^{A}$, we define their sum $\de{a+b}$ as the project $(a+b,A+B)$ (the sum $A+B$ of formal sums is defined in the obvious way);
\item two projects $\de{a,b}$ are \emph{orthogonal}, noted $\de{a}\poll\de{b}$, when:
$$\sca{a}{b}=a\unit{B}+b\unit{A}+\meas{A,B}\neq 0,\infty$$
\item the orthogonal of a set of project $E$ is defined as $E^{\pol}=\{\de{a}~|~\forall \de{e}\in E, \de{a}\poll\de{e}\}$; we write $E^{\pol\pol}=(E^{\pol})^{\pol}$ the \emph{bi-orthogonal} closure of $E$;
\item the \emph{execution} of two projects $\de{a,b}$ is defined as:
$$\de{a\plug b}=(\sca{a}{b},V^{A}\Delta V^{B},\sum_{i\in I^{A}}\sum_{j\in I^{B}} \alpha^{A}_{i}\alpha^{B}_{j} A_{i}\plug B_{j})$$
\item if $\de{a}$ is a project and $V$ is a measurable set such that $V^{A}\subset V$, we define the extension $\de{a}_{\uparrow V}$ as the project $(a,V,A)$;
\item a \emph{conduct} $\cond{A}$ of carrier $V^{A}$ is a set of projects of carrier $V^{A}$ which is equal to its bi-orthogonal $\cond{A}^{\pol\pol}$;
\item a \emph{behaviour} $\cond{A}$ of carrier $V^{A}$ is a conduct such that for all $\lambda\in\realN$, 
$$\begin{array}{rcl}\de{a}\in\cond{A} &\Rightarrow& \de{a+\lambda 0}_{V^{A}}\in\cond{A}\\
\de{b}\in\cond{A}^{\pol} &\Rightarrow& \de{b+\lambda 0}_{V^{A}}\in\cond{A}^{\pol}\end{array}$$
\item we define, for every measurable set the \emph{empty} behaviour of carrier $V$ as the empty set $\cond{0}_{V}$, and the \emph{full behaviour} of carrier $V$ as its orthogonal $\cond{T}_{V}=\{\de{a}~|~\de{a}\text{ of support }V\}$;
\item if $\cond{A,B}$ are two behaviours of disjoint carriers, we define:
\begin{eqnarray*}
\cond{A \otimes B}\footnotemark&=&\{\de{a\plug b}~|~\de{a}\in\cond{A},\de{b}\in\cond{B}\}^{\pol\pol}\\
\cond{A \multimap B}&=&\{\de{f}~|~\forall\de{a}\in\cond{A}, \de{f\plug a}\in\cond{B}\}\\
\cond{A \oplus B}&=&(\{\de{a}_{\uparrow V^{A}\cup V^{B}}~|~\de{a}\in\cond{A}\}^{\pol\pol}\cup \{\de{b}_{\uparrow V^{A}\cup V^{B}}~|~\de{b}\in\cond{B}\}^{\pol\pol})^{\pol\pol}\\
\cond{A \with B}&=&\{\de{a}_{\uparrow V^{A}\cup V^{B}}~|~\de{a}\in\cond{A^{\pol}}\}^{\pol}\cap \{\de{b}_{\uparrow V^{A}\cup V^{B}}~|~\de{b}\in\cond{B}^{\pol}\}^{\pol}
\end{eqnarray*}
\footnotetext{Notice that the point-wise operation in the tensor product is shown as $\de{a\plug b}$ but since we are in the specific case where $\de{a}$ and $\de{b}$ have disjoint supports, the project $\de{a\plug b}$ is equal to a \enquote{slice-wise disjoint sum}: $\de{a\otimes b}=(a+b,\sum_{i\in I^{A}}\sum_{j\in I^{B}} \alpha_{i}^{A}\alpha^{B}_{j}A_{i}\disjun B_{j})$. We thus recover the usual definition of the tensor product of Geometry of Interaction. For practical purposes, and to avoid proving twice the same results -- for $\plug$ and for $\otimes$ --, it is however useful to work with this definition and consider the operation $\otimes$ as a notation for $\plug$ in the case the projects are of disjoint supports.}
\item two elements $\de{a,b}$ of a conduct $\cond{A}$ are \emph{observationally equivalent} when:
\begin{equation*}
\forall \de{c}\in\cond{A}^{\pol},~\sca{a}{c}=\sca{b}{c}
\end{equation*} 
\end{itemize}

One important point in this work is the fact that all results rely on a single geometric property, namely the previously introduced \emph{trefoil property} which describes how the sets of $1$-circuits evolve during an execution. This property insures on its own the four following facts:
\begin{itemize}[noitemsep,nolistsep]
\item we obtain a $\ast$-autonomous category \catmll{} whose objects are conducts and morphisms are projects;
\item the observational equivalence is a congruence on this category;
\item the quotient category \concat{} inherits the $\ast$-autonomous structure;
\item the quotient category \concat{} has a full subcategory \behcat{} with products whose objects are behaviours.
\end{itemize}
This can be summarised in the following two theorems.

\begin{theorem}
For any map $m:\Omega\rightarrow\realN\cup\{\infty\}$, the categories \concat and \catmll{} are non-degenerate categorical models of Multiplicative Linear Logic with multiplicative units.
\end{theorem}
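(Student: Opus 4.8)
The plan is to recall the construction of \catmll{} from \cite{seiller-goia} in the graphing setting and to verify the categorical axioms, each of which I expect to reduce to the trefoil property. First I would take conducts as objects and, for conducts $\cond{A},\cond{B}$ of disjoint carriers, define the morphisms from $\cond{A}$ to $\cond{B}$ to be the projects belonging to $\cond{A\multimap B}$; composition of $\de{f}\in\cond{A\multimap B}$ with $\de{g}\in\cond{B\multimap C}$ will be the execution $\de{f\plug g}$, and the identity on $\cond{A}$ the project of null scalar whose graphing is the symmetry exchanging two disjoint copies of $V^{A}$. The three things to check are then that $\de{f\plug g}$ indeed lies in $\cond{A\multimap C}$, that composition is associative, and that the identities are neutral.

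Associativity will be the crux, and this is exactly where the trefoil property does the work. Up to scalar bookkeeping, $(\de{f}\plug\de{g})\plug\de{h}$ and $\de{f}\plug(\de{g}\plug\de{h})$ should have the same graphing component because alternating paths glue associatively, and the same scalar component because the discrepancy between the two orders of accumulating the cycle measurements $\meas{\cdot}{\cdot}$ is governed precisely by the weight-preserving bijections $\uncircuits{F\plug G,H}\cup\uncircuits{F,G}\cong\uncircuits{G\plug H,F}\cup\uncircuits{G,H}$. That composition respects the conducts will follow from the adjunction-style definition of $\multimap$ together with the same associativity. After that I would put the recalled $\otimes$ on objects, define it on morphisms by execution with disjoint carriers, produce the associator, unitors and symmetry as the evident axiom-like projects, and check the coherence diagrams — routine once associativity is available. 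The closed structure is $\multimap$, the currying bijection $\catmll(\cond{A}\otimes\cond{B},\cond{C})\cong\catmll(\cond{A},\cond{B}\multimap\cond{C})$ being immediate from the definitions, and $\ast$-autonomy coming from the involutive duality $\cond{A}\mapsto\cond{A}^{\pol}$ together with $\cond{A\multimap B}=(\cond{A}\otimes\cond{B}^{\pol})^{\pol}$. For \concat{} I would quotient each hom-set by observational equivalence; that this is a congruence compatible with $\otimes$, $\multimap$ and $(\cdot)^{\pol}$ is again a consequence of the trefoil property, since the pairing $\sca{\cdot}{\cdot}$ is preserved through execution.

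For non-degeneracy the trefoil property is of no use and one must compute. I would exhibit explicit projects realizing the two multiplicative units and show, by a weight or carrier argument on the $1$-circuits that an alleged isomorphism would have to create, that $1$ and $\bot$ are not isomorphic objects; a similar computation on a well-chosen pair $\cond{A},\cond{B}$ will separate $\cond{A}\otimes\cond{B}$ from $\cond{A}\parr\cond{B}$, as in \cite{seiller-goia}. The main obstacle I anticipate is not any single step but the bookkeeping throughout — keeping the scalar components, the carriers, and the cycle measurements synchronized, and isolating for each categorical axiom the exact instance of the trefoil property it relies on.
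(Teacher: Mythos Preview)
Your outline is essentially the right one and matches the approach taken in \cite{seiller-goia}; note, however, that the present paper does not actually prove this theorem --- it is stated here as a recall of results established in the earlier Interaction Graphs papers, with the paragraph just before it explicitly summarising that everything listed follows from the trefoil property alone. So there is no ``paper's own proof'' to compare against beyond that one-line reduction to prior work, and your plan is precisely the content of that prior work.

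One small point worth tightening: your description of identities glosses over the locative bookkeeping. Since conducts come with fixed carriers and $\cond{A\multimap B}$ requires $V^{A}\cap V^{B}=\emptyset$, one cannot literally take the identity on $\cond{A}$ to live in $\cond{A\multimap A}$. The construction in \cite{seiller-goia} handles this via delocations: hom-sets are defined up to choice of disjoint delocated copies, identities are the $\de{Fax}$ projects between two such copies, and one checks that changing the delocations yields observationally (indeed universally) equivalent projects. This is routine, but it is the one place where your sketch is slightly imprecise.
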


\begin{theorem}
For any map $m:\Omega\rightarrow\realN\cup\{\infty\}$, the full subcategory \behcat{} of \concat{} is a non-degenerate categorical model of Multiplicative-Additive Linear Logic with additive units.
\end{theorem}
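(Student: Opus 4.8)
The plan is to obtain the theorem by bootstrapping from the multiplicative model. By the previous theorem \concat{} is already a non-degenerate $\ast$-autonomous category, hence a model of MLL with units, and observational equivalence is a congruence on \catmll{} whose quotient is \concat{}. So there are three things to establish. First, that the full subcategory \behcat{} spanned by behaviors is stable under all the connectives $\otimes$, $\parr$, $\multimap$, $\oplus$, $\with$ and contains the units $\cond{1}$, $\cond{\bot}$, $\cond{0}_V$, $\cond{T}_V$, so that it inherits the $\ast$-autonomous structure and carries the additive connectives. Second --- the crux --- that $\with$ is a categorical product in \behcat{}, with $\cond{T}_V$ terminal, and dually that $\oplus$ is a coproduct with $\cond{0}_V$ initial. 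Third, that the model so obtained is non-degenerate.

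For the first point I would check directly from the defining equations that the connectives produce behaviors, i.e. that both a conduct and its orthogonal are stable under $\de{a}\mapsto\de{a+\lambda 0}$. For the bi-orthogonally closed connectives this reduces, via the identity $\sca{a+\lambda 0}{c}=\sca{a}{c}+\lambda\sum_{i\in I^{C}}\alpha^{C}_{i}$, to showing that the generating families are stable modulo bi-orthogonal closure, which is routine; the De Morgan dualities $\cond{(A\otimes B)}^\pol=\cond{A^\pol\parr B^\pol}$ and $\cond{(A\oplus B)}^\pol=\cond{A^\pol\with B^\pol}$ drop out of the same computation and take care of $\multimap$, $\parr$ and $\with$. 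That $\cond{0}_V$ and $\cond{T}_V$ are behaviors and mutually orthogonal is immediate from the definitions.

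The second point is where the real work lies. For behaviors $\cond{A}$, $\cond{B}$ of disjoint carriers I would exhibit projections $\de{\pi_A}\in\cond{A\with B\multimap A}$, $\de{\pi_B}\in\cond{A\with B\multimap B}$ and a pairing operator taking $\de{f}\in\cond{C\multimap A}$ and $\de{g}\in\cond{C\multimap B}$ to some $\de{p}\in\cond{C\multimap A\with B}$, all built from the formal-sum structure on weighted sums of graphings exactly as in the additive construction \cite{seiller-goia}; then I would verify the universal property \emph{up to observational equivalence}: $\de{\pi_A\plug p}\sim\de{f}$, $\de{\pi_B\plug p}\sim\de{g}$, and any $\de{h}\in\cond{C\multimap A\with B}$ with $\de{\pi_A\plug h}\sim\de{f}$ and $\de{\pi_B\plug h}\sim\de{g}$ satisfies $\de{h}\sim\de{p}$. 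The commutations and the uniqueness are bookkeeping on the scalars $\sca{\cdot}{\cdot}$ and the measurements $\meas{\cdot,\cdot}$; the key conceptual points are that passing to the quotient by $\sim$ collapses exactly the surplus data obstructing the product in \catmll{}, and that the behavior closure conditions let wagers be absorbed in these computations. All of this is licensed by the trefoil property, which is known to hold for graphings \cite{seiller-goig}, so the argument of \cite{seiller-goia} transfers with graphs replaced by graphings; the degenerate cases $\cond{A}=\cond{T}$ and $\cond{B}=\cond{0}$ give that $\cond{T}_V$ is terminal and $\cond{0}_V$ initial.

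Finally, non-degeneracy: $\cond{1}\not\cong\cond{\bot}$ and $\parr\not\cong\otimes$ are inherited from the multiplicative theorem, and $\cond{0}_V\not\cong\cond{T}_V$ because $\cond{T_V\multimap 0_V}$ is empty (no execution $\de{f\plug a}$ can land in the empty behavior) while $\cond{0_V\multimap T_V}$ is not, so there is no iso between the additive units. I expect the main obstacle to be precisely the product step: realizing the pairing as a genuine morphism \emph{of behaviors} and proving its uniqueness modulo observational equivalence, together with the carrier bookkeeping (delocalization) that this forces, since the connectives and the notion of morphism are defined only relative to a choice of carrier --- a point I would dispatch by the usual delocalization argument, working with objects up to measure-preserving renaming of their carriers.
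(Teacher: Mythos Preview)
Your overall strategy is the one the paper intends: the theorem is not proved afresh here but recalled from \cite{seiller-goia}, and the four bullets preceding the two theorems are exactly your outline --- trefoil property gives the $\ast$-autonomous \catmll{}, observational equivalence is a congruence, the quotient \concat{} inherits the structure, and on the subcategory of behaviors the additive $\with$ becomes a genuine product. Your sketch of the product step (projections, pairing, uniqueness up to observational equivalence, all driven by trefoil bookkeeping) is correct in spirit and matches \cite{seiller-goia}.

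There is, however, a concrete error in your first step. You claim that \behcat{} contains the multiplicative units $\cond{1}$ and $\cond{\bot}$. It does not. The paper is explicit about this: see the discussion around \autoref{catmodels}, where $\cond{1}$ and $\cond{\bot}$ are drawn in \concat{} but \emph{outside} \behcat{}, and the text states that \behcat{} ``does not contain the multiplicative units''. Later the paper shows $\cond{1}=\oc\cond{T}$ is a \emph{perennial} conduct, and perennial conducts are never behaviors (the daemons $(a,\emptyset)$ with $a\neq 0$ lie in $\cond{\bot}$, so $\cond{\bot}$ fails the wager-freeness forced on orthogonals of inflation-closed sets). This is precisely why the theorem only promises \emph{additive} units. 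Your non-degeneracy argument must therefore drop the clause $\cond{1}\not\cong\cond{\bot}$; what you need inside \behcat{} is $\otimes\neq\parr$, $\oplus\neq\with$, and $\cond{0}\neq\cond{T}$, as the paper states. With that correction, and with the understanding that \behcat{} inherits closure under $\otimes,\parr,\multimap,\with,\oplus,(\cdot)^{\pol}$ and the additive units only, the rest of your plan goes through.
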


The categorical model we obtain has two layers (see \autoref{catmodels}). The first layer consists in a non-degenerate (i.e.\ $\otimes\neq\parr$ and $\cond{1}\neq\cond{\bot}$) $\ast$-autonomous category \concat{}\!\!, hence a denotational model for \MLL with units. The second layer is the full subcategory \behcat which does not contain the multiplicative units but is a non-degenerate model (i.e.\ $\otimes\neq\parr$, $\oplus\neq\with$ and $\cond{0}\neq\cond{\top}$) of \MALL with additive units that does not satisfy the mix and weakening rules.

\begin{figure}
\centering
\begin{tikzpicture}[x=1.2cm,y=0.25cm]
	\draw[fill,opacity=0.1] (0,0) .. controls  (0,7.5) and (0.5,8) .. (5,8) .. controls (9.5,8) and (10,7.5) .. (10,0) .. controls (10,-7.5) and (9.5,-8) .. (5,-8) .. controls (0.5,-8) and (0,-7.5) .. (0,0) ;
		\node (A) at (2,0) {\begin{tabular}{c}\small{\concat}\\\small{($\ast$-autonomous)}\end{tabular}};
	\draw[fill,opacity=0.2]  (3.5,0) .. controls (3.5,5.5) and (4,6) .. (5.5,6) .. controls (8,6) and (8.5,5.5) .. (8.5,0) .. controls (8.5,-5.5) and (8,-6) .. (7,-6) .. controls (4,-6) and (3.5,-5.5) .. (3.5,0);
		\node (B) at (6,0) {\begin{tabular}{c}\small{\behcat}\\\small{(closed under $\otimes,\multimap,\with,\oplus,(\cdot)^{\pol}$)}\\[2em]\small{NO weakening, NO mix}\end{tabular}};
	\node (bot) at (1.5,-4) {$\bullet_{\bot}$};
	\node (one) at (2.5,-4) {$\bullet_{\cond{1}}$};
	
	\node (top) at (5.5,-1) {$\bullet_{\cond{T}}$};
	\node (zero) at (6.5,-1) {$\bullet_{\cond{0}}$};
\end{tikzpicture}
\caption{The categorical models}\label{catmodels}
\end{figure}
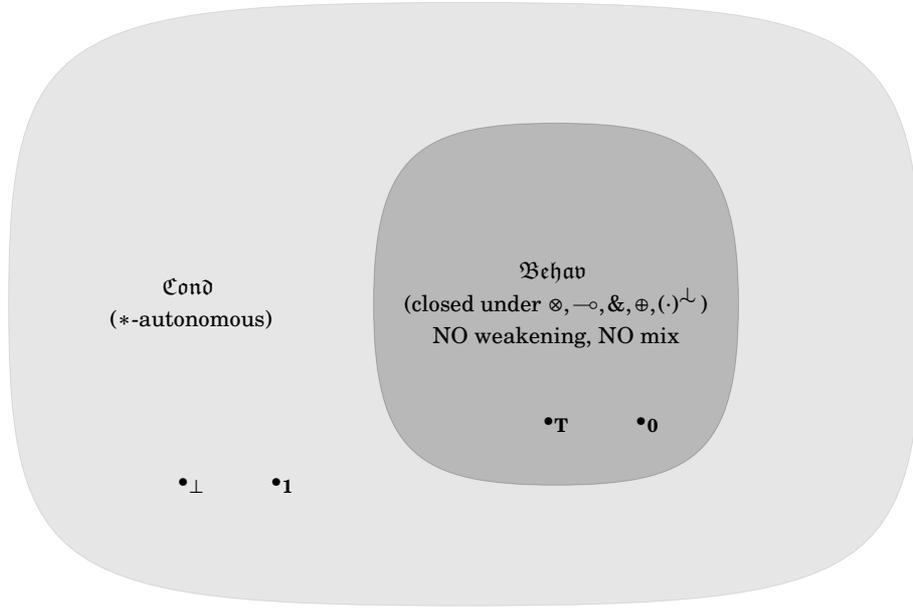

We here recall some technical results obtained in our paper on additives \cite{seiller-goiadd} and that will be useful in the remaining sections.

\begin{proposition}\label{fellorth}
If $A$ is a non-empty set of projects of same carrier $V^{A}$ such that $(a,A)\in A$ implies $a=0$, then $\de{b}\in A^{\pol}$ implies $\de{b}+\lambda\de{0}_{V^{A}}\in A^{\pol}$ for all $\lambda\in\mathbb{R}$.
\end{proposition}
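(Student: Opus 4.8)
\emph{Proof proposal.} The plan is to prove this by a direct computation: unfold the definition of $A^{\pol}$ and of the orthogonality pairing $\sca{\cdot}{\cdot}$, and check that replacing $\de{b}$ by $\de{b}+\lambda\de{0}_{V^{A}}$ changes its pairing with any element of $A$ by a quantity proportional to the first component (the wager) of that element, which is $0$ by hypothesis.

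First I would record precisely what the perturbation does. Writing $\de{b}=(b,V^{A},B)$ with $B=\sum_{j\in I^{B}}\alpha^{B}_{j}B_{j}$, the project $\de{b}+\lambda\de{0}_{V^{A}}$ keeps the carrier $V^{A}$ (so that membership in $A^{\pol}$ is a meaningful question) and keeps the wager $b$; it differs from $\de{b}$ only in its graphing component, to which the term $\lambda\mathbf{0}_{V^{A}}$ is adjoined, where $\mathbf{0}_{V^{A}}$ is the edgeless graphing on $V^{A}$. The two features of this added term that will be used are: it raises the total weight $\sum_{j}\alpha^{B}_{j}$ by $\lambda$, and, having no edges, it forms no alternating cycle --- hence no $1$-circuit --- with any graphing whatsoever, so that $\meas{G,\mathbf{0}_{V^{A}}}=0$ for every $G$.

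Then I would fix an arbitrary $\de{a}=(a,V^{A},A)\in A$ and expand $\sca{a}{b+\lambda\de{0}_{V^{A}}}$ using the defining formula, comparing it term by term with $\sca{a}{b}$. The cycle-measurement part is untouched, by the second feature above; the summand carrying the wager $b$ is untouched; and the unique summand affected is the one in which the total weight of the right-hand project occurs, that weight being multiplied there by the wager $a$ of the left-hand project --- so it gains exactly $a\lambda$. Hence $\sca{a}{b+\lambda\de{0}_{V^{A}}}=\sca{a}{b}+a\lambda$. Now the hypothesis applies: since $\de{a}\in A$ we have $a=0$, whence $\sca{a}{b+\lambda\de{0}_{V^{A}}}=\sca{a}{b}$, which is neither $0$ nor $\infty$ because $\de{b}\in A^{\pol}$. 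As $\de{a}$ was arbitrary, $\de{b}+\lambda\de{0}_{V^{A}}\in A^{\pol}$. Non-emptiness of $A$ is used only to guarantee that $V^{A}$, and with it the ambient carrier of $A^{\pol}$, is well defined.

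The argument is entirely elementary; the one step that deserves care is this last bit of bookkeeping --- one must verify, from the precise definitions of $\de{0}_{V^{A}}$ and of $\sca{\cdot}{\cdot}$, exactly which summand absorbs the perturbation, and confirm that the adjoined edgeless graphing really contributes nothing to the $1$-circuits measured by $\meas{\cdot,\cdot}$. Once that is settled, the hypothesis ``$a=0$ throughout $A$'' annihilates the only $\lambda$-dependent contribution, and nothing about $A$ beyond that wager condition is needed: in particular no appeal to the trefoil property or to biorthogonal closure enters the proof.
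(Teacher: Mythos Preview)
Your argument is correct and is exactly the natural direct computation: expand $\sca{a}{b+\lambda\de{0}}$, observe that the edgeless slice contributes nothing to the measurement term and only shifts $\unit{B}$ by $\lambda$, so the pairing changes by $a\lambda$, which vanishes under the wager-free hypothesis. Note that the present paper does not actually prove this proposition --- it is recalled verbatim from the earlier paper \cite{seiller-goia} on additives --- so there is no in-paper proof to compare against; your computation is precisely the intended one.
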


\begin{proposition}\label{wagerfreeorth}
If $A$ is a non-empty set of projects of carrier $V$ such that $\de{a}\in A\Rightarrow \de{a+\lambda 0}_{V}\in A$, then any project in $A^{\pol}$ is wager-free, i.e.\ if $(a,A)\in A^{\pol}$ then $a=0$.
\end{proposition}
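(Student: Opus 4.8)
Let $\de{c}=(c,V,C)\in A^\pol$; the task is to prove $c=0$. The plan is to confront $\de{c}$ not with a single member of $A$ but with a whole affine line of them. Since $A\neq\emptyset$, I would fix some $\de{a}=(a,V,G)\in A$; by hypothesis $\de{a}+\lambda\de{0}_V\in A$ for every $\lambda\in\realN$, so each $\de{a}+\lambda\de{0}_V$ is orthogonal to $\de{c}$, i.e. $\sca{a+\lambda 0}{c}\neq 0,\infty$ for all $\lambda\in\realN$. Everything then reduces to writing $\sca{a+\lambda 0}{c}$ as an affine function of $\lambda$ and invoking elementary surjectivity.

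To get that function, I would unwind the effect of $\de{0}_V$. Passing from $\de{a}$ to $\de{a}+\lambda\de{0}_V$ merely adds $\lambda$ copies of the edgeless graph to the graph part of $\de{a}$: the wager $a$ is untouched; the coefficient-sum $\sum_i\alpha^G_i$ of the graph part is raised by $\lambda$; and, the edgeless graph forming no circuit with any graph, none of the measurement terms $\meas{G_i,C_j}$ changes. Inspecting the pairing $\sca{\cdot}{\cdot}$, the coefficient-sum of the first argument's graph appears there only in the summand weighting the \emph{second} argument's wager; hence $\sca{a+\lambda 0}{c}=\sca{a}{c}+\lambda c$. Taking $\lambda=0$ --- legitimate, since $\de{a}\in A$ --- also shows $\sca{a}{c}$ to be finite and nonzero.

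For the conclusion I would argue by contradiction. If $c\neq 0$, then $\lambda\mapsto\sca{a}{c}+\lambda c$ is an affine function $\realN\to\realN$ of nonzero slope, hence surjective, and it vanishes at $\lambda_0=-\sca{a}{c}/c$. But $\de{a}+\lambda_0\de{0}_V\in A$ while $\de{c}\in A^\pol$, so $\sca{a+\lambda_0 0}{c}\neq 0$ --- a contradiction. Thus $c=0$. The forbidden value $\infty$ never gets in the way, since orthogonality of $\de{a}$ with $\de{c}$ already makes $\sca{a}{c}$ finite, so the affine function above is genuinely $\realN$-valued.

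The step I expect to require the most care is the identity $\sca{a+\lambda 0}{c}=\sca{a}{c}+\lambda c$. It hinges on $\de{0}_V$ being designed precisely so that pairing a project against it reads off that project's wager: its underlying edgeless graph contributes $0$ to every measurement term while contributing $1$ to the coefficient-sum against which the partner's wager is weighted in the pairing. Once this is granted, the statement is the exact mirror of \autoref{fellorth}: there one assumes $A$ wager-free and concludes that $A^\pol$ is stable under $\de{0}_V$-shifts, whereas here one assumes that stability on the $A$ side and pays for it with wager-freeness on the $A^\pol$ side.
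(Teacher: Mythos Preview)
The paper does not actually prove this proposition: it is listed in \autoref{recallmall} among technical results recalled from the earlier paper on additives \cite{seiller-goia}, with no proof supplied here. So there is nothing in the present paper to compare your argument against.

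That said, your argument is correct and is the standard one. The key identity $\sca{a+\lambda 0}{c}=\sca{a}{c}+\lambda c$ follows exactly as you describe: adding $\lambda\de{0}_V$ leaves the wager $a$ fixed, raises $\unit{A}$ by $\lambda$, and contributes nothing to the measurement term since the empty graph creates no circuits; in the pairing formula the only place $\unit{A}$ appears is as the factor multiplying the partner's wager $c$. Orthogonality at $\lambda=0$ gives $\sca{a}{c}\in\realN\setminus\{0\}$, so the affine map $\lambda\mapsto\sca{a}{c}+\lambda c$ is real-valued, and if $c\neq 0$ it has a real root, contradicting orthogonality at that $\lambda$. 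Your closing remark that this is the mirror of \autoref{fellorth} is also on point.
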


\begin{lemma}[Homothety]\label{homothetie}
Conducts are closed under homothety: for all $\de{a}\in\cond{A}$ and all $\lambda\in\mathbf{R}$ with $\lambda\neq 0$, $\lambda\de{a}\in\cond{A}$.
\end{lemma}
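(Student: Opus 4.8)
The plan is to reduce the claim to a one-line homogeneity property of the orthogonality pairing $\sca{\cdot}{\cdot}$, and then to exploit the fact that a conduct equals its bi-orthogonal. Recall that scalar multiplication of a project $\de{a}=(a,V^{A},A)$ with $A=\sum_{i\in I^{A}}\alpha^{A}_{i}A_{i}$ is given by $\lambda\de{a}=(\lambda a,V^{A},\lambda A)$, where $\lambda A=\sum_{i\in I^{A}}(\lambda\alpha^{A}_{i})A_{i}$; scaling preserves finiteness of the formal sum and keeps all coefficients and the wager real, so $\lambda\de{a}$ is again a project of carrier $V^{A}$, the same carrier as $\de{a}$.

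First I would verify that for every project $\de{b}$ of carrier $V^{A}$ and every $\lambda\in\mathbf{R}$ one has $\sca{\lambda a}{b}=\lambda\sca{a}{b}$. This is immediate from the defining formula of the pairing: the wager term $a(\sum_{j\in I^{B}}\alpha^{B}_{j})$ becomes $\lambda a(\sum_{j\in I^{B}}\alpha^{B}_{j})$; the term $b(\sum_{i\in I^{A}}\alpha^{A}_{i})$ becomes $b(\sum_{i\in I^{A}}\lambda\alpha^{A}_{i})=\lambda\, b(\sum_{i\in I^{A}}\alpha^{A}_{i})$; and each measurement summand $\alpha^{A}_{i}\alpha^{B}_{j}\meas{A_{i},B_{j}}$ becomes $\lambda\alpha^{A}_{i}\alpha^{B}_{j}\meas{A_{i},B_{j}}$, so the whole expression is multiplied by $\lambda$ (with the usual convention $\lambda\cdot\infty=\infty$ for $\lambda\neq 0$). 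Since $\lambda\neq 0$, the map $x\mapsto\lambda x$ on $\realN\cup\{\infty\}$ satisfies $\lambda x=0\iff x=0$ and $\lambda x=\infty\iff x=\infty$; hence $\sca{\lambda a}{b}\neq 0,\infty$ if and only if $\sca{a}{b}\neq 0,\infty$, i.e. $\lambda\de{a}\poll\de{b}$ if and only if $\de{a}\poll\de{b}$.

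With this equivalence in hand the lemma follows: let $\de{a}\in\cond{A}=\cond{A}^{\pol\pol}$ and $\lambda\in\mathbf{R}$, $\lambda\neq 0$. For every $\de{b}\in\cond{A}^{\pol}$ we have $\de{a}\poll\de{b}$, hence $\lambda\de{a}\poll\de{b}$ by the previous step; as $\lambda\de{a}$ has carrier $V^{A}$, this shows $\lambda\de{a}\in\cond{A}^{\pol\pol}=\cond{A}$. The argument is essentially bookkeeping; the only point requiring a little attention is to use the scalar multiplication that rescales \emph{both} the wager and the coefficients of the underlying formal sum of graphings — only then is the pairing genuinely homogeneous — and to keep track of the value $\infty$ under multiplication by a nonzero scalar. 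I do not expect any real obstacle beyond this.
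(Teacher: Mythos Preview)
Your argument is correct and is exactly the standard one: homogeneity of the pairing $\sca{\lambda a}{b}=\lambda\,\sca{a}{b}$ together with the fact that multiplication by a nonzero real preserves the condition ``$\neq 0,\infty$'' immediately yields closure of any bi-orthogonal under homothety. Note that in the present paper the lemma is only \emph{recalled} from \cite{seiller-goia} without proof; the argument there is precisely the one you give, so there is nothing to compare beyond confirming that your bookkeeping (scaling both the wager and the formal-sum coefficients, and the convention for $\lambda\cdot\infty$) matches the intended definition of $\lambda\de{a}$.
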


\begin{proposition}\label{ethtenscondtens}
We denote by $\cond{A\odot B}$ the set $\{\de{a}\otimes\de{b}~|~\de{a}\in\cond{A},\de{b}\in\cond{B}\}$. Let $E,F$ be non-empty sets of projects of respective carriers $V,W$ with $V\cap W=\emptyset$. Then 
\begin{equation*}
(E\odot F)^{\pol\pol}=(E^{\pol\pol}\odot F^{\pol\pol})^{\pol\pol}
\end{equation*}
\end{proposition}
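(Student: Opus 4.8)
The plan is to prove the equality by double inclusion, the reverse direction being the substantial one and obtained by a \emph{partial-execution} (adjunction) argument, the realizability counterpart of the $\otimes\dashv\multimap$ adjunction, which ultimately rests on the associativity of execution established in \cite{seiller-goia} from the trefoil property. The direct inclusion is immediate: since $E\subseteq E^{\pol\pol}$, $F\subseteq F^{\pol\pol}$ and $\odot$ is monotone in each argument, $E\odot F\subseteq E^{\pol\pol}\odot F^{\pol\pol}$, and applying the monotone operator $(\cdot)^{\pol\pol}$ yields $(E\odot F)^{\pol\pol}\subseteq(E^{\pol\pol}\odot F^{\pol\pol})^{\pol\pol}$. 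For the reverse inclusion it suffices to prove $E^{\pol\pol}\odot F^{\pol\pol}\subseteq(E\odot F)^{\pol\pol}$, since then, applying $(\cdot)^{\pol\pol}$ and using idempotence of bi-orthogonal closure, $(E^{\pol\pol}\odot F^{\pol\pol})^{\pol\pol}\subseteq(E\odot F)^{\pol\pol\pol\pol}=(E\odot F)^{\pol\pol}$.

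To prove $E^{\pol\pol}\odot F^{\pol\pol}\subseteq(E\odot F)^{\pol\pol}$ I would fix $\de{a}\in E^{\pol\pol}$ (carrier $V$), $\de{b}\in F^{\pol\pol}$ (carrier $W$) and an arbitrary $\de{c}\in(E\odot F)^{\pol}$ (carrier $V\cup W$), and show $\de{a}\otimes\de{b}\poll\de{c}$. The two facts I would use are that, the carriers being disjoint, $\de{a}\otimes\de{b}=\de{a}\plug\de{b}$ with $\plug$ symmetric, and the \emph{adjunction}: for projects $\de{u},\de{v},\de{w}$ with matching carriers, $\de{u}\plug\de{v}\poll\de{w}$ if and only if $\de{u}\poll(\de{v}\plug\de{w})$ (where $\de{v}\plug\de{w}$, of carrier $V^{v}\Delta V^{w}$, is the execution), which follows from the associativity $(\de{u}\plug\de{v})\plug\de{w}=\de{u}\plug(\de{v}\plug\de{w})$ together with the associativity of the measurement $\sca{\cdot}{\cdot}$. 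The argument then has two stages. First fix some $\de{b_{0}}\in F$ (using $F\neq\emptyset$): for every $\de{a_{0}}\in E$ one has $\de{a_{0}}\otimes\de{b_{0}}\in E\odot F$, hence $\de{a_{0}}\otimes\de{b_{0}}\poll\de{c}$, hence $\de{a_{0}}\poll(\de{b_{0}}\plug\de{c})$ by the adjunction; as this holds for all $\de{a_{0}}\in E$, the project $\de{b_{0}}\plug\de{c}$ (carrier $V$) lies in $E^{\pol}$, so $\de{a}\poll(\de{b_{0}}\plug\de{c})$, i.e. $\de{a}\otimes\de{b_{0}}\poll\de{c}$ — and this for every $\de{b_{0}}\in F$. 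Second, reasoning symmetrically on the right factor, $\de{a}\otimes\de{b_{0}}\poll\de{c}$ rewrites as $\de{b_{0}}\poll(\de{a}\plug\de{c})$; as this holds for all $\de{b_{0}}\in F$, the project $\de{a}\plug\de{c}$ (carrier $W$) lies in $F^{\pol}$, whence $\de{b}\poll(\de{a}\plug\de{c})$, i.e. $\de{a}\otimes\de{b}\poll\de{c}$. Since $\de{c}\in(E\odot F)^{\pol}$ and $\de{a},\de{b}$ were arbitrary, this gives $E^{\pol\pol}\odot F^{\pol\pol}\subseteq(E\odot F)^{\pol\pol}$.

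The main obstacle is the adjunction at the level of individual projects: one must know that the intermediate executions $\de{b_{0}}\plug\de{c}$ and $\de{a}\plug\de{c}$ are genuine projects (finite wager, well-defined weighted sum of graphings) and that $\sca{u\plug v}{w}$ and $\sca{u}{v\plug w}$ are $\neq 0,\infty$ simultaneously. This is exactly the associativity of execution and of the measurement, among the structural consequences of the trefoil property established in \cite{seiller-goia} — the same facts that make $\plug$ a well-behaved composition and \catmll\ $\ast$-autonomous. Granting them, everything else is routine bookkeeping: monotonicity of $\odot$, idempotence of $(\cdot)^{\pol\pol}$, and the carrier matching $V^{b_{0}\plug c}=V$, $V^{a\plug c}=W$, which uses $V\cap W=\emptyset$.
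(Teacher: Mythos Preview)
Your argument is correct and is the standard one for this result. The present paper does not actually prove the proposition: it is listed among the ``technical results obtained in our paper on additives \cite{seiller-goia}'' and recalled without proof, so there is no in-paper argument to compare against. That said, your route --- monotonicity for the easy inclusion, then two applications of the adjunction $\sca{u\otimes v}{w}=\sca{v}{u\plug w}$ to peel off one factor at a time --- is exactly the proof given in \cite{seiller-goia}, and the adjunction you invoke is precisely the one derived there from associativity of execution (itself a consequence of the trefoil property). One small remark on the obstacle you flag: in this framework projects may carry wager $\infty$, so $\de{b_{0}}\plug\de{c}$ and $\de{a}\plug\de{c}$ are always legitimate projects; the scalar identity holds in $\realN\cup\{\infty\}$, and your argument then shows a posteriori that the relevant scalars are $\neq 0,\infty$.
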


\begin{proposition}\label{compintoplus}
Let $\cond{A,B}$ be conducts. Then:
$$(\{\de{a}\otimes \de{0}_{\mathnormal{B}}~|~\de{a}\in\cond{A}\}\cup\{\de{0}_{\mathnormal{A}}\otimes\de{b}~|~\de{b}\in\cond{B}\})^{\pol\pol}=\cond{A\oplus B}$$
\end{proposition}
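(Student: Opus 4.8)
The plan is to prove the two inclusions separately, reducing everything to \autoref{ethtenscondtens} and \autoref{compintoplus}'s sibling definition of $\cond{A\oplus B}$ via \autoref{ethtenscondtens} and the already-recalled tools (\autoref{fellorth}, \autoref{homothetie}). Write $V=V^{A}$, $W=V^{B}$, and let $L=\{\de{a}\otimes\de{0}_{W}~|~\de{a}\in\cond{A}\}$, $R=\{\de{0}_{V}\otimes\de{b}~|~\de{b}\in\cond{B}\}$. Note that, by the definition of the tensor of projects, $\de{a}\otimes\de{0}_{W}$ has the same underlying sum of graphings as $\de{a}$ but carrier $V\cup W$, and likewise on the $\cond{B}$ side; so $L$ is exactly $\{\de{a}_{\uparrow V\cup W}~|~\de{a}\in\cond{A}\}$ up to the bookkeeping of the wager, and similarly for $R$. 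The first, easy step is therefore to observe that $L=\cond{A}\odot\cond{0}_{W}$-flavoured sets collapse to extensions, and to compare $(L\cup R)^{\pol\pol}$ with the defining expression $(\{\de{a}_{\uparrow V\cup W}~|~\de{a}\in\cond{A}\}^{\pol\pol}\cup\{\de{b}_{\uparrow V\cup W}~|~\de{b}\in\cond{B}\}^{\pol\pol})^{\pol\pol}$ of $\cond{A\oplus B}$.

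For the inclusion $(L\cup R)^{\pol\pol}\subseteq\cond{A\oplus B}$, since bi-orthogonal closure is monotone and idempotent it suffices to show $L\subseteq\cond{A\oplus B}$ and $R\subseteq\cond{A\oplus B}$. But $L\subseteq\{\de{a}_{\uparrow V\cup W}~|~\de{a}\in\cond{A}\}\subseteq\{\de{a}_{\uparrow V\cup W}~|~\de{a}\in\cond{A}\}^{\pol\pol}\subseteq\cond{A\oplus B}$ directly from the definition of $\oplus$, and symmetrically for $R$; the only subtlety is the wager component of $\de{a}\otimes\de{0}_{W}$ versus $\de{a}_{\uparrow V\cup W}$, which is handled by \autoref{fellorth}/\autoref{homothetie} (behaviors are stable under adding $\lambda\de{0}$, so the distinction between the two extensions is immaterial inside $\cond{A\oplus B}$).

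For the reverse inclusion $\cond{A\oplus B}\subseteq(L\cup R)^{\pol\pol}$, I would again use monotonicity of $(\cdot)^{\pol\pol}$: it is enough to show $\{\de{a}_{\uparrow V\cup W}~|~\de{a}\in\cond{A}\}\subseteq(L\cup R)^{\pol\pol}$ and the symmetric statement, because then the union of their bi-orthogonal closures lies in $(L\cup R)^{\pol\pol}$, and one more closure (idempotence) gives $\cond{A\oplus B}\subseteq(L\cup R)^{\pol\pol}$. The inclusion $\{\de{a}_{\uparrow V\cup W}~|~\de{a}\in\cond{A}\}\subseteq(L\cup R)^{\pol\pol}$ follows from $\{\de{a}_{\uparrow V\cup W}~|~\de{a}\in\cond{A}\}\subseteq L^{\pol\pol}$ — again modulo the wager, via \autoref{fellorth} — together with $L\subseteq L\cup R$, hence $L^{\pol\pol}\subseteq(L\cup R)^{\pol\pol}$.

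The main obstacle I anticipate is purely bookkeeping rather than conceptual: matching $\de{a}\otimes\de{0}_{W}$ (whose wager is $a$, its sum being $A$ viewed with carrier $V\cup W$, with an extra zero-weight empty graphing piece of carrier $W$) against the extension $\de{a}_{\uparrow V\cup W}=(a,V\cup W,A)$, and checking that the orthogonality computation $\sca{\cdot}{\cdot}$ is insensitive to this, so that the two generate the same bi-orthogonal closure. This is exactly what \autoref{fellorth} and the behavior-stability clause (closure under $+\lambda\de{0}$) are designed to absorb, so once those are invoked the argument is routine. A secondary point is to make sure $L\cup R$ is non-empty (so that \autoref{fellorth} applies with a genuine non-empty set), which holds as soon as $\cond{A}$ and $\cond{B}$ are non-empty; if one of them is empty the statement degenerates and is checked directly from the definitions of $\oplus$ and $\cond{0}$.
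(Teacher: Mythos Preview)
The paper does not actually prove this proposition here: it is listed among the ``technical results obtained in our paper on additives \cite{seiller-goia}'' that are merely recalled. So there is no proof in the present paper to compare against.

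Your approach is correct and is the natural one, but you are making it harder than it needs to be. The ``bookkeeping'' worry about wagers is unfounded: with $\de{0}_{W}=(0,W,1\cdot\emptyset_{W})$ and disjoint carriers one computes $\sca{a}{0_{W}}=a\cdot 1+0\cdot\unit{A}+0=a$ and $A_{i}\plug\emptyset_{W}=A_{i}$ (there are no alternating paths through an empty graph), hence $\de{a}\otimes\de{0}_{W}=(a,V\cup W,\sum_{i}\alpha_{i}^{A}A_{i})=\de{a}_{\uparrow V\cup W}$ \emph{exactly}. Thus your sets $L$ and $R$ are literally the two generating sets in the recalled definition of $\cond{A\oplus B}$, and the statement reduces to the general closure identity $(L\cup R)^{\pol\pol}=(L^{\pol\pol}\cup R^{\pol\pol})^{\pol\pol}$, which you essentially prove via monotonicity and idempotence of $(\cdot)^{\pol\pol}$.

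Consequently the appeals to \autoref{fellorth}, the behavior stability under $+\lambda\de{0}$, and the non-emptiness case distinction are all unnecessary. This matters because the proposition is stated for arbitrary \emph{conducts}, not behaviors, so invoking behavior-specific closure properties would not be legitimate anyway; fortunately you never actually need them.
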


\begin{proposition}[Distributivity]\label{distributivity}
For any behaviours $\cond{A,B,C}$, and delocations\footnote{Let us recall that a \emph{delocation} \cite[Definition ??]{} $\phi((a,A))$ of a project $(a,A)$ is simply given by a bijection $\phi$ from the support of $V^{A}$ of $A$ onto a support $V$ such that $V\cap V^{A}=\emptyset$; $\phi((a,A))$ is then defined as $(a,\phi(A))$ with $\phi(A)$ the graph obtained from $A$ by renaming its vertices according to the bijection $\phi$. This operation naturally extends to conducts by letting $\phi(\cond{A})=\{\phi(\de{a})~|~\de{a}\in\cond{A}\}$. Let us stress here that when considering several delocations -- such as in the statement of \autoref{compintoplus} --, we always suppose that all involved supports are pairwise disjoint.} $\phi,\psi,\theta,\rho$ of $\cond{A},\cond{A},\cond{B},\cond{C}$ respectively, there is a project $\de{distr}$ in the behaviour 
$$\cond{((\phi(A)\!\multimap\! \theta(B))\!\with\! (\psi(A)\!\multimap\! \rho(C)))\!\multimap\! (A\!\multimap\! (B\!\with\! C))}$$
\end{proposition}

\section{Thick Graphs and Contraction}\label{seccontraction}\label{sec_Thick}

In this section, we will define the notion of \emph{thick graphs}, and extend the additive construction defined in our earlier paper \cite{seiller-goiadd} to that setting. The introduction of these objects will be motivated in \autoref{subsec:contraction}, where we will explain how thick graphs allows for the interpretation of the contraction rule. This contraction rule being satisfied only for a certain kind of conducts -- interpretations of formulas, this will justify the definition of the exponentials.

\subsection{Thick Graphs}\label{graphesepais}

\begin{definition}
Let $S^{G}$ and $D^{G}$ be finite sets. A directed weighted \emph{thick graph} $G$ of carrier $S^{G}$ and \emph{dialect} $D^{G}$ is a directed weighted graph over the set of vertices $S^{G}\times D^{G}$.

Following the terminology introduced for sliced graphs, we will call \emph{slices} the set of vertices $S^{G}\times\{d\}$ for $d\in D^{G}$.
\end{definition}

\begin{remark}
When no further precision is given, a thick graph $G$ will always be considered with carrier $S^{G}$, dialect $D^{G}$, with set of vertices $V^{G}$, set of edges $E^{G}$ and associated maps $s^{G}$, $t^{G}$, and $\omega^{G}$.
\end{remark}

\autoref{graphesepaisex} shows two examples of thick graphs. Thick graphs will be represented following a graphical convention very close to the one we used for sliced graphs:
\begin{itemize}[noitemsep,nolistsep]
\item Graphs are once again represented with coloured edges and delimited by hashed lines;
\item Elements of the carrier $S^{G}$ are represented on a horizontal scale, while elements of the dialect $D^{G}$ are represented on a vertical scale;
\item Inside a given graph, slices are separated by a \emph{dotted} line.
\end{itemize}

\begin{figure}
\begin{center}
\begin{tikzpicture}[node distance=2cm]
	\node (G11) at (0,0) {$1_{1}$};
	\node (G21) at (2,0) {$2_{1}$};

	\node (G12) at (0,2) {$1_{2}$};
	\node (G22) at (2,2) {$2_{2}$};

	\node (H11) at (6,0) {$2_{1}$};
	\node (H21) at (8,0) {$3_{1}$};

	\node (H12) at (6,2) {$2_{2}$};
	\node (H22) at (8,2) {$3_{2}$};

	\draw[<->,blue] (G11) .. controls (-1,0.5) and (-1,1.5) .. (G12) {};
	\draw[<->,blue] (G12) .. controls (0.5,3) and (1.5,3) .. (G22) {};
	\draw[<->,blue] (G11) .. controls (0.5,-1) and (1.5,-1) .. (G21) {};
	
	\draw[dashed,blue] (-1.3,-1) -- (-1.3,3) node [sloped,above,near end] {slice $2$} node [sloped,above,near start] {slice $1$};
	\draw[dashed,blue] (-1.3,-1) -- (3,-1) {};
	\draw[dashed,blue] (-1.3,3) -- (3,3) {};
	\draw[dashed,blue] (3,-1) -- (3,3) {};
	\draw[dotted,blue] (-1.3,1) -- (3,1) {};
	\node (G) at (-1.1,2.8) {\textcolor{blue}{G}};
	
	\draw[<->,red] (H12) -- (H21) {};
	\draw[->,red] (H22) .. controls (7.5,3) and (8.5,3) .. (H22) {};
	\draw[->,red] (H11) .. controls (5.5,-1) and (6.5,-1) .. (H11) {};
	
	\draw[-,dashed,red] (4.85,-1) -- (4.85,3) node [sloped,above,near end] {slice $2$} node [sloped,above,near start] {slice $1$};
	\draw[dashed,red] (4.85,-1) -- (9.15,-1) {};
	\draw[dashed,red] (4.85,3) -- (9.15,3) {};
	\draw[dashed,red] (9.15,-1) -- (9.15,3) {};
	\draw[dotted,red] (4.85,1) -- (9.15,1) {};
	\node (G) at (8.8,2.8) {\textcolor{red}{H}};

\end{tikzpicture}
\end{center}
\caption{Two thick graphs $G$ and $H$, both with dialect $\{1,2\}$}\label{graphesepaisex}
\end{figure}

\begin{remark}
If $G=\sum_{i\in I^{G}} \alpha^{G}_{i}G_{i}$ is a sliced graph such that $\forall i\in I^{G}, \alpha^{G}_{i}=1$, then $G$ can be identified with a thick graph of dialect $I^{G}$. Indeed, one can define the thick graph $\{G\}$ by:
\begin{eqnarray*}
V^{\{G\}}&=&V^{G}\times I^{G}\\
E^{\{G\}}&=&\disjun_{i\in I^{G}} E^{G_{i}}\\
s^{\{G\}}&=&e\in E^{G_{i}}\mapsto (s^{G_{i}}(e),i)\\
t^{\{G\}}&=&e\in E^{G_{i}}\mapsto (t^{G_{i}}(e),i)\\
\omega^{\{G\}}&=&e\in E^{G_{i}}\mapsto \omega^{G_{i}}(e)
\end{eqnarray*}
\end{remark}

\begin{definition}[Variants]
Let $G$ be a thick graph and $\phi: D^{G}\rightarrow E$ a bijection. One defines $G^{\phi}$ as the graph:
\begin{eqnarray*}
V^{G^{\phi}}&=&S^{G}\times E\\
E^{G^{\phi}}&=&E^{G}\\
s^{G^{\phi}}&=&(Id_{V^{G}}\times\phi)\circ s^{G}\\
t^{G^{\phi}}&=&(Id_{V^{G}}\times\phi)\circ t^{G}\\
\omega^{G^{\phi}}&=&\omega^{G}
\end{eqnarray*}
If $G$ and $H$ are two thick graphs such that $H=G^{\phi}$ for a bijection $\phi$, then $H$ is called a \emph{variant} of $G$. The relation defined by $G\sim H$ if and only if $G$ is a variant of $H$ can easily be checked to be an equivalence relation.
\end{definition}

\begin{definition}[Dialectal Interaction]
Let $G$ and $H$ be thick graphs. 
\begin{enumerate}
\item We denote by $G^{\dagger_{D^{H}}}$ the thick graph of dialect $D^{G}\times D^{H}$ defined as $\{\sum_{i\in D^{H}} G\}$;
\item We denote by $H^{\ddagger_{D^{G}}}$ the thick graph of dialect $D^{G}\times D^{H}$ defined as $\{\sum_{i\in D^{G}} H\}^{\tau}$ where $\tau$ is the natural bijection $D^{H}\times D^{G}\rightarrow D^{G}\times D^{H}, (a,b)\mapsto(b,a)$.
\end{enumerate}
\end{definition}

\begin{figure}
\begin{center}
\begin{tikzpicture}[scale=0.9]
	\node (G111) at (0,0,0) {$1_{1,1}$};
	\node (G211) at (2,0,0) {$2_{1,1}$};
	\node (G121) at (0,2,0) {$1_{2,1}$};
	\node (G221) at (2,2,0) {$2_{2,1}$};
	\node[opacity=0.6] (G112) at (0,0,-4) {$1_{1,2}$};
	\node[opacity=0.6] (G212) at (2,0,-4) {$2_{1,2}$};
	\node[opacity=0.6] (G122) at (0,2,-4) {$1_{2,2}$};
	\node[opacity=0.6] (G222) at (2,2,-4) {$2_{2,2}$};

	\draw[<->,blue] (G111) .. controls (-1,0.5,0) and (-1,1.5,0) .. (G121) {};
	\draw[<->,blue] (G121) .. controls (0.5,3,0) and (1.5,3,0) .. (G221) {};
	\draw[<->,blue] (G111) .. controls (0.5,-1,0) and (1.5,-1,0) .. (G211) {};
	\draw[<->,blue,dashed] (G112) .. controls (-1,0.5,-4) and (-1,1.5,-4) .. (G122) {};
	\draw[<->,blue,dashed] (G122)  .. controls (0.5,3,-4) and (1.5,3,-4) .. (G222) {};
	\draw[<->,blue,dashed] (G112) .. controls (0.5,-1,-4) and (1.5,-1,-4) .. (G212) {};

	\draw[->,dotted] (0,0,0) -- (5,0,0)  {};
	\draw[->,dotted] (0,0,0) -- (0,5,0)  {};
	\draw[->,dotted] (0,0,0) -- (0,0,-12)  {};
	
	\draw[fill=blue,draw=red,opacity=.2,very thin,line join=round]
 		(3,3,-0) --
		(-1,3,0) --
		(-1,-1,0) --
		(3,-1,0) --
		(3,3,0) node [sloped,above,midway,opacity=0.5] {slices $(\cdot,1)$} ;
	
	\draw[fill=blue,draw=red,opacity=.2,very thin,line join=round]
		(3,3,-4) --
		(-1,3,-4) --
		(-1,-1,-4) --
		(3,-1,-4) --
		(3,3,-4) node [sloped,above,midway,opacity=0.5] {slices $(\cdot,2)$} ;
	
	\node (H111) at (7,0,0) {$2_{1,1}$};
	\node (H211) at (9,0,0) {$3_{1,1}$};
	\node[opacity=0.6] (H112) at (7,0,-4) {$2_{1,2}$};
	\node[opacity=0.6] (H212) at (9,0,-4) {$3_{1,2}$};
	\node (H121) at (7,2,0) {$2_{2,1}$};
	\node (H221) at (9,2,0) {$3_{2,1}$};
	\node[opacity=0.6] (H122) at (7,2,-4) {$2_{2,2}$};
	\node[opacity=0.6] (H222) at (9,2,-4) {$3_{2,2}$};

	\draw[<-,red] (H211) -- (8,0,-2) {};
	\draw[->,red,densely dashed] (8,0,-2) -- (H112) {};
	\draw[->,red,dashed] (H212) .. controls (8.5,0,-6) and (9.5,0,-6) .. (H212) {};
	\draw[->,red] (H111) .. controls (6.5,0,-2) and (7.5,0,-2) .. (H111) {};
	\draw[<-,red] (H221) -- (8,2,-2) {};
	\draw[->,red,densely dashed] (8,2,-2) -- (H122) {};
	\draw[->,red,dashed] (H222) .. controls (8.5,2,-6) and (9.5,2,-6) .. (H222) {};
	\draw[->,red] (H121) .. controls (6.5,2,-2) and (7.5,2,-2) .. (H121) {};
	
	\draw[->,dotted] (7,0,0) -- (12,0,0)  {};
	\draw[->,dotted] (7,0,0) -- (7,5,0)  {};
	\draw[->,dotted] (7,0,0) -- (7,0,-12)  {};

	\draw[fill=red,draw=red,opacity=.2,very thin,line join=round]
		(10,0,-6) --
		(6,0,-6) --
		(6,0,2) --
		(10,0,2) --
		(10,0,-6) node [sloped,above,near start,opacity=0.5] {slices $(1,\cdot	)$} ;
	
	\draw[fill=red,draw=red,opacity=.2,very thin,line join=round]
		(10,2,-6) --
		(6,2,-6) --
		(6,2,2) --
		(10,2,2) --
		(10,2,-6) node [sloped,above,near start,opacity=0.5] {slices $(2,\cdot	)$} ;

\end{tikzpicture}
\end{center}
\caption{The graphs $G^{\dagger_{D^{H}}}$ and $H^{\ddagger_{D^{G}}}$}\label{interactiondialectale}
\end{figure}

We can then define the plugging $F\bicol G$ of two thick graphs as the plugging of the graphs $F^{\dagger_{D^{G}}}$ and $G^{\ddagger_{D^{F}}}$. \autoref{epaisbranchement} shows the result of the plugging of $G$ and $H$, the thick graphs represented in \autoref{graphesepaisex}.

\begin{remark}
For the purpose of this section, we will now introduce separated notations for execution of thick graphs w.r.t. the notion of execution for (non-thick) graphs. This is because many properties of the thick graphs execution can be deduced from computations with the (non-thick) graphs execution. In the following sections, however, we will use the generic notation $\plug$, as the collapse of notations may not induce any misunderstanding.
\end{remark}

\begin{figure}
\begin{center}
\begin{tikzpicture}
	\node (G111) at (0,0,0) {$1_{1,1}$};
	\node (G211) at (2,0,0) {$2_{1,1}$};
	\node (G121) at (0,2,0) {$1_{2,1}$};
	\node (G221) at (2,2,0) {$2_{2,1}$};
	\node[opacity=0.6] (G112) at (0,0,-4) {$1_{1,2}$};
	\node[opacity=0.6] (G212) at (2,0,-4) {$2_{1,2}$};
	\node[opacity=0.6] (G122) at (0,2,-4) {$1_{2,2}$};
	\node[opacity=0.6] (G222) at (2,2,-4) {$2_{2,2}$};

	\draw[<->,blue] (G111) .. controls (-1,0.5,0) and (-1,1.5,0) .. (G121) {};
	\draw[<->,blue] (G121) .. controls (0.5,3,0) and (1.5,3,0) .. (G221) {};
	\draw[<->,blue] (G111) .. controls (0.5,-1,0) and (1.5,-1,0) .. (G211) {};
	\draw[<->,blue,dashed] (G112) .. controls (-1,0.5,-4) and (-1,1.5,-4) .. (G122) {};
	\draw[<->,blue,dashed] (G122)  .. controls (0.5,3,-4) and (1.5,3,-4) .. (G222) {};
	\draw[<->,blue,dashed] (G112) .. controls (0.5,-1,-4) and (1.5,-1,-4) .. (G212) {};

	\draw[->,dotted] (0,0,0) -- (7,0,0) {};
	\draw[->,dotted] (0,0,0) -- (0,5,0) {};
	\draw[->,dotted] (0,0,0) -- (0,0,-12) {};
	
	\draw[fill=blue,draw=red,opacity=.2,very thin,line join=round]
 		(0,0,0) -- 
 		(2,0,0) --
		(2,2,0) --
		(0,2,0) --
		(0,0,0) {} ;
	
	\draw[fill=blue,draw=red,opacity=.2,very thin,line join=round]
 		(0,0,-4) -- 
 		(2,0,-4) --
		(2,2,-4) --
		(0,2,-4) --
		(0,0,-4) {} ;
	
	\node (H211) at (4,0,0) {$3_{1,1}$};
	\node[opacity=0.6] (H212) at (4,0,-4) {$3_{1,2}$};
	\node (H221) at (4,2,0) {$3_{2,1}$};
	\node[opacity=0.6] (H222) at (4,2,-4) {$3_{2,2}$};

	\draw[<-,red] (H211) -- (3,0,-2) {};
	\draw[->,red,densely dashed] (3,0,-2) -- (G212) {};
	\draw[->,red,dashed] (H212) .. controls (3.5,0,-6) and (4.5,0,-6) .. (H212) {};
	\draw[->,red] (G211) .. controls (1.5,0,-2) and (2.5,0,-2) .. (G211) {};
	\draw[<-,red] (H221) -- (3,2,-2) {};
	\draw[->,red,densely dashed] (3,2,-2) -- (G222) {};
	\draw[->,red,dashed] (H222) .. controls (3.5,2,-6) and (4.5,2,-6) .. (H222) {};
	\draw[->,red] (G221) .. controls (1.5,2,-2) and (2.5,2,-2) .. (G221) {};
	
	\draw[fill=red,draw=red,opacity=.2,very thin,line join=round]
 		(2,0,0) -- 
 		(4,0,0) --
		(4,0,-4) --
		(2,0,-4) --
		(2,0,0) {} ;
	
	\draw[fill=red,draw=red,opacity=.2,very thin,line join=round]
 		(2,2,0) -- 
 		(4,2,0) --
		(4,2,-4) --
		(2,2,-4) --
		(2,2,0) {} ;
		
\end{tikzpicture}
\end{center}
\caption{Plugging of the thick graphs $G$ and $H$}\label{epaisbranchement}
\end{figure}	

One can then define the execution $G\plugepais H$ of two thick graphs $G$ and $H$ as the execution of the graphs $G^{\dagger_{D^{H}}}$ and $H^{\ddagger_{D^{G}}}$. \autoref{cheminsaltepais} shows the set of alternating paths in the plugging of the thick graphs $G$ and $H$ introduced in \autoref{graphesepaisex}. \autoref{execepais3d} and \autoref{execepais2d} represent the result of the execution of these two thick graphs, the first is three-dimensional representation which can help make the connection with the set of alternating paths in \autoref{cheminsaltepais}, while the second is a two-dimensional representation of the same graph. In a natural way, the measurement of the interaction between two thick graphs $G,H$ is defined as $\meas{G^{\dagger_{D^{H}}},H^{\ddagger_{D^{G}}}}$.

\begin{definition}
The execution $F\plugepais G$ of two thick graphs $F,G$ is the thick graph of carrier $S^{F}\Delta S^{G}$ and dialect $D^{F}\times D^{G}$ defined as $F^{\dagger_{D^{G}}}\plug G^{\ddagger_{D^{F}}}$.
\end{definition}

\begin{figure}
\begin{center}
\begin{tikzpicture}
	\node (G111) at (0,0,0) {$1_{1,1}$};
	\node (G121) at (0,2,0) {$1_{2,1}$};
	\node[opacity=0.6] (G112) at (0,0,-4) {$1_{1,2}$};
	\node[opacity=0.6] (G122) at (0,2,-4) {$1_{2,2}$};

	\draw[<->,blue] (G111) .. controls (-1,0.5,0) and (-1,1.5,0) .. (G121) {};
	\draw[->,blue] (G121) .. controls (0.5,3,0) and (1.5,3,0) .. (1.8,2,0) {};
	\draw[->,blue] (G111) .. controls (0.5,-1,0) and (1.5,-1,0) .. (1.8,0,0) {};
	\draw[<-,blue] (G121) .. controls (0.5,3,0) and (1.5,3,0) .. (2.2,2,0) {};
	\draw[<-,blue] (G111) .. controls (0.5,-1,0) and (1.5,-1,0) .. (2.2,0,0) {};
	\draw[<->,blue,dashed] (G112) .. controls (-1,0.5,-4) and (-1,1.5,-4) .. (G122) {};
	\draw[->,blue,dashed] (G122)  .. controls (0.5,3,-4) and (1.5,3,-4) .. (1.8,2,-4) {};
	\draw[->,blue,dashed] (G112) .. controls (0.5,-1,-4) and (1.5,-1,-4) .. (1.8,0,-4) {};
	\draw[<-,blue,dashed] (G122)  .. controls (0.5,3,-4) and (1.5,3,-4) .. (2.2,2,-4) {};
	\draw[<-,blue,dashed] (G112) .. controls (0.5,-1,-4) and (1.5,-1,-4) .. (2.2,0,-4) {};

	\draw[->,dotted] (0,0,0) -- (7,0,0) {};
	\draw[->,dotted] (0,0,0) -- (0,5,0) {};
	\draw[->,dotted] (0,0,0) -- (0,0,-12) {};
	
	\draw[fill=blue,draw=red,opacity=.2,very thin,line join=round]
 		(0,0,0) -- 
 		(2,0,0) --
		(2,2,0) --
		(0,2,0) --
		(0,0,0) {} ;
	
	\draw[fill=blue,draw=red,opacity=.2,very thin,line join=round]
 		(0,0,-4) -- 
 		(2,0,-4) --
		(2,2,-4) --
		(0,2,-4) --
		(0,0,-4) {} ;
	
	\node (H211) at (4,0,0) {$3_{1,1}$};
	\node[opacity=0.6] (H212) at (4,0,-4) {$3_{1,2}$};
	\node (H221) at (4,2,0) {$3_{2,1}$};
	\node[opacity=0.6] (H222) at (4,2,-4) {$3_{2,2}$};

	\draw[-,red] (H211) -- (3.1,0,-2) {};
	\draw[-,red,densely dashed] (1.8,0,-4) -- (2.9,0,-2) {};
	\draw[->,red,densely dashed] (3.1,0,-2) -- (2.2,0,-4) {};
	\draw[->,red] (2.9,0,-2) -- (H211) {};
	\draw[->,red,dashed] (H212) .. controls (3.5,0,-6) and (4.5,0,-6) .. (H212) {};
	\draw[->,red] (1.8,0,0) .. controls (1.5,0,-2) and (2.5,0,-2) .. (2.2,0,0) {};
	\draw[-,red] (H221) -- (3.1,2,-2) {};
	\draw[-,red,densely dashed] (1.8,2,-4) -- (2.9,2,-2) {};
	\draw[->,red,densely dashed] (3.1,2,-2) -- (2.2,2,-4) {};
	\draw[->,red] (2.9,2,-2) -- (H221) {};
	\draw[->,red,dashed] (H222) .. controls (3.5,2,-6) and (4.5,2,-6) .. (H222) {};
	\draw[->,red] (1.8,2,0) .. controls (1.5,2,-2) and (2.5,2,-2) .. (2.2,2,0) {};
	
	\draw[fill=red,draw=red,opacity=.2,very thin,line join=round]
		(4,0,-4) --
		(2,0,-4) --
		(2,0,0) --
		(4,0,0) --
		(4,0,-4) {} ;
	
	\draw[fill=red,draw=red,opacity=.2,very thin,line join=round]
 		(2,2,0) -- 
 		(4,2,0) --
		(4,2,-4) --
		(2,2,-4) --
		(2,2,0) {} ;
		
\end{tikzpicture}
\end{center}
\caption{Alternating paths in the plugging of thick graphs $G$ and $H$}\label{cheminsaltepais}
\end{figure}

\begin{figure}
\begin{center}
\begin{tikzpicture}
	\node (G111) at (0,0,0) {$1_{1,1}$};
	\node (G121) at (0,2,0) {$1_{2,1}$};
	\node[opacity=0.6] (G112) at (0,0,-4) {$1_{1,2}$};
	\node[opacity=0.6] (G122) at (0,2,-4) {$1_{2,2}$};
	\node (H211) at (4,0,0) {$3_{1,1}$};
	\node[opacity=0.6] (H212) at (4,0,-4) {$3_{1,2}$};
	\node (H221) at (4,2,0) {$3_{2,1}$};
	\node[opacity=0.6] (H222) at (4,2,-4) {$3_{2,2}$};

	\draw[<->,violet] (G111) .. controls (-1,0.5,0) and (-1,1.5,0) .. (G121) {};
	\draw[<->,violet] (G121) .. controls (-0.5,3,0) and (0.5,3,0) .. (G121) {};
	\draw[<->,violet] (G111) .. controls (-0.5,-1,0) and (0.5,-1,0) .. (G111) {};
	\draw[<->,violet,dashed] (G112) .. controls (-1,0.5,-4) and (-1,1.5,-4) .. (G122) {};
	\draw[<-,violet,densely dashed] (G122) -- (2,2,-2) {};
	\draw[<-,violet,densely dashed] (G112) -- (2,0,-2) {};
	\draw[->,violet] (2,2,-2) -- (H221) {};
	\draw[->,violet] (2,0,-2) -- (H211) {};
	\draw[->,violet,dashed] (H212) .. controls (3.5,0,-6) and (4.5,0,-6) .. (H212) {};
	\draw[->,violet,dashed] (H222) .. controls (3.5,2,-6) and (4.5,2,-6) .. (H222) {};

	\draw[->,dotted] (0,0,0) -- (7,0,0) {};
	\draw[->,dotted] (0,0,0) -- (0,5,0)  {};
	\draw[->,dotted] (0,0,0) -- (0,0,-12)  {};
	
	\draw[fill=blue,draw=red,opacity=.2,very thin,line join=round]
 		(0,0,0) -- 
 		(4,0,0) --
		(4,2,0) --
		(0,2,0) --
		(0,0,0) {} ;
	
	\draw[violet,opacity=0.6] (0,0,0) -- (4,0,0) node [sloped, above, midway, opacity=0.5] {\scriptsize{slice $(1,1)$}};
	\draw[violet,opacity=0.6] (0,2,0) -- (4,2,0) node [sloped, above, midway, opacity=0.5] {\scriptsize{slice $(2,1)$}};
	\draw[violet,opacity=0.6] (0,0,-4) -- (4,0,-4) node [sloped, below, midway, opacity=0.5] {\scriptsize{slice $(1,2)$}};
	\draw[violet,opacity=0.6] (0,2,-4) -- (4,2,-4) node [sloped, below, midway, opacity=0.5] {\scriptsize{slice $(2,2)$}};
	
	\draw[fill=blue,draw=red,opacity=.2,very thin,line join=round]
 		(0,0,-4) -- 
 		(4,0,-4) --
		(4,2,-4) --
		(0,2,-4) --
		(0,0,-4) {} ;
		
	\draw[fill=red,draw=red,opacity=.2,very thin,line join=round]
 		(0,0,0) -- 
 		(4,0,0) --
		(4,0,-4) --
		(0,0,-4) --
		(0,0,0) {} ;
	
	\draw[fill=red,draw=red,opacity=.2,very thin,line join=round]
 		(0,2,0) -- 
 		(4,2,0) --
		(4,2,-4) --
		(0,2,-4) --
		(0,2,0) {} ;
		
\end{tikzpicture}
\end{center}
\caption{Result of the execution of the thick graphs $G$ and $H$}\label{execepais3d}
\end{figure}

\begin{figure}
\begin{center}
\begin{tikzpicture}
	\node (G111) at (0,0) {$1_{1,1}$};
	\node (G121) at (0,2) {$1_{2,1}$};
	\node (G112) at (0,4) {$1_{1,2}$};
	\node (G122) at (0,6) {$1_{2,2}$};
	\node (H211) at (4,0) {$3_{1,1}$};
	\node (H212) at (4,4) {$3_{1,2}$};
	\node (H221) at (4,2) {$3_{2,1}$};
	\node (H222) at (4,6) {$3_{2,2}$};

	\draw[<->,violet] (G111) -- (G121) {};
	\draw[->,violet] (G121) .. controls (-1,1.5) and (-1,2.5) .. (G121) {};
	\draw[->,violet] (G111) .. controls (-1,-0.5) and (-1,0.5) .. (G111) {};
	\draw[<->,violet] (G112) -- (G122) {};
	\draw[<->,violet] (G122) -- (H221) {};
	\draw[<->,violet] (G112) -- (H211) {};
	\draw[->,violet] (H212) .. controls (5,3.5) and (5,4.5) .. (H212) {};
	\draw[->,violet] (H222) .. controls (5,5.5) and (5,6.5) .. (H222) {};
	
	\draw[dashed,violet] (-1,-1) -- (-1,3) node [sloped,above,near end] {\scriptsize{slice $(2,1)$}} node [sloped,above,near start] {\scriptsize{slice $(1,1)$}};
	\draw[dashed,violet] (-1,3) -- (-1,7) node [sloped,above,near end] {\scriptsize{slice $(2,2)$}} node [sloped,above,near start] {\scriptsize{slice $(1,2)$}};
	\draw[dashed,violet] (-1,-1) -- (5,-1) {};
	\draw[dashed,violet] (-1,7) -- (5,7) {};
	\draw[dashed,violet] (5,-1) -- (5,7) {};
	\draw[dotted,violet] (-1,1) -- (5,1) {};
	\draw[dotted,violet] (-1,3) -- (5,3) {};
	\draw[dotted,violet] (-1,5) -- (5,5) {};
	
	\node (graphe) at (-0.4,6.8) {\textcolor{violet}{$G\plugepais H$}};
	
\end{tikzpicture}
\end{center}
\caption{The thick graph $G\plugepais H$ represented in two dimensions.}\label{execepais2d}
\end{figure}

\begin{remark}\label{discussionassocepais} Since we only modified the graphs before plugging them together, we can make the following remark. Let $F,G,H$ be thick graphs. Then the thick graph $F\plugepais(G\plugepais H)$ is defined as 
$$F^{\dagger_{D^{G}\times D^{H}}}\plug (G^{\dagger_{D^{H}}}\plug H^{\ddagger_{D^{G}}})^{\ddagger_{D^{F}}}=F^{\dagger_{D^{G}\times D^{H}}}\plug((G^{\dagger_{D^{H}}})^{\ddagger_{D^{F}}}\plug H^{\ddagger_{D^{F}\times D^{G}}})$$ 
If one supposes that $S^{F}\cap S^{G}\cap S^{H}=\emptyset$, it is clear that $(S^{F}\times D)\cap(S^{G}\times D)\cap (S^{H}\times D)=\emptyset$. We can deduce from the associativity of execution that 
$$F^{\dagger_{D^{G}\times D^{H}}}\plug((G^{\dagger_{D^{H}}})^{\ddagger_{D^{F}}}\plug H^{\ddagger_{D^{F}\times D^{G}}})=(F^{\dagger_{D^{G}\times D^{H}}}\plug(G^{\dagger_{D^{H}}})^{\ddagger_{D^{F}}})\plug H^{\ddagger_{D^{F}\times D^{G}}}$$
But:
$$(F^{\dagger_{D^{G}\times D^{H}}}\plug(G^{\dagger_{D^{H}}})^{\ddagger_{D^{F}}})\plug H^{\ddagger_{D^{F}\times D^{G}}}=((F^{\dagger_{D^{G}}}\plug G^{\ddagger_{D^{F}}})^{\dagger_{D^{H}}})\plug H^{\ddagger_{D^{F}\times D^{G}}}$$
The latter is by definition the thick graph $(F\plugepais G)\plugepais H$. This shows that the associativity of $\plugepais$ on thick graphs is a simple consequence of the associativity of $\plug$ on simple graphs.
\end{remark}

\begin{proposition}[Associativity]
Let $F,G,H$ be thick graphs such that $S^{F}\cap S^{G}\cap S^{H}=\emptyset$. Then:
\begin{equation*}
F\plugepais(G\plugepais H)=(F\plugepais G)\plugepais H
\end{equation*}
\end{proposition}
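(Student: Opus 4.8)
\subsection*{Proof plan}

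\textbf{Strategy.} The proposition is, in effect, already proved by the computation recorded in \autoref{discussionassocepais}; what a full proof must add is a justification of each equality in that chain directly from the definitions of $(\cdot)^{\dagger}$, $(\cdot)^{\ddagger}$, $\plugepais$ and from the associativity of $\plug$ on ordinary weighted graphs (which we may assume). The plan is therefore to first isolate a short lemma collecting the "bookkeeping" identities satisfied by the dialect operations, and then to run the chain of equalities of \autoref{discussionassocepais} formally, invoking the disjointness hypothesis exactly once.

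\textbf{Step 1: dialect bookkeeping.} I would record, for thick graphs $A,B$ and finite sets $D,E$, the following identities, each an immediate consequence of the fact that $A^{\dagger_{D}}$ and $A^{\ddagger_{D}}$ are nothing but reindexed disjoint copies of $A$: (i) $(A^{\dagger_{D}})^{\dagger_{E}}$ equals $A^{\dagger_{D\times E}}$ after the associativity relabelling of dialects, and similarly for iterated $\ddagger$, so that $(A^{\ddagger_{E}})^{\ddagger_{F}}=A^{\ddagger_{F\times E}}$ up to relabelling; (ii) $(A^{\dagger_{D}})^{\ddagger_{E}}=(A^{\ddagger_{E}})^{\dagger_{D}}$ up to relabelling, since the two operations affect disjoint dialect factors; (iii) $(\cdot)^{\dagger_{D}}$ and $(\cdot)^{\ddagger_{D}}$ commute with $\plug$, i.e.\ $(A\plug B)^{\dagger_{D}}=A^{\dagger_{D}}\plug B^{\dagger_{D}}$ and likewise with $\ddagger$, because every alternating path between $A$ and $B$ stays inside a single slice, so duplicating all slices along $D$ simply duplicates the set of such paths. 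Throughout, "up to relabelling" refers to the equivalence $\sim$ of thick graphs by variants, which is harmless since all constructions in sight respect $\sim$ and thick graphs are considered up to variants.

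\textbf{Step 2: running the chain.} Unfolding the definition, $F\plugepais(G\plugepais H)$ is $F^{\dagger_{D^{G}\times D^{H}}}\plug (G^{\dagger_{D^{H}}}\plug H^{\ddagger_{D^{G}}})^{\ddagger_{D^{F}}}$. Pushing $\ddagger_{D^{F}}$ inside with (iii), fusing the two $\ddagger$'s on $H$ with (i), and swapping $\dagger_{D^{H}}$ past $\ddagger_{D^{F}}$ on $G$ with (ii), the right-hand factor becomes $(G^{\dagger_{D^{H}}})^{\ddagger_{D^{F}}}\plug H^{\ddagger_{D^{F}\times D^{G}}}$. Now the hypothesis $S^{F}\cap S^{G}\cap S^{H}=\emptyset$ lifts to $(S^{F}\times D)\cap(S^{G}\times D)\cap(S^{H}\times D)=\emptyset$ for the common dialect $D=D^{F}\times D^{G}\times D^{H}$, so associativity of $\plug$ on ordinary graphs reassociates the three-fold execution into $\bigl(F^{\dagger_{D^{G}\times D^{H}}}\plug(G^{\dagger_{D^{H}}})^{\ddagger_{D^{F}}}\bigr)\plug H^{\ddagger_{D^{F}\times D^{G}}}$. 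Using (iii) in reverse together with the relabellings of Step 1, the left factor is recognized as $(F^{\dagger_{D^{G}}}\plug G^{\ddagger_{D^{F}}})^{\dagger_{D^{H}}}=(F\plugepais G)^{\dagger_{D^{H}}}$, and the whole expression is by definition $(F\plugepais G)\plugepais H$.

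\textbf{Main obstacle.} The only genuinely delicate point is the coherence implicit in Step 1: one must check that the three "obvious" dialect bijections — associativity of $\times$, the swap $\tau$ built into $\ddagger$, and the reindexings introduced when fusing and then unfusing the iterated $\dagger$/$\ddagger$ — compose to the identity (up to $\sim$) along the chain, and that $\plug$ of thick graphs, which is defined only after lifting to a common dialect, commutes with further lifts on the nose (again up to $\sim$). This is a finite diagram-chase on finite index sets, of exactly the kind the notation is designed to render invisible; I would state it as the single lemma of Step 1, after which the proposition is immediate. The disjointness hypothesis is used precisely once, to license the one application of associativity of $\plug$ on ordinary graphs.
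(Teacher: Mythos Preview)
Your proposal is correct and follows exactly the approach the paper takes: the paper does not give a separate proof of the proposition but simply refers back to the computation in \autoref{discussionassocepais}, which is precisely the chain of equalities you run in Step~2. Your Step~1 makes explicit the dialect-bookkeeping identities that the paper leaves implicit in that remark, so your write-up is if anything more careful than the original.
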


\begin{definition}
Let $F$ and $G$ be two thick graphs. We define $\circuitsepais{F,G}$ as the set of 1-circuits in $F^{\dagger_{D^{G}}}\bicol G^{\ddagger_{D^{F}}}$. 

We also define, being given a dialect $D^{H}$,
\begin{itemize}[noitemsep,nolistsep]
\item the set $\circuitsepais{F,G}^{\dagger_{D^{H}}}$ of 1-circuits in the graph $(F^{\dagger_{D^{G}}})^{\dagger_{D^{H}}}\bicol (G^{\ddagger_{D^{F}}})^{\dagger_{D^{H}}}$
\item the set $\circuitsepais{F,G}^{\ddagger_{D^{H}}}$ of 1-circuits in the graph $(F^{\ddagger_{D^{G}}})^{\dagger_{D^{H}}}\bicol (G^{\ddagger_{D^{F}}})^{\dagger_{D^{H}}}$
\end{itemize}
\end{definition}

\begin{proposition}
Let $F,G,H$ be thick graphs and $\phi: D^{H}\rightarrow D$ a bijection. Then:
\begin{eqnarray*}
\circuitsepais{F,H}&\cong&\circuitsepais{F,H^{\phi}}\\
\circuitsepais{F,G}^{\dagger_{D^{H}}}&\cong&\circuitsepais{F,G}^{\dagger_{\phi(D^{H})}}\\
\circuitsepais{F,G}^{\dagger_{D^{H}}}&\cong&\circuitsepais{F,G}^{\ddagger_{D^{H}}}
\end{eqnarray*}
\end{proposition}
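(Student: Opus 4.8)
The plan is to deduce all three isomorphisms from a single principle: each asserted bijection is induced by a \emph{relabelling of a dialect}, such a relabelling produces a variant (in the sense of the preceding definition), hence an isomorphism of the underlying directed weighted graphs, and an isomorphism of directed weighted graphs carries circuits to circuits while preserving weights. So I would first record the two elementary facts that make this precise. (i) If $\psi : D \to D'$ is a bijection of finite sets and $K$ is a thick graph of dialect $D$, then the vertex map $\mathrm{Id}_{S^{K}} \times \psi$ together with the identity on edges is an isomorphism of directed weighted graphs $K \to K^{\psi}$: indeed $K^{\psi}$ shares the edge set and all edge weights of $K$ and only postcomposes the source and target maps with $\mathrm{Id}_{S^{K}} \times \psi$. (ii) The constructions $(\cdot)^{\dagger}$, $(\cdot)^{\ddagger}$ and $\bicol$ use their auxiliary dialects only as abstract finite sets, so a bijection of dialects functorially yields graph isomorphisms between the corresponding decorated thick graphs; and since $\bicol$ is assembled from $(\cdot)^{\dagger}$, $(\cdot)^{\ddagger}$ and ordinary plugging $\plug$, while $\plug$ transports a compatible pair of graph isomorphisms of its arguments (agreeing on the common carrier) to a graph isomorphism of the plugged graph — it matches alternating paths, hence circuits — any such bijection yields a weight-preserving bijection between the corresponding circuit sets.

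For the first isomorphism: $H^{\phi}$ is $H$ with dialect relabelled by $\phi : D^{H} \to D$, and the single bijection $\mathrm{Id}_{D^{F}} \times \phi : D^{F} \times D^{H} \to D^{F} \times D$ simultaneously exhibits $F^{\dagger_{D^{H}}}$ as a variant of $F^{\dagger_{D}}$ and $H^{\ddagger_{D^{F}}}$ as a variant of $(H^{\phi})^{\ddagger_{D^{F}}}$. By (i)--(ii) the graphs $F^{\dagger_{D^{H}}} \bicol H^{\ddagger_{D^{F}}}$ and $F^{\dagger_{D}} \bicol (H^{\phi})^{\ddagger_{D^{F}}}$ are isomorphic, and the induced isomorphism restricts to the wanted weight-preserving bijection $\circuitsepais{F,H} \cong \circuitsepais{F,H^{\phi}}$.

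The second isomorphism is the same argument one nesting level up: $\circuitsepais{F,G}^{\dagger_{D^{H}}}$ and $\circuitsepais{F,G}^{\dagger_{\phi(D^{H})}}$ are, by definition, the circuit sets of $(F^{\dagger_{D^{G}}})^{\dagger_{D^{H}}} \bicol (G^{\ddagger_{D^{F}}})^{\dagger_{D^{H}}}$ and of the same expression with $D^{H}$ replaced throughout by $\phi(D^{H})$, and the bijection $\mathrm{Id}_{D^{F}\times D^{G}} \times \phi$ on the dialect factor created by the outermost $(\cdot)^{\dagger}$ realizes the latter graph as a variant of the former. For the third isomorphism, observe that $F^{\ddagger_{D^{G}}} = \{\sum_{D^{G}} F\}^{\tau}$ is by construction a variant of $F^{\dagger_{D^{G}}} = \{\sum_{D^{G}} F\}$ via the coordinate-swap bijection $\tau$; applying $(\cdot)^{\dagger_{D^{H}}}$ and then plugging with the common graph $(G^{\ddagger_{D^{F}}})^{\dagger_{D^{H}}}$ keeps them variants, so $(F^{\ddagger_{D^{G}}})^{\dagger_{D^{H}}} \bicol (G^{\ddagger_{D^{F}}})^{\dagger_{D^{H}}}$ and $(F^{\dagger_{D^{G}}})^{\dagger_{D^{H}}} \bicol (G^{\ddagger_{D^{F}}})^{\dagger_{D^{H}}}$ are isomorphic, giving $\circuitsepais{F,G}^{\ddagger_{D^{H}}} \cong \circuitsepais{F,G}^{\dagger_{D^{H}}}$. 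In each case weights are preserved because none of the operations involved ever alters an edge weight.

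The substantive-but-routine part, and the place where the proof is most easily gotten wrong, is the dialect bookkeeping: one must pin down the exact order in which $D^{F}$, $D^{G}$, $D^{H}$ (and the copy $\phi(D^{H})$) occur as factors after each nesting of $(\cdot)^{\dagger}$ and $(\cdot)^{\ddagger}$, so that the proposed map is genuinely a bijection of the right product of dialects and genuinely commutes with the derived description of $\bicol$ — precisely the sort of computation already carried out in \autoref{discussionassocepais}. Once that is settled, the only real content is the elementary fact that $\plug$, defined as the graph of alternating paths, sends a compatible pair of graph isomorphisms of its arguments to a graph isomorphism of the plugged graph, hence to a weight-preserving bijection of circuit sets.
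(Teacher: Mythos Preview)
Your proposal is correct. The paper does not actually supply a proof of this proposition --- it is stated as evident --- and your argument is precisely the intended one: each isomorphism is induced by a bijective relabelling of dialects, a relabelling produces a variant, and a variant is by construction an isomorphism of the underlying directed weighted graph, hence induces a weight-preserving bijection on circuits. This is consistent with how the paper treats such facts elsewhere (cf.\ the bookkeeping in \autoref{discussionassocepais} and the later \autoref{univequivdialecte}, which makes the same point at the level of the measurement).
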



As in \autoref{discussionassocepais}, one considers the three thick graphs $F,G,H$. We are interested in the 1-circuits in $\circuitsepais{F,G\plugepais H}\cup(\circuitsepais{G,H}^{\ddagger{D^{F}}})$. By definition, these are the 1-circuits in one of the following graphs:
$$F^{\dagger_{D^{G}\times D^{H}}}\bicol((G^{\dagger_{D^{H}}}\plug H^{\ddagger_{D^{G}}})^{\ddagger_{D^{F}}})=F^{\dagger_{D^{G}\times D^{H}}}\bicol((G^{\dagger_{D^{H}}})^{\ddagger_{D^{F}}}\plug H^{\ddagger_{D^{F}\times D^{G}}})$$
$$(G^{\dagger_{D^{H}}}\bicol H^{\ddagger_{D^{G}}})^{\ddagger{D^{F}}}=(G^{\dagger_{D^{H}}})^{\ddagger_{D^{F}}}\bicol H^{\ddagger_{D^{F}\times D^{G}}}$$
We can now use the trefoil property to deduce that these sets of 1-circuits are in bijection with the set of 1-circuits in the following graphs:
$$(F^{\dagger_{D^{G}\times D^{H}}}\plug (G^{\ddagger_{D^{F}}})^{\dagger_{D^{H}}})\bicol H^{\ddagger_{D^{F}\times D^{G}}}=(F^{\dagger_{D^{G}}}\plug G^{\ddagger_{D^{F}}})^{\dagger_{D^{H}}}\bicol H^{\ddagger_{D^{F}\times D^{G}}}$$
$$F^{\dagger_{D^{G}\times D^{H}}}\bicol (G^{\ddagger_{D^{F}}})^{\dagger_{D^{H}}}=(F^{\dagger_{D^{G}}}\bicol G^{\ddagger_{D^{F}}})^{\dagger{D^{H}}}$$
This shows that the trefoil property holds for thick graphs.

\begin{proposition}[Geometric Trefoil Property for Thick Graphs]\label{cycliquegeomepais}
If $F$, $G$, $H$ are thick graphs such that $S^{F}\cap S^{G}\cap S^{H}=\emptyset$, then:
\begin{equation*}
\circuitsepais{F,G\plugepais H}\cup\circuitsepais{G,H}^{\dagger_{D^{F}}}\cong\circuitsepais{F\plugepais G, H}\cup\circuitsepais{F,G}^{\dagger_{D^{H}}}
\end{equation*}
\end{proposition}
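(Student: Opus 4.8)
The plan is to do exactly what the discussion preceding the statement and \autoref{discussionassocepais} suggest: unfold both sides into sets of circuits in \emph{ordinary} directed weighted graphs and then invoke the trefoil property recalled in \autoref{sec_IG}. First I would unfold the left-hand side. By definition $\circuitsepais{F,G\plugepais H}$ is the set of circuits of $F^{\dagger_{D^{G}\times D^{H}}}\bicol(G\plugepais H)^{\ddagger_{D^{F}}}$, and since $G\plugepais H=G^{\dagger_{D^{H}}}\plug H^{\ddagger_{D^{G}}}$ has dialect $D^{G}\times D^{H}$, this is the set of circuits of $F^{\dagger_{D^{G}\times D^{H}}}\bicol(G^{\dagger_{D^{H}}}\plug H^{\ddagger_{D^{G}}})^{\ddagger_{D^{F}}}$; similarly $\circuitsepais{G,H}^{\dagger_{D^{F}}}$ --- which by the proposition immediately preceding the statement is in bijection with $\circuitsepais{G,H}^{\ddagger_{D^{F}}}$ --- is the set of circuits of $(G^{\dagger_{D^{H}}}\bicol H^{\ddagger_{D^{G}}})^{\ddagger_{D^{F}}}$. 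The same unfolding is performed on the right-hand side.

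The second step is the commutation of the $\dagger/\ddagger$ operations with $\plug$ and $\bicol$: since these operations only duplicate slices and reindex the dialect, leaving the edge set and weights untouched, one has $(G^{\dagger_{D^{H}}}\plug H^{\ddagger_{D^{G}}})^{\ddagger_{D^{F}}}=(G^{\dagger_{D^{H}}})^{\ddagger_{D^{F}}}\plug H^{\ddagger_{D^{F}\times D^{G}}}$ and $(G^{\dagger_{D^{H}}}\bicol H^{\ddagger_{D^{G}}})^{\ddagger_{D^{F}}}=(G^{\dagger_{D^{H}}})^{\ddagger_{D^{F}}}\bicol H^{\ddagger_{D^{F}\times D^{G}}}$, exactly the identities used in \autoref{discussionassocepais}. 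Writing $F'=F^{\dagger_{D^{G}\times D^{H}}}$, $G'=(G^{\dagger_{D^{H}}})^{\ddagger_{D^{F}}}$ and $H'=H^{\ddagger_{D^{F}\times D^{G}}}$ --- ordinary weighted graphs over (relabellings of) $S^{F}\times D^{G}\times D^{H}$, $S^{G}\times D^{F}\times D^{H}$ and $S^{H}\times D^{F}\times D^{G}$ --- the left-hand side becomes, up to the $\dagger/\ddagger$ and relabelling bijections of the preceding proposition, the set of circuits of $F'\bicol(G'\plug H')$ together with the set of circuits of $G'\bicol H'$.

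The third step applies the trefoil property for ordinary directed weighted graphs to the triple $F',G',H'$. The hypothesis $S^{F}\cap S^{G}\cap S^{H}=\emptyset$ ensures that the carriers of $F',G',H'$ have empty triple intersection, so the trefoil property provides a weight-preserving bijection between the circuits of $F'\bicol(G'\plug H')$ together with those of $G'\bicol H'$, and the circuits of $(F'\plug G')\bicol H'$ together with those of $F'\bicol G'$ (these being two of the three terms of the trefoil). Finally I would re-fold the right-hand side using $F'\plug G'=(F^{\dagger_{D^{G}}}\plug G^{\ddagger_{D^{F}}})^{\dagger_{D^{H}}}=(F\plugepais G)^{\dagger_{D^{H}}}$: the circuits in $(F'\plug G')\bicol H'$ are exactly $\circuitsepais{F\plugepais G,H}$, and the circuits in $F'\bicol G'$ are exactly $\circuitsepais{F,G}^{\dagger_{D^{H}}}$ (again modulo the $\dagger/\ddagger$ and relabelling bijections), which is the right-hand side of the claim.

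The main obstacle is not conceptual but bookkeeping: one must track the successive dialect factors $D^{F},D^{G},D^{H}$ produced by the nested $\dagger/\ddagger$ operations together with the canonical reindexing bijections that permute them, and verify that the commutation identities of the second step hold \emph{strictly} --- which they do, because $\dagger$ and $\ddagger$ act on a thick graph only through a product with an identity on the vertex labels, while $\plug$ and $\bicol$ are defined purely via alternating paths between edges and are therefore insensitive to such relabellings. The proposition preceding the statement ($\circuitsepais{F,G}^{\dagger_{D^{H}}}\cong\circuitsepais{F,G}^{\ddagger_{D^{H}}}$, and invariance under relabelling the dialect) is exactly what lets one absorb all these reindexings without changing the set of circuits up to bijection, so that no weight is created or destroyed along the way.
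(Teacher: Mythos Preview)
Your proposal is correct and follows essentially the same approach as the paper: unfold both sides into circuits of ordinary graphs via the commutation of the $\dagger/\ddagger$ operations with $\plug$ and $\bicol$ (exactly as in \autoref{discussionassocepais}), apply the ordinary trefoil property to the resulting triple, and re-fold. The paper's own argument is the discussion immediately preceding the statement, and your write-up is a more detailed rendering of the same computation, with the added (and appropriate) remark that the reindexing bijections from the preceding proposition absorb the $\dagger/\ddagger$ discrepancies.
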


\begin{corollary}[Geometric Adjunction for Thick Graphs]\label{Adjunctiongeo}
If $F$, $G$, $H$ are thick graphs such that $S^{G}\cap S^{H}=\emptyset$, we have:
\begin{equation*}
\circuitsepais{F,G\cup H}\cong\circuitsepais{F\plugepais G, H}\cup\circuitsepais{F,G}^{\dagger_{D^{H}}}
\end{equation*}
\end{corollary}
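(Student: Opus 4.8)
The plan is to deduce this corollary directly from the Geometric Trefoil Property for Thick Graphs (\autoref{cycliquegeomepais}), by specialising it to a situation where one of the three ``interaction'' terms collapses. First I would observe that the hypothesis $S^{G}\cap S^{H}=\emptyset$ entails $S^{F}\cap S^{G}\cap S^{H}=\emptyset$, so \autoref{cycliquegeomepais} applies to the triple $(F,G,H)$ and provides a weight-preserving bijection
$$\circuitsepais{F,G\plugepais H}\cup\circuitsepais{G,H}^{\dagger_{D^{F}}}\cong\circuitsepais{F\plugepais G, H}\cup\circuitsepais{F,G}^{\dagger_{D^{H}}}.$$
It then remains only to show that, under the disjointness hypothesis, the left-hand side simplifies to $\circuitsepais{F,G\cup H}$, which is what the corollary claims.

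The key elementary remark is that, because $S^{G}$ and $S^{H}$ are disjoint, there is no alternating path of length $\geqslant 2$ in $G^{\dagger_{D^{H}}}\bicol H^{\ddagger_{D^{G}}}$: an edge of $G^{\dagger_{D^{H}}}$ has its endpoints in $S^{G}\times(D^{G}\times D^{H})$ while an edge of $H^{\ddagger_{D^{G}}}$ has its endpoints in $S^{H}\times(D^{G}\times D^{H})$, and two consecutive edges of an alternating path (which must come from the two different graphs) are forced to share a vertex. I would then draw the two consequences of this. On the one hand, there is no alternating cycle between $G^{\dagger_{D^{H}}}$ and $H^{\ddagger_{D^{G}}}$, hence $\circuitsepais{G,H}=\emptyset$ and therefore $\circuitsepais{G,H}^{\dagger_{D^{F}}}=\emptyset$, which erases the second term on the left. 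On the other hand, the alternating paths in $G^{\dagger_{D^{H}}}\bicol H^{\ddagger_{D^{G}}}$ with source and target in the carrier $S^{G}\Delta S^{H}=S^{G}\cup S^{H}$ are exactly the single edges of $G^{\dagger_{D^{H}}}$ and of $H^{\ddagger_{D^{G}}}$, with their weights unchanged; in other words $G\plugepais H$ is precisely the thick graph $G\cup H$ obtained by juxtaposing $G^{\dagger_{D^{H}}}$ and $H^{\ddagger_{D^{G}}}$ over the common dialect $D^{G}\times D^{H}$. Hence $\circuitsepais{F,G\plugepais H}=\circuitsepais{F,G\cup H}$, and the trefoil bijection becomes the announced one; if one prefers to state $G\cup H$ without the $\dagger/\ddagger$ decorations, one composes with the invariance bijections of the preceding proposition (invariance of $\circuitsepais{F,\cdot}$ under variants and under $(-)^{\dagger_{D}}$).

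I expect the only delicate point — and the main, though minor, obstacle — to be the dialect bookkeeping: one must make sure that the identification of $G\plugepais H$ with $G\cup H$, and any invariance statement invoked from the previous proposition, are all realised by \emph{weight-preserving} bijections, so that the composite bijection extracted from \autoref{cycliquegeomepais} is still weight-preserving. Everything else follows immediately from the disjointness of $S^{G}$ and $S^{H}$, which kills both the alternating cycles between $G$ and $H$ and the alternating paths of length $\geqslant 2$ in their plugging.
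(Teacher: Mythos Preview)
Your proposal is correct and follows exactly the intended route: the paper states this as an immediate corollary of \autoref{cycliquegeomepais} without giving a separate proof, and your argument spells out precisely why---the disjointness $S^{G}\cap S^{H}=\emptyset$ forces $\circuitsepais{G,H}^{\dagger_{D^{F}}}=\emptyset$ and $G\plugepais H=G\cup H$, collapsing the trefoil bijection to the adjunction. Your remark about dialect bookkeeping is appropriate caution but not a real obstacle, as all the identifications involved are weight-preserving by construction.
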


\begin{definition}
Let $\Omega$ be a monoid. A (graph) circuit quantifying map $m$ is a function $m:\Omega\rightarrow \realposN\{\infty\}$ which is \emph{tracial}, i.e.\ $m(ab)=m(ba)$ for all $a,b\in \Omega$.
\end{definition}

\begin{definition}
Being given a circuit quantifying map $m$, one can define a measurement of the interaction between thick graphs. For every couple of thick graphs $F,G$, it is defined as:
\begin{equation*}
\meas{F,G}=\sum_{\pi\in\circuitsepais{F,G}} \frac{1}{\card{D^{F}\times D^{G}}}m(\omega(\pi))
\end{equation*}
\end{definition}

\begin{proposition}[Numerical Trefoil Property for Thick Graphs]\label{numtrefoil}
Let $F,G,H$ be thick graphs such that $S^{F}\cap S^{G}\cap S^{H}=\emptyset$. Then:
\begin{equation*}
\meas{F,G\plugepais H}+\meas{G,H}=\meas{H,F\plugepais G}+\meas{F,G}
\end{equation*}
\end{proposition}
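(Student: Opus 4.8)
The plan is to deduce this numerical identity from its geometric counterpart \autoref{cycliquegeomepais}, by applying the circuit‑quantifying map $m$ to the weight‑preserving bijection that proposition supplies, summing, and then bookkeeping the normalisation factors $1/\card{D^{\bullet}\times D^{\bullet}}$ that enter the definition of $\meas{\cdot,\cdot}$ for thick graphs. Concretely, the aim is to rewrite all four measurements occurring in the statement as $\tfrac1N\sum_{\pi}m(\omega(\pi))$ for the \emph{same} integer $N=\card{D^{F}\times D^{G}\times D^{H}}$; once this is done, the claimed equation is, after multiplication by $N$, literally the $m$‑image of the set identity of \autoref{cycliquegeomepais}.

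The one substantive step is an elementary observation about the $\dagger$ construction. Since $F^{\dagger_{D}}=\{\sum_{i\in D}F\}$ consists of $\card{D}$ disjoint copies of $F$ carried by the pairwise disjoint families of slices $S^{F}\times(\cdot,i)$, $i\in D$, and since passing to $F^{\dagger_{D}}$ changes neither the realizations nor the weights of edges, every circuit in a plugging of two $\dagger_{D}$‑decorated graphs is confined to a single slice $(\cdot,\cdot,i)$, where it is a circuit of the undecorated plugging with the same weight; the same slice‑wise decomposition is what underlies the companion proposition $\circuitsepais{F,G}^{\dagger_{D^{H}}}\cong\circuitsepais{F,G}^{\dagger_{\phi(D^{H})}}$ and the variant‑invariance $\circuitsepais{F,H}\cong\circuitsepais{F,H^{\phi}}$. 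Hence $\circuitsepais{F,G}^{\dagger_{D^{H}}}$ is, weight‑preservingly, a disjoint union of $\card{D^{H}}$ copies of $\circuitsepais{F,G}$, so
\[
\sum_{\pi\in\circuitsepais{F,G}^{\dagger_{D^{H}}}} m(\omega(\pi)) = \card{D^{H}}\sum_{\pi\in\circuitsepais{F,G}} m(\omega(\pi)) = N\,\meas{F,G},
\]
and symmetrically $\sum_{\pi\in\circuitsepais{G,H}^{\dagger_{D^{F}}}} m(\omega(\pi)) = N\,\meas{G,H}$. For the two mixed terms there is nothing to prove: $G\plugepais H$ has dialect $D^{G}\times D^{H}$, so unfolding the definition of $\meas{\cdot,\cdot}$ already gives $\sum_{\pi\in\circuitsepais{F,G\plugepais H}} m(\omega(\pi)) = N\,\meas{F,G\plugepais H}$, and likewise $\sum_{\pi\in\circuitsepais{F\plugepais G,H}} m(\omega(\pi)) = N\,\meas{F\plugepais G,H}$.

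Now I would invoke \autoref{cycliquegeomepais}: since the two displayed unions are disjoint, the weight‑preserving bijection it furnishes yields the equality of $m$‑sums
\[
\sum_{\pi\in\circuitsepais{F,G\plugepais H}} m(\omega(\pi)) + \sum_{\pi\in\circuitsepais{G,H}^{\dagger_{D^{F}}}} m(\omega(\pi)) = \sum_{\pi\in\circuitsepais{F\plugepais G,H}} m(\omega(\pi)) + \sum_{\pi\in\circuitsepais{F,G}^{\dagger_{D^{H}}}} m(\omega(\pi)),
\]
an identity in $\realposN\cup\{\infty\}$ which is insensitive to infinite terms, since both sides merely re‑enumerate one family of weights. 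Substituting the four evaluations of the previous paragraph, dividing by the nonzero integer $N$, and finally using the symmetry $\meas{F\plugepais G,H}=\meas{H,F\plugepais G}$ (circuits in a plugging do not depend on the order of the arguments, and $\card{D^{A}\times D^{B}}=\card{D^{B}\times D^{A}}$) gives the announced identity. The part I expect to require the most care is exactly the first step: checking that $\circuitsepais{G,H}^{\dagger_{D^{F}}}$ contributes precisely $\card{D^{F}}$ weight‑identical copies of $\circuitsepais{G,H}$ — no circuit created by the duplication, none collapsed — and that the two unions in \autoref{cycliquegeomepais} are genuinely disjoint so that the $m$‑sums split as written; granting these, the rest is arithmetic.
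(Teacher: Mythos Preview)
Your proposal is correct and follows essentially the same route as the paper: normalise all four measurements by the common factor $N=\card{D^{F}}\card{D^{G}}\card{D^{H}}$ using the fact that $\circuitsepais{F,G}^{\dagger_{D^{H}}}$ consists of $\card{D^{H}}$ weight-identical copies of $\circuitsepais{F,G}$, then invoke the geometric trefoil property \autoref{cycliquegeomepais} and divide by $N$. The paper's calculation is arranged as a single chain of equalities rather than your ``multiply by $N$, apply bijection, divide by $N$'' presentation, and it silently uses the symmetry $\circuitsepais{F\plugepais G,H}\cong\circuitsepais{H,F\plugepais G}$ that you make explicit, but the content is identical.
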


\begin{proof}
The proof is a simple calculation using the geometric trefoil property for thick graphs (\autoref{cycliquegeomepais}). We denote by $n^{F}$ (resp. $n^{G}$, $n^{H}$) the cardinality of the dialect $D^{F}$ (resp. $D^{G}$, $D^{H}$).
\begin{align*}
\meas{F,G&\plugepais H}+\meas{G,H}\\
=&~\sum_{\pi\in\circuitsepais{F,G\plugepais H}} \frac{1}{n^{F}n^{G}n^{H}}m(\omega(\pi))+\sum_{\pi\in\circuitsepais{G,H}} \frac{1}{n^{G}n^{H}}m(\omega(\pi))\\
=&~\sum_{\pi\in\circuitsepais{F,G\plugepais H}} \frac{1}{n^{F}n^{G}n^{H}}m(\omega(\pi))+\sum_{\pi\in\circuitsepais{G,H}^{\dagger_{D^{F}}}} \frac{1}{n^{F}n^{G}n^{H}}m(\omega(\pi))\\
=&~\sum_{\pi\in\circuitsepais{F,G\plugepais H}\cup\circuitsepais{G,H}^{\dagger_{D^{F}}}} \frac{1}{n^{F}n^{G}n^{H}}m(\omega(\pi))\\
=&~\sum_{\pi\in\circuitsepais{H,F\plugepais G}\cup\circuitsepais{F,G}^{\dagger_{D^{H}}}} \frac{1}{n^{F}n^{G}n^{H}}m(\omega(\pi))\\
=&~\sum_{\pi\in\circuitsepais{H,F\plugepais G}} \frac{1}{n^{F}n^{G}n^{H}}m(\omega(\pi)) + \sum_{\pi\in\circuitsepais{F,G}^{\dagger_{D^{H}}}}  \frac{1}{n^{F}n^{G}n^{H}}m(\omega(\pi))\\
=&~\sum_{\pi\in\circuitsepais{H,F\plugepais G}} \frac{1}{n^{F}n^{G}n^{H}}m(\omega(\pi)) + \sum_{\pi\in\circuitsepais{F,G}}  \frac{1}{n^{F}n^{G}}m(\omega(\pi))\\
=&~\meas{H,F\plugepais G}+\meas{F,G}&\tag*{\qedhere}
\end{align*}
\end{proof}

\begin{corollary}[Numerical Adjunction for Thick Graphs]\label{numadjunct}
Let $F,G,H$ be thick graphs such that $S^{G}\cap S^{H}=\emptyset$. Then:
\begin{equation*}
\meas{F,G\plugepais H}=\meas{H,F\plugepais G}+\meas{F,G}
\end{equation*}
\end{corollary}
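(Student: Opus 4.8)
The plan is to obtain the identity as an immediate specialization of the Numerical Trefoil Property for Thick Graphs (\autoref{numtrefoil}), exactly in the way the Geometric Adjunction (\autoref{Adjunctiongeo}) is obtained from the Geometric Trefoil Property (\autoref{cycliquegeomepais}).

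First I would observe that the hypothesis $S^{G}\cap S^{H}=\emptyset$ is stronger than the hypothesis $S^{F}\cap S^{G}\cap S^{H}=\emptyset$ required by \autoref{numtrefoil}, so the latter applies verbatim to $F,G,H$ and yields
$$\meas{F,G\plugepais H}+\meas{G,H}=\meas{H,F\plugepais G}+\meas{F,G}.$$
It then suffices to prove that this hypothesis forces $\meas{G,H}=0$. By definition $\meas{G,H}$ is a weighted sum indexed by $\circuitsepais{G,H}$, the set of circuits of the plugging $G^{\dagger_{D^{H}}}\bicol H^{\ddagger_{D^{G}}}$; such a circuit is an alternating cycle between the two graphs $G^{\dagger_{D^{H}}}$ and $H^{\ddagger_{D^{G}}}$. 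No alternating cycle of length one is possible, since a single edge cannot be simultaneously an edge of $G^{\dagger_{D^{H}}}$ and of $H^{\ddagger_{D^{G}}}$; hence any alternating cycle has length at least two and therefore visits a vertex common to both graphs. But $G^{\dagger_{D^{H}}}$ and $H^{\ddagger_{D^{G}}}$ have vertex sets $S^{G}\times(D^{G}\times D^{H})$ and $S^{H}\times(D^{G}\times D^{H})$ respectively, whose intersection is $(S^{G}\cap S^{H})\times(D^{G}\times D^{H})=\emptyset$. Therefore $\circuitsepais{G,H}=\emptyset$ and $\meas{G,H}=0$; substituting into the displayed identity gives the corollary.

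The argument is essentially bookkeeping and I do not expect a genuine obstacle; the only point requiring a little care is matching the carriers and dialects of $G^{\dagger_{D^{H}}}$ and $H^{\ddagger_{D^{G}}}$ against their definitions in order to conclude that $\circuitsepais{G,H}$ is genuinely empty. If one prefers not to invoke \autoref{numtrefoil} directly, one may instead replay its proof line by line, replacing each use of \autoref{cycliquegeomepais} by \autoref{Adjunctiongeo}: the computation is identical and the term $\meas{G,H}$ simply never occurs.
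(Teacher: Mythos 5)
Your argument is correct and matches the approach the paper evidently intends: the corollary is stated without proof immediately after \autoref{numtrefoil}, in exact parallel to how \autoref{Adjunctiongeo} is obtained from \autoref{cycliquegeomepais}, so the intended reasoning is precisely to specialize the numerical trefoil identity and observe that $S^{G}\cap S^{H}=\emptyset$ makes the vertex sets $S^{G}\times(D^{G}\times D^{H})$ and $S^{H}\times(D^{G}\times D^{H})$ disjoint, so that $\circuitsepais{G,H}=\emptyset$ and therefore $\meas{G,H}=0$.
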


\begin{remark}\noindent\textsc{About the Hidden Convention of the Numerical Measure}\label{secondeconventionepais}\newline
The measurement of interaction we defined hides a convention: each slice of a thick graph $F$ is considered as having a ``weight'' equal to $1/n^{F}$, so that the total weight of the set of all slices have weight $1$. This convention corresponds to the choice of working with a \emph{normalised trace} (such that $tr(1)=1$) on the idiom in Girard's hyperfinite geometry of interaction. It would have been possible to consider another convention which would impose that each slice have a weight equal to $1$ (this would correspond to working with the usual trace on matrices in Girard's hyperfinite geometry of interaction). In this case, the measurement of the interaction between two thick graphs $F,G$ is defined as: 
$$\meassecond{F,G}=\sum_{\pi\in\circuitsepais{F,G}}m(\omega(\pi))$$
The numerical trefoil property is then stated differently: for all thick graphs $F$, $G$, and $H$ such that $S^{F}\cap S^{G}\cap S^{H}=\emptyset$, we have: 
$$\meassecond{F,G\plugepais H}+n^{F}\meassecond{G,H}=\meassecond{H,F\plugepais G}+n^{H}\meassecond{F,G}$$
We stress the apparition of the terms $n^{F}$ and $n^{H}$ in this equality: their apparition corresponds exactly to the apparition of the terms $\unit{F}$ and $\unit{H}$ in the equality stated for the trefoil property for sliced graphs.
\end{remark}

\subsection{Sliced Thick Graphs}

One can of course apply the additive construction presented in our previous paper \cite{seiller-goiadd} in the case of thick graphs. A \emph{sliced thick graph} $G$ of carrier $S^{G}$ is a finite family $\sum_{i\in I^{G}} \alpha^{G}_{i} G_{i}$ where, for all $i\in I^{G}$, $G_{i}$ is a thick graph such that $S^{G_{i}}=S^{G}$, and $\alpha^{G}_{i}\in\realN$. We define the \emph{dialect} of $G$ to be the set $\disjun_{i\in I^{G}} D^{G_{i}}$. We will abusively call a \emph{slice} a couple $(i,d)$ where $i\in I^{G}$ and $d\in D_{G_{i}}$; we will say a graph $G$ is a \emph{single-sliced graph} when $I^{G}=\{i\}$ and $D_{G_{i}}=\{d\}$ are both one-element sets.

One can extend the execution and the measurement of the interaction by applying the thick graphs constructions slice by slice:
\begin{eqnarray*}
(\sum_{i\in I^{F}} \alpha^{F}_{i}F_{i})\plug(\sum_{i\in I^{G}} \alpha^{G}_{i}G_{i})&=&\!\!\!\!\sum_{(i,j)\in I^{F}\times I^{G}}\!\!\!\!\alpha^{F}_{i}\alpha^{G}_{j} F_{i}\plugepais G_{j}\\
\meas{\sum_{i\in I^{F}} \alpha^{F}_{i}F_{i},\sum_{i\in I^{G}} \alpha^{G}_{i}G_{i}}&=&\!\!\!\!\sum_{(i,j)\in I^{F}\times I^{G}}\!\!\!\!\alpha^{F}_{i}\alpha^{G}_{j} \meas{F_{i},G_{j}}
\end{eqnarray*}

\autoref{exemplesliceepais} shows two examples of sliced thick graphs. The graphical convention we will follow for representing sliced and thick graphs corresponds to the graphical convention for sliced graphs, apart from the fact that the graphs contained in the slices are replaced by thick graphs. Thus, two slices are separated by a dashed line, two elements in the dialect of a thick graph (i.e.\ the graph contained in a slice) are separated by a dotted line.

\begin{figure}
\begin{center}
\begin{tikzpicture}
	\node (A1) at (0,0) {$1_{1}$};
	\node (B1) at (0,2) {$1_{2}$};
	\node (C1) at (0,4) {$1_{3}$};
	\node (A2) at (2,0) {$2_{1}$};
	\node (B2) at (2,2) {$2_{2}$};
	\node (C2) at (2,4) {$2_{3}$};
	\node (A3) at (4,0) {$3_{1}$};
	\node (B3) at (4,2) {$3_{2}$};
	\node (C3) at (4,4) {$3_{3}$};

	\draw[->,blue] (A1) -- (B2) {};
	\draw[->,blue] (B2) .. controls (1.5,3) and (2.5,3) .. (B2) {};
	\draw[->,blue] (A2) .. controls (2.5,-1) and (3.5,-1) .. (A3) {};
	\draw[->,blue] (B3) -- (A2) {};
	\draw[->,blue] (C1) .. controls (1,5) and (3,5) .. (C3) {};
	\draw[->,blue] (C3) .. controls (3.5,3) and (2.5,3) .. (C2) {};
	
	\draw[-,dashed,blue] (-1,-1) -- (-1,5) {};
	\draw[-,dashed,blue] (-1,-1) -- (5,-1) {};
	\draw[-,dashed,blue] (5,-1) -- (5,5) {};
	\draw[-,dashed,blue] (5,5) -- (-1,5) {};
	
	\node (F) at (4.6,4.6) {\textcolor{blue}{$\frac{1}{2}F$}};
	\node (F) at (4.6,2.7) {\textcolor{blue}{$3G$}};
	
	\draw[-,dashed,blue] (-1,3) -- (5,3) {};
	\draw[-,dotted,blue] (-1,1) -- (5,1) {};
	
	\node (1a) at (7,1) {$1_{a}$};
	\node (2a) at (9,1) {$2_{a}$};
	\node (1b) at (7,3) {$1_{b}$};
	\node (2b) at (9,3) {$2_{b}$};
	
	\draw[->,red] (1a) .. controls (7.5,2) and (8.5,2) .. (2a) {};
	\draw[->,red] (2a) .. controls (8.5,0) and (9.5,0) .. (2a) {};
	\draw[->,red] (1b) .. controls (7.5,4) and (8.5,4) .. (2b) {};
	\draw[->,red] (1b) .. controls (6.5,2) and (7.5,2) .. (1b) {};

	\draw[dashed,red] (6,0) -- (6,4) {};
	\draw[dashed,red] (6,4) -- (10,4) {};
	\draw[dashed,red] (10,4) -- (10,0) {};
	\draw[dashed,red] (10,0) -- (6,0) {};

	\draw[dashed,red] (6,2) -- (10,2) {};
	
	\node (Fa) at (9.7,1.7) {\textcolor{red}{$F_{a}$}};
	\node (Fb) at (9.7,3.7) {\textcolor{red}{$F_{b}$}};

\end{tikzpicture}
\end{center}
\caption{Examples of sliced thick graphs: $\frac{1}{2} F + 3 G$ and $F_{a}+F_{b}$}\label{exemplesliceepais}
\end{figure}

One should however notice that some sliced thick graphs (for instance the graph $F_{a}+F_{b}$ represented in red in \autoref{exemplesliceepais}) can be considered both as a thick graph -- hence a sliced thick graph with a single slice -- or as a sliced graph with two slices -- hence a sliced thick graph with two slices. Indeed, consider the graphs:
\begin{equation*}
\begin{array}{rcl|rcl|rcl}
&F_{a} &&&F_{b}&&&F_{c}\\ \hline\hline
V^{F_{a}}&=&\{1,2\}&V^{F_{b}}&=&\{1,2\}&V^{F_{c}}&=&\{1,2\}\times\{a,b\} \\
E^{F_{a}}&=&\{f,g\}&E^{F_{b}}&=&\{f,g\}&E^{F_{c}}&=&\{f_{a},f_{b},g_{a},g_{b}\}\\
s^{F_{a}}&=&\left\{\begin{array}{l} f\mapsto 1 \\ g\mapsto 2\end{array}\right.
	&
		s^{F_{b}}&=&\left\{\begin{array}{l} f\mapsto 1 \\ g\mapsto 1\end{array}\right.
			&
				s^{F_{c}}&=&\left\{\begin{array}{l} f_{i}\mapsto s^{F_{i}}(f) \\ g_{i}\mapsto s^{F_{i}}(g)\end{array}\right.\\
t^{F_{a}}&=&\left\{\begin{array}{l} f\mapsto 2 \\ g\mapsto 2\end{array}\right.
	&
		t^{F_{b}}&=&\left\{\begin{array}{l} f\mapsto 2 \\ g\mapsto 1\end{array}\right.
			&
				t^{F_{c}}&=&\left\{\begin{array}{l} f_{i}\mapsto t^{F_{i}}(f) \\ g_{i}\mapsto t^{F_{i}}(g)\end{array}\right.\\
\omega^{F_{a}}&\equiv&1&\omega^{F_{b}}&\equiv&1&\omega^{F_{c}}&\equiv&1
\end{array}
\end{equation*}

One can then define the the two sliced thick graphs $G_{1}=F_{c}$ and $G_{2}=\frac{1}{2}F_{a}+\frac{1}{2}F_{b}$. These two graphs are represented in \autoref{deuxgraphesepais}. They are similar in a very precise sense: one can show that if $H$ is any sliced thick graph, and $m$ is any circuit-quantifying map, then $\meas{G_{1},H}=\meas{G_{2},H}$. We say they are \emph{universally equivalent}. Notice that this explains in a very formal way the remark about the convention on the measurement of interaction \autoref{secondeconventionepais}.

\begin{figure}
\begin{center}
\begin{tikzpicture}

	\node (A1a) at (1,1) {$1_{a}$};
	\node (A2a) at (3,1) {$2_{a}$};
	\node (A1b) at (1,3) {$1_{b}$};
	\node (A2b) at (3,3) {$2_{b}$};
	
	\draw[->,blue] (A1a) .. controls (1.5,2) and (2.5,2) .. (A2a) {};
	\draw[->,blue] (A2a) .. controls (2.5,0) and (3.5,0) .. (A2a) {};
	\draw[->,blue] (A1b) .. controls (1.5,4) and (2.5,4) .. (A2b) {};
	\draw[->,blue] (A1b) .. controls (0.5,2) and (1.5,2) .. (A1b) {};

	\draw[dashed,blue] (0,0) -- (0,4) {};
	\draw[dashed,blue] (0,4) -- (4,4) {};
	\draw[dashed,blue] (4,4) -- (4,0) {};
	\draw[dashed,blue] (4,0) -- (0,0) {};
	
	\draw[dotted,blue] (0,2) -- (4,2) {};

	\node (G1) at (3.7,3.7) {\textcolor{blue}{$F_{c}$}};

	\node (1a) at (7,1) {$1_{a}$};
	\node (2a) at (9,1) {$2_{a}$};
	\node (1b) at (7,3) {$1_{b}$};
	\node (2b) at (9,3) {$2_{b}$};
	
	\draw[->,red] (1a) .. controls (7.5,2) and (8.5,2) .. (2a) {};
	\draw[->,red] (2a) .. controls (8.5,0) and (9.5,0) .. (2a) {};
	\draw[->,red] (1b) .. controls (7.5,4) and (8.5,4) .. (2b) {};
	\draw[->,red] (1b) .. controls (6.5,2) and (7.5,2) .. (1b) {};

	\draw[dashed,red] (6,0) -- (6,4) {};
	\draw[dashed,red] (6,4) -- (10,4) {};
	\draw[dashed,red] (10,4) -- (10,0) {};
	\draw[dashed,red] (10,0) -- (6,0) {};

	\draw[dashed,red] (6,2) -- (10,2) {};
	
	\node (Fa) at (9.6,1.6) {\textcolor{red}{$\frac{1}{2}F_{a}$}};
	\node (Fb) at (9.6,3.6) {\textcolor{red}{$\frac{1}{2}F_{b}$}};

\end{tikzpicture}
\end{center}
\caption{Graphs $G_{1}$ and $G_{2}$}\label{deuxgraphesepais}
\end{figure}

\begin{definition}[Universally equivalent graphs]\label{equivalenceuniv}
Let $F,G$ be two graphs. We say that $F$ and $G$ are \emph{universally equivalent} (for the measurement $\meas{\cdot,\cdot}$) -- which will be denoted by $F\simeq_{u}G$ -- if for all graph $H$:
\begin{equation*}
\meas{F,H}=\meas{G,H}
\end{equation*}
\end{definition}

The next proposition states that if $F'$ is obtained from a graph $F$ by a renaming of edges, then $F\simeq_{u} F'$.
\begin{proposition}\label{equivunivaretes}
Let $F,F'$ be two graphs such that $V^{F}=V^{F'}$, and $\phi$ a bijection $E^{F}\rightarrow E^{F'}$ such that:
\begin{eqnarray*}
&s^{G}\circ\phi=s^{F},~~
t^{G}\circ\phi=t^{F},~~
\omega^{G}\circ\phi=\omega^{F}
\end{eqnarray*}
Then $F\simeq_{u} F'$.
\end{proposition}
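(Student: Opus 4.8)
The plan is to observe that the edge-renaming $\phi$ induces, for every graph $H$, a weight-preserving bijection between the circuits computing $\meas{F,H}$ and those computing $\meas{F',H}$; summing $m\circ\omega$ over these circuits then gives $\meas{F,H}=\meas{F',H}$ for every $H$, which is exactly $F\simeq_{u}F'$.

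Concretely, I would first promote $\phi$ to a renaming $\widehat{\phi}$ on the edges of the plugging of $F$ with $H$: it sends each $F$-edge $e$ to $\phi(e)$ and fixes every $H$-edge. Read letter-by-letter, $\widehat{\phi}$ acts on finite sequences of edges, and one checks that it carries alternating paths to alternating paths. Alternation is a purely combinatorial condition on which of the two graphs each successive letter belongs to, and $\phi$ preserves membership in $E^{F}$; composability of two consecutive letters $e_{i},e_{i+1}$ depends only on $t(e_{i})$ and $s(e_{i+1})$, and these are preserved since $s^{F'}\circ\phi=s^{F}$ and $t^{F'}\circ\phi=t^{F}$ (and $V^{F}=V^{F'}$ guarantees the two pluggings live over the same vertices); being a circuit (coinciding source and target) is preserved, and being a $1$-circuit — not a proper power of a shorter circuit — is preserved too because $\widehat{\phi}$ is a bijection commuting with concatenation. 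Running the same construction with $\phi^{-1}$ yields a two-sided inverse, so $\widehat{\phi}$ restricts to a bijection between the relevant sets of $1$-circuits (resp. circuits) of $(F,H)$ and $(F',H)$. Finally, since $\omega^{F'}\circ\phi=\omega^{F}$, the $H$-edges are fixed, and the weight of a path is the product of the weights of its letters, this bijection is weight-preserving: $\omega(\widehat{\phi}(\pi))=\omega(\pi)$ for all $\pi$.

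Feeding this into the definition of $\meas{\cdot,\cdot}$ finishes the proof: a renaming of edges changes neither the carrier nor, in the thick setting, the dialect, so the normalising constant attached to $(F,H)$ equals that attached to $(F',H)$, and the weight-preserving bijection makes the two sums coincide term by term. If $F$ and $F'$ are sliced (thick) graphs, one applies the renaming slice by slice and concludes by bilinearity of $\meas{\cdot,\cdot}$ in the formal sums, reducing to the one-sliced case. The whole argument is elementary; the only thing requiring a little care is the bookkeeping of the dialect operations $(\cdot)^{\dagger_{D^{H}}}$ and $(\cdot)^{\ddagger_{D^{F}}}$ used to build the plugging, but since these only duplicate and relabel vertices along the dialect coordinate while leaving the edge set and its $s,t,\omega$ data intact, $\widehat{\phi}$ lifts unambiguously and commutes with them.
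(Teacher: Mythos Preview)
Your proposal is correct and follows essentially the same approach as the paper: the paper's proof is a terse two-sentence argument saying that $\phi$ induces an $\omega$-invariant bijection between the sets of cycles $\cycles{F,H}$ and $\cycles{F',H}$, and yours is a careful unpacking of exactly that. The additional bookkeeping you include about the dialectal operations and the sliced case goes beyond what the paper spells out, but it is correct and in the same spirit.
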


\begin{proof}
Indeed, the bijection $\phi$ induces, from the hypotheses in the source and target functions, a bijection between the sets of cycles $\cycles{F, H}$ and $\cycles{G,H}$. The condition on the weight map then insures us that this bijection is $\omega$-invariant, from which we deduce the proposition.
\end{proof}

\begin{proposition}\label{univequivindex}
Let $F,G$ be sliced graphs. If there exists a bijection $\phi:I^{F}\rightarrow I^{G}$ such that $F_{i}=G_{\phi(i)}$ and $\alpha^{F}_{i}=\alpha^{G}_{\phi(i)}$, then $F\simeq_{u} G$.
\end{proposition}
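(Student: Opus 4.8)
The plan is to unfold the definition of the measurement $\meas{\cdot,\cdot}$ on sliced thick graphs and reindex the summation along the bijection $\phi$. Fix an arbitrary sliced thick graph $H$ with index set $I^{H}$. By the definition of the measurement on sliced thick graphs,
\[
\meas{F,H}=\sum_{(i,j)\in I^{F}\times I^{H}}\alpha^{F}_{i}\alpha^{H}_{j}\meas{F_{i},H_{j}}.
\]
The hypotheses give $\alpha^{F}_{i}=\alpha^{G}_{\phi(i)}$ and $F_{i}=G_{\phi(i)}$ for every $i\in I^{F}$. Since the thick-graph measurement $\meas{\cdot,\cdot}$ is a function of the pair of thick graphs alone (it is defined as $\sum_{\pi\in\circuitsepais{F_{i},H_{j}}}\frac{1}{\card{D^{F_{i}}\times D^{H_{j}}}}m(\omega(\pi))$, which depends only on $F_{i}$, $H_{j}$ and their dialects), we get $\meas{F_{i},H_{j}}=\meas{G_{\phi(i)},H_{j}}$ for every $j\in I^{H}$.

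Next I would perform the substitution $k=\phi(i)$. As $\phi$ is a bijection $I^{F}\to I^{G}$, the map $(i,j)\mapsto(\phi(i),j)$ is a bijection $I^{F}\times I^{H}\to I^{G}\times I^{H}$, so
\[
\meas{F,H}=\sum_{(k,j)\in I^{G}\times I^{H}}\alpha^{G}_{k}\alpha^{H}_{j}\meas{G_{k},H_{j}}=\meas{G,H}.
\]
Since $H$ was arbitrary, $F\simeq_{u}G$ by \autoref{equivalenceuniv}.

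The argument is entirely routine; there is no genuine obstacle beyond the observation, recorded above, that the thick-graph measurement does not see the position a slice occupies in the formal sum, so that reordering slices (while carrying their coefficients along) leaves every measurement invariant. One may note in passing that, taken together with \autoref{equivunivaretes}, this shows that $\simeq_{u}$ absorbs the two natural bureaucratic identifications on sliced thick graphs: renaming of edges and permutation of slices.
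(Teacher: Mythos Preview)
Your proof is correct and follows essentially the same approach as the paper: unfold the definition of the sliced measurement, reindex the sum along the bijection $\phi$, and conclude. Your version is slightly more explicit about why $\meas{F_{i},H_{j}}=\meas{G_{\phi(i)},H_{j}}$, but the argument is otherwise identical.
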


\begin{proof}
By definition:
\begin{eqnarray*}
\meas{G,H}&=&\sum_{(i,j)\in I^{G}\times I^{H}} \alpha^{G}_{i}\alpha^{H}_{j} \meas{G_{i},H_{j}}\\
&=&\sum_{(i,j)\in I^{F}\times I^{H}} \alpha^{G}_{\phi(i)}\alpha^{H}_{j} \meas{G_{\phi(i)},H_{j}}\\
&=&\sum_{(i,j)\in I^{F}\times I^{G}} \alpha^{F}_{i}\alpha^{G}_{j} \meas{F_{i},G_{j}}
\end{eqnarray*}
Thus $F$ and $G$ are universally equivalent.
\end{proof}

\begin{proposition}\label{univequivdialecte}
Let $F,G$ be thick graphs. If there exists a bijection $\phi:D^{F}\rightarrow D^{G}$ such that $G=F^{\phi}$, then $F\simeq_{u} G$.
\end{proposition}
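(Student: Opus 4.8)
The plan is to reduce $F\simeq_{u}G$ to a weight-preserving bijection of circuit sets plus an equality of dialect cardinalities, and then to read off the conclusion from the formula defining $\meas{\cdot,\cdot}$ on thick graphs. First I would note that $\simeq_{u}$ only has to be checked against thick graphs $H$: since $\meas{\cdot,H}$ on a sliced thick graph $H$ is, by definition, an $\realN$-linear combination of the measurements against the thick graphs occurring as the slices of $H$, the general case follows from the thick one. So fix a thick graph $H$.

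The core of the argument is to produce a weight-preserving bijection $\circuitsepais{F,H}\cong\circuitsepais{G,H}$. Recall that $\circuitsepais{F,H}$ is, by definition, the set of circuits of the graph $F^{\dagger_{D^{H}}}\bicol H^{\ddagger_{D^{F}}}$. Writing $\psi=\phi\times\mathrm{Id}_{D^{H}}$ for the induced bijection $D^{F}\times D^{H}\rightarrow D^{G}\times D^{H}$, I would unwind the definitions of $(\cdot)^{\dagger_{D^{H}}}$ and $(\cdot)^{\ddagger_{D^{-}}}$ to check that $(F^{\phi})^{\dagger_{D^{H}}}$ is obtained from $F^{\dagger_{D^{H}}}$ by relabelling every vertex's dialect component through $\psi$, and that $H^{\ddagger_{D^{G}}}$ is obtained from $H^{\ddagger_{D^{F}}}$ by the \emph{same} relabelling $\psi$ (up to the harmless renaming of edges given by $\phi$ on the copy-index, which changes nothing by \autoref{equivunivaretes}). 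Since the two factors of the plugging are transformed by one and the same bijection of their common dialect, $G^{\dagger_{D^{H}}}\bicol H^{\ddagger_{D^{G}}}$ is the image of $F^{\dagger_{D^{H}}}\bicol H^{\ddagger_{D^{F}}}$ under $\psi$; as $\psi$ touches neither edges nor their weights, it induces a weight-preserving bijection on circuits. Alternatively, one could derive this bijection from the already established $\circuitsepais{F,H}\cong\circuitsepais{F,H^{\phi}}$ together with the evident symmetry of $\circuitsepais{\cdot,\cdot}$ in its two arguments.

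It then only remains to observe that $\phi$ being a bijection gives $\card{D^{G}}=\card{D^{F}}$, hence $\card{D^{G}\times D^{H}}=\card{D^{F}\times D^{H}}$; substituting the circuit bijection and this cardinality equality into $\meas{G,H}=\sum_{\pi\in\circuitsepais{G,H}}\frac{1}{\card{D^{G}\times D^{H}}}m(\omega(\pi))$ yields $\meas{G,H}=\meas{F,H}$, as wanted. I expect the only genuine difficulty to lie in the bookkeeping of the second paragraph: faithfully tracking the dialect components through $\dagger$ and $\ddagger$ (the latter carrying a built-in transposition $\tau$) and through the plugging $\bicol$, and making precise the claim that $\bicol$ is natural with respect to a simultaneous dialect relabelling of its two arguments. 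The reduction to thick $H$, the cardinality computation, and the final substitution are all routine.
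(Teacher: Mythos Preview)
Your proposal is correct and follows essentially the same approach as the paper: both arguments relabel the common dialect via $\psi=\phi\times\mathrm{Id}_{D^{H}}$, observe that this sends $F^{\dagger_{D^{H}}}\bicol H^{\ddagger_{D^{F}}}$ to $G^{\dagger_{D^{H}}}\bicol H^{\ddagger_{D^{G}}}$, and conclude that the circuit sets (and hence the measurements) coincide. You are in fact a bit more explicit than the paper about the reduction to thick $H$ and about the equality of the normalizing factors $\card{D^{F}\times D^{H}}=\card{D^{G}\times D^{H}}$, which the paper leaves implicit.
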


\begin{proof}
Let $F,G$ be thick graphs such that $G=F^{\phi}$ for a bijection $\phi:D^{G}\rightarrow D^{F}$, and $H$ an arbitrary thick graph. Then the bijection $\phi\times\text{Id}: D^{G}\times D^{H}\rightarrow D^{F}\times D^{H}$ satisfies that $G^{\dagger_{D^{H}}}=(F^{\dagger_{D^{H}}})^{\phi\times\text{Id}}$. One can notice that the set of alternating 1-circuits in $F^{\dagger}\bicol H^{\ddagger}$ is the same as the set of alternating 1-circuits in $(F^{\dagger})^{\phi\times\text{Id}}\bicol (H^{\dagger})^{\phi\times\text{Id}}=G^{\dagger}\bicol H^{\ddagger}$. Thus:
\begin{eqnarray*}
\meas{F,H}&=&\sum_{\pi\in\cycles{F,H}} m(\omega(\pi))\\
&=&\sum_{\pi\in\cycles{G,H}} m(\omega(\pi))\\
&=&\meas{G,H}
\end{eqnarray*}
And finally $F$ and $G$ are universally equivalent.
\end{proof}

\begin{proposition}\label{fromETtoEequiv}
Let $F=\sum_{i\in I^{F}} \alpha_{i}^{F} F_{i}$ be a sliced thick graph, and let us define, for all $i\in I^{F}$, $n^{F_{i}}=\card{D^{F_{i}}}$ and $n^{F}=\sum_{i\in I^{F}} n^{F_{i}}$. Suppose that there exists a scalar $\alpha$ such that for all $i\in I^{F}$, $\alpha_{i}^{F}=\alpha \frac{n^{F_{i}}}{n^{F}}$. We then define the sliced thick graph $G$ with a single slice $\alpha G$ of dialect $\disjun D^{F_{i}}=\cup_{i\in I^{F}} D^{F_{i}}\times\{i\}$ and carrier $V^{F}$ by:
\begin{eqnarray*}
V^{G}&=&V^{F}\times \disjun D^{F_{i}}\\
E^{G}&=&\disjun E^{F_{i}} = \cup_{i\in I^{F}} E^{F_{i}}\times\{i\}\\
s^{G}&=&(e,i)\mapsto (s^{F_{i}}(e),i)\\
t^{G}&=&(e,i)\mapsto (t^{F_{i}}(e),i)\\
\omega^{G}&=&(e,i)\mapsto \omega^{F_{i}}(e)
\end{eqnarray*}
Then $F$ and $G$ are universally equivalent.
\end{proposition}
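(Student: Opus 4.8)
The plan is to prove $\meas{F,H}=\meas{\alpha G,H}$ for every sliced thick graph $H$, so that $F\simeq_{u}\alpha G$. Since the measurement $\meas{\cdot,\cdot}$ is bilinear in the formal sums defining sliced thick graphs, and $\alpha G$ has a single slice with coefficient $\alpha$, it suffices to establish
$$\alpha\,\meas{G,H}=\sum_{i\in I^{F}}\alpha_{i}^{F}\,\meas{F_{i},H}$$
for an arbitrary \emph{thick} graph $H$, where $\meas{G,H}$ on the left is the thick-graph measurement.

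First I would unpack the structure of $G$. By construction $G$ is the ``disjoint union over $I^{F}$'' of the thick graphs $F_{i}$: its dialect is $D^{G}=\disjun_{i}D^{F_{i}}=\cup_{i}D^{F_{i}}\times\{i\}$, hence $n^{G}:=\card{D^{G}}=\sum_{i}n^{F_{i}}=n^{F}$, and every edge $(e,i)$ of $G$, together with its endpoints, stays inside the block $V^{F}\times(D^{F_{i}}\times\{i\})$. The operations $(\cdot)^{\dagger_{D^{H}}}$ and $(\cdot)^{\ddagger_{D^{G}}}$ only duplicate edges along a new dialect, so this block decomposition survives them: via the natural bijections $(D^{F_{i}}\times\{i\})\times D^{H}\cong D^{F_{i}}\times D^{H}$, the part of $G^{\dagger_{D^{H}}}$ living over the $i$-th block is a variant of $F_{i}^{\dagger_{D^{H}}}$, and the part of $H^{\ddagger_{D^{G}}}$ living over the $i$-th block (the copies of $H$ indexed by $(d,i)$ with $d\in D^{F_{i}}$) is a variant of $H^{\ddagger_{D^{F_{i}}}}$. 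Crucially, no edge of $G^{\dagger_{D^{H}}}\bicol H^{\ddagger_{D^{G}}}$ joins two distinct blocks, so every circuit lies entirely within one block. This is precisely the role of the coherence relation $\coh^{G}$ given in the statement: inside a block it is $\coh^{F_{i}}$, while across blocks it declares everything coherent, but since circuits never leave their block the across-block clause is never tested, so the coherence a circuit in block $i$ sees is exactly that of $F_{i}^{\dagger_{D^{H}}}\bicol H^{\ddagger_{D^{F_{i}}}}$. Using the fact that renaming edges and relabelling dialects induce weight-preserving bijections on circuit sets (as in the proofs of \autoref{equivunivaretes} and \autoref{univequivdialecte}), I obtain a weight-preserving bijection
$$\circuitsepais{G,H}\cong\disjun_{i\in I^{F}}\circuitsepais{F_{i},H}.$$

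Then the computation is routine. Since $\card{D^{G}\times D^{H}}=n^{F}n^{H}$ and $\card{D^{F_{i}}\times D^{H}}=n^{F_{i}}n^{H}$,
\begin{eqnarray*}
\alpha\,\meas{G,H}&=&\frac{\alpha}{n^{F}n^{H}}\sum_{\pi\in\circuitsepais{G,H}}m(\omega(\pi))\;=\;\frac{\alpha}{n^{F}n^{H}}\sum_{i\in I^{F}}\sum_{\pi\in\circuitsepais{F_{i},H}}m(\omega(\pi))\\
&=&\sum_{i\in I^{F}}\alpha\,\frac{n^{F_{i}}}{n^{F}}\cdot\frac{1}{n^{F_{i}}n^{H}}\sum_{\pi\in\circuitsepais{F_{i},H}}m(\omega(\pi))\;=\;\sum_{i\in I^{F}}\alpha_{i}^{F}\,\meas{F_{i},H},
\end{eqnarray*}
where the penultimate equality is exactly the hypothesis $\alpha_{i}^{F}=\alpha\,n^{F_{i}}/n^{F}$. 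By bilinearity this yields $\meas{F,H}=\meas{\alpha G,H}$ for every $H$, hence $F\simeq_{u}\alpha G$. The only genuinely delicate step is the block decomposition of $G^{\dagger_{D^{H}}}\bicol H^{\ddagger_{D^{G}}}$: one must check that the dialect identifications commute with $\dagger$ and $\ddagger$ and that the combined coherence on the plugged graph restricts on each block to precisely the coherence inherited from $F_{i}$ — which is the reason $\coh^{G}$ had to be defined as stated. Everything else is the arithmetic displayed above.
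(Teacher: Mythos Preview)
Your proof is correct and follows essentially the same route as the paper's: both hinge on the observation that the alternating circuits between $G$ and $H$ decompose as the disjoint union over $i\in I^{F}$ of the circuits between $F_{i}$ and $H$, after which the arithmetic with the normalizing factors $1/(n^{F}n^{H})$ versus $1/(n^{F_{i}}n^{H})$ is identical. Your version is in fact more explicit about \emph{why} the circuit decomposition holds (the block structure of $G^{\dagger_{D^{H}}}\bicol H^{\ddagger_{D^{G}}}$ and the role of the coherence clause), whereas the paper simply asserts the equality of circuit sets; you also streamline slightly by reducing to a thick $H$ via bilinearity first, while the paper keeps $H$ sliced throughout --- but this is purely presentational.
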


\begin{proof}
Let $H$ be a sliced thick graph. Then:
\begin{eqnarray*}
\meas{F,H}&=&\sum_{i\in I^{H}}\sum_{j\in I^{F}}\alpha^{H}_{i}\alpha^{F}_{j}\meas{F_{i},H_{j}}\\
&=&\sum_{i\in I^{H}}\sum_{j\in I^{F}}\alpha^{H}_{i}\alpha\frac{n^{F_{i}}}{n^{F}}\meas{F_{i},H_{j}}\\
&=&\sum_{i\in I^{H}}\sum_{j\in I^{F}}\alpha^{H}_{i}\alpha\frac{n^{F_{i}}}{n^{F}}\frac{1}{n^{F_{i}}n^{H_{j}}}\sum_{\pi\in\cycles{F_{i},H_{j}}}m(\omega(\pi))\\
&=&\sum_{i\in I^{H}}\sum_{j\in I^{F}}\alpha^{H}_{i}\alpha\frac{1}{n^{F}n^{H_{j}}}\sum_{\pi\in\cycles{F_{i},H_{j}}}m(\omega(\pi))\\
&=&\sum_{i\in I^{H}}\alpha^{H}_{i}\alpha\frac{1}{n^{F}n^{H_{j}}}\sum_{j\in I^{F}}\sum_{\pi\in\cycles{F_{i},H_{j}}}m(\omega(\pi))
\end{eqnarray*}
But one can notice that $\cup_{j\in I^{F}} \cycles{F_{i},H_{j}}=\cycles{G,H}$. We thus get:
\begin{eqnarray*}
\meas{F,H}&=&\sum_{i\in I^{H}}\alpha^{H}_{i}\alpha\frac{1}{n^{F}n^{H_{j}}}\sum_{j\in I^{F}}\sum_{\pi\in\cycles{F_{i},H_{j}}}m(\omega(\pi))\\
&=&\sum_{i\in I^{H}}\alpha^{H}_{i}\alpha\frac{1}{n^{F}n^{H_{j}}}\sum_{\pi\in\cycles{F,H_{j}}}m(\omega(\pi))\\
&=&\meas{\alpha G,H}
\end{eqnarray*}
Finally, we showed that $F$ and $\alpha G$ are universally equivalent.
\end{proof}


One of the consequences of \autoref{equivunivaretes}, \autoref{univequivindex}, and \autoref{univequivdialecte} is that two graphs $F,G$ such that $G$ is obtained from $F$ by a renaming of the sets $E^{F},I^{F},D^{F}$ are universally equivalent. We will therefore work from now on with graphs modulo renaming of these sets.

\subsection{Thick Graphs and Contraction}\label{subsec:contraction}

In this section, we will explain how the introduction of thick graphs allow the definition of contraction by using the fact that edges can go from a slice to another (contrarily to sliced graphs). In the following, we will be working with sliced thick graphs. The way contraction is dealt with by using slice-changing edges is quite simple, and the graph which will implement this transformation is essentially the same as the graph implementing additive contraction (i.e.\ the graph implementing distributivity -- \autoref{distributivity} -- restricted to the location of contexts\footnote{We did not insist on this up to this point in the paper, but the interpretation of proofs is here \emph{locative}, i.e.\ to define properly the interpretation of proofs and formulas, all occurrences of formula are given a distinct \emph{location} -- here a set of vertices; a measurable set when we will consider graphings.}) modified with a change of slices.

The graph we obtain is then the superimposition of two $\de{Fax}$, but where one of them goes from one slice to the other.

\begin{figure}
\centering
\begin{tikzpicture}[x=0.72cm,y=0.72cm]
	\node (A1) at (0,0) {$1_{1}$};
	\node (B1) at (1.5,0) {$2_{1}$};
	\node (C1) at (3,0) {$3_{1}$};
	\node (D1) at (4.5,0) {$4_{1}$};
	\node (E1) at (6,0) {$5_{1}$};
	\node (F1) at (7.5,0) {$6_{1}$};
	\node (G1) at (9,0) {$7_{1}$};
	\node (H1) at (10.5,0) {$8_{1}$};
	\node (I1) at (12,0) {$9_{1}$};
	\node (A2) at (0,2) {$1_{2}$};
	\node (B2) at (1.5,2) {$2_{2}$};
	\node (C2) at (3,2) {$3_{2}$};
	\node (D2) at (4.5,2) {$4_{2}$};
	\node (E2) at (6,2) {$5_{2}$};
	\node (F2) at (7.5,2) {$6_{2}$};
	\node (G2) at (9,2) {$4_{2}$};
	\node (H2) at (10.5,2) {$5_{2}$};
	\node (I2) at (12,2) {$6_{2}$};

	\draw[<->,blue] (A2) -- (D1) {};
	\draw[<->,blue] (B2) -- (E1) {};
	\draw[<->,blue] (C2) -- (F1) {};
	\draw[<->,blue] (A1) .. controls (2,-1) and (10,-1) .. (I1) {};
	\draw[<->,blue] (B1) .. controls (3.5,-0.8) and (8.5,-0.8) .. (H1) {};
	\draw[<->,blue] (C1) .. controls (5,-0.6) and (7,-0.6) .. (G1) {};

	\draw[opacity=0.2,fill=blue] 
		(-1,-1) -- 
			(13,-1) --
				(13,3) --
					(-1,3) --
						(-1,-1) {};
	\draw[dashed,blue] (-1,-1) -- (13,-1) {};
	\draw[dashed,blue] (13,-1) -- (13,3) {};
	\draw[dashed,blue] (13,3) -- (-1,3) {};
	\draw[dashed,blue] (-1,3) -- (-1,-1) {};
	\draw[dotted,blue] (-1,1) -- (13,1) {};
	
\end{tikzpicture}
\caption{The graph of a contraction project}\label{exemplecontr}
\end{figure}

\begin{definition}[Contraction]\label{contraction}
Let $\phi: V^{A}\rightarrow W_{1}$ and $\psi: V^{A}\rightarrow W_{2}$ be two bijections with $V^{A}\cap W_{1}=V^{A}\cap W_{2}=W_{1}\cap W_{2}=\emptyset$. We define the project $\de{Ctr}^{\phi}_{\psi}=(0,\text{\textnormal{Ctr}}^{\phi}_{\psi})$, where the graph $\text{\textnormal{Ctr}}^{\phi}_{\psi}$ is defined by:
\begin{eqnarray*}
V^{\text{\textnormal{Ctr}}^{\phi}_{\psi}}&=&V^{A}\cup W_{1}\cup W_{2}\\
D^{\text{\textnormal{Ctr}}^{\phi}_{\psi}}&=&\{1,2\}\\
E^{\text{\textnormal{Ctr}}^{\phi}_{\psi}}&=&V^{A}\times\{1,2\}\times\{i,o\}\\
s^{\text{\textnormal{Ctr}}^{\phi}_{\psi}}&=&\left\{\begin{array}{rcl} 
										(v,1,o)&\mapsto& (\phi(v),1)\\
										(v,1,i)&\mapsto& (v,1)\\
										(v,2,o)&\mapsto& (\psi(v),1)\\
										(v,2,i)&\mapsto& (v,2)
								\end{array}\right.\\
t^{\text{\textnormal{Ctr}}^{\phi}_{\psi}}&=&\left\{\begin{array}{rcl} 
										(v,1,o)&\mapsto& (v,1)\\
										(v,1,i)&\mapsto& (\phi(v),1)\\
										(v,2,o)&\mapsto& (v,2)\\
										(v,2,i)&\mapsto& (\psi(v),1)
								\end{array}\right.\\
\omega^{\text{\textnormal{Ctr}}^{\phi}_{\psi}}&\equiv&1
\end{eqnarray*}
\end{definition}

\autoref{exemplecontr} illustrates the graph of the project $\de{Ctr}^{\psi}_{\phi}$, where the functions are defined by $\phi: \{1,2,3\}\rightarrow \{4,5,6\}, x\mapsto x+3$ and $\psi:\{1,2,3\}\rightarrow\{7,8,9\}, x\mapsto 10-x$.

\begin{proposition}
Let $\de{a}=(0,A)$ be a project in a behaviour $\cond{A}$, such that $D^{A}\cong\{1\}$. Let $\phi,\psi$ be two delocations  $V^{A}\rightarrow W_{1}$, $V^{A}\rightarrow W_{2}$ of disjoint codomains. Then $\de{Ctr}^{\psi}_{\phi}\plug\de{a}\in\cond{\phi(A)\otimes\psi(A)}$.
\end{proposition}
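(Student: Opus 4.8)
The plan is to unfold both sides of the claimed membership and reduce it to a computation of the execution $\de{Ctr}^{\psi}_{\phi}\plug \de{a}$, followed by an application of the definition of $\otimes$ on behaviors together with the hypothesis that $\de{a}$ is wager-free with a trivial dialect. First I would observe that since $\de{a}=(0,A)$ has dialect $D^{A}\cong\{1\}$, it may be regarded as a thick graph (a one-sliced thick graph), so the plugging $\de{Ctr}^{\psi}_{\phi}\plug\de{a}$ is computed via the thick-graph execution $\plugepais$: we form $(\mathrm{Ctr}^{\psi}_{\phi})^{\dagger_{\{1\}}}$, which is just $\mathrm{Ctr}^{\psi}_{\phi}$ itself up to the obvious renaming of the dialect, and $A^{\ddagger_{\{1,2\}}}$, which duplicates $A$ across the two slices of $\mathrm{Ctr}^{\psi}_{\phi}$. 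I would then trace the alternating paths: an alternating path must alternate between the slice-changing/identity edges $(v,d,i),(v,d,o)$ of $\mathrm{Ctr}^{\psi}_{\phi}$ and the edges of the two copies of $A$ sitting on $W_{1}$ (slice $1$) and on $W_{1}$-again-but-read-through-slice-$2$. The key geometric point is that the edges labelled $(v,1,i/o)$ connect $(v,1)$ on the carrier $V^{A}$ to $(\phi(v),1)$, feeding into the copy of $A$ living over $\phi(V^{A})=W_{1}$; the edges $(v,2,i/o)$ connect $(v,2)$ to $(\psi(v),1)$, and — crucially — the target lands in \emph{slice $1$}, so the second copy of $A$ that gets traversed is the slice-$1$ copy read through the relocation $\psi$. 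Since $A$ has a single slice, the dialect bookkeeping is trivial and no cross-slice cycles in $A$ itself can occur.

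The core computation is then that $\de{Ctr}^{\psi}_{\phi}\plug\de{a}$ is (up to renaming of edges and of the dialect, hence up to universal equivalence by \autoref{equivunivaretes} and \autoref{univequivdialecte}, which suffices since behaviors are insensitive to such renamings through the measurement) exactly the thick graph $\de{\phi(a)}\otimes\de{\psi(a)}$ placed on carrier $W_{1}\cup W_{2}$, with the two tensor factors carried respectively by the slice-$1$ and slice-$2$ structure. Here $\de{\phi(a)}$ denotes the delocation of $\de{a}$ along $\phi$, i.e. the project $(0,\phi(A))$, and similarly $\de{\psi(a)}$; both are wager-free, so the wager of the tensor is $0$, matching the wager $0$ produced by the execution (all circuits between $\mathrm{Ctr}$ and $A$ vanish here because $A$ is acyclic-against-$\mathrm{Ctr}$ in the relevant sense — the $\mathrm{Ctr}$ graph is essentially a disjoint union of two $\de{Fax}$-like bijections, so $\sca{Ctr}{a}=0$). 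Once this identification is in place, $\de{Ctr}^{\psi}_{\phi}\plug\de{a}=\de{\phi(a)}\otimes\de{\psi(a)}$, and since $\de{a}\in\cond{A}$ and delocation preserves membership in behaviors up to delocated behavior, $\de{\phi(a)}\in\cond{\phi(A)}$ and $\de{\psi(a)}\in\cond{\psi(A)}$, whence $\de{\phi(a)}\otimes\de{\psi(a)}\in\{\de{x\plug y}\mid\de{x}\in\cond{\phi(A)},\de{y}\in\cond{\psi(A)}\}\subseteq\{\ldots\}^{\pol\pol}=\cond{\phi(A)\otimes\psi(A)}$.

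The main obstacle I expect is the careful verification of the path computation: specifically, checking that the alternating paths between $\mathrm{Ctr}^{\psi}_{\phi}$ and (the dialectal duplication of) $A$ are \emph{in weight-preserving bijection} with the edges of the two relocated copies of $A$ and that no spurious paths survive — i.e. that a path entering slice $2$ of $\mathrm{Ctr}$ through a $(v,2,i)$ edge, wandering in the copy of $A$ over $W_{1}$, and trying to exit through a $(w,1,o)$ edge back into slice $1$ is blocked because the source of $(w,1,o)$ is $(\phi(w),1)$ which lies in $W_{1}$, not in $\psi(V^{A})=W_{2}$, and similarly in the symmetric case — so the two copies of $A$ genuinely stay in their own slices and tensor rather than interfere. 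This is exactly the slice-changing mechanism advertised in \autoref{subsec:contraction}: the edges $(v,2,\cdot)$ have their source in slice $2$ but target in slice $1$, which is what forces the second factor to be carried on slice $2$ while still being a faithful copy of $A$. The dialect of $\de{Ctr}^{\psi}_{\phi}\plug\de{a}$ is $\{1,2\}\times\{1\}\cong\{1,2\}$, matching the two-slice structure needed for the tensor of two graphs read in parallel slices; the wager-freeness input ($a=0$) is what makes the arithmetic of $\sca{\cdot}{\cdot}$ collapse to $0$ and avoids any $\infty$ or nonzero-wager pathology.
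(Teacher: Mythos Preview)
Your path computation is incorrect, and this causes the argument to miss the essential use of the behavior hypothesis. Trace the $\psi$-edges of $\mathrm{Ctr}^{\psi}_{\phi}$ again: $(v,2,o)$ has source $(\psi(v),1)$ and target $(v,2)$, while $(v,2,i)$ has source $(v,2)$ and target $(\psi(v),1)$. So an alternating path entering at $(\psi(v),1)\in W_{2}\times\{1\}$ goes to $(v,2)$, traverses an edge of the \emph{slice-$2$} copy of $A$ to some $(w,2)$, and exits to $(\psi(w),1)$. The resulting path therefore has both source and target in slice~$1$. The same happens for the $\phi$-paths. Consequently the execution $\mathrm{Ctr}^{\psi}_{\phi}\plugepais A$ is a thick graph of dialect $\{1,2\}$ whose slice~$1$ contains $\phi(A)\cup\psi(A)$ and whose slice~$2$ is \emph{empty}; this is exactly what the paper's figures for $\de{Ctr}^{\phi}_{\psi}\plug\de{a}$ show. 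It is \emph{not} a thick graph carrying $\phi(A)$ in one slice and $\psi(A)$ in the other, so it is not (up to renaming) equal to $\de{\phi(a)}\otimes\de{\psi(a)}$.

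The correct next step is the one the paper takes: by \autoref{fromETtoEequiv} that two-slice thick graph is universally equivalent to the sliced project $\tfrac{1}{2}\de{\phi(a)\otimes\psi(a)}+\tfrac{1}{2}\de{0}$. One then needs the homothety lemma (\autoref{homothetie}) to put $\tfrac{1}{2}\de{\phi(a)\otimes\psi(a)}$ in $\cond{\phi(A)\otimes\psi(A)}$, and finally the fact that $\cond{A}$ is a \emph{behavior} (hence so is $\cond{\phi(A)\otimes\psi(A)}$) to absorb the additional $\tfrac{1}{2}\de{0}$ term via the inflation property. Your proposal never invokes the behavior hypothesis, which is a signal that something has gone wrong: without it the conclusion fails in general, as the paper explicitly emphasizes in \autoref{remcontraction} and in the discussion surrounding \autoref{contractiongen}.
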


\begin{proof}
We will denote by $\text{Ctr}$ the graph $\text{Ctr}_{\phi}^{\psi}$ to simplify the notations. We first compute $A\plugepais \text{Ctr}$. We get $A^{\ddagger_{\{1,2\}}}=(V^{A}\times\{1,2\},E^{A}\times\{1,2\},s^{A}\times Id_{\{1,2\}}, t^{A}\times Id_{\{1,2\}}, \omega^{A}\circ\pi)$ where $\pi$ is the projection: $E^{A}\times\{1,2\}\rightarrow E^{A}, (x,i)\mapsto x$. Moreover the graph $\text{Ctr}^{\dagger_{D^{A}}}$ is a variant of the graph $\text{Ctr}$ since $D^{A}\cong\{1\}$. Here is what the plugging of $\text{Ctr}^{\dagger_{D^{A}}}$ with $A^{\ddagger_{\{1,2\}}}$ looks like:
\begin{center}
\begin{tikzpicture}
	\draw (0,2) -- (1,2) node [midway,below] {$V^{A}\times\{2\}$};
	\draw (2,2) -- (3,2) node [midway,below] {$W_{1}\times\{2\}$};
	\draw (4,2) -- (5,2) node [midway,below] {$W_{2}\times\{2\}$};
	\draw (0,0) -- (1,0) node [midway,above] {$V^{A}\times\{1\}$};
	\draw (2,0) -- (3,0) node [midway,above] {$W_{1}\times\{1\}$};
	\draw (4,0) -- (5,0) node [midway,above] {$w_{2}\times\{1\}$};

	\draw[fill=blue,opacity=0.2] 
	(-0.5,-0.5) --
		(-0.5,2.5) --
			(1.5,2.5) --
				(1.5,-0.5) --
					(-0.5,-0.5) {};
	\draw[blue,dashed] 	(-0.5,-0.5) -- (-0.5,2.5) {};
	\draw[blue,dashed] 	(-0.5,2.5) -- (1.5,2.5) {};
	\draw[blue,dashed] 	(1.5,2.5) -- (1.5,-0.5) {};
	\draw[blue,dashed] 	(1.5,-0.5) -- (-0.5,-0.5) {};
	\draw[blue,densely dotted] (-0.5,1) -- (1.5,1) {};
	
	\draw[fill=red,opacity=0.2]
	(-1,-1) --
		(-1,3) --
			(6,3) --
				(6,-1) --
					(-1,-1) {};
	\draw[red,dashed] (-1,-1) -- (-1,3) {};
	\draw[red,dashed] (-1,3) -- (6,3) {};
	\draw[red,dashed] (6,3) -- (6,-1) {};
	\draw[red,dashed] (6,-1) -- (-1, -1) {};
	\draw[red,dotted] (-1,1) -- (6,1) {};
	
	\draw[<->,red] (0.5,2) .. controls (0.5,0.5) and (2.5,1.5) .. (2.5,0) node [midway,above] {$\phi$};
	\draw[<->,red] (0.5,0) .. controls (0.5,-1) and (4.5,-1) .. (4.5,0) node [midway,above] {$\psi$};
\end{tikzpicture}
\end{center}
The result of the execution is therefore a two-sliced graph, i.e.\ a graph of dialect $D^{A}\times\{1,2\}\cong\{1,2\}$, and which contains the graph $\phi(A)\cup\psi(A)$ in the slice numbered $1$ and contains the empty graph in the slice numbered $2$.

We deduce from this that $\de{Ctr}^{\psi}_{\phi}\plug\de{a}$ is universally equivalent (\autoref{equivalenceuniv}) to the project $\frac{1}{2}\de{\phi(a)\otimes\psi(a)}+\frac{1}{2}\de{0}$ from \autoref{fromETtoEequiv}. Since $\de{\phi(a)\otimes\psi(a)}$ belongs to $\cond{\phi(A)\otimes\psi(A)}$, then the project $\frac{1}{2}(\de{\phi(a)\otimes\psi(a)})$ is an element in $\cond{\phi(A)\otimes\psi(A)}$ by the homothety Lemma (\autoref{homothetie}).  Moreover, $\cond{A}$ is a behaviour, hence $\cond{\phi(A)\otimes\psi(A)}$ is also a behaviour and we can deduce that $\frac{1}{2}\de{\phi(a)\otimes\psi(a)}+\frac{1}{2}\de{0}$ is an element in $\cond{\phi(A)\otimes\psi(A)}$.
\end{proof}

\autoref{banchementcontrA}, \autoref{execcontrA} and \autoref{execcontrB} illustrate the plugging and execution of a contraction with two graphs: the first -- $A$ -- having a single slice, and the other -- $B$ -- having two slices (the graphs are shown in \autoref{graphescontraction}). One can see that the hypothesis $D^{A}\equiv\{1\}$ used in the preceding proposition is necessary, and that slice-changing edges allow to implement contraction of graphs with a single slice.

\begin{figure}[b]
\centering
\subfigure[The graph of the project $\de{Ctr}^{\phi}_{\psi}$]{
\begin{tikzpicture}[x=0.72cm,y=0.72cm]
	\node (A1) at (0,0) {$1_{1}$};
	\node (B1) at (1.5,0) {$2_{1}$};
	\node (C1) at (3,0) {$3_{1}$};
	\node (D1) at (4.5,0) {$4_{1}$};
	\node (E1) at (6,0) {$5_{1}$};
	\node (F1) at (7.5,0) {$6_{1}$};
	\node (A2) at (0,2) {$1_{2}$};
	\node (B2) at (1.5,2) {$2_{2}$};
	\node (C2) at (3,2) {$3_{2}$};
	\node (D2) at (4.5,2) {$4_{2}$};
	\node (E2) at (6,2) {$5_{2}$};
	\node (F2) at (7.5,2) {$6_{2}$};

	\draw[<->,blue] (A2) -- (E1) {};
	\draw[<->,blue] (B2) -- (F1) {};
	\draw[<->,blue] (A1) .. controls (1,-1) and (2,-1) .. (C1) {};
	\draw[<->,blue] (B1) .. controls (2.5,-1) and (3.5,-1) .. (D1) {};

	\draw[opacity=0.2,fill=blue] 
		(-1,-1) -- 
			(8.5,-1) --
				(8.5,3) --
					(-1,3) --
						(-1,-1) {};
	\draw[dashed,blue] (-1,-1) -- (8.5,-1) {};
	\draw[dashed,blue] (8.5,-1) -- (8.5,3) {};
	\draw[dashed,blue] (8.5,3) -- (-1,3) {};
	\draw[dashed,blue] (-1,3) -- (-1,-1) {};
	\draw[dotted,blue] (-1,1) -- (8.5,1) {};
	
\end{tikzpicture}}
\subfigure[The graphs $A$ and $B$ of the projects $\de{a}$ and $\de{b}$]{
\begin{tikzpicture}[x=0.72cm,y=0.72cm]
	\node (A1) at (0,0) {$1$};
	\node (B1) at (2,0) {$2$};
	
	\draw[<->,red] (A1) .. controls (0,1) and (2,1) .. (B1) {};

	\draw[fill=red,opacity=0.2]
		(-1,-1) --
			(-1,1.5) --
				(3,1.5) --
					(3,-1) --
						(-1,-1) {};
	\draw[dashed,red] (-1,-1) -- (-1,1.5) {};
	\draw[dashed,red] (-1,1.5) -- (3,1.5) {};
	\draw[dashed,red] (3,1.5) -- (3,-1) {};
	\draw[dashed,red] (3,-1) -- (-1,-1) {};
	\node (nodeA) at (-0.7,1.2) [red] {$A$};

	\node (A21) at (6,-1) {$1_{1}$};
	\node (A22) at (6,1) {$1_{2}$};
	\node (B21) at (8,-1) {$2_{1}$};
	\node (B22) at (8,1) {$2_{2}$};

	\definecolor{darkgreen}{rgb}{0.15,0.6,0.15};
	\draw[<->,darkgreen] (A21) .. controls (6,0) and (8,0) .. (B21) {};
	\draw[<->,darkgreen] (A22) .. controls (5,2) and (7,2) .. (A22) {};
	\draw[<->,darkgreen] (B22) .. controls (7,2) and (9,2) .. (B22) {};

	\draw[fill=darkgreen,opacity=0.2]
		(5,-2) --
			(5,2.5) --
				(9,2.5) --
					(9,-2) --
						(5,-2) {};
	\draw[dashed,darkgreen] (5,-2) -- (5,2.5) {};
	\draw[dashed,darkgreen] (5,2.5) -- (9,2.5) {};
	\draw[dashed,darkgreen] (9,2.5) -- (9,-2) {};
	\draw[dashed,darkgreen] (9,-2) -- (5,-2) {};
	\draw[dotted,darkgreen] (5,0.5) -- (9,0.5) {};
	\node (nodeB) at (8.7,2.2) [darkgreen] {$B$};		
\end{tikzpicture}}
\caption{The graphs of the projects $\de{Ctr}^{\phi}_{\psi}$, $\de{a}$ and $\de{b}$.}\label{graphescontraction}
\end{figure}

\begin{figure}
\centering
\subfigure[Plugging of $\text{\textnormal{Ctr}}^{\phi}_{\psi}$ and $A$]{
\begin{tikzpicture}[x=0.72cm,y=0.72cm]
	\node (A1) at (0,0) {$1_{1}$};
	\node (B1) at (2,0) {$2_{1}$};
	\node (C1) at (4,0) {$3_{1}$};
	\node (D1) at (6,0) {$4_{1}$};
	\node (E1) at (8,0) {$5_{1}$};
	\node (F1) at (10,0) {$6_{1}$};
	\node (A2) at (0,3) {$1_{2}$};
	\node (B2) at (2,3) {$2_{2}$};
	\node (C2) at (4,3) {$3_{2}$};
	\node (D2) at (6,3) {$4_{2}$};
	\node (E2) at (8,3) {$5_{2}$};
	\node (F2) at (10,3) {$6_{2}$};

	\draw[<->,blue] (A2) -- (E1) {};
	\draw[<->,blue] (B2) -- (F1) {};
	\draw[<->,blue] (A1) .. controls (0,-1) and (4,-1) .. (C1) {};
	\draw[<->,blue] (B1) .. controls (2,-1) and (6,-1) .. (D1) {};

	\draw[opacity=0.2,fill=blue] 
		(-1.5,-1.5) -- 
			(11.5,-1.5) --
				(11.5,4.5) --
					(-1.5,4.5) --
						(-1.5,-1.5) {};
	\draw[dashed,blue] (-1.5,-1.5) -- (11.5,-1.5) {};
	\draw[dashed,blue] (11.5,-1.5) -- (11.5,4.5) {};
	\draw[dashed,blue] (11.5,4.5) -- (-1.5,4.5) {};
	\draw[dashed,blue] (-1.5,4.5) -- (-1.5,-1.5) {};
	\draw[dotted,blue] (-1.5,1.5) -- (11.5,1.5) {};
	
	\draw[<->,red] (A1) .. controls (0,1) and (2,1) .. (B1) {};

	\draw[fill=red,opacity=0.2]
		(-1,-1) --
			(-1,1) --
				(3,1) --
					(3,-1) --
						(-1,-1) {};
	\draw[dashed,red] (-1,-1) -- (-1,1) {};
	\draw[dashed,red] (-1,1) -- (3,1) {};
	\draw[dashed,red] (3,1) -- (3,-1) {};
	\draw[dashed,red] (3,-1) -- (-1,-1) {};
	\node (nodeA) at (-0.7,0.7) [red] {$A$};

	\draw[<->,red] (A2) .. controls (0,4) and (2,4) .. (B2) {};

	\draw[fill=red,opacity=0.2]
		(-1,2) --
			(-1,4) --
				(3,4) --
					(3,2) --
						(-1,2) {};
	\draw[dashed,red] (-1,2) -- (-1,4) {};
	\draw[dashed,red] (-1,4) -- (3,4) {};
	\draw[dashed,red] (3,4) -- (3,2) {};
	\draw[dashed,red] (3,2) -- (-1,2) {};
	\node (nodeA) at (-0.7,3.7) [red] {$A$};
\end{tikzpicture}
}
\subfigure[Plugging of $\text{\textnormal{Ctr}}^{\phi}_{\psi}$ and $B$]{
\begin{tikzpicture}[x=0.72cm,y=0.72cm,z=-0.75cm,scale=0.8]
	\node (A11) at (0,0,0) {$1_{1,1}$};
	\node (B11) at (2,0,0) {$2_{1,1}$};
	\node (C11) at (4,0,0) {$3_{1,1}$};
	\node (D11) at (6,0,0) {$4_{1,1}$};
	\node (E11) at (8,0,0) {$5_{1,1}$};
	\node (F11) at (10,0,0) {$6_{1,1}$};
	\node (A21) at (0,3,0) {$1_{2,1}$};
	\node (B21) at (2,3,0) {$2_{2,1}$};
	\node (C21) at (4,3,0) {$3_{2,1}$};
	\node (D21) at (6,3,0) {$4_{2,1}$};
	\node (E21) at (8,3,0) {$5_{2,1}$};
	\node (F21) at (10,3,0) {$6_{2,1}$};

	\node[opacity=0.6] (A12) at (0,0,-4) {$1_{1,2}$};
	\node[opacity=0.6] (B12) at (2,0,-4) {$2_{1,2}$};
	\node[opacity=0.6] (C12) at (4,0,-4) {$3_{1,2}$};
	\node[opacity=0.6] (D12) at (6,0,-4) {$4_{1,2}$};
	\node[opacity=0.6] (E12) at (8,0,-4) {$5_{1,2}$};
	\node[opacity=0.6] (F12) at (10,0,-4) {$6_{1,2}$};
	\node[opacity=0.6] (A22) at (0,3,-4) {$1_{2,2}$};
	\node[opacity=0.6] (B22) at (2,3,-4) {$2_{2,2}$};
	\node[opacity=0.6] (C22) at (4,3,-4) {$3_{2,2}$};
	\node[opacity=0.6] (D22) at (6,3,-4) {$4_{2,2}$};
	\node[opacity=0.6] (E22) at (8,3,-4) {$5_{2,2}$};
	\node[opacity=0.6] (F22) at (10,3,-4) {$6_{2,2}$};

	\draw[<->,blue] (A21) -- (E11) {};
	\draw[<->,blue] (B21) -- (F11) {};
	\draw[<->,blue] (A11) .. controls (0,-1,0) and (4,-1,0) .. (C11) {};
	\draw[<->,blue] (B11) .. controls (2,-1,0) and (6,-1,0) .. (D11) {};

	\draw[dashed,<->,blue] (A22) -- (E12) {};
	\draw[dashed,<->,blue] (B22) -- (F12) {};
	\draw[dashed,<->,blue] (A12) .. controls (0,-1,-4) and (4,-1,-4) .. (C12) {};
	\draw[dashed,<->,blue] (B12) .. controls (2,-1,-4) and (6,-1,-4) .. (D12) {};
	
	\draw[opacity=0.2,fill=blue] 
		(-1.5,-1.5,0) -- 
			(11.5,-1.5,0) --
				(11.5,4.5,0) --
					(-1.5,4.5,0) --
						(-1.5,-1.5,0) {};
	\draw[dashed,blue] (-1.5,-1.5,0) -- (11.5,-1.5,0) {};
	\draw[dashed,blue] (11.5,-1.5,0) -- (11.5,4.5,0) {};
	\draw[dashed,blue] (11.5,4.5,0) -- (-1.5,4.5,0) {};
	\draw[dashed,blue] (-1.5,4.5,0) -- (-1.5,-1.5,0) {};
	\draw[dotted,blue] (-1.5,1.5,0) -- (11.5,1.5,0) {};

	\draw[opacity=0.2,fill=blue] 
		(-1.5,-1.5,-4) -- 
			(11.5,-1.5,-4) --
				(11.5,4.5,-4) --
					(-1.5,4.5,-4) --
						(-1.5,-1.5,-4) {};
	\draw[dashed,blue] (-1.5,-1.5,-4) -- (11.5,-1.5,-4) {};
	\draw[dashed,blue] (11.5,-1.5,-4) -- (11.5,4.5,-4) {};
	\draw[dashed,blue] (11.5,4.5,-4) -- (-1.5,4.5,-4) {};
	\draw[dashed,blue] (-1.5,4.5,-4) -- (-1.5,-1.5,-4) {};
	\draw[dotted,blue] (-1.5,1.5,-4) -- (11.5,1.5,-4) {};

	\definecolor{darkgreen}{rgb}{0.15,0.6,0.15};
	\draw[<->,darkgreen] (A11) .. controls (0,0,-2) and (2,0,-2) .. (B11) {};
	\draw[dashed,<->,darkgreen] (A12) .. controls (-1,0,-6) and (1,0,-6) .. (A12) {};
	\draw[dashed,<->,darkgreen] (B12) .. controls (1,0,-6) and (3,0,-6) .. (B12) {};

	\draw[fill=darkgreen,opacity=0.2]
		(-1,0,2) --
			(-1,0,-6) --
				(3,0,-6) --
					(3,0,2) --
						(-1,0,2) {};
	\draw[dashed,darkgreen] (-1,0,2) -- (-1,0,-6) {};
	\draw[dashed,darkgreen] (-1,0,-6) -- (3,0,-6) {};
	\draw[dashed,darkgreen] (3,0,-6) -- (3,0,2) {};
	\draw[dashed,darkgreen] (3,0,2) -- (-1,0,2) {};	

	\draw[<->,darkgreen] (A21) .. controls (0,3,-2) and (2,3,-2) .. (B21) {};
	\draw[dashed,<->,darkgreen] (A22) .. controls (-1,3,-6) and (1,3,-6) .. (A22) {};
	\draw[dashed,<->,darkgreen] (B22) .. controls (1,3,-6) and (3,3,-6) .. (B22) {};

	\draw[fill=darkgreen,opacity=0.2]
		(-1,3,2) --
			(-1,3,-6) --
				(3,3,-6) --
					(3,3,2) --
						(-1,3,2) {};
	\draw[dashed,darkgreen] (-1,3,2) -- (-1,3,-6) {};
	\draw[dashed,darkgreen] (-1,3,-6) -- (3,3,-6) {};
	\draw[dashed,darkgreen] (3,3,-6) -- (3,3,2) {};
	\draw[dashed,darkgreen] (3,3,2) -- (-1,3,2) {};
	
\end{tikzpicture}
}
\caption{Plugging of $\text{\textnormal{Ctr}}^{\phi}_{\psi}$ with the two graphs $A$ and $B$}\label{banchementcontrA}
\end{figure}

\begin{figure}
\centering
\subfigure[Result of the execution of $\text{\textnormal{Ctr}}^{\phi}_{\psi}$ and $A$]{
\begin{tikzpicture}[x=0.72cm,y=0.72cm]
	\node (C1) at (4,0) {$3_{1}$};
	\node (D1) at (6,0) {$4_{1}$};
	\node (E1) at (8,0) {$5_{1}$};
	\node (F1) at (10,0) {$6_{1}$};
	\node (C2) at (4,2.5) {$3_{2}$};
	\node (D2) at (6,2.5) {$4_{2}$};
	\node (E2) at (8,2.5) {$5_{2}$};
	\node (F2) at (10,2.5) {$6_{2}$};
	
	\draw[<->,violet] (C1) .. controls (4,1) and (6,1) .. (D1) {};
	\draw[<->,violet] (E1) .. controls (8,1) and (10,1) .. (F1) {};

	\draw[fill=violet,opacity=0.2]
		(3,-1) --
			(3,3.5) --
				(11,3.5) --
					(11,-1) --
						(3,-1) {};
	\draw[dashed,violet] (3,-1) -- (3,3.5) {};
	\draw[dashed,violet] (3,3.5) -- (11,3.5) {};
	\draw[dashed,violet] (11,3.5) -- (11,-1) {};
	\draw[dashed,violet] (11,-1) -- (3,-1) {};
	\draw[dotted,violet] (3,1.5) -- (11,1.5) {};
\end{tikzpicture}
}\\
\subfigure[The graph of $\de{\phi(a)\otimes \psi(a)}$]{
\begin{tikzpicture}[x=0.72cm,y=0.72cm]
	\node (C1) at (4,0) {$3_{1}$};
	\node (D1) at (6,0) {$4_{1}$};
	\node (E1) at (8,0) {$5_{1}$};
	\node (F1) at (10,0) {$6_{1}$};
	
	\draw[<->,red] (C1) .. controls (4,1) and (6,1) .. (D1) {};
	\draw[<->,red] (E1) .. controls (8,1) and (10,1) .. (F1) {};

	\draw[fill=red,opacity=0.2]
		(3,-1) --
			(3,1.5) --
				(11,1.5) --
					(11,-1) --
						(3,-1) {};
	\draw[dashed,red] (3,-1) -- (3,1.5) {};
	\draw[dashed,red] (3,1.5) -- (11,1.5) {};
	\draw[dashed,red] (11,1.5) -- (11,-1) {};
	\draw[dashed,red] (11,-1) -- (3,-1) {};
\end{tikzpicture}
}
\caption{Graphs of the projects $\de{Ctr}^{\phi}_{\psi}\plug\de{a}$ and $\de{\phi(a)\otimes\psi(a)}$}\label{execcontrA}
\end{figure}

\begin{figure}
\centering
\subfigure[Result of the execution of $\text{\textnormal{Ctr}}^{\phi}_{\psi}$ and $B$]{
\begin{tikzpicture}[x=0.72cm,y=0.72cm]
	\definecolor{bluegreen}{rgb}{0.1,0.5,0.6}
	
	\node (C1) at (4,0) {$3_{1}$};
	\node (D1) at (6,0) {$4_{1}$};
	\node (E1) at (8,0) {$5_{1}$};
	\node (F1) at (10,0) {$6_{1}$};
	\node (C2) at (4,2.5) {$3_{2}$};
	\node (D2) at (6,2.5) {$4_{2}$};
	\node (E2) at (8,2.5) {$5_{2}$};
	\node (F2) at (10,2.5) {$6_{2}$};
	\node (C3) at (4,4.5) {$3_{3}$};
	\node (D3) at (6,4.5) {$4_{3}$};
	\node (E3) at (8,4.5) {$5_{3}$};
	\node (F3) at (10,4.5) {$6_{3}$};
	\node (C4) at (4,7) {$3_{4}$};
	\node (D4) at (6,7) {$4_{4}$};
	\node (E4) at (8,7) {$5_{4}$};
	\node (F4) at (10,7) {$6_{4}$};
	
	\draw[<->,bluegreen] (C1) .. controls (4,1) and (6,1) .. (D1) {};
	\draw[<->,bluegreen] (E1) .. controls (8,1) and (10,1) .. (F1) {};
	\draw[<->,bluegreen] (C3) .. controls (3,5.5) and (5,5.5) .. (C3) {};	
	\draw[<->,bluegreen] (D3) .. controls (5,5.5) and (7,5.5) .. (D3) {};
	\draw[<->,bluegreen] (E3) .. controls (7,5.5) and (9,5.5) .. (E3) {};
	\draw[<->,bluegreen] (F3) .. controls (9,5.5) and (11,5.5) .. (F3) {};

	\draw[fill=bluegreen,opacity=0.2]
		(3,-1) --
			(3,8) --
				(11,8) --
					(11,-1) --
						(3,-1) {};
	\draw[dashed,bluegreen] (3,-1) -- (3,8) {};
	\draw[dashed,bluegreen] (3,8) -- (11,8) {};
	\draw[dashed,bluegreen] (11,8) -- (11,-1) {};
	\draw[dashed,bluegreen] (11,-1) -- (3,-1) {};
	\draw[dashed,bluegreen] (3,1.5) -- (11,1.5) {};
	\draw[dashed,bluegreen] (3,3.5) -- (11,3.5) {};
	\draw[dashed,bluegreen] (3,6) -- (11,6) {};
\end{tikzpicture}
}\\
\subfigure[Graph of the project $\de{\phi(b)\otimes\psi(b)}$]{
\begin{tikzpicture}[x=0.72cm,y=0.72cm]
	\definecolor{darkgreen}{rgb}{0.15,0.6,0.15};

	\node (C1) at (4,0) {$3_{1}$};
	\node (D1) at (6,0) {$4_{1}$};
	\node (E1) at (8,0) {$5_{1}$};
	\node (F1) at (10,0) {$6_{1}$};
	\node (C2) at (4,2.5) {$3_{2}$};
	\node (D2) at (6,2.5) {$4_{2}$};
	\node (E2) at (8,2.5) {$5_{2}$};
	\node (F2) at (10,2.5) {$6_{2}$};
	\node (C3) at (4,5) {$3_{3}$};
	\node (D3) at (6,5) {$4_{3}$};
	\node (E3) at (8,5) {$5_{3}$};
	\node (F3) at (10,5) {$6_{3}$};
	\node (C4) at (4,7.5) {$3_{4}$};
	\node (D4) at (6,7.5) {$4_{4}$};
	\node (E4) at (8,7.5) {$5_{4}$};
	\node (F4) at (10,7.5) {$6_{4}$};
	
	\draw[<->,darkgreen] (C1) .. controls (4,1) and (6,1) .. (D1) {};
	\draw[<->,darkgreen] (E1) .. controls (8,1) and (10,1) .. (F1) {};
	\draw[<->,darkgreen] (C2) .. controls (4,3.5) and (6,3.5) .. (D2) {};
	\draw[<->,darkgreen] (E2) .. controls (7,3.5) and (9,3.5) .. (E2) {};
	\draw[<->,darkgreen] (F2) .. controls (9,3.5) and (11,3.5) .. (F2) {};
	\draw[<->,darkgreen] (C3) .. controls (3,6) and (5,6) .. (C3) {};
	\draw[<->,darkgreen] (D3) .. controls (5,6) and (7,6) .. (D3) {};
	\draw[<->,darkgreen] (E3) .. controls (8,6) and (10,6) .. (F3) {};
	\draw[<->,darkgreen] (C4) .. controls (3,8.5) and (5,8.5) .. (C4) {};
	\draw[<->,darkgreen] (D4) .. controls (5,8.5) and (7,8.5) .. (D4) {};
	\draw[<->,darkgreen] (E4) .. controls (7,8.5) and (9,8.5) .. (E4) {};
	\draw[<->,darkgreen] (F4) .. controls (9,8.5) and (11,8.5) .. (F4) {};

	\draw[fill=darkgreen,opacity=0.2]
		(3,-1) --
			(3,9) --
				(11,9) --
					(11,-1) --
						(3,-1) {};
	\draw[dashed,darkgreen] (3,-1) -- (3,9) {};
	\draw[dashed,darkgreen] (3,9) -- (11,9) {};
	\draw[dashed,darkgreen] (11,9) -- (11,-1) {};
	\draw[dashed,darkgreen] (11,-1) -- (3,-1) {};
	\draw[dashed,darkgreen] (3,1.5) -- (11,1.5) {};
	\draw[dashed,darkgreen] (3,4) -- (11,4) {};
	\draw[dashed,darkgreen] (3,6.5) -- (11,6.5) {};
\end{tikzpicture}
}
\caption{Graphs of the projects $\de{Ctr}^{\phi}_{\psi}\plug\de{b}$ and $\de{\phi(b)\otimes\psi(b)}$}\label{execcontrB}
\end{figure}

We will use the following direct corollary of \autoref{ethtenscondtens}.
\begin{proposition}\label{propositionethsendcondsend}
If $E$ is a non-empty set of project sharing the same carrier $V^{E}$, $\cond{F}$ is a conduct and $\de{f}$ satisfies that $\forall\de{e}\in E$, $\de{f\plug e}\in\cond{F}$, then $\de{f}\in E^{\pol\pol}\multimap \cond{F}$.
\end{proposition}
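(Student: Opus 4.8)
The plan is to unfold the arrow, rewrite both $E\multimap\cond{F}$ and $E^{\pol\pol}\multimap\cond{F}$ as orthogonals of $\odot$-products, and then read off the needed equality from \autoref{ethtenscondtens}. First I would note that, for an arbitrary set $G$ of projects of carrier $V^{E}$ — writing $G\multimap\cond{F}$ for $\{\de{g}~|~\forall\de{g'}\in G,\ \de{g\plug g'}\in\cond{F}\}$ — the hypothesis on $\de{f}$ is exactly $\de{f}\in E\multimap\cond{F}$, while the conclusion to be proved is $\de{f}\in E^{\pol\pol}\multimap\cond{F}$; hence it is enough to establish $E\multimap\cond{F}=E^{\pol\pol}\multimap\cond{F}$.

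The second step is the rewriting $G\multimap\cond{F}=(G\odot\cond{F}^{\pol})^{\pol}$, where $G\odot\cond{F}^{\pol}=\{\de{g}\otimes\de{h}~|~\de{g}\in G,\ \de{h}\in\cond{F}^{\pol}\}$. This rests on the adjunction between execution and the pairing: since $\de{g'}\otimes\de{h}$ coincides with $\de{g'\plug h}$ when the carriers are disjoint, and since $\de{g}$ is orthogonal to $\de{g'}\otimes\de{h}$ if and only if $\de{g\plug g'}$ is orthogonal to $\de{h}$ (a consequence of the associativity of $\plug$ and of the behavior of the measurement under it, established in the earlier papers of the series), one obtains the displayed identity, using also that $\cond{F}$ is a conduct so that $(\cond{F}^{\pol})^{\pol}=\cond{F}$. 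Applying it to $G=E$ and to $G=E^{\pol\pol}$ reduces the goal to $(E\odot\cond{F}^{\pol})^{\pol}=(E^{\pol\pol}\odot\cond{F}^{\pol})^{\pol}$.

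Finally I would apply \autoref{ethtenscondtens} to the pair $(E,\cond{F}^{\pol})$. If $\cond{F}^{\pol}=\emptyset$ there is nothing to do: both sides are then the full behavior $\cond{T}_{V^{E}\cup V^{F}}$, and in fact $\cond{F}=\cond{T}_{V^{F}}$, so $\de{f}$ lies in either arrow for trivial reasons. Otherwise $E$ is non-empty by hypothesis, $\cond{F}^{\pol}$ is non-empty, and the carriers $V^{E}$ and $V^{F}$ are disjoint, so \autoref{ethtenscondtens} gives $(E\odot\cond{F}^{\pol})^{\pol\pol}=(E^{\pol\pol}\odot(\cond{F}^{\pol})^{\pol\pol})^{\pol\pol}=(E^{\pol\pol}\odot\cond{F}^{\pol})^{\pol\pol}$, using once more $(\cond{F}^{\pol})^{\pol\pol}=\cond{F}^{\pol}$; taking orthogonals of both sides and using $X^{\pol\pol\pol}=X^{\pol}$ produces exactly the equality sought. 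I expect no real difficulty here: the only non-formal ingredient is the adjunction $G\multimap\cond{F}=(G\odot\cond{F}^{\pol})^{\pol}$ of the second step, which is already available from the previous papers of the series, so that the proposition is essentially the observation that this adjunction and \autoref{ethtenscondtens} combine, the remaining manipulations being pure bookkeeping with the bi-orthogonal closure.
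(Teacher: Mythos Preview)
Your proposal is correct and matches the paper's treatment: the paper states this proposition as a ``direct corollary of \autoref{ethtenscondtens}'' without giving further details, and your argument is precisely the natural unpacking of that remark --- rewriting the arrow via the adjunction $G\multimap\cond{F}=(G\odot\cond{F}^{\pol})^{\pol}$ and then invoking \autoref{ethtenscondtens}. There is nothing to add.
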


This proposition insures us that if $\cond{A}$ is a conduct such that there exists a set $E$ of single-sliced projects with $\cond{A}=E^{\pol\pol}$, then the contraction project $\de{Ctr}^{\psi}_{\phi}$ belongs to the conduct $\cond{A\multimap\phi(A)\otimes \psi(A)}$.

We find here a geometrical explanation to the introduction of exponential connectives. Indeed, in order to use a contraction, we must be sure we are working with single-sliced graphs. We will therefore define, for all behaviour $\cond{A}$, a conduct $\cond{\oc A}$ generated by a set of single-sliced graphs.

One should notice that a conduct $\cond{\oc A}$ generated by a set of single-sliced projects cannot be a behaviour: the projects $(a,\emptyset)$ necessarily belong to the orthogonal of $\cond{\oc A}$. We will therefore introduce \emph{perennial conducts} as those conducts generated by a set of wager-free single-sliced projects. Dually, we introduce the \emph{co-perennial conducts} as the conducts that are the orthogonal of a perennial conduct.

But first, we will need a way to associate a wager-free single-sliced project to any wager-free project. The difficulty in doing so lies in the fact that one should not \enquote{loose} any information in the process, i.e.\ the single-sliced graph $\oc G$ obtained from the thick graph $G$ should contain a copy of $G$. A naive answer would be to say that a thick graph can be thought of as a (non-thick) graph. But a thick graph may have an arbitrarily large dialect, hence two different projects in a conduct of carrier $V$ are in general defined from graphs which have sets of vertices of different cardinality. One solution is to consider infinite sets of vertices and use the usual bijections between $\naturalN+\naturalN$ and $\naturalN$ to deal with the different sizes of dialects. This solution would lead to the exponential connectives usually considered in \goi and game semantics; this is not the choice made here. In particular, working with infinite graphs induces unwanted divergences in the measurement between graphs (this is discussed in the introduction of the author's recent paper on exponentials for full linear logic \cite{seiller-goif}). 

In order to define a different perennisation that do not induce such unwanted divergences, we will use the generalisation of graphs introduced in an earlier paper to deal with second order quantification, namely \emph{graphings}, and introduce the natural notion of \emph{thick graphing} to mimic the generalisation of graphs to thick graphs. Graphings are graphs generalised so that the vertices are no longer points but measurable subsets of a fixed measure space $\measure{X}$. The intuition behind our perennisation (to be defined in the next section) is that the graph(ing) $\oc G$ is obtained from $G$ by \emph{splitting} every vertex into equal-measure subsets indexed by the dialect of $G$. In this way, each thick graph(ing) will be mapped to a non-thick graph(ing) with the same carrier. Although it is true that, as in the previous constructions of exponentials in \goi and game semantics, the objects are formally graphs over infinite sets of vertices, the fact that we can restrict to finite-measure spaces allows us to avoid divergences and, as discussed in the introduction, work with potentially infinite objects rather that actually infinite ones.

\section{Graphings and Exponentials}\label{sec_Graphings}

\subsection{Graphings}

In a previous paper on interaction graphs \cite{seiller-goig}, the author introduced a generalisation of graphs to which the previously described results can extended. This generalisation allows, among other things, for the definition of second order quantification. The main purpose of this generalisation is that a vertex can always be cut in an arbitrary (finite) number of sub-vertices, with the idea that these sub-vertices are smaller (hence vertices have a \emph{size}) and form a partition of the initial vertex (where two sub-vertices have the same size). This particular feature of graphings will be used here to define exponential connectives, i.e.\ define an operation -- called perennisation -- that given any graphing $G$ produces a slice-free graphing $\oc G$.

We stress the fact that these notions could be introduced and dealt with combinatorially, but we choose to use measure-theoretic notions in order to ease the intuitions and some proofs. In fact, a \emph{graphing} -- the notion which is introduced as a generalisation of the notion of graph -- can be thought of and used as a graph. Another important feature of the construction is the fact that it depends on a \emph{microcosm} -- a monoid of measurable maps -- which somehow describes which computational principles are allowed in the model \cite{seiller-lcc14}. For technical reasons discussed in an earlier paper \cite{seiller-goig}, these measurable maps should be non-singular transformations, i.e.\ maps $f:\measure{X}\rightarrow \measure{X}$ which preserves the sets of null measure: $\lambda(f(A))=0$ if and only if $\lambda(A)=0$, and measurable-preserving, the image of measurable sets under $f$ are measurable sets.

\begin{definition}
Let $\measure{X}=(X,\mathcal{B},\lambda)$ be a measured space. We denote by $\measurable{X}$ the set of measurable-preserving non-singular transformations $X\rightarrow X$. A \emph{microcosm} of the measured space $\measure{X}$ is a subset $\mathfrak{m}$ of $\measurable{X}$ which is closed under composition and contains the identity.
\end{definition}

As in the graph construction described above, we will consider a notion of graphing depending on a \emph{weight-monoid} $\Omega$, i.e.\ a monoid $(\Omega,\cdot,1)$ which contains the possible weights of the edges. 

\begin{definition}[Graphings]
Let $\mathfrak{m}$ be a microcosm of a measured space $(X,\mathcal{B},\lambda)$ and $V^{F}$ a measurable subset of $X$. A \emph{$\Omega$-weighted graphing in $\mathfrak{m}$} of carrier $V^{F}$ is a countable family $F=\{(\omega_{e}^{F},\phi_{e}^{F}: S_{e}^{F}\rightarrow T_{e}^{F})\}_{e\in E^{F}}$, where, for all $e\in E^{F}$ (the set of \emph{edges}):
\begin{itemize}[noitemsep,nolistsep]
\item $\omega_{e}^{F}$ is an element of $\Omega$, the \emph{weight} of the edge $e$;
\item $S_{e}^{F}\subset V^{F}$ is a measurable set, the \emph{source} of the edge $e$;
\item $T_{e}^{F}=\phi_{e}^{F}(S_{e}^{F})\subset V^{F}$ is a measurable set, the \emph{target} of the edge $e$;
\item $\phi_{e}^{F}$ is the restriction of an element of $\mathfrak{m}$ to $S_{e}^{F}$, the \emph{realisation} of the edge $e$.
\end{itemize}
\end{definition}

It is natural, as we are working with measure-theoretic notions, to identify two graphings that differ only on a set of null measure. This leads to the definition of an equivalence relation between graphings: that of \emph{almost everywhere equality}. Moreover, since we want vertices to be \emph{decomposable} into any finite number of parts, we want to identify a graphing $G$ with the graphing $G'$ obtained by replacing an edge $e\in E^{F}$ by a finite family of edges $e_{i}\in G'$ ($i=1,\dots,n$) subject to the conditions:
\begin{itemize}[noitemsep,nolistsep]
\item the family $\{S^{G'}_{e_{i}}\}_{i=1}^{n}$ (resp. $\{T^{G'}_{e_{i}}\}_{i=1}^{n}$) is a partition of $S_{e}^{G}$ (resp. $T_{e}^{G}$);
\item for all $i=1,\dots,n$, $\phi_{e_{i}}^{G'}$ is the restriction of $\phi_{e}^{G}$ on $S^{G'}_{e_{i}}$.
\end{itemize}
Such a graphing $G'$ is an example of a \emph{refinement of $G$}, and one can easily generalise the previous conditions to define a general notion of refinement of graphings. \autoref{refinements} gives the most simple example of refinement. To be a bit more precise, we define, in order to ease the proofs, a notion of refinement \emph{up to almost everywhere equality}. We then define a second equivalence relation on graphings by saying that two graphings are equivalent if and only if they have a common refinement (up to almost everywhere equality). 

\begin{figure}
\centering
\begin{tikzpicture}[x=0.7cm,y=0.7cm]
	\draw[-] (0,0) -- (2,0) node [below,very near start] {\scriptsize{$[0,2]$}};
	\draw[-] (3,0) -- (5,0) node [below,very near end] {\scriptsize{$[3,5]$}};
	\node (1) at (1,0) {};
	\node (4) at (4,0) {};
	
	\draw[->,red] (1) .. controls (1,1.5) and (4,1.5) .. (4) node [midway,above] {\scriptsize{$x\mapsto 5-x$}};
	
	\draw[-] (7.5,0) -- (8.5,0) node [below,very near start] {\scriptsize{$[0,1]$}};
	\draw[-] (9,0) -- (10,0) node [below,very near start] {\scriptsize{$[1,2]$}};
	\draw[-] (10.5,0) -- (11.5,0) node [below,very near end] {\scriptsize{$[3,4]$}};
	\draw[-] (12,0) -- (13,0) node [below,very near end] {\scriptsize{$[4,5]$}};
	\node (8) at (8,0) {};
	\node (95) at (9.5,0) {};
	\node (11) at (11,0) {};
	\node (125) at (12.5,0) {};
	
	\draw[->,red] (8) .. controls (8,2) and (12.5,2) .. (125) node [midway,above] {\scriptsize{$x\mapsto 5-x$}};
	\draw[->,red] (95) .. controls (9.5,1) and (11,1) .. (11) node [midway,above] {\scriptsize{$x\mapsto 5-x$}};
	
\end{tikzpicture}
\caption{A graphing and one of its refinements}\label{refinements}
\end{figure}

The objects under study are thus equivalence classes of graphings modulo this equivalence relation. Most of the technical results on graphings contained in our previous paper \cite{seiller-goig} aim at showing that these objects can actually be manipulated as graphs: one can define paths and cycles and these notions are coherent with the quotient by the equivalence relation just mentioned. Indeed, the notions of paths and cycles in a graphings are quite natural, and from two graphings $F,G$ in a microcosm $\mathfrak{m}$ one can define its execution $F\plug G$ which is again a graphing in $\mathfrak{m}$\footnote{As a consequence, a microcosm is a closed world for the execution of programs.}. A more involved argument then shows that the trefoil property holds for a family of measurements $\meas{\cdot,\cdot}$, where $m:\Omega\rightarrow\realposN\cup\{\infty\}$ is any measurable map. These results are obtained as a generalisation of constructions considered in the author's thesis\footnote{In the cited work, the results were stated in the particular case of the microcosm of measure-preserving maps on the real line.}.

\begin{theorem}\label{mainthmgraphings}
Let $\Omega$ be a monoid, $\mathfrak{m}$ a microcosm and $m:\Omega\rightarrow\realposN\cup\{\infty\}$ be a measurable map. The set of $\Omega$-weighted graphings in $\mathfrak{m}$ yields a model, denoted by $\vaguemodel{\Omega}{m}{m}$, of multiplicative-additive linear logic whose orthogonality relation depends on $m$.
\end{theorem}

\subsection{Sliced Thick Graphings}

The sliced graphings are obtained from graphings in the same way we defined sliced thick graphs from directed weighted graphs: we consider formal weighted sums $F=\sum_{i\in I^{F}} \alpha^{F}_{i}F_{i}$ where the $F_{i}$ are graphings of carrier $V^{F_{i}}$. We define the \emph{carrier of $F$} as the measurable set $\cup_{i\in I^{F}} V^{F_{i}}$. The various constructions are then extended as explained above:
\begin{eqnarray*}
(\sum_{i\in I^{F}} \alpha^{F}_{i}F_{i})\plug(\sum_{i\in I^{G}} \alpha^{G}_{i}G_{i})&=&\sum_{(i,j)\in I^{F}\times I^{G}} \alpha_{i}^{F}\alpha^{G}_{j} F_{i}\plug G_{j}\\
\meas{\sum_{i\in I^{F}} \alpha^{F}_{i}F_{i},\sum_{i\in I^{G}} \alpha^{G}_{i}G_{i}}&=&\sum_{(i,j)\in I^{F}\times I^{G}} \alpha_{i}^{F}\alpha^{G}_{j} \meas{F_{i}\plug G_{j}}
\end{eqnarray*}
The trefoil property and the adjunction are then easily obtained through the same computations as in the proofs of \autoref{numtrefoil} and \autoref{numadjunct}.

We will now consider the most general notion of \emph{thick graphing} one can define. As it was the case in the setting of graphs, a thick graphing is a graphing whose carrier has the form $V\times D$. The main difference between graphings and thick graphings really comes from the way two such objects interact.

\begin{definition}
Let $(X,\mathcal{B},\lambda)$ be a measured space and $(D,\mathcal{D},\mu)$ a probability space (a measured space such that $\mu(D)=1$). A thick graphing of carrier $V\in\mathcal{B}$ and dialect $D$ is a graphing on $X\times D$ of carrier $V\times D$.
\end{definition}

\begin{definition}[Dialectal Interaction]
Let $(X,\mathcal{B},\lambda)$ be a measured space and $(D,\mathcal{D},\mu)$, $(E,\mathcal{E},\nu)$ two probability spaces. Let $F,G$ be thick graphings of respective carriers $V^{F},V^{G}\in\mathcal{B}$ and respective dialects $D,E$. We define the graphings $F^{\dagger_{E}}$ and $G^{\ddagger_{D}}$ as the graphings of respective carriers $V^{F},V^{G}$ and dialects $E\times F$: 
\begin{eqnarray*}
F^{\dagger_{E}}&=&\{(\omega^{F}_{e},\phi^{F}_{e}\times\text{Id}_{E}: S_{e}^{F}\times D\times E\rightarrow T_{e}^{F}\times D\times E)\}_{e\in E^{F}}\\
G^{\ddagger_{D}}&=&\{(\omega^{G}_{e},\text{Id}_{X}\times(\tau\circ(\phi^{G}_{e}\times\text{Id}_{D})\circ\tau^{-1}): S_{e}^{G}\times D\times E\rightarrow T_{e}^{G}\times D\times E)\}_{e\in E^{G}}
\end{eqnarray*}
where $\tau$ is the natural symmetry: $E\times D\rightarrow D\times E$.
\end{definition}


\begin{definition}[Execution]
Let $F,G$ be two thick graphings of respective dialects $D^{F},D^{G}$. Their execution is equal to $F^{\dagger_{D^{G}}}\plug G^{\ddagger_{D^{F}}}$.
\end{definition}

The notion of circuit-quantifying map can also be adapted to the setting of graphings, but the definition is much more involved than in the case of graphs. To avoid unnecessary complications, and since this was studied in detail in a previous paper, we refer the reader to this earlier work for a formal definition \cite{seiller-goig}.

\begin{definition}[Measurement]
Let $F,G$ be two thick graphings of respective dialects $D^{F},D^{G}$, and $m$ a (graphing) circuit-quantifying map. The corresponding measurement of the interaction between $F$ and $G$ is equal to $\meas{F^{\dagger_{D^{G}}},G^{\ddagger_{D^{F}}}}$.
\end{definition}

As in the setting of graphs, one can show that all the fundamental properties are preserved when we generalise from graphings to thick graphings.

\begin{proposition}
Let $F,G,H$ be thick graphings such that $V^{F}\cap V^{G}\cap V^{H}$ is of null measure. Then:
\begin{eqnarray*}
F\plug(G\plug H)&=&(F\plug G)\plug H\\
\meas{F,G\plug H}+\meas{G,H}&=&\meas{G,H\plug F}+\meas{H,F}
\end{eqnarray*}
\end{proposition}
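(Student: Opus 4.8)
The plan is to reduce both assertions to the corresponding facts for plain graphings, which hold by \autoref{mainthmgraphings}, following verbatim the strategy by which \autoref{discussionassocepais} and \autoref{numtrefoil} reduced the thick-\emph{graph} case to the plain-graph case. The only new ingredient is a measure-theoretic bookkeeping lemma replacing the finite counting of slices used in \autoref{sec_Thick}.

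\textbf{The commutation lemma.} By the definitions above, the execution $F\plug G$ is the graphing $F^{\dagger_{D^{G}}}\plugmes G^{\ddagger_{D^{F}}}$, where $(\cdot)^{\dagger_E}$ and $(\cdot)^{\ddagger_E}$ merely postcompose each realization $\phi_e$ with $\text{Id}_E$ on the appropriate factor (up to the canonical symmetry $\tau$ of the dialect factors). The key observation, which I would isolate as a lemma, is that these operations commute with execution of graphings: for graphings $K,L$ of dialects $D^{K},D^{L}$ one has $(K\plugmes L)^{\ddagger_{D^{F}}}=K^{\ddagger_{D^{F}}}\plugmes L^{\ddagger_{D^{F}\times D^{K}}}$ and $(K^{\dagger_{D^{H}}})^{\ddagger_{D^{F}}}$ rearranges, via $\tau$, to the expected graphing; likewise for $\meas{\cdot,\cdot}$ and for the circuit sets $\uncircuits{\cdot,\cdot}$. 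This holds because an alternating path in $K\plugmes L$ is a concatenation of edges whose legality and composite realization are unaffected by tensoring every edge with one and the same identity map, so the bijection of alternating paths (hence of $1$-circuits) is realization-preserving and measurable; on the measurement side it only multiplies the ambient product measure by $\mu(D^{F})=1$, since the dialects are \emph{probability} spaces. This last point is exactly the measure-theoretic incarnation of the normalizing factor $1/\card{D^{F}}$ appearing in \autoref{sec_Thick}.

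\textbf{Associativity.} Unfolding the definitions, $F\plug(G\plug H)=F^{\dagger_{D^{G}\times D^{H}}}\plugmes(G^{\dagger_{D^{H}}}\plugmes H^{\ddagger_{D^{G}}})^{\ddagger_{D^{F}}}$. Using the commutation lemma to push $(\cdot)^{\ddagger_{D^{F}}}$ inside, this becomes $F^{\dagger_{D^{G}\times D^{H}}}\plugmes\big((G^{\dagger_{D^{H}}})^{\ddagger_{D^{F}}}\plugmes H^{\ddagger_{D^{F}\times D^{G}}}\big)$; then I would invoke associativity of execution of graphings (\autoref{mainthmgraphings}), which applies since $(V^{F}\times D)\cap(V^{G}\times D)\cap(V^{H}\times D)=(V^{F}\cap V^{G}\cap V^{H})\times D$ is of null measure by hypothesis (the dialect factor has measure $1$); regrouping and applying the commutation lemma once more rewrites the result as $((F\plug G)^{\dagger_{D^{H}}})\plugmes H^{\ddagger_{D^{F}\times D^{G}}}=(F\plug G)\plug H$. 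This is the exact analogue of the computation in \autoref{discussionassocepais}.

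\textbf{Numerical trefoil.} I would first establish the geometric trefoil for thick graphings, i.e. a realization-preserving bijection
$$\uncircuits{F,G\plug H}\cup\uncircuits{G,H}^{\dagger_{D^{F}}}\cong\uncircuits{F\plug G,H}\cup\uncircuits{F,G}^{\dagger_{D^{H}}}$$
(the decorations $(\cdot)^{\dagger_{\cdot}}$ on circuit sets being defined exactly as for thick graphs), by rewriting each side as a set of $1$-circuits in a plugging of graphings — using the commutation lemma — and applying the trefoil property of \autoref{mainthmgraphings}; this mirrors the paragraph preceding \autoref{cycliquegeomepais}. Then, exactly as in the proof of \autoref{numtrefoil}, I would integrate the circuit-quantifying map across this bijection: $\meas{F,G}$ for thick graphings is the graphing measurement $\meas{F^{\dagger_{D^{G}}},G^{\ddagger_{D^{F}}}}$, an integral of $q\circ\omega$ over the $1$-circuit set against the product probability measure on the dialects; adding a $\dagger_{D^{F}}$ factor leaves this integral unchanged because $\mu(D^{F})=1$ (so $\meas{G,H}$ may be read with any number of extra probability dialect factors), and the geometric bijection identifies the integrands term by term. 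Up to the symmetry of $\meas{\cdot,\cdot}$ (and of the circuit sets under $\dagger$ versus $\ddagger$) this yields $\meas{F,G\plug H}+\meas{G,H}=\meas{G,H\plug F}+\meas{H,F}$.

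\textbf{Main obstacle.} The combinatorics is a transcription of the thick-graph case and should go through smoothly; the delicate part is the measure-theoretic hygiene. One must check that $\phi_e^{F}\times\text{Id}_E$ and $\text{Id}_X\times(\tau\circ(\phi_e^{G}\times\text{Id}_D)\circ\tau^{-1})$ are again non-singular Borel-preserving transformations of $X\times D\times E$ (and lie in the ambient microcosm when closure is needed), that the bijections between $1$-circuit sets are measurable and compatible with the relevant product measures so the change of variables in the integral is legitimate, and that the ``absorption of a probability factor'' identity $\meas{G,H}^{\dagger_{D^{F}}}=\meas{G,H}$ is justified by Fubini for the product of $(X,\lambda)$ with the probability spaces. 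Once these routine verifications are in place, associativity and the numerical trefoil follow formally from the graphing-level results of \autoref{mainthmgraphings}, exactly as in \autoref{sec_Thick}.
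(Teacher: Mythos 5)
Your proof is correct and takes exactly the approach the paper intends but leaves implicit. The paper states the proposition without proof, gesturing only at the fact that "one can show that all the fundamental properties are preserved when we generalize from graphings to thick graphings," and your argument is the natural transcription of \autoref{discussionassocepais}, \autoref{cycliquegeomepais} and \autoref{numtrefoil} to the measure-theoretic setting, with the correct identification of the key simplification: since the dialects are probability spaces, the normalizing factor $1/\card{D^{F}\times D^{G}}$ of the thick-graph case is absorbed automatically ($\mu(D^{F})=1$), so tensoring every realization with the identity on an extra probability-space factor leaves the integral of $m\circ\omega$ over $1$-circuits unchanged by Fubini. One small imprecision: the geometric trefoil you state, $\uncircuits{F,G\plug H}\cup\uncircuits{G,H}^{\dagger_{D^{F}}}\cong\uncircuits{F\plug G,H}\cup\uncircuits{F,G}^{\dagger_{D^{H}}}$, yields $\meas{F,G\plug H}+\meas{G,H}=\meas{F\plug G,H}+\meas{F,G}$, whereas the statement to prove is $\meas{F,G\plug H}+\meas{G,H}=\meas{G,H\plug F}+\meas{H,F}$; symmetry of $\meas{\cdot,\cdot}$ alone does not bridge the two, but applying your geometric trefoil to the cyclically permuted triple $(G,H,F)$ and then using symmetry once does, so this is a gloss rather than a gap — worth spelling out. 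The "commutation lemma" you isolate (that $\dagger$ and $\ddagger$ commute with $\plugmes$ and preserve the $1$-circuit sets and their weights) is indeed the right thing to factor out; it is used silently in \autoref{discussionassocepais} and makes the thick-graphing case a formal corollary of the graphing-level results in \cite{seiller-goig}.
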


In a similar way, the extension from thick graphings to sliced thick graphings should now be quite clear. One extends all operations by ``linearity'' to formal weighted sums of thick graphings, and one obtains, when $F,G,H$ are sliced thick graphings such that $V^{F}\cap V^{G}\cap V^{H}$ is of null measure:
\begin{eqnarray*}
F\plug(G\plug H)&=&(F\plug G)\plug H\\
\meas{F,G\plug H}+\unit{F}\meas{G,H}&=&\meas{G,H\plug F}+\unit{G}\meas{H,F}
\end{eqnarray*}

\subsection{Perennial and Co-perennial conducts}

Since we are working with sliced thick graphings, we can follow the constructions of multiplicative and additive connectives as they are studied in the author's second paper on interaction graphs \cite{seiller-goiadd} and which were quickly recalled in \autoref{recallmall}: replacing graphs by graphings in the sketched construction provides models for Multiplicative-Additive Linear Logic, as insured by the \emph{associativity of execution} \cite{seiller-goig} and the \emph{trefoil property} we just exposed. We nonetheless define formally the notion of project, and refer the reader to previous works on interaction graphs \cite{seiller-goiadd,seiller-goig} for a detailed study of the \MALL construction.

\begin{definition}[Projects]
A \emph{project} is a couple $\de{a}=(a,A)$ together with a support $V^{A}$ where:
\begin{itemize}[noitemsep,nolistsep]
\item $a\in\realN\cup\{\infty\}$ is called the wager;
\item $A$ is a sliced and thick weighted graphing of carrier $V^{A}$, of dialect $D^{A}$ a discrete probability space, and index $I^{A}$ a finite set.
\end{itemize}
\end{definition}

\begin{remark}
We made here the choice to stay close to the hyperfinite geometry of interaction defined by Girard \cite{goi5}. This is why we restrict to discrete probability spaces as dialects, a restriction that corresponds to the restriction to finite von Neumann algebras of type  $\text{I}$ as idioms in Girard's setting. However, the results of the preceding section about execution and measurement, and the definition of the family of circuit-quantifying maps do not rely on this hypothesis. It is therefore possible to consider a more general set of project where the dialects are continuous. It was proved during the revision of this paper that such a generalisation can be used to define more expressive exponential connectives than the one defined in this paper, namely the usual exponentials of linear logic \cite{seiller-goif} (recall that the exponentials defined here are the exponentials of Elementary Linear Logic). 
\end{remark}

A simple adaptation of the definitions recalled in \autoref{recallmall} allows one to define \emph{conducts}, \emph{behaviours}, and prove thick graphings variants of all exposed theorems and propositions. We will therefore here focus on what is new in this work, namely specific notions of conducts arising from the consideration of exponential connectives. Indeed, as explained at the end of \autoref{seccontraction}, we need to consider a particular kind of conducts which are the kind of conducts obtained from the application of the exponential modality to a conduct. Such conducts are unfortunately not behaviours, but they can be shown to satisfy some specific properties that will be important in order to provide models of (elementary) linear logic. Let us now define and study these types of conducts, called \emph{perennial conducts}, and their dual, \emph{co-perennial conducts}.

\begin{definition}[Perennialisation]
A Perennialisation is a function that associates a single-sliced weighted graphing to any sliced and thick weighted graphing.
\end{definition}

\begin{definition}[Exponentials]
Let $\cond{A}$ be a conduct, and $\Omega$ a perennialisation. We define the $\cond{\oc_{\Omega} A}$ as the bi-orthogonal closure of the following set of projects:
\begin{equation*}
\sharp_{\Omega}\cond{A}=\{\oc\de{a}=(0,\Omega(A))~|~\de{a}=(0,A)\in\cond{A}, I^{A}\cong\{1\}\}
\end{equation*}
The dual connective is of course defined as $\cond{\wn_{\Omega} A}=\cond{(\sharp_{\Omega} A^{\pol})^{\pol}}$.
\end{definition}

\begin{definition}
A conduct $\cond{A}$ is a \emph{perennial conduct} when there exists a set $A$ of projects such that:
\begin{enumerate}
\item $\cond{A}=A^{\pol\pol}$;
\item for all $\de{a}=(a,A)\in A$, $a=0$ and $A$ is a single-sliced graphing.
\end{enumerate}
A \emph{co-perennial} conduct is a conduct $\cond{B}=\cond{A}^{\pol}$ where $\cond{A}$ is a perennial conduct.
\end{definition}

\begin{proposition}\label{coperinflation}
A co-perennial conduct $\cond{B}$ satisfies the \emph{inflation property}: for all $\lambda\in\realN$, $\de{b}\in\cond{B}\Rightarrow \de{b+\lambda 0}_{V^{B}}\in\cond{B}$.
\end{proposition}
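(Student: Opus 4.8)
The plan is to strip the co-perenniality hypothesis down to a one-line computation of an orthogonality pairing. First, unfold the definitions: since $\cond{B}$ is co-perennial we may write $\cond{B} = \cond{A}^{\pol}$ with $\cond{A}$ a perennial conduct, so $\cond{A} = E^{\pol\pol}$ for some set $E$ of projects, all of which are wager-free (and one-sliced). By the standard identity $E^{\pol\pol\pol} = E^{\pol}$ this gives $\cond{B} = E^{\pol}$, so it is enough to show: whenever $\de{b} \in E^{\pol}$ and $\lambda \in \realN$, the project $\de{b+\lambda b} := \de{b} + \lambda\de{b}$ lies in $E^{\pol}$, i.e. $\sca{b+\lambda b}{a} \notin \{0,\infty\}$ for every $\de{a} \in E$.

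The core is the computation of $\sca{b+\lambda b}{a}$ for a fixed $\de{a} \in E$. The project $\de{b}+\lambda\de{b}$ has wager $(1+\lambda)b$, and its underlying graphing is the disjoint union (on index sets) of the graphing of $\de{b}$ with its $\lambda$-rescaling; hence both its coefficient-sum and its measurement $\meas{\cdot,\cdot}$ against any graphing are $(1+\lambda)$ times the corresponding quantities for the graphing of $\de{b}$, by bilinearity of $\meas{\cdot,\cdot}$ in the slice decompositions of its arguments. In the three-term formula defining $\sca{b+\lambda b}{a}$, the term weighted by the wager of $\de{a}$ vanishes because $\de{a}$ is wager-free, while the two remaining terms each scale by $1+\lambda$; therefore $\sca{b+\lambda b}{a} = (1+\lambda)\sca{b}{a}$. (This identity actually holds against an arbitrary project of the same carrier, so $\de{b+\lambda b}$ pairs everywhere exactly as the homothetic project $(1+\lambda)\de{b}$ does; morally the proposition is \autoref{homothetie} transported along this observational equivalence.)

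To conclude: $\de{b} \in E^{\pol}$ gives $\sca{b}{a} \notin \{0,\infty\}$, and $1+\lambda > 0$ since $\lambda$ is nonnegative — which is also exactly what makes the wager $b+\lambda b$ well-defined in the case $b = \infty$. Hence $\sca{b+\lambda b}{a} = (1+\lambda)\sca{b}{a} \notin \{0,\infty\}$, and since $\de{a} \in E$ was arbitrary, $\de{b+\lambda b} \in E^{\pol} = \cond{B}$.

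The only delicate point is this bookkeeping around the scalar $1+\lambda$ together with the $\infty$-wager corner case; the linearity of $\meas{\cdot,\cdot}$ in each slice coefficient, and of the pairing $\sca{\cdot}{\cdot}$ in the graphing part, are immediate from the definitions and require no real work.
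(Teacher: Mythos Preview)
You have misread the statement, though in fairness the statement as printed contains a typo: the \emph{inflation property} throughout the paper (see the definition of behaviors in \autoref{recallmall} and the statement of \autoref{fellorth}) is the closure under $\de{b}\mapsto \de{b}+\lambda\de{0}$, not under $\de{b}\mapsto \de{b}+\lambda\de{b}$. Here $\de{0}=(0,\emptyset)$ is the project with null wager and empty graphing, so $\de{b}+\lambda\de{0}$ keeps the wager of $\de{b}$ but adjoins an empty slice of weight $\lambda$; in particular this is \emph{not} the homothety $(1+\lambda)\de{b}$, and the relevant effect is that the coefficient sum becomes $\unit{B}+\lambda$ while the measurement term $\meas{\cdot,\cdot}$ is unchanged. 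Your whole computation, which rescales the measurement term by $(1+\lambda)$, is therefore aimed at the wrong target.

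Even on its own terms your argument has a gap: you assert ``$1+\lambda>0$ since $\lambda$ is nonnegative'', but the hypothesis is $\lambda\in\realN$, not $\lambda\in\realposN$. For $\lambda=-1$ your identity $\sca{b+\lambda b}{a}=(1+\lambda)\sca{b}{a}$ gives $0$, so orthogonality fails and the literal statement you are trying to prove is simply false at that value. This confirms that the intended statement cannot be the one you are proving.

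The paper's proof of the intended statement is immediate from \autoref{fellorth}: write $\cond{B}=E^{\pol}$ with $E$ a set of wager-free (one-sliced) projects; if $E\neq\emptyset$ apply \autoref{fellorth} directly, and if $E=\emptyset$ then $\cond{B}=\cond{T}_{V^{B}}$, which trivially has the inflation property. The substantive computation hides inside \autoref{fellorth}: for $\de{a}$ wager-free one has $\sca{b+\lambda 0}{a}=\sca{b}{a}$ because the only term affected by the extra $\lambda$ is the one multiplied by the wager of $\de{a}$, which vanishes.
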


\begin{proof}
The conduct $\cond{A}=\cond{B}^{\pol}$ being perennial, there exists a set $A$ of single-sliced wager-free projects such that $\cond{A}=A^{\pol\pol}$. If $A$ is non-empty, the result is a direct consequence of \autoref{fellorth}. If $A$ is empty, then $\cond{B}=\cond{A}^{\pol}=A^{\pol}$ is the full behaviour $\cond{T}_{V^{B}}$ which obviously satisfies the inflation property.
\end{proof}

\begin{proposition}
A co-perennial conduct is non-empty.
\end{proposition}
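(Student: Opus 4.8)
The plan is to exhibit one explicit project of $\cond{B}$. By definition $\cond{B} = \cond{A}^{\pol}$ for a perennial conduct $\cond{A}$, so there is a set $A$ of projects of a common carrier $V$, each wager-free and one-sliced, with $\cond{A} = A^{\pol\pol}$. Since $X^{\pol\pol\pol} = X^{\pol}$ holds for any orthogonality relation, this gives $\cond{B} = (A^{\pol\pol})^{\pol} = A^{\pol}$, so it is enough to produce a single project of carrier $V$ orthogonal to every element of $A$. If $A = \emptyset$ there is nothing to do, since then $A^{\pol}$ is the whole of $\cond{T}_{V}$, which is non-empty.

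For the witness I would take $\de{e} = (1, E)$, the project of carrier $V$ with wager $1$ whose graphing $E$ is the edgeless sliced thick graphing on $V$ consisting of a single slice of coefficient $1$ (with a one-point dialect). The key computation is the pairing $\sca{a}{e}$ for an arbitrary $\de{a} = (0, A) \in A$: because $E$ has no edges there is no alternating circuit between a slice of $A$ and $E$, so $\meas{A_{i}, E} = 0$ for every slice $A_{i}$, and the entire measurement part of $\sca{a}{e}$ vanishes; since moreover the wager of $\de{a}$ is $0$, the pairing formula collapses to $\sca{a}{e} = 1 \cdot \sum_{j} \alpha^{E}_{j} = 1$, which is neither $0$ nor $\infty$. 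Hence $\de{e} \poll \de{a}$ for every $\de{a} \in A$, i.e.\ $\de{e} \in A^{\pol} = \cond{B}$.

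I do not expect any real obstacle: the content is just the triple-orthogonal collapse $\cond{A}^{\pol} = A^{\pol}$ together with the elementary remark that an edgeless graphing contributes no circuits, which is exactly what makes the pairing of a wager-free project against $\de{e}$ reduce to the wager of $\de{e}$ itself. Incidentally the same witness shows that a co-perennial conduct is never the empty behavior $\cond{0}_{V}$, which is coherent with the inflation property of \autoref{coperinflation}.
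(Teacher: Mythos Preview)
Your argument is correct and essentially identical to the paper's: the witness you call $\de{e}=(1,E)$ with $E$ edgeless is precisely the paper's daemon $\de{Dai}_{1}=(1,(V^{A},\emptyset))$, and both proofs reduce $\sca{a}{e}$ to the wager of $\de{e}$ times $\unit{A}$ via vanishing measurement. One small slip: in your collapsed pairing you wrote $1\cdot\sum_{j}\alpha^{E}_{j}$, but the surviving term is actually $e\cdot\sum_{i}\alpha^{A}_{i}=1\cdot\unit{A}$; this still equals $1$ because $\de{a}$ is one-sliced, so the conclusion is unaffected.
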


\begin{proof}
Suppose that $\cond{A}^{\pol}$ is a co-perennial conduct of carrier $V^{A}$. Then there exists a set $A$ of single-sliced wager-free projects such that $\cond{A}=\cond{A}^{\pol\pol}$. If $A$ is empty, then $A^{\pol}=\cond{A}^{\pol}$ is the behaviour $\cond{T}_{V^{A}}$. If $\cond{A}$ is non-empty, then one can easily check that for all real number $\lambda\neq 0$, the project $\de{Dai}_{\lambda}=(\lambda,(V^{A},\emptyset))$ is an element of $A^{\pol}=\cond{A}^{\pol}$.
\end{proof}

\begin{corollary}
Let $\cond{A}$ be a perennial conduct. Then $\de{a}=(a,A)\in\cond{A} \Rightarrow a=0$.
\end{corollary}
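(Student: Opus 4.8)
The plan is to derive the statement from \autoref{fellorth} and \autoref{wagerfreeorth}, after disposing of a degenerate case. Let $\cond{A}$ be a perennial conduct, witnessed by a set $A$ of projects of carrier $V^{A}$ with $\cond{A}=A^{\pol\pol}$ and such that every $\de{a}=(a,A)\in A$ satisfies $a=0$ and has a one-sliced graphing $A$. Since $\cond{A}$ is a conduct, $\cond{A}=\cond{A}^{\pol\pol}$, hence $\cond{A}^{\pol}=A^{\pol\pol\pol}=A^{\pol}$.

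First I would treat the case $A=\emptyset$. Then $A^{\pol}$ is the full behavior $\cond{T}_{V^{A}}$, so $\cond{A}=A^{\pol\pol}=\cond{T}_{V^{A}}^{\pol}=\cond{0}_{V^{A}}=\emptyset$ and the implication holds vacuously. So assume $A\neq\emptyset$. Because every element of $A$ is wager-free, \autoref{fellorth} applies to the set $A$ and yields: if $\de{b}\in A^{\pol}$ then $\de{b}+\lambda\de{0}_{V^{A}}\in A^{\pol}$ for every $\lambda\in\realN$. Via the identification $A^{\pol}=\cond{A}^{\pol}$ obtained above, this says exactly that $\cond{A}^{\pol}$ is stable under $\de{b}\mapsto\de{b+\lambda 0}_{V^{A}}$.

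It then remains to invoke \autoref{wagerfreeorth} for the set $\cond{A}^{\pol}$: this set is non-empty, being a co-perennial conduct (by the preceding proposition; alternatively the projects $\de{Dai}_{\lambda}$ produced in its proof lie in $A^{\pol}=\cond{A}^{\pol}$), and we have just checked it is closed under adding multiples of $\de{0}_{V^{A}}$. \autoref{wagerfreeorth} therefore tells us that every project in $(\cond{A}^{\pol})^{\pol}=\cond{A}^{\pol\pol}=\cond{A}$ is wager-free, which is exactly the claim $\de{a}=(a,A)\in\cond{A}\Rightarrow a=0$.

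The whole argument is routine manipulation of bi-orthogonals; I do not expect a genuine obstacle. The only points to keep in mind are that the generating set of a perennial conduct may be empty---in which case the conduct is $\cond{0}_{V^{A}}$ and the statement is vacuous---and that all generating projects must be taken with the common carrier $V^{A}$, so that \autoref{fellorth} is applicable.
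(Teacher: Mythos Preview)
Your argument is correct and follows essentially the same route as the paper: establish that $\cond{A}^{\pol}$ is non-empty and satisfies the inflation property, then take its orthogonal to conclude wager-freeness of every element of $\cond{A}$. The paper phrases the first step as ``$\cond{A}^{\pol}$ is co-perennial, hence non-empty and has the inflation property'' (invoking \autoref{coperinflation} and the preceding proposition), whereas you unpack this via \autoref{fellorth} applied to the generating set and treat the empty case explicitly; your final appeal to \autoref{wagerfreeorth} is exactly the right tool (the paper's citation of \autoref{fellorth} at that point appears to be a slip).
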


\begin{proof}
Since $\cond{A}^{\pol}$ is co-perennial, it is a non-empty set of projects with the same carrier which satisfies the inflation property. The result is then obtained by applying \autoref{fellorth}.
\end{proof}

\begin{proposition}\label{coperdemon}
If $\cond{A}$ is a co-perennial conduct, then for all $a\neq 0$, the project $\de{Dai}_{a}=(a,(V^{A},\emptyset))$ is an element of $\cond{A}$.
\end{proposition}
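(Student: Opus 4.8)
The plan is to unfold the definition of co-perennial conduct and then reduce the statement to a single orthogonality computation. Being co-perennial, $\cond{A}$ equals $\cond{C}^{\pol}$ for some perennial conduct $\cond{C}$; fixing a witness, there is a set $C$ of one-sliced wager-free projects of carrier $V^{A}$ with $\cond{C}=C^{\pol\pol}$, so that $\cond{A}=\cond{C}^{\pol}=C^{\pol\pol\pol}=C^{\pol}$ (using that $(\cdot)^{\pol\pol\pol}=(\cdot)^{\pol}$ always). Hence it suffices to prove $\de{Dai}_{a}=(a,(V^{A},\emptyset))\in C^{\pol}$, that is, $\de{Dai}_{a}\poll\de{c}$ for every $\de{c}\in C$ (this is vacuous if $C=\emptyset$, in which case $\cond{A}=C^{\pol}=\emptyset^{\pol}=\cond{T}_{V^{A}}$ already contains $\de{Dai}_{a}$).

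So I would fix $\de{c}=(0,C')\in C$, with $C'$ one-sliced — its index and dialect are singletons, and a one-sliced project has total slice-weight $1$ — and wager-free, and then evaluate the pairing $\sca{Dai_{a}}{c}$ term by term. The wager contribution of $\de{c}$ is $0$ since $c=0$. The interaction term $\sum_{i,j}\alpha_{i}\alpha_{j}\meas{\emptyset,C'_{j}}$ vanishes: the graphing of $\de{Dai}_{a}$ has no edges, so after tensoring with the dialect identities it still has none, and an alternating circuit between the two graphings would have to use edges from each side — impossible here — so there are no $1$-circuits and the measurement is $0$. Finally the wager contribution of $\de{Dai}_{a}$ is $a$ times the total slice-weight of $\de{c}$, namely $a\cdot 1=a$. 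Thus $\sca{Dai_{a}}{c}=a$, which is neither $0$ (by hypothesis) nor $\infty$ ($a$ being a real number), so $\de{Dai}_{a}\poll\de{c}$.

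Since this holds for every $\de{c}\in C$, we conclude $\de{Dai}_{a}\in C^{\pol}=\cond{A}$. There is no real obstacle here; the argument is a direct computation once the witness $C$ is produced from the definition. The two points demanding a little care are keeping track, in the pairing formula, of which project's wager is multiplied by the total slice-weight of the other, and the elementary but crucial observation that an empty graphing has no circuit against any graphing, hence zero measurement — this is exactly what collapses the whole pairing to the single surviving term $a$. One should also note in passing that the carriers match: all projects of $C$ have carrier $V^{A}$ since $\cond{C}=\cond{A}^{\pol}$ has carrier $V^{A}$, so the pairing with $\de{Dai}_{a}$ (also of carrier $V^{A}$) is well-defined.
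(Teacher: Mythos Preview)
Your proof is correct and follows essentially the same approach as the paper's: both pass to a generating set of one-sliced wager-free projects $C$ with $\cond{A}=C^{\pol}$ and compute $\sca{Dai_{a}}{c}=a\cdot\unit{C'}=a$ for each $\de{c}=(0,C')\in C$, using that a one-sliced project has total slice-weight $1$ and that the empty graphing produces no circuits. Your version is simply more explicit about the case $C=\emptyset$ and about why each term in the pairing vanishes or survives.
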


\begin{proof}
We write $B$ the set of single-sliced wager-free projects such that $B^{\pol}=\cond{A}$. Then for all element $\de{b}\in\cond{B}$, we have that $\unit{B}=1$, from which we conclude that $\sca{b}{Dai_{\text{$a$}}}=a\unit{B}=a$. Thus $\de{Dai}_{a}\in B^{\pol}=\cond{A}$ for all $a\neq 0$.
\end{proof}

\begin{proposition}
The tensor product of perennial conducts is a perennial conduct.
\end{proposition}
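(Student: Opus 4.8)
The plan is to exhibit an explicit generating set of one-sliced wager-free projects for the tensor. Let $\cond{A}$ and $\cond{B}$ be perennial conducts with disjoint carriers $V^{A},V^{B}$ (the tensor being defined up to delocation), and fix generating sets $A$, $B$ of one-sliced wager-free projects, so that $\cond{A}=A^{\pol\pol}$ and $\cond{B}=B^{\pol\pol}$. The candidate witness for $\cond{A\otimes B}$ is $C=A\odot B=\{\de{a}\otimes\de{b}~|~\de{a}\in A,\de{b}\in B\}$, in the notation of \autoref{ethtenscondtens}.

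First I would check that every project in $C$ is one-sliced and wager-free. Take $\de{a}=(0,A_{0})\in A$ and $\de{b}=(0,B_{0})\in B$, with $A_{0},B_{0}$ one-sliced graphings on the disjoint carriers $V^{A},V^{B}$. Then $\de{a}\otimes\de{b}=\de{a\plug b}$ has index $I^{A_{0}}\times I^{B_{0}}\cong\{1\}$ and dialect $D^{A_{0}}\times D^{B_{0}}\cong\{1\}$; moreover its underlying graphing $A_{0}\plug B_{0}=A_{0}^{\dagger}\plug B_{0}^{\ddagger}$ is computed inside $X\times(D^{A_{0}}\times D^{B_{0}})$, where $A_{0}^{\dagger}$ has carrier $V^{A}\times D^{A_{0}}\times D^{B_{0}}$ and $B_{0}^{\ddagger}$ has carrier $V^{B}\times D^{A_{0}}\times D^{B_{0}}$ — two disjoint sets. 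Hence there are no non-trivial alternating paths between them, so $A_{0}\plug B_{0}$ is just the disjoint union of $A_{0}$ and $B_{0}$ placed on that single slice, which is a one-sliced graphing. For the wager, since both wagers vanish, $\sca{a}{b}=\meas{A_{0},B_{0}}$, and this is $0$ because $\circuitsepais{A_{0},B_{0}}$ is empty: an alternating circuit between $A_{0}^{\dagger}$ and $B_{0}^{\ddagger}$ would have to pass through both carriers, which is impossible. Thus $C$ is a set of one-sliced wager-free projects.

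It then remains to identify $C^{\pol\pol}$ with $\cond{A\otimes B}$. If $A$ or $B$ is empty, the corresponding conduct is the empty behavior, hence so is $\cond{A\otimes B}$, and the empty behavior is trivially perennial (generated by $\emptyset$); so assume both non-empty. Applying \autoref{ethtenscondtens} to $A$ and $B$ gives $(A\odot B)^{\pol\pol}=(A^{\pol\pol}\odot B^{\pol\pol})^{\pol\pol}=(\cond{A}\odot\cond{B})^{\pol\pol}$, which is precisely $\cond{A\otimes B}$ by the definition of the multiplicative connective recalled in \autoref{recallmall}. Therefore $\cond{A\otimes B}=C^{\pol\pol}$ with $C$ a set of one-sliced wager-free projects, i.e. $\cond{A\otimes B}$ is a perennial conduct.

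The only non-formal point — the main obstacle — is the verification in the second paragraph that the tensor of two one-sliced wager-free projects is again one-sliced and wager-free; everything there hinges on the disjointness of the carriers, which simultaneously prevents alternating paths (so the product dialect collapses to a single slice) and alternating circuits (so the measurement, hence the wager, stays $0$). Once that lemma is secured, \autoref{ethtenscondtens} carries out the passage from generators to bi-orthogonal closure with no further work.
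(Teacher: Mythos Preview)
Your proof is correct and follows essentially the same approach as the paper: take generating sets $A,B$ of one-sliced wager-free projects, observe that $A\odot B$ again consists of one-sliced wager-free projects, and apply \autoref{ethtenscondtens} to get $\cond{A\otimes B}=(A\odot B)^{\pol\pol}$. You are simply more explicit than the paper (which dispatches the one-sliced wager-free check with ``by definition'') and you also handle the empty case; one minor quibble is that in your closing summary the product dialect is a singleton because both factors are, not because of carrier disjointness.
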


\begin{proof}
Let $\cond{A,B}$ be perennial conducts. Then there exists two sets of single-sliced wager-free projects $E,F$ such that $\cond{A}=E^{\pol\pol}$ and $\cond{B}=F^{\pol\pol}$. Using \autoref{ethtenscondtens}, we know that $\cond{A\otimes B}=(E\odot F)^{\pol\pol}$. But, by definition, $E\odot F$ contains only projects of the form $\de{e}\otimes \de{f}$, where $\de{e,f}$ are single-sliced and wager-free. Thus $E\odot F$ contains only single-sliced wager-free projects and $\cond{A\otimes B}$ is therefore a perennial conduct.
\end{proof}

\begin{proposition}
If $\cond{A,B}$ are perennial conducts, then $\cond{A\oplus B}$ is a perennial conduct.
\end{proposition}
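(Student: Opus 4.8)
The plan is to run the same kind of argument as in the preceding proposition (tensor of perennial conducts is perennial), but using the characterisation of $\oplus$ from \autoref{compintoplus} in place of that of $\otimes$. Since $\cond{A}$ and $\cond{B}$ are perennial, fix sets $E$ and $F$ of one-sliced wager-free projects with $\cond{A}=E^{\pol\pol}$ and $\cond{B}=F^{\pol\pol}$. Writing $\de{0}_{V^{A}}$, $\de{0}_{V^{B}}$ for the empty projects of the two carriers (these are wager-free and, presented with singleton index and dialect, one-sliced), I would set
$$P_{0}=\{\de{a}\otimes\de{0}_{V^{B}}~|~\de{a}\in E\}=E\odot\{\de{0}_{V^{B}}\},\qquad Q_{0}=\{\de{0}_{V^{A}}\}\odot F.$$
Because the execution of two one-sliced graphings is again one-sliced, and because the wager of $\de{a}\otimes\de{0}_{V^{B}}=\de{a\plug 0_{V^{B}}}$ equals $\sca{a}{0_{V^{B}}}=0$ (both wagers vanish and the cross term $\meas{A_i,\emptyset}$ vanishes since the empty graphing carries no circuit), every element of $P_{0}\cup Q_{0}$ is a one-sliced wager-free project. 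Hence it suffices to show $\cond{A\oplus B}=(P_{0}\cup Q_{0})^{\pol\pol}$, which exhibits $\cond{A\oplus B}$ as perennial.

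By \autoref{compintoplus} we have $\cond{A\oplus B}=(P\cup Q)^{\pol\pol}$, where $P=\cond{A}\odot\{\de{0}_{V^{B}}\}$ and $Q=\{\de{0}_{V^{A}}\}\odot\cond{B}$. Since $P_{0}\subseteq P$ and $Q_{0}\subseteq Q$, monotonicity of $(\cdot)^{\pol\pol}$ gives $(P_{0}\cup Q_{0})^{\pol\pol}\subseteq\cond{A\oplus B}$. For the converse, $(\cdot)^{\pol\pol}$ being a closure operator, it is enough to check $P\cup Q\subseteq(P_{0}\cup Q_{0})^{\pol\pol}$, and by symmetry it is enough to check $P\subseteq P_{0}^{\pol\pol}$. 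If $E=\emptyset$ then $\cond{A}=\cond{0}$ and $P=\emptyset$, so this is trivial. Otherwise $E$ and $\{\de{0}_{V^{B}}\}$ are non-empty with disjoint carriers, and \autoref{ethtenscondtens} gives $P_{0}^{\pol\pol}=(E\odot\{\de{0}_{V^{B}}\})^{\pol\pol}=(E^{\pol\pol}\odot\{\de{0}_{V^{B}}\}^{\pol\pol})^{\pol\pol}$; since $E^{\pol\pol}=\cond{A}$ and $\de{0}_{V^{B}}\in\{\de{0}_{V^{B}}\}^{\pol\pol}$, we get $P=\cond{A}\odot\{\de{0}_{V^{B}}\}\subseteq E^{\pol\pol}\odot\{\de{0}_{V^{B}}\}^{\pol\pol}\subseteq P_{0}^{\pol\pol}$. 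Symmetrically $Q\subseteq Q_{0}^{\pol\pol}$, so $\cond{A\oplus B}=(P\cup Q)^{\pol\pol}\subseteq(P_{0}\cup Q_{0})^{\pol\pol}$, and equality follows.

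The only genuinely delicate point is the bookkeeping around the empty projects $\de{0}_{V^{A}},\de{0}_{V^{B}}$: one must make sure they really do count as one-sliced wager-free projects in the sense of the definition of perennial conduct, and that tensoring a one-sliced wager-free project with them preserves both properties — this is exactly the place where the conventions on the empty graphing (singleton index and dialect, no contribution to any measurement) must be invoked. Everything else is a formal closure-operator manipulation resting on \autoref{compintoplus} and \autoref{ethtenscondtens}.
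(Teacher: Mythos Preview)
Your proof is correct and follows the same approach as the paper, which simply cites \autoref{compintoplus} as the entire argument. You have spelled out what the paper leaves implicit: that one must pass from the generating set $\{\de{a}\otimes\de{0}_{V^{B}}\mid \de{a}\in\cond{A}\}\cup\{\de{0}_{V^{A}}\otimes\de{b}\mid \de{b}\in\cond{B}\}$ (whose members need not be one-sliced, since arbitrary elements of a perennial conduct need not be) down to the smaller set built from the original one-sliced generators $E,F$, and your use of \autoref{ethtenscondtens} to close the gap is exactly the right move.
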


\begin{proof}
This is a consequence of \autoref{compintoplus}.
\end{proof}

\begin{proposition}
If $\cond{A}$ is a perennial conduct and $\cond{B}$ is a co-perennial conduct, then $\cond{A\multimap B}$ is a co-perennial conduct.
\end{proposition}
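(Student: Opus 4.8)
The plan is to combine three ingredients that are already available: that the orthogonal of a perennial conduct is again perennial, that the tensor product of two perennial conducts is a perennial conduct, and the De Morgan identity $\cond{A\multimap B}=(\cond{A}\otimes\cond{B}^{\pol})^{\pol}$, which is part of the $\ast$-autonomous structure recalled in \autoref{recallmall}. Granting these, the argument is immediate: $\cond{B}^{\pol}$ is perennial, hence so is $\cond{A}\otimes\cond{B}^{\pol}$, and therefore $\cond{A\multimap B}$, being the orthogonal of a perennial conduct, is co-perennial by definition.

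So the steps I would carry out, in order, are the following. First, that $\cond{B}^{\pol}$ is perennial: since $\cond{B}$ is co-perennial we may write $\cond{B}=\cond{C}^{\pol}$ with $\cond{C}$ a perennial conduct, and as a perennial conduct is in particular a conduct, $\cond{B}^{\pol}=\cond{C}^{\pol\pol}=\cond{C}$, which is perennial. Second, apply the proposition that the tensor of two perennial conducts is perennial to $\cond{A}$ and $\cond{B}^{\pol}$, obtaining that $\cond{A}\otimes\cond{B}^{\pol}$ is perennial. Third, rewrite $\cond{A\multimap B}=(\cond{A}\otimes\cond{B}^{\pol})^{\pol}$. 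Fourth, conclude: $\cond{A\multimap B}$ is the orthogonal of the perennial conduct $\cond{A}\otimes\cond{B}^{\pol}$, which is exactly the definition of a co-perennial conduct.

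If one prefers not to invoke the De Morgan identity as a black box, the third step can be unfolded directly: $\de{f}\in\cond{A\multimap B}$ means $\de{f\plug a}\in\cond{B}$ for every $\de{a}\in\cond{A}$, i.e.\ $\sca{f\plug a}{b}\neq 0,\infty$ for all $\de{a}\in\cond{A}$ and $\de{b}\in\cond{B}^{\pol}$; the adjunction $\sca{f\plug a}{b}=\sca{f}{a\plug b}$ (the wager-carrying version of \autoref{numadjunct}, inherited by sliced thick graphings) turns this into $\de{f}\in(\cond{A}\odot\cond{B}^{\pol})^{\pol}$, and since $(\cond{A}\odot\cond{B}^{\pol})^{\pol}=((\cond{A}\odot\cond{B}^{\pol})^{\pol\pol})^{\pol}=(\cond{A}\otimes\cond{B}^{\pol})^{\pol}$ by the definition of the tensor of conducts (see also \autoref{ethtenscondtens}), we are done. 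I expect this adjunction identity at the level of projects --- tracking the wager terms through plugging, not only the measurements --- to be the only point that needs care; but it is precisely the bookkeeping underlying the $\ast$-autonomous structure already established, so no new geometry is involved. A minor technical point worth a line is the disjointness of carriers required to form $\cond{A}\otimes\cond{B}^{\pol}$ (resp.\ $\cond{B}^{\pol}$), which is either part of the data of the formula $\cond{A\multimap B}$ or handled by the usual delocations, neither of which disturbs perenniality.
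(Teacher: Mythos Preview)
Your proof is correct and follows essentially the same route as the paper: use the identity $\cond{A\multimap B}=(\cond{A}\otimes\cond{B}^{\pol})^{\pol}$, observe that $\cond{B}^{\pol}$ is perennial, apply the closure of perennial conducts under tensor, and conclude by the definition of co-perennial. The paper's proof is terser (it does not spell out why $\cond{B}^{\pol}$ is perennial and omits your optional unfolding of the adjunction), but the argument is the same.
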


\begin{proof}
We recall that $\cond{A\multimap B}=(\cond{A}\otimes\cond{B}^{\pol})^{\pol}$. Since $\cond{A}$ and $\cond{B}^{\pol}$ are perennial conducts, $\cond{A}\otimes\cond{B}^{\pol}$ is a perennial conduct, and therefore $\cond{A\multimap B}$ is a co-perennial conduct. In particular, $\cond{A\multimap B}$ is non-empty and satisfies the inflation property.
\end{proof}

\begin{proposition}\label{tenseurperennecomp}
If $\cond{A}$ is a perennial conduct and $\cond{B}$ is a behaviour, then $\cond{A\otimes B}$ is a behaviour.
\end{proposition}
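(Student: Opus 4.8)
The plan is to verify directly the two stability clauses in the definition of a behavior, since $\cond{A\otimes B}$ is automatically a conduct (a bi-orthogonal closure). I will repeatedly use the elementary identity $\sca{\de c+\lambda\de 0_V}{\de d}=\sca{\de c}{\de d}+\lambda\, w(\de d)$, where $w(\de d)$ denotes the wager of $\de d$: adding $\lambda\de 0_V$ to a project only shifts the unit $\unit{}$ of its graphing by $\lambda$ and leaves all measurements $\meas{\cdot,\cdot}$ unchanged, which is exactly the mechanism behind Propositions~\ref{fellorth} and \ref{wagerfreeorth}. In particular, a conduct whose orthogonal is wager-free is automatically stable under adding $\lambda\de 0$.

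First I dispose of the degenerate cases. If $\cond A=\emptyset$ or $\cond B=\emptyset$, the set $\cond A\odot\cond B$ is empty, so $\cond{A\otimes B}=\cond 0_{V^{A}\cup V^{B}}$, which is a behavior by definition. If $\cond A\neq\emptyset$ but $\cond B^{\pol}=\emptyset$, i.e.\ $\cond B=\cond T_{V^{B}}$, then for any $\de f\in\cond{A\otimes B}^{\pol}$ and any fixed $\de a\in\cond A$ the adjunction gives $\de{f\plug a}\in(\cond T_{V^{B}})^{\pol}=\cond 0_{V^{B}}=\emptyset$, which is absurd; hence $\cond{A\otimes B}^{\pol}=\emptyset$ and $\cond{A\otimes B}=\cond T_{V^{A}\cup V^{B}}$, again a behavior.

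Now assume $\cond A$, $\cond B$ and $\cond B^{\pol}$ are all non-empty. Then $\cond A$ is wager-free (by the corollary that perennial conducts carry only wager-free projects), and $\cond B$ is wager-free too: $\cond B^{\pol}$ is non-empty and, $\cond B$ being a behavior, stable under $+\lambda\de 0$, so Proposition~\ref{wagerfreeorth} applies to $\cond B^{\pol}$. Therefore $E:=\cond A\odot\cond B$ is a non-empty family of projects of common carrier $V^{A}\cup V^{B}$, all wager-free: the wager of $\de a\otimes\de b$ vanishes because $a=b=0$ and, the carriers $V^{A},V^{B}$ being disjoint, there is no alternating cycle between $A$ and $B$, so $\meas{A,B}=0$. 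By definition $\cond{A\otimes B}=E^{\pol\pol}$, whence $\cond{A\otimes B}^{\pol}=E^{\pol}$, and Proposition~\ref{fellorth} applied to $E$ yields the first clause: $\cond{A\otimes B}^{\pol}$ is stable under $+\lambda\de 0$.

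For the second clause it suffices, by the reduction above, to prove that $\cond{A\otimes B}^{\pol}$ is wager-free; this is the crux. By the adjunction, $\de f\in\cond{A\otimes B}^{\pol}$ iff $\de{f\plug a}\in\cond B^{\pol}$ for every $\de a\in\cond A$, and since $\cond B^{\pol}$ is wager-free this says $\sca{\de f}{\de a}=w(\de f)\,\unit{A}+\meas{F,A}=0$ for every $\de a\in\cond A$. Since $\cond A$ is perennial it is generated by one-sliced wager-free projects; pairing any such $\de a$ against the daimon projects $\de{Dai}_{\mu}\in\cond A^{\pol}$ (Proposition~\ref{coperdemon}) shows $\unit{A}\neq 0$, so $\de a$ may be rescaled to unit $1$ via Lemma~\ref{homothetie}, and the displayed equation reads $w(\de f)=-\meas{F,A}$ on these normalized generators. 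Exhibiting, among the generators of $\cond A$, one whose graphing shares no alternating cycle with $F$ kills the correction term and forces $w(\de f)=0$. The main obstacle is precisely to justify that such a witness exists for every $\de f\in\cond{A\otimes B}^{\pol}$ — i.e.\ that the measurement term cannot be locked to a single non-zero value across all of $\cond A$ at once — and it is here that the richness of a perennial conduct's generating set is really used. Once $\cond{A\otimes B}^{\pol}$ is known to be wager-free, $\sca{\de c+\lambda\de 0}{\de d}=\sca{\de c}{\de d}$ for $\de c\in\cond{A\otimes B}$ and $\de d\in\cond{A\otimes B}^{\pol}$, which gives the second clause and completes the proof.
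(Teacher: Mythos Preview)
Your case analysis and the first stability clause are fine, but the crux of the argument --- showing that every $\de f\in\cond{(A\otimes B)}^{\pol}$ is wager-free --- contains a genuine gap, which you in fact flag yourself. From $\de{f\plug a}\in\cond B^{\pol}$ and the wager-freeness of $\cond B^{\pol}$ you correctly obtain $f\,\unit{A}+\meas{F,A}=0$ for every wager-free $\de a\in\cond A$. You then assert that one can pick a generator of $\cond A$ whose graphing has no alternating cycle with $F$, forcing $\meas{F,A}=0$ and hence $f=0$. But nothing in the definition of a perennial conduct guarantees such a witness: a perennial conduct may be generated by a \emph{single} one-sliced wager-free project, and for that project the measurement $\meas{F,A}$ can perfectly well be nonzero. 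The ``richness'' you invoke simply is not there, and homothety does not help either, since scaling $\de a$ multiplies both $\unit{A}$ and $\meas{F,A}$ by the same factor.

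The paper's proof sidesteps this by exploiting $\cond B$, not $\cond A$. Fix any one-sliced wager-free generator $\de a$ of $\cond A$ (so $\unit{A}=1$) and any $\de b\in\cond B$, and compute $\sca{f}{a\otimes b}=f\,\unit{B}+b\,\unit{F}+\meas{F,A\cup B}$. Assuming $f\neq 0$ for contradiction, set $\mu=-(\meas{F,A\cup B}+b\,\unit{F})/f-\unit{B}$; since $\cond B$ is a behavior, $\de b+\mu\de 0\in\cond B$, and one checks $\sca{f}{a\otimes(b+\mu 0)}=0$, contradicting $\de f\poll\de a\otimes(\de b+\mu\de 0)$. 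The point is that the freedom needed to kill the bracket lives on the $\cond B$ side (inflation shifts $\unit{B}$ arbitrarily), not on the $\cond A$ side; your attempt to extract it from the perennial hypothesis is looking in the wrong place.
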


\begin{proof}
If $\cond{A}=\cond{0}_{V^{A}}$ with $\cond{B}=\cond{0}_{V^{B}}$, then $\cond{A\otimes B}=\cond{0}_{V^{A}\cup V^{B}}$ which is a behaviour.

Let $A$ be the set of single-sliced wager-free projects such that $\cond{A}=A^{\pol\pol}$. We have that $\cond{A\otimes B}=(A\odot \cond{B})^{\pol\pol}$ by \autoref{ethtenscondtens}. If $\cond{B}\neq\cond{0}_{V^{B}}$ and $A\neq 0$, then $A\odot \cond{B}$ is non-empty and contains only wager-free projects. Thus $\cond{(A\otimes B)^{\pol}}$ satisfies the inflation property by \autoref{fellorth}.

Now suppose there exists $\de{f}=(f,F)\in\cond{(A\otimes B)^{\pol}}$ such that $f\neq 0$. Then for all $\de{a}\in A$ and $\de{b}\in \cond{B}$, $\sca{f}{a\otimes b}\neq 0,\infty$. But, since $\de{a}$ is wager-free and $\unit{A}=1$, $\sca{f}{a\otimes b}=f\unit{B}+b\unit{F}+\meas{F,A\cup B}$. We can then define $\mu=(-\meas{F,A\cup B}-b\unit{F})/f-\unit{B}$. Since $\cond{B}$ is a behaviour, $\de{b+\mu 0}\in\cond{B}$. However:
\begin{eqnarray*}
\sca{f}{a\otimes (b+\mu 0)}&=&f(\unit{B}+\mu)+b\unit{F}+\meas{F,A\cup (B+\mu 0)}\\
&=&f(\unit{B}+\mu)+b\unit{F}+\meas{F,A\cup B}\\
&=&f(\unit{B}+(-\meas{F,A\cup B}-b\unit{F})/f-\unit{B})+b\unit{F}+\meas{F,A\cup B}\\
&=&-\meas{F,A\cup B}-b\unit{F}+b\unit{F}+\meas{F,A\cup B}\\
&=&0
\end{eqnarray*}
But this is a contradiction. Therefore the elements in $\cond{(A\otimes B)^{\pol}}$ are wager-free.

If $\cond{(A\otimes B)^{\pol}}$ is non-empty, it is a non-empty conduct containing only wager-free projects and satisfying the inflation property: it is therefore a (proper) behaviour.

The only case left to treat is when $\cond{(A\otimes B)^{\pol}}$ is empty, but then $\cond{A\otimes B}=\cond{T}_{V^{A}\cup V^{B}}$ is clearly a behaviour.
\end{proof}

\begin{corollary}
If $\cond{A}$ is perennial and $\cond{B}$ is a behaviour, then $\cond{A\multimap B}$ is a behaviour.
\end{corollary}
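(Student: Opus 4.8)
The plan is to deduce the corollary from \autoref{tenseurperennecomp} together with the identity $\cond{A\multimap B}=(\cond{A}\otimes\cond{B}^{\pol})^{\pol}$ already used above. The whole argument then amounts to three short observations.

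First I would record that the class of behaviors is stable under orthogonality. Unwinding the definition, a conduct $\cond{C}$ is a behavior exactly when, for every $\lambda\in\realN$, both $\de{c}\in\cond{C}\Rightarrow\de{c+\lambda 0}\in\cond{C}$ and $\de{d}\in\cond{C}^{\pol}\Rightarrow\de{d+\lambda 0}\in\cond{C}^{\pol}$ hold. Since $\cond{C}^{\pol\pol}=\cond{C}$ for a conduct, replacing $\cond{C}$ by $\cond{C}^{\pol}$ merely swaps these two conditions; hence $\cond{C}^{\pol}$ is a behavior whenever $\cond{C}$ is. In particular $\cond{B}^{\pol}$ is a behavior.

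Next, since $\cond{A}$ is a perennial conduct and $\cond{B}^{\pol}$ is a behavior — and, as is implicit in the definition of $\multimap$, the carriers $V^{A}$ and $V^{B}$ may be taken disjoint so that the tensor is defined — \autoref{tenseurperennecomp} gives that $\cond{A\otimes B^{\pol}}$ is a behavior. Finally, $\cond{A\multimap B}=(\cond{A}\otimes\cond{B}^{\pol})^{\pol}$ is the orthogonal of this behavior, hence a behavior by the first observation, which concludes the argument.

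I do not expect any real obstacle here: the only mildly non-obvious step is the stability of behaviors under $(\cdot)^{\pol}$, which is immediate from the symmetric shape of the defining conditions. The corollary is essentially \autoref{tenseurperennecomp} read through the standard decomposition of $\multimap$ in terms of $\otimes$ and negation.
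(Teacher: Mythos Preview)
Your argument is correct and follows exactly the same route as the paper: write $\cond{A\multimap B}=(\cond{A\otimes B^{\pol}})^{\pol}$, apply \autoref{tenseurperennecomp} to the tensor, and conclude since the orthogonal of a behavior is a behavior. The only difference is that you spell out the closure of behaviors under $(\cdot)^{\pol}$, which the paper simply asserts.
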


\begin{proof}
We recall that $\cond{A\multimap B}=(\cond{A\otimes B^{\pol}})^{\pol}$. Using the preceding proposition, the conduct $\cond{A\otimes B^{\pol}}$ is a behaviour since $\cond{A}$ is a perennial conduct and $\cond{B}^{\pol}$ is a behaviour. Thus $\cond{A\multimap B}$ is a behaviour since it is the orthogonal of a behaviour.
\end{proof}

\begin{corollary}
If $\cond{A}$ is a behaviour and $\cond{B}$ is a co-perennial conduct, then $\cond{A\multimap B}$ is a behaviour.
\end{corollary}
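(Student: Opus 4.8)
\emph{Proof proposal.}
The plan is to reduce the statement to \autoref{tenseurperennecomp}, exactly as in the preceding corollary, but with the roles of the perennial conduct and the behavior exchanged. First I recall that, by definition of the linear implication, $\cond{A\multimap B}=(\cond{A}\otimes\cond{B}^{\pol})^{\pol}$. Since $\cond{B}$ is a co-perennial conduct, its orthogonal $\cond{B}^{\pol}$ is, by definition, a perennial conduct.

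Next I would apply \autoref{tenseurperennecomp} with the perennial conduct taken to be $\cond{B}^{\pol}$ and the behavior taken to be $\cond{A}$; that proposition then yields that $\cond{B}^{\pol}\otimes\cond{A}$ is a behavior. Because $\otimes$ is commutative up to the canonical isomorphism of the $\ast$-autonomous category \concat{} (whose structure was established in the recalled theorems, and under which behaviors are preserved), $\cond{A}\otimes\cond{B}^{\pol}$ is a behavior as well. Alternatively, one may observe directly that $\cond{A}\odot\cond{B}^{\pol}$ and $\cond{B}^{\pol}\odot\cond{A}$ generate the same bi-orthogonally closed conduct up to relocation, using \autoref{ethtenscondtens}; the disjointness of the carriers $V^{A}$ and $V^{B}$ makes this harmless.

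Finally, the orthogonal of a behavior is again a behavior: this is immediate from the definition of behavior, which is symmetric in $\cond{C}$ and $\cond{C}^{\pol}$ (using $\cond{C}^{\pol\pol}=\cond{C}$ for conducts). Hence $\cond{A\multimap B}=(\cond{A}\otimes\cond{B}^{\pol})^{\pol}$ is a behavior. The only point requiring a little care is the appeal to commutativity of the tensor — that is, verifying that the hypotheses of \autoref{tenseurperennecomp} can be met with the perennial factor on either side — but since no wager and no genuine combinatorial difficulty is introduced by swapping the two (disjoint-carrier) factors, this is not a real obstacle, and the proof is essentially a one-line deduction from the previous results.
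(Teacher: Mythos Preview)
Your proof is correct and follows essentially the same route as the paper: write $\cond{A\multimap B}=(\cond{A}\otimes\cond{B}^{\pol})^{\pol}$, apply \autoref{tenseurperennecomp} to the tensor of a behavior with a perennial conduct, and conclude since the orthogonal of a behavior is a behavior. The paper does not bother justifying the commutativity of $\otimes$ here, so your extra discussion of that point is more careful than strictly necessary, but harmless.
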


\begin{proof}
One just has to write $\cond{A\multimap B}=\cond{(A\otimes B^{\pol})^{\pol}}$. Since $\cond{A\otimes B^{\pol}}$ is the tensor product of a behaviour with a perennial conduct, it is a behaviour. The result then follows from the fact that the orthogonal of a behaviour is a behaviour.
\end{proof}

\begin{proposition}
The weakening (on the left) of perennial conducts holds.
\end{proposition}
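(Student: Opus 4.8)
The plan is to unwind what left weakening asks for, reduce it to the generating family of $\cond{A}$, and then compute a single execution. Recall that left weakening for a perennial conduct $\cond{A}$ amounts to the following: for every conduct $\cond{C}$ with $V^{A}\cap V^{C}=\emptyset$ and every $\de{c}\in\cond{C}$, the extension $\de{c}_{\uparrow V^{A}\cup V^{C}}$ lies in $\cond{A\multimap C}$ (equivalently, $\cond{C}\subseteq\cond{A\multimap C}$ under extension of the carrier). Since $\cond{A}$ is perennial, I would fix a set $A$ of one-sliced wager-free projects with $\cond{A}=A^{\pol\pol}$, and write $\de{a}=(0,A_{\mathrm{gr}})$ for a generic element of $A$ and $\de{c}=(c,C_{\mathrm{gr}})$ for the given project.

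First I would dispose of the degenerate case $A=\emptyset$: then $\cond{A}=A^{\pol\pol}=\cond{0}_{V^{A}}$, the defining condition of $\cond{A\multimap C}$ is vacuous, so $\cond{A\multimap C}=\cond{T}_{V^{A}\cup V^{C}}$ contains every project of carrier $V^{A}\cup V^{C}$, in particular $\de{c}_{\uparrow V^{A}\cup V^{C}}$. If $A\neq\emptyset$, I would apply \autoref{propositionethsendcondsend} with $E=A$, $\cond{F}=\cond{C}$ and $\de{f}=\de{c}_{\uparrow V^{A}\cup V^{C}}$; since $A^{\pol\pol}\multimap\cond{C}$ is by definition $\cond{A\multimap C}$, it then suffices to check that $\de{c}_{\uparrow V^{A}\cup V^{C}}\plug\de{a}\in\cond{C}$ for every $\de{a}\in A$.

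The computation is the substance of the argument. The support of $A_{\mathrm{gr}}$ lies in $V^{A}$ and that of $C_{\mathrm{gr}}$ lies in $V^{C}$, and these are disjoint; hence no alternating path can use an edge of each graphing, so the only alternating paths with both endpoints in $(V^{A}\cup V^{C})\Delta V^{A}=V^{C}$ are the single edges of $C_{\mathrm{gr}}$. Thus the underlying graphing of $\de{c}_{\uparrow V^{A}\cup V^{C}}\plug\de{a}$ is $C_{\mathrm{gr}}$ itself, up to the canonical renaming $D^{C}\times D^{A_{\mathrm{gr}}}\cong D^{C}$ of its dialect (legitimate since $\de{a}$ is one-sliced and we work modulo renaming of dialects), and for the same reason $\meas{A_{\mathrm{gr}},C_{\mathrm{gr}}}=0$. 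Using moreover that $\unit{\de{a}}=1$ because $\de{a}$ is one-sliced, the wager is $\sca{c_{\uparrow}}{a}=c\cdot\unit{\de{a}}+0+\meas{A_{\mathrm{gr}},C_{\mathrm{gr}}}=c$. Hence $\de{c}_{\uparrow V^{A}\cup V^{C}}\plug\de{a}=\de{c}\in\cond{C}$, and \autoref{propositionethsendcondsend} gives $\de{c}_{\uparrow V^{A}\cup V^{C}}\in\cond{A\multimap C}$, which is the claim.

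The only point I expect to be delicate is the one just sketched: making the ``no mixed alternating path, no mixed circuit'' observation fully rigorous once the dialectal lifts $(\cdot)^{\dagger}$ and $(\cdot)^{\ddagger}$ are unwound, and checking that the resulting project is \emph{literally} $\de{c}$ after passing to graphings modulo renaming of the dialect --- so that no appeal to universal equivalence, nor any extra closure property of $\cond{C}$, is required. Everything else reduces to bookkeeping plus the single invocation of \autoref{propositionethsendcondsend}.
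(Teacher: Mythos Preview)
Your proof is correct, but it follows a different route from the paper's. The paper formulates weakening with a context already present: it shows that for arbitrary conducts $\cond{A},\cond{B}$ and a perennial $\cond{N}$, any $\de{f}\in\cond{A\multimap B}$ yields $\de{f}\otimes\de{0}_{V^{N}}\in\cond{(A\otimes N)\multimap B}$. It then picks an \emph{arbitrary} $\de{n}\in\cond{N}$ (not a generator), uses that every element of a perennial conduct is wager-free and has $\unit{N}\neq 0$, and computes $\sca{(f\otimes 0)\plug(a\otimes n)}{b'}=\unit{N}\cdot\sca{f}{a\otimes b'}$ directly against tests $\de{b'}\in\cond{B^{\pol}}$; the appeal to \autoref{propositionethsendcondsend} is only for the tensor $\cond{A}\odot\cond{N}$.

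Your approach instead proves the context-free form $\cond{C}\subseteq\cond{N\multimap C}$, reduces via \autoref{propositionethsendcondsend} to the generating set of the perennial conduct, and then computes the execution $\de{c}_{\uparrow}\plug\de{a}$ explicitly, exploiting the strong hypotheses (\emph{one-sliced}, $\unit{\de{a}}=1$) that only generators enjoy. This is more elementary: you never need to know that \emph{all} elements of a perennial conduct are wager-free or have nonzero $\unit{\cdot}$. The two formulations are equivalent for the soundness argument, since a sequent $\Delta\vdash\Gamma$ is interpreted as a single conduct and your $\cond{C}$ can absorb the whole context. The paper's computation against tests avoids the ``literally $\de{c}$ modulo dialect renaming'' bookkeeping you flag, at the cost of invoking two global properties of perennial conducts.
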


\begin{proof}
Let $\cond{A,B}$ be conducts, and $\cond{N}$ be a perennial conduct. Choose $\de{f}\in\cond{A\multimap B}$. We will show that $\de{f}\otimes \de{0}_{V^{N}}$ is a project in $\cond{A\otimes N\multimap B}$. For this, we pick $\de{a}\in\cond{A}$ and $\de{n}\in\cond{N}$ (recall that $\de{n}$ is necessarily wager-free). Then for all $\de{b'}\in\cond{B^{\pol}}$,
\begin{eqnarray*}
\lefteqn{\sca{(f\otimes 0)\plug(a\otimes n)}{b'}}\\
&=&\sca{f\otimes 0}{(a\otimes n)\otimes b'}\\
&=&\sca{f\otimes 0}{(a\otimes b')\otimes n}\\
&=&\unit{F}(\unit{A}\unit{B'}n+\unit{N}\unit{A}b'+\unit{N}\unit{B'}a)+\unit{N}\unit{A}\unit{B'}f+\meas{F\cup 0,A\cup B'\cup N}\\
&=&\unit{F}(\unit{N}\unit{A}b'+\unit{N}\unit{B'}a)+\unit{N}\unit{A}\unit{B'}f+\meas{F\cup 0,A\cup B'\cup N}\\
&=&\unit{N}(\unit{F}(\unit{A}b'+\unit{B'}a)+\unit{A}\unit{B'}f)+\unit{N}\meas{F,A\cup B'}\\
&=&\unit{N}\sca{f}{a\otimes b'}
\end{eqnarray*}
Since $\unit{N}\neq 0$, $\sca{(f\otimes 0)\plug(a\otimes n)}{b'}\neq 0,\infty$ if and only if $\sca{f\plug a}{b'}\neq0,\infty$. Thus for all $\de{a\otimes n}\in\cond{A\odot N}$, $\de{(f\otimes 0)\plug (a\otimes n)}\in\cond{B}$. This shows that $\de{f\otimes 0}\in\cond{A\otimes N\multimap B}$ by \autoref{propositionethsendcondsend}.
\end{proof}

\section{Construction of an Exponential Connective on the Real Line}\label{sec_Exp}

We now consider the microcosm $\mathfrak{mi}$ of measure-inflating maps\footnote{A \emph{measure-inflating map} on the real line with Lebesgue measure $\lambda$ is a non-singular measurable-preserving transformation $\phi:\realN\rightarrow\realN$ such that there exists a positive real number $\mu_{\phi}$ with $\lambda(\phi^{-1}(A))=\mu_{\phi}\lambda(A)$. In other terms, $\phi$ \emph{transports the measure} $\lambda$ onto $\mu_{\phi}\lambda$.} on the real line endowed with Lebesgue measure, we fix $\Omega=]0,1]$ endowed with the usual multiplication and we choose any measurable map $m:\Omega\rightarrow \realposN\cup\{\infty\}$ such that $m(1)=\infty$. We showed in a previous work how this framework can interpret multiplicative-additive linear logic with second order quantification\footnote{We actually showed how one can interpret second-order multiplicative-additive linear logic in the model $\vaguemodel{\Omega}{aff}{m}$ where $\mathfrak{aff}\subsetneq\mathfrak{mi}$ is the microcosm of affine transformations on the real line. The result is however valid for any super-microcosm $\mathfrak{n}\supset\mathfrak{aff}$, hence for $\mathfrak{mi}$, since a graphing in $\mathfrak{aff}$ can be considered as a graphing in $\mathfrak{n}$ in a way that is coherent with execution, orthogonality, sums, etc.} \cite{seiller-goig}. We now show how to interpret elementary linear logic exponential connectives in the model $\vaguemodel{\Omega}{mi}{m}$ (defined in \autoref{mainthmgraphings}).


\subsection{A Construction of Exponentials}

We will begin by showing a technical result that will allow us to define measure preserving transformations from bijections of the set of integers. This result will be used to show that functorial promotion can be implemented for our exponential modality.

\begin{definition}
Let $\phi:\naturalN\rightarrow \naturalN$ be a bijection and $b$ an integer $\geqslant 2$. Then $\phi$ induces a  transformation $T^{b}_{\phi}:[0,1]\rightarrow[0,1]$ defined by $\sum_{i\geqslant 0} a_{k}b^{-k}\mapsto \sum_{i\geqslant 0} a_{\phi^{-1}(k)}b^{-k}$.
\end{definition}

\begin{remark}\label{remarkrepbase}
Suppose that $\sum_{i\geqslant 0} a_{i}b^{-i}$ and $\sum_{i\geqslant 0} a'_{i}b^{-i}$ are two distinct representations of a real number $r$. Let us fix $i_{0}$ to be the smallest integer such that $a_{i_{0}}\neq a'_{i_{0}}$. We first notice that the absolute value of the difference between these digits has to be equal to $1$: $\abs{a_{i_{0}}-a'_{i_{0}}}=1$. Indeed, if this was not the case, i.e.\ if $\abs{a_{i_{0}}-a'_{i_{0}}}\geqslant 2$, the distance between $\sum_{i\geqslant 0} a_{i}b^{-i}$ and $\sum_{i\geqslant 0} a'_{i}b^{-i}$ would be greater than $b^{-i_{0}}$, which contradicts the fact that both sums are equal to $r$. Let us now suppose, without loss of generality, that $a_{i_{0}}=a'_{i_{0}}+1$. Then $a_{j}=0$ for all $j>i_{0}$ since if there existed an integer $j>i_{0}$ such that $a_{j}\neq 0$, the distance between the sums $\sum_{i\geqslant 0} a_{i}b^{-i}$ and $\sum_{i\geqslant 0} a'_{i}b^{-i}$ would be greater than $b^{-j}$, which would again be a contradiction. Moreover, $a'_{j}=b-1$ for all $j>i_{0}$: if this was not the case, one could show in a similar way that the difference between the two sums would be strictly greater than $0$. We can deduce from this that only the reals with a finite representation in base $b$ have two distinct representations.

Since the set of such elements is of null measure, the transformation $T_{\phi}$ associated to a bijection $\phi$ of $\naturalN$ is well defined as we can define $T_{\phi}$ only on the set of reals that have a unique representation. We can however choose to deal with this in another way: choosing between the two possible representations, by excluding for instance the representations as sequences that are almost everywhere equal to zero. Then $T_{\phi}$ is defined at all points and bijective. We choose in the following to follow this second approach since it will allow to prove more easily that $T_{\phi}$ is measure-preserving. However, this choice is not relevant for the rest of the construction since both transformations are almost everywhere equal.
\end{remark}

\begin{lemma}
Let $T$ be a transformation of $[0,1]$ such that for all interval $[a,b]$, $\lambda(T([a,b]))=\lambda([a,b])$. Then $T$ is measure-preserving on $[0,1]$.
\end{lemma}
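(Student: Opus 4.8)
The plan is to upgrade the hypothesis, which is stated only for closed intervals, to all measurable sets; this is a classical uniqueness-of-measures argument once one observes that forward images under $T$ are well behaved. Concretely, I would introduce the set function $\nu$ on the Borel $\sigma$-algebra $\mathcal{B}([0,1])$ defined by $\nu(A)=\lambda(T(A))$ and show $\nu=\lambda$.

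First I would check that $\nu$ is a well-defined finite Borel measure. Well-definedness requires $T(A)$ to be (Borel, hence Lebesgue) measurable for every Borel $A$, which is exactly the standing requirement that $T$ be Borel-preserving (and for the transformation of interest, namely the digit-permutation bijection $T_\phi$ of the preceding remark, this is immediate from the explicit formula). Finiteness is clear: $\nu([0,1])=\lambda(T([0,1]))\le\lambda([0,1])=1$ (in fact an equality, by hypothesis). Countable additivity is where one uses that $T$ is a bijection, which is precisely why the remark fixed the second representative: a bijection sends pairwise disjoint sets to pairwise disjoint sets and commutes with countable unions, so for pairwise disjoint Borel sets $A_n$ one gets $\nu(\bigcup_n A_n)=\lambda(\bigcup_n T(A_n))=\sum_n \lambda(T(A_n))=\sum_n \nu(A_n)$. (Even without bijectivity, applying the hypothesis to arbitrary finite partitions of $[0,1]$ into intervals forces the images of disjoint intervals to have null pairwise intersections, so $T$ is injective up to a null set, which would also suffice.)

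Next, by hypothesis $\nu$ and $\lambda$ agree on the family $\mathcal{I}$ of closed subintervals of $[0,1]$. This family is a $\pi$-system — the intersection of two closed intervals is again a closed interval or empty — it generates $\mathcal{B}([0,1])$, and it contains $[0,1]$ itself, on which both measures are finite and equal. The uniqueness theorem for finite measures (Dynkin's $\pi$--$\lambda$ theorem) then gives $\nu=\lambda$ on all of $\mathcal{B}([0,1])$, i.e. $\lambda(T(A))=\lambda(A)$ for every Borel set $A$. To pass from Borel to Lebesgue-measurable sets, write a Lebesgue set as $A=B\cup N$ with $B$ Borel and $N$ contained in a Borel null set $N'$; since $\lambda(T(N'))=\lambda(N')=0$ and $T$ is bijective, $T(A)=T(B)\cup T(N)$ with $T(N)\subseteq T(N')$ null, whence $\lambda(T(A))=\lambda(B)=\lambda(A)$. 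Because $T$ is a bijection, $\lambda(T(A))=\lambda(A)$ for all measurable $A$ is equivalent to $\lambda(T^{-1}(A))=\lambda(A)$ for all measurable $A$, that is, $T$ is measure-preserving.

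The main obstacle is the verification that $\nu$ is genuinely a measure — the well-definedness ($T(A)$ measurable) and, above all, countable additivity — since forward images under an arbitrary transformation need not respect disjointness or measurability; this is exactly the point where one leans on $T$ being a Borel-preserving bijection, as arranged in the preceding remark. Everything after that is the routine uniqueness-of-measures argument.
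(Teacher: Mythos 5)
Your proof is correct and follows the same route as the paper's, which simply invokes "a classical result of measure theory" (agreement of two finite measures on a generating family) without further argument; you flesh this out as the standard $\pi$--$\lambda$ (Dynkin) argument. The one thing you do beyond the paper — and it is a genuine improvement in rigor — is to make explicit the hidden hypotheses that make $\nu=\lambda\circ T$ a measure at all (that $T$ be Borel-preserving so images are measurable, and bijective, or at least essentially injective, so $\nu$ is countably additive), pointing out how the preceding remark's choice of representative arranges exactly this.
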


\begin{proof}
A classical result of measure theory states that if $T$ is a transformation of a measured space $(X,\mathcal{B},\lambda)$, that $\mathcal{B}$ is generated by $\mathcal{A}$, and that for all $A\in \mathcal{A}$, $\lambda(T(A))=\lambda(A)$, then $T$ preserves the measure $\lambda$ on $X$. Applying this result with $X=[0,1]$, and $\mathcal{A}$ as the set of intervals $[a,b]\subset[0,1]$, we obtain the result.
\end{proof}

\begin{lemma}
Let $T$ be a bijective transformation of $[0,1]$ that preserves the measure on all interval $I$ of the shape $[\sum_{k=1}^{p} a_{k}b^{-k},\sum_{k=1}^{p} a_{k}b^{-k}+b^{-p}]$. Then $T$ is measure-preserving on $[0,1]$.
\end{lemma}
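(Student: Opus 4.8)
The plan is to deduce the statement from the previous lemma, so it is enough to show that $\lambda(T([a,b])) = \lambda([a,b])$ for \emph{every} closed subinterval $[a,b] \subseteq [0,1]$. Call an interval of the form $[\sum_{k=1}^{p} a_{k}b^{-k}, \sum_{k=1}^{p} a_{k}b^{-k} + b^{-p}]$ a \emph{$b$-adic interval of level $p$}; as $p$ ranges over the integers and the $a_{k}$ over $\{0,\dots,b-1\}$, these are exactly the intervals $[jb^{-p},(j+1)b^{-p}]$ with $0 \le j < b^{p}$, and by hypothesis $T$ preserves their measure. Since $T$ is injective, $T(A)\cap T(B) = T(A\cap B)$ for all $A,B$; applied to two $b$-adic intervals of the same level, this intersection is a (possibly empty) one-point set, hence Lebesgue-null, so finite additivity gives $\lambda(T(U)) = \lambda(U)$ for every finite union $U$ of level-$p$ $b$-adic intervals. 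Finally, taking $p=0$ shows $[0,1]$ is itself a $b$-adic interval, so $T([0,1]) = [0,1]$. (As everywhere in this paper, $T$ is a Borel-preserving transformation, which guarantees that all the sets $T(E)$ below are measurable.)

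The first step I would carry out is the lower bound $\lambda(T(E)) \ge \lambda(E)$ for every finite union $E$ of closed subintervals of $[0,1]$. Given such an $E$ with $n$ components, let $U_{p}$ be the union of all level-$p$ $b$-adic intervals contained in $E$. A level-$p$ $b$-adic interval meeting $E$ is contained in a single component of $E$, so $U_{p} \subseteq E$ and $\lambda(T(E)) \ge \lambda(T(U_{p})) = \lambda(U_{p})$, while inside each component the $b$-adic grid loses at most $2b^{-p}$ of measure near its two endpoints, whence $\lambda(E) - \lambda(U_{p}) \le 2n\,b^{-p} \to 0$. Letting $p \to \infty$ yields $\lambda(T(E)) \ge \lambda(E)$.

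The remaining step extracts equality on a single interval $[a,b]$. The lower bound already gives $\lambda(T([a,b])) \ge \lambda([a,b])$. Applying it instead to $E' = [0,a]\cup[b,1]$ gives $\lambda(T(E')) \ge \lambda(E') = 1 - \lambda([a,b])$. Now $[a,b]\cap E'$ has at most two points, so $T([a,b])\cap T(E') = T([a,b]\cap E')$ is Lebesgue-null, while $T([a,b]) \cup T(E') = T([0,1]) = [0,1]$; hence $\lambda(T([a,b])) + \lambda(T(E')) = 1$. Combining the two inequalities forces $\lambda(T([a,b])) = \lambda([a,b])$, and the previous lemma then yields that $T$ is measure-preserving on $[0,1]$.

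The only point that will require some care is the lower-bound step: one must check that the inside-approximation $\lambda(U_{p}) \to \lambda(E)$ genuinely holds — i.e. control the $b$-adic intervals straddling an endpoint of $E$ — and that finite unions of level-$p$ $b$-adic intervals, which share endpoints, still behave additively under the bijection $T$. Both reduce to the observation that an injective $T$ turns such negligible (finite) overlaps into negligible overlaps; everything else is finite additivity of $\lambda$ together with the hypothesis and the Borel-preservation of $T$ already used implicitly in the statement of the previous lemma.
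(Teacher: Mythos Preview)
Your proof is correct and takes a genuinely different route from the paper's. The paper proceeds directly: it writes an arbitrary interval $[a,b]$ as a countable union $\bigcup_{i\geq 0}[a_i,a_{i+1}]$ of $b$-adic intervals and then applies $\sigma$-additivity (the images overlap only on a countable set since $T$ is injective) to get $\lambda(T([a,b])) = \sum_i \lambda(T([a_i,a_{i+1}])) = \sum_i \lambda([a_i,a_{i+1}]) = \lambda([a,b])$ in one stroke, before invoking the previous lemma. Your argument instead approximates from \emph{inside} by finite unions of level-$p$ $b$-adic intervals to obtain the one-sided bound $\lambda(T(E)) \geq \lambda(E)$, and then uses complementation in $[0,1]$ to upgrade this to equality for a single interval. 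Your route is a bit longer but arguably cleaner on the decomposition side: it sidesteps the claim that every interval splits exactly into countably many $b$-adic pieces (which is slightly delicate when the endpoints are not $b$-adic rationals), and it only needs finite additivity plus a limit.

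One small quibble of exposition: the sentence ``$[0,1]$ is itself a $b$-adic interval, so $T([0,1]) = [0,1]$'' is not the right justification --- the hypothesis only says $T$ preserves the \emph{measure} of such intervals, not that it fixes them setwise. The equality $T([0,1]) = [0,1]$ holds simply because $T$ is assumed to be a bijection of $[0,1]$.
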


\begin{proof}
Choose $[a,b]\subset[0,1]$. One can write $[a,b]$ as a union $\cup_{i=0}^{\infty}[a_{i},a_{i+1}]$, where for all $i\geqslant 0$, $a_{i+1}=a_{i}+b^{-k_{i}}$. We then obtain, using the hypotheses of the statement and the $\sigma$-additivity of the measure $\lambda$:
\begin{eqnarray*}
\lambda(T([a,b]))&=&\lambda(T(\cup_{i=0}^{\infty} [a_{i},a_{i+1}[))\\
			  &=&\lambda(\cup_{i=0}^{\infty} T([a_{i},a_{i+1}[))\\
			  &=&\sum_{i=0}^{\infty} \lambda(T([a_{i},a_{i+1}[))\\
			  &=&\sum_{i=0}^{\infty} \lambda([a_{i},a_{i+1}[)\\
			  &=&\lambda(\cup_{i=0}^{\infty} [a_{i},a_{i+1}[)\\
			  &=&\lambda([a,b])
\end{eqnarray*}
We now conclude by using the preceding lemma.
\end{proof}

\begin{theorem}
Let $\phi:\naturalN\rightarrow \naturalN$ be a bijection and $b\geqslant 2$ an integer. Then the transformation $T^{b}_{\phi}$ is measure-preserving.
\end{theorem}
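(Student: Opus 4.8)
The plan is to invoke the preceding lemma: it suffices to check that $T^{b}_{\phi}$ is a genuine bijection of $[0,1]$ and that it preserves the Lebesgue measure of every $b$-adic interval $I=[\sum_{k=1}^{p}a_{k}b^{-k},\sum_{k=1}^{p}a_{k}b^{-k}+b^{-p}]$. Throughout I read $T^{b}_{\phi}$ as the map permuting the base-$b$ digits of a real according to $\phi$, i.e.\ $\sum_{k}a_{k}b^{-k}\mapsto\sum_{k}a_{\phi^{-1}(k)}b^{-k}$.

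For the bijectivity claim, I would first fix once and for all a choice of base-$b$ expansion for each point of $[0,1]$ as in \autoref{remarkrepbase}: every real has a unique expansion except the $b$-adic rationals, which have exactly two, and one selects the expansion that is not eventually $0$. The key point is that the digit-permutation map $(a_{k})_{k}\mapsto(a_{\phi^{-1}(k)})_{k}$ on sequences is stable under this convention, because it sends a finitely supported sequence to a finitely supported one (its support is the $\phi$-image of the original support) and likewise preserves the class of sequences eventually equal to $b-1$; hence it carries chosen expansions to chosen expansions. Moreover on sequences it is a bijection whose inverse $(c_{k})_{k}\mapsto(c_{\phi(k)})_{k}$ descends, for the same stability reason, to $T^{b}_{\phi^{-1}}$ on $[0,1]$. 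Thus $T^{b}_{\phi}$ is well defined and $T^{b}_{\phi^{-1}}$ is a two-sided inverse for it.

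For the measure computation, write $I=[c,c+b^{-p}]$ with $c=\sum_{k=1}^{p}a_{k}b^{-k}$; up to a null set, $I$ is exactly the set of reals whose first $p$ digits are $a_{1},\dots,a_{p}$. Since $T^{b}_{\phi}$ carries the digit originally in position $k$ to position $\phi(k)$, the image $T^{b}_{\phi}(I)$ is, again up to a null set, the set of reals whose digit in position $\phi(k)$ equals $a_{k}$ for $k=1,\dots,p$. Setting $P=\{\phi(1),\dots,\phi(p)\}$ and $N=\max P$, this set is a union of $b^{N-p}$ level-$N$ $b$-adic intervals (one for each assignment of the $N-p$ digits in positions $\{1,\dots,N\}\setminus P$), pairwise overlapping at most at endpoints, so $\lambda(T^{b}_{\phi}(I))=b^{N-p}\cdot b^{-N}=b^{-p}=\lambda(I)$. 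With both hypotheses verified, the preceding lemma gives that $T^{b}_{\phi}$ is measure-preserving on $[0,1]$.

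The only delicate point is the bookkeeping around non-unique base-$b$ expansions: one has to be certain that the representation convention of \autoref{remarkrepbase} is stable under permutation of digit positions, which is exactly what turns $T^{b}_{\phi}$ from an a.e.-defined map into an honest bijection — and bijectivity is precisely what the preceding lemma requires. Everything else is a routine counting argument at a fixed $b$-adic scale.
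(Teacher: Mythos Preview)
Your proof is correct and follows essentially the same approach as the paper: both reduce to the preceding lemma on $b$-adic intervals, set $N=\max\{\phi(k):1\leqslant k\leqslant p\}$, and count $b^{N-p}$ level-$N$ intervals of measure $b^{-N}$. Your treatment of bijectivity is in fact more careful than the paper's, which simply points back to \autoref{remarkrepbase}; your observation that the digit-permutation preserves both the finitely supported sequences and the eventually-$(b-1)$ sequences (because $\phi$ maps finite sets to finite sets) is exactly what is needed and is only implicit in the paper.
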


\begin{proof}
We recall first that the transformation $T^{b}_{\phi}$ is indeed bijective (see \autoref{remarkrepbase}).

By using the preceding lemma, it suffices to show that $T^{b}_{\phi}$ preserves the measure on intervals of the shape $I=[a,a+b^{-k}]$ with $a=\sum_{i=0}^{k} a_{i}b^{-i}$. Let us define $N=\max\{\phi(i)~|~0\leqslant i\leqslant k\}$. We then write $[0,1]$ as the union of intervals $A_{i}=[i\times b^{-N},(i+1)\times b^{-N}]$ where $0\leqslant i\leqslant b^{N}-1$.

Then the image if $I$ by $T^{b}_{\phi}$ is equal to the union of the $A_{i}$ for $i\times b^{-N}=\sum_{i=0}^{N} x_{i}b^{-i}$, where $x_{\phi(j)}=a_{j}$ for all $0\leqslant j\leqslant k$. The number of such $A_{j}$ is equal to the number of sequences $\{0,\dots,b-1\}$ of length $N-k$, i.e.\ $b^{N-k}$. Since each $A_{j}$ has a measure equal to $b^{-N}$, the image of $I$ by $T^{b}_{\phi}$ is of measure $b^{-N}b^{N-k}=b^{-k}$, which is equal to the measure of $I$ since $\lambda(I)=b^{-k}$.
\end{proof}

\begin{remark}
The preceding theorem can be easily generalised to any\footnote{i.e.\ not only those bijections that can be obtained by compositions of bijections $\naturalN+\naturalN\rightarrow\naturalN$.} bijection $\naturalN+\dots+\naturalN\rightarrow\naturalN$ (the domain being the disjoint union of $k$ copies of $\naturalN$, $k\in\naturalN$), in the sense that it induces a measure-preserving bijection from $[0,1]^{k}$ onto $[0,1]$. The particular case $\naturalN+\naturalN\rightarrow\naturalN$, $(n,i)\mapsto 2n+i$ defines the well-known measure-preserving bijection between the unit square and the interval $[0,1]$:
\begin{equation*}
(\sum_{i\geqslant 0} a_{i}2^{-i},\sum_{i\geqslant 0} b_{i}2^{-i})\mapsto \sum_{i\geqslant 0} a_{2i}2^{-2i}+b_{2i+1}2^{-2i-1}
\end{equation*}
\end{remark}

Let us now define the bijection:
\begin{equation*}
\psi: \naturalN+\naturalN+\naturalN\rightarrow\naturalN,~~~ (x,i)\mapsto 3x+i
\end{equation*}
We also define the injections $\iota_{i}$ ($i=0,1,2$):
\begin{equation*}
\iota_{i}: \naturalN\rightarrow \naturalN+\naturalN+\naturalN,~~~ x\mapsto (x,i)
\end{equation*}
We will denote by $\psi_{i}$ the composite $\psi\circ\iota_{i}:\naturalN\rightarrow \naturalN$.

\begin{definition}
Let $A\subset \naturalN+\naturalN+\naturalN$ be a finite set. We write $A$ as $A_{0}+A_{1}+A_{2}$, and define, for $i=0,1,2$, $n_{i}$ to be the cardinality of $A_{i}$ if $A_{i}\neq\emptyset$ and $n_{i}=1$ otherwise. We then define a partition of $[0,1]$, denoted by $\mathcal{P}_{A}=\{P_{A}^{i_{1},i_{2},i_{3}}~|~\forall k\in\{0,1,2\}, 0\leqslant i_{k}\leqslant n_{i}-1\}$, by:
\begin{equation*}
P_{A}^{i_{1},i_{2},i_{3}}=\{\sum_{j\geqslant 1} a_{j} 2^{-j}~|~ \forall k\in\{0,1,2\}, \frac{i_{k}}{n_{k}}\leqslant \sum_{j\geqslant 1} a_{\psi_{k}(j)}2^{-j}\leqslant \frac{i_{k}+1}{n_{k}}\}
\end{equation*}
When $A_{k}$ is empty or of cardinality $1$, we will not write the corresponding $i_{k}$ in the triple $(i_{1},i_{2},i_{3})$ since it is necessarily equal to $0$.
\end{definition}

\begin{proposition}
Let us keep the notations of the preceding proposition and let $X=P_{A}^{i_{1},i_{2},i_{3}}$ and $Y=P_{A}^{j_{1},j_{2},j_{3}}$ be two elements of the partition $\mathcal{P}_{A}$. For all $x=\sum_{l\geqslant 1} a_{l}2^{-l}$, we define $T_{i_{1},i_{2},i_{3}}^{j_{1},j_{2},j_{3}}(x)=\sum_{l\geqslant 1} b_{l}2^{-l}$ where the sequence $(b_{i})$ is defined by:
\begin{equation*}
\forall k\in\{0,1,2\}, \sum_{l\geqslant 1} b_{\psi_{k}(l)}2^{-l}=\sum_{l\geqslant 1} a_{\psi_{k}(l)}2^{-l}+j_{k}-i_{k}
\end{equation*}
Then $T_{i_{1},i_{2},i_{3}}^{j_{1},j_{2},j_{3}}: X\rightarrow Y$ is a measure-preserving bijection.
\end{proposition}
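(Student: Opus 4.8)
The plan is to recognise the transformation $T_{i_{1},i_{2},i_{3}}^{j_{1},j_{2},j_{3}}$ as a translation of boxes, after transporting the problem from $[0,1]$ to the cube $[0,1]^{3}$ through the digit-interleaving bijection attached to $\psi$. First I would introduce the map $\Phi:[0,1]\to[0,1]^{3}$ sending $x=\sum_{l\geqslant 1}a_{l}2^{-l}$, taken in the canonical binary representation fixed in \autoref{remarkrepbase}, to the triple $\left(\sum_{l\geqslant 1}a_{\psi_{0}(l)}2^{-l},\sum_{l\geqslant 1}a_{\psi_{1}(l)}2^{-l},\sum_{l\geqslant 1}a_{\psi_{2}(l)}2^{-l}\right)$. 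Since $\psi:\naturalN+\naturalN+\naturalN\to\naturalN$, $(x,i)\mapsto 3x+i$, is exactly a bijection of the type treated in the remark following the theorem that the maps $T^{b}_{\phi}$ are measure-preserving, the map $\Phi$ is a measure-preserving bijection (up to a null set): concretely it sends each dyadic interval $[\sum_{l=1}^{p}a_{l}2^{-l},\sum_{l=1}^{p}a_{l}2^{-l}+2^{-p}]$ onto a product of dyadic intervals of the same total measure, so the two lemmas preceding that theorem apply.

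Next I would unwind the definition of the partition $\mathcal{P}_{A}$: a point $x$ lies in $P_{A}^{i_{1},i_{2},i_{3}}$ exactly when, for each $k\in\{0,1,2\}$, the $k$-th coordinate of $\Phi(x)$ belongs to $[i_{k}/n_{k},(i_{k}+1)/n_{k}]$; hence $\Phi(X)=\prod_{k=0}^{2}[i_{k}/n_{k},(i_{k}+1)/n_{k}]$ and likewise $\Phi(Y)=\prod_{k=0}^{2}[j_{k}/n_{k},(j_{k}+1)/n_{k}]$. The defining recipe of $T_{i_{1},i_{2},i_{3}}^{j_{1},j_{2},j_{3}}$ modifies, for each $k$, only the subsequence of digits indexed by $\psi_{k}$, and does so by precisely the amount that carries the $k$-th coordinate of $\Phi(x)$ from $[i_{k}/n_{k},(i_{k}+1)/n_{k}]$ onto $[j_{k}/n_{k},(j_{k}+1)/n_{k}]$; in other words $\Phi\circ T_{i_{1},i_{2},i_{3}}^{j_{1},j_{2},j_{3}}=\theta\circ\Phi$ on $X$, where $\theta:[0,1]^{3}\to[0,1]^{3}$ is the coordinatewise translation (adding $(j_{k}-i_{k})/n_{k}$ to the $k$-th coordinate) sending $\Phi(X)$ onto $\Phi(Y)$. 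Restricted to $\Phi(X)$, the translation $\theta$ is obviously a Lebesgue-measure-preserving bijection onto $\Phi(Y)$; conjugating by the measure-preserving bijection $\Phi$ then gives that $T_{i_{1},i_{2},i_{3}}^{j_{1},j_{2},j_{3}}:X\to Y$ is a measure-preserving bijection, as claimed.

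The only delicate point --- the main obstacle, such as it is --- is the bookkeeping in the middle step: one must check that prescribing the value of each of the three interleaved sub-sums $\sum_{l}b_{\psi_{k}(l)}2^{-l}$ genuinely determines a well-defined sequence $(b_{l})$, hence a genuine point of $[0,1]$, and that this point really lands in $Y$. This works because the index sets $\{\psi_{0}(l)\}_{l}$, $\{\psi_{1}(l)\}_{l}$, $\{\psi_{2}(l)\}_{l}$ are pairwise disjoint and together cover $\naturalN$, so the three digit-streams are independent and a shift applied to one of them does not perturb the value decoded from the others; the shift applied to the $k$-th stream is chosen, on the block $P_{A}^{i_{1},i_{2},i_{3}}$, exactly so that the $k$-th coordinate lands in the target interval. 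Once this is verified, the statement reduces to the elementary fact that translations preserve Lebesgue measure, together with the measure-preservation of the base-$b$ digit-permutation maps established just above.
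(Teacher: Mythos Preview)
Your argument is correct and in fact tidier than the paper's. The paper attempts to show that $T_{i_{1},i_{2},i_{3}}^{j_{1},j_{2},j_{3}}$ is literally a translation $x\mapsto x+t$ on $[0,1]$: it expands each shift $j_{k}-i_{k}$ (which, as you silently corrected, must be read as $(j_{k}-i_{k})/n_{k}$) in binary as $\sum_{l}m^{k}_{l}2^{-l}$, re-interleaves these three digit streams into a single real $t$, and asserts $T(x)=x+t$. That shortcut does not survive carries: adding $(j_{k}-i_{k})/n_{k}$ to the $k$-th extracted sub-sum and then re-interleaving is \emph{not} the same as adding a fixed real number to $x$, because a carry produced in the $k$-th binary stream stays inside that stream rather than propagating to the adjacent interleaved digit. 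Concretely, for $n_{0}=3$ and shift $1/3=0.\overline{01}_{2}$ on the first stream, the value $T(x)-x$ already depends on $x$ within a single block $P_{A}^{i_{1},i_{2},i_{3}}$.

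Your route avoids this entirely: you make the ``unscrambling'' explicit by conjugating through the measure-preserving digit-interleaving bijection $\Phi:[0,1]\to[0,1]^{3}$ supplied by the remark following the theorem on $T^{b}_{\phi}$, and recognise $T$ as $\Phi^{-1}\circ\theta\circ\Phi$ with $\theta$ a genuine coordinatewise translation of the cube sending the box $\Phi(X)$ onto $\Phi(Y)$. Both arguments share the same intuition---the map is a translation once the three interlocked digit streams are separated---but only yours carries it out rigorously; the paper's version collapses the conjugation step and thereby conflates translation on the cube with translation on the interval.
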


\begin{proof}
For $k=0,1,2$, we will denote by $(m^{k}_{j})$ the sequence such that $j_{k}-i_{k}=\sum_{l\geqslant 1} m^{k}_{l}2^{-l}$. We can define the real number $t=\sum_{l\geqslant 1} \sum_{k=0,1,2} m^{k}_{l}2^{-3j+k}$. It is then sufficient to check that $T_{i_{1},i_{2},i_{3}}^{j_{1},j_{2},j_{3}}(x)=x+t$. Since $T_{i_{1},i_{2},i_{3}}^{j_{1},j_{2},j_{3}}$ is a translation translation, it is a measure-preserving bijection.
\end{proof}

\begin{definition}
Let $A\subset \naturalN$ be a finite set endowed with the normalised -- i.e.\ such that $A$ has measure $1$ -- counting measure, and $X\in\mathcal{B}(\realN\times A)$ be a measurable set. We define the measurable subset $\embed{A}\subset\realN\times [0,1]$:
\begin{equation*}
\embed{A}=\{(x,y)~|~ \exists z\in A, (x,z)\in X, y\in P_{A}^{z}\}
\end{equation*}
We will write $\mathcal{P}^{-1}_{A}: [0,1]\rightarrow A$ the map that associates to each $x$ the element $z\in A$ such that $x\in P_{A}^{z}$.
\end{definition}

\begin{proposition}
Let $D^{A}\subset\naturalN$ be a finite set endowed with the normalised counting measure $\mu$ (i.e.\ such that $\mu(A)=1$), $S,T\in\mathcal{B}(\realN\times D^{A})$ be measurable sets, and $\phi: S\rightarrow T$ a measure-preserving transformation. We define $\embed{\phi}: \embed{S}\rightarrow \embed{T}$ by:
\begin{equation*}
\embed{\phi}: (x,y)\mapsto (x',y') ~~~ \phi(x,\mathcal{P}^{-1}_{A}(y))=(x',z),~~ y'=T_{\mathcal{P}^{-1}_{A}(y)}^{z}
\end{equation*}
Then $\embed{\phi}$ is a measure-preserving bijection.
\end{proposition}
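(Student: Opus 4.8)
The plan is to slice $\embed{S}$ into finitely many measurable rectangles on which $\embed{\phi}$ is of product form, and to reduce the measure-preservation of $\embed{\phi}$ to that of $\phi$ together with the fact, established in the preceding proposition, that each translation map $T_{z}^{z'}:P_{A}^{z}\to P_{A}^{z'}$ is a measure-preserving bijection.

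First I would record the structure of $\embed{S}$. Writing $S=\bigsqcup_{z\in D^{A}}S_{z}\times\{z\}$ with $S_{z}=\{x\in\realN~|~(x,z)\in S\}$, the definition of $\embed{\cdot}$ gives $\embed{S}=\bigsqcup_{z\in D^{A}}S_{z}\times P_{A}^{z}$; since $\mu$ is the normalized counting measure and $\lambda(P_{A}^{z})=1/\card{D^{A}}$ for every $z$, this also yields $(\lambda\otimes\lambda)(\embed{X})=(\lambda\otimes\mu)(X)$ for every measurable $X\subset\realN\times D^{A}$ (and likewise for $T$). Writing $\phi=(\phi_{1},\phi_{2})$ with $\phi_{2}$ valued in $D^{A}$, I would then refine the slicing by setting $S_{z,z'}=\{x\in S_{z}~|~\phi_{2}(x,z)=z'\}$, which is measurable because $\phi$ is, so that $\embed{S}=\bigsqcup_{z,z'}S_{z,z'}\times P_{A}^{z}$, and on each slab $S_{z,z'}\times P_{A}^{z}$ the map $\embed{\phi}$ is exactly the product map $(x,y)\mapsto(\phi_{1}(x,z),T_{z}^{z'}(y))$.

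The last step is to assemble the pieces. On $S_{z,z'}\times\{z\}$ the bijection $\phi$ lands in $\realN\times\{z'\}$, so $x\mapsto\phi_{1}(x,z)$ is a bijection of $S_{z,z'}$ onto a measurable set $U_{z,z'}\subset\realN$, and it is $\lambda$-measure-preserving because $\phi$ is $(\lambda\otimes\mu)$-measure-preserving while the $D^{A}$-coordinate is a single atom of fixed mass on both sides. Combined with the preceding proposition, this shows that $\embed{\phi}$ restricted to $S_{z,z'}\times P_{A}^{z}$ is a measure-preserving bijection onto $U_{z,z'}\times P_{A}^{z'}$. Injectivity of $\phi$ forces the sets $U_{z,z'}$ (for fixed $z'$, with $z$ varying) to be pairwise disjoint with union $\{x'~|~(x',z')\in T\}$, so the image slabs $U_{z,z'}\times P_{A}^{z'}$ are pairwise disjoint and tile $\embed{T}$; hence $\embed{\phi}$ is a bijection $\embed{S}\to\embed{T}$ assembled from measure-preserving bijections between measurable rectangles, and is therefore measure-preserving, after checking this on the generating family of rectangles $C\times P_{A}^{z}$ and extending by the usual monotone-class argument. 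The only genuinely delicate point is that $\embed{\phi}$ is \emph{not} globally a product map, since $\phi$ may move $(x,z)$ into a dialect value $z'$ depending on $x$: one must cut along the level sets of $\phi_{2}$ before the product structure and the measure-preservation of the $T_{z}^{z'}$ become usable, and it is precisely injectivity of $\phi$ (together with the identification $(\lambda\otimes\lambda)(\embed{X})=(\lambda\otimes\mu)(X)$ applied on the correct sets) that guarantees the resulting image rectangles retile $\embed{T}$ without overlap.
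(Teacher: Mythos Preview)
Your argument is correct and follows essentially the same approach as the paper: both proofs partition $S$ (and hence $\embed{S}$) according to the pair $(z,z')$ of input and output dialect values, observe that on each resulting piece $\embed{\phi}$ is the product of the first-coordinate part of $\phi$ with the translation $T_{z}^{z'}$, and conclude using the preceding proposition together with the fact that products and composites of measure-preserving bijections are measure-preserving. Your write-up is in fact more careful than the paper's on two points: you make explicit why the first factor $x\mapsto\phi_{1}(x,z)$ is $\lambda$-preserving (via the identification $(\lambda\otimes\lambda)(\embed{X})=(\lambda\otimes\mu)(X)$), and you spell out why the image slabs are disjoint and cover $\embed{T}$.
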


\begin{proof}
For all $(a,b)\in D^{A}$ we define the set $S_{a,b}=X\cap\realN\times\{a\}\cap\phi^{-1}(Y\cap\realN\times\{b\})$. The family $(S_{a,b})_{a,b\in D^{A}}$ is a partition of $S$, and the family $(\embed{S_{a,b}})_{a,b\in D^{A}}$ is a partition of $\oc A$. The restriction of $\embed{\phi}$ to $\embed{S_{a,b}}$ can then be defined as the composite $T_{a}\circ\phi_{1}$ with:
\begin{eqnarray*}
\phi_{1}&=&(\pi_{1}\circ\phi)\times\text{Id}\\
T_{a}&=&\text{Id}\times T_{a}^{b}
\end{eqnarray*}
Since the product (resp. the composition) of measure preserving bijections is a measure preserving bijection, the restriction of $\embed{\phi}$ to $X_{a}$ is a measure preserving bijection. Moreover, it is clear that the image of $\embed{S}$ by $\embed{\phi}$ is equal to $\embed{T}$ and we have finished the proof.
\end{proof}

\begin{definition}
Let $A$ be a thick graphing of dialect $D^{A}$ a finite subset of $\naturalN$ endowed with the normalised counting measure. We define the graphing: 
\begin{equation*}
\embed{A}=\{(\omega_{e}^{A},\embed{\phi_{e}^{A}}:\embed{S_{e}^{A}}\rightarrow \embed{T_{e}^{A}})\}_{e\in E^{A}}
\end{equation*}
\end{definition}

\begin{definition}
Let $A$ be a thick graphing of dialect $D^{A}$, and $\Omega: \realN\times[0,1]\rightarrow\realN$ an isomorphism of measured spaces. We define the graphing $\oc_{\Omega} A$ by:
\begin{equation*}
\oc_{\Omega} A=\{(\omega_{e}^{A}, \Omega\circ\embed{\phi_{e}^{A}}\circ\Omega^{-1}: \Omega(\embed{S_{e}^{A}})\rightarrow \Omega(\embed{T_{e}^{A}})\}_{e\in E^{A}}
\end{equation*}
\end{definition}

\begin{definition}
A project $\de{a}$ is \emph{balanced} if $\de{a}=(0,A)$ where $A$ is a thick graphing, i.e.\ $I^{A}$ is a one-element set, for instance $I^{A}=\{1\}$, and $\alpha^{A}_{1}=1$.
\end{definition}

\begin{definition}
Let $\de{a}$ be a balanced project. We define $\oc_{\Omega}\de{a}=(0,\oc_{\Omega} A)$. If $\cond{A}$ is a conduct, we define: 
$$\oc_{\Omega}\cond{A}=\{\oc_{\Omega}\de{a}~|~\de{a}=(0,A)\in\cond{A}, \text{ $\de{a}$ balanced}\}^{\pol\pol}$$
\end{definition}

We will now show that it is possible to implement the functorial promotion. In order to do this, we define the bijections $\tau,\theta: \naturalN+\naturalN+\naturalN\rightarrow \naturalN+\naturalN+\naturalN$:
\begin{eqnarray*}
\tau&:& \left\{\begin{array}{rcl}
	(x,0)&\mapsto&(x,1)\\
	(x,1)&\mapsto&(x,0)\\
	(x,2)&\mapsto&(x,2)
	\end{array}\right.\\
\theta&:& \left\{\begin{array}{rcl}
	(x,0)&\mapsto&(2x,0)\\
	(x,1)&\mapsto&(2x+1,1)\\
	(2x,2)&\mapsto&(x,1)\\
	(2x+1,2)&\mapsto&(x,2)
	\end{array}\right.
\end{eqnarray*}
These bijections induce bijections of $\naturalN$ onto $\naturalN$ through $\psi: (x,i)\mapsto 3x+i$. We will abusively denote by $T_{\tau}=T_{\psi\circ\tau\circ\psi^{-1}}$ and $T_{\theta}=T_{\psi\circ\theta\circ\psi^{-1}}$ the induced measure-preserving transformations $[0,1]\rightarrow[0,1]$.

Pick $\de{a}\in\cond{\sharp\phi(A)}$ and $\de{f}\in\sharp(\cond{A\multimap B})$, where $\phi$ is a delocation. By definition, $\de{a}=(0,\Omega(\embed{A}))$ and $\de{f}=(0,\Omega(\embed{F}))$ where $A,F$ are graphings of respective dialects $D^{A},D^{F}$. We define the graphing $T=\{(1,\Omega(\text{Id}\times T_{\tau})),(1,(\Omega(\text{Id}\times T_{\tau}))^{-1})\}$ of carrier $V^{\phi(A)}\cup V^{A}$, and denote by $t,t^{\ast}$ the two edges in $E^{T}$. We fix $(x,y)$ an element of $V^{B}$ and we will try to understand the action of the path $f_{0}ta_{0}t^{\ast}f_{1}\dots ta_{k-1}t^{\ast}f_{k}$.

We fix the partition $\mathcal{P}_{D^{F}+D^{A}}$ of $[0,1]$, and denote by $(i,j)$ the integers such that $y\in \mathcal{P}_{D^{F}+D^{A}}^{i,j}$. By definition of $\embed{F}$, the map $\embed{\phi^{F}_{f_{0}}}$ sends this element to $(x_{1},y_{1})$ which is an element of $\mathcal{P}_{D^{F}+D^{A}}^{i_{1},j_{1}}$ with $j_{1}=j$. Then, the function $\phi_{t}$ sends this element on $(x_{2},y_{2})$, where $x_{2}=x_{1}$ and $y_{2}$ is an element of $\mathcal{P}_{D^{F}+D^{A}}^{j_{1},i_{1}}$. The function $\embed{\phi_{a_{0}}^{A}}$ then produces an element $(x_{3},y_{3})$ with $y_{3}$ in $\mathcal{P}_{D^{F}+D^{A}}^{j_{2},i_{2}}$ and $i_{2}=i_{1}$. The element produced by $\phi_{t^{\ast}}=\phi_{t}^{-1}$ is then $(x_{4},y_{4})$ where $y_{4}$ is an element of $\mathcal{P}_{D^{F}+D^{A}}^{i_{2},j_{2}}$. One can therefore see how the graphing $T$ simulates the dialectal interaction. The following proposition will show how one can use $T$ to implement functorial promotion.

In order to implement functorial promotion, we will make use of the three bijections we just defined. Though it may seem a complicated, the underlying idea is quite simple. We will be working with three disjoint copies of $\naturalN$, let us say $\naturalN_{i}$ ($i=0,1,2$). When applying promotion, we will encode the information contained in the dialect on the first copy $\naturalN_{0}$ (let us stress here that promotion is defined through a non-surjective map, something that will be essential in the following). Suppose now that we have two graphs obtained from two promotions: all the information they contain is located in their first copy $\naturalN_{0}$. To simulate dialectal information, we need to make these two sets disjoint: this is where the second copy $\naturalN_{1}$ will be used. Hence, we apply to one of these promoted graphs the bijection $\tau$ (in practice we will of course use $\tau$ through the induced transformation $T_{\tau}$) which exchanges $\naturalN_{0}$ and $\naturalN_{1}$. The information coming from the dialects of the two graphs are now disjoint. We then compute the execution of the two graphs to obtain a graph whose information coming from the dialect is encoded on the two copies $\naturalN_{0}$ and $\naturalN_{1}$! In order to be able to see this obtained graph as a graph obtained from a promotion, we need now to move this information so that it is encoded on the first copy $\naturalN_{0}$ only. This is where we use the third copy $\naturalN_{2}$: we use the bijection $\theta$ (once again, we use in practice the induced transformation $T_{\theta}$) in order to contract the two copies $\naturalN_{0}$ and $\naturalN_{1}$ on the first copy $\naturalN_{0}$, while deploying the third copy $\naturalN_{2}$ onto the two copies $\naturalN_{1}$ and $\naturalN_{2}$.

\begin{proposition}\label{promfonct}
One can implement functorial promotion: for all delocations $\phi,\psi$ and conducts $\cond{A,B}$ such that $\cond{\phi(A),A,B,\psi(B)}$ have pairwise disjoint carriers, there exists a project $\de{prom}$ in the conduct
$$\cond{\oc\phi(A)}\otimes\oc(\cond{A\multimap B})\multimap\cond{\oc\psi(B)}$$
\end{proposition}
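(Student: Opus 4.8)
The plan is to build $\de{prom}$ from the graphing $T$ introduced just above (the pair of mutually inverse edges realized through $T_\tau$) together with a second, analogous pair of edges realized through $T_\theta$ that re-encodes the dialect and routes the output to the $\psi(B)$-location, and then to check that plugging this $\de{prom}$ against an arbitrary generator of $\cond{\oc\phi(A)}\otimes\oc(\cond{A\multimap B})$ lands inside $\cond{\oc\psi(B)}$. First I would reduce to generators. By definition $\cond{\oc\phi(A)}$ is the bi-orthogonal closure of $\{\oc_{\Omega}\de{a}~|~\de{a}=(0,A)\in\cond{\phi(A)},\ \text{$\de{a}$ balanced}\}$, and similarly for $\oc(\cond{A\multimap B})$; by \autoref{ethtenscondtens} the tensor $\cond{\oc\phi(A)}\otimes\oc(\cond{A\multimap B})$ is then the bi-orthogonal closure of the set $E$ of projects $\oc_\Omega\de{a}\otimes\oc_\Omega\de{f}$ with $\de{a}=(0,A)\in\cond{\phi(A)}$ and $\de{f}=(0,F)\in\cond{A\multimap B}$ both balanced. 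By \autoref{propositionethsendcondsend} it therefore suffices to produce one project $\de{prom}$, of carrier $\Omega(\embed{V^{\phi(A)}})\cup\Omega(\embed{V^{A}})\cup\Omega(\embed{V^{B}})\cup\Omega(\embed{V^{\psi(B)}})$, such that $\de{prom}\plug(\oc_\Omega\de{a}\otimes\oc_\Omega\de{f})\in\cond{\oc\psi(B)}$ for every such pair. (If $\cond{\phi(A)}$ or $\cond{A\multimap B}$ has no balanced project then $E$ is empty and the target is a full behavior, so the project with empty graphing works; we assume henceforth the non-degenerate case.)

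Next I would set $\de{prom}=(0,\mathrm{Prom})$, with $\mathrm{Prom}$ the one-sliced graphing obtained as the disjoint union of two blocks. The first block is essentially $T$: two mutually inverse weight-$1$ edges between $\Omega(\embed{V^{\phi(A)}})$ and $\Omega(\embed{V^{A}})$, realized by $\Omega\circ(\delta\times T_\tau)\circ\Omega^{-1}$ and its inverse, where $\delta$ is the delocation identifying $V^{A}$ with $V^{\phi(A)}$ and $T_\tau$ swaps the copies $\naturalN_0,\naturalN_1$; intuitively it feeds $\oc_\Omega\de{a}$ into the $A$-port of $\oc_\Omega\de{f}$ while making the two dialect-codes disjoint. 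The second block is two mutually inverse weight-$1$ edges between $\Omega(\embed{V^{B}})$ and $\Omega(\embed{V^{\psi(B)}})$, realized by $\Omega\circ(\delta'\times T_\theta)\circ\Omega^{-1}$ and its inverse, where $\delta'$ encodes the delocation $\psi$ and $T_\theta$ re-collects the two dialect-codes (by then spread over $\naturalN_0,\naturalN_1$) onto the single copy $\naturalN_0$ while deploying $\naturalN_2$ over $\naturalN_1,\naturalN_2$ so as to leave room for further promotions. Each realization is a restriction of an element of the microcosm $\mathfrak{mi}$: $T_\tau$ and $T_\theta$ are measure-preserving by the theorem proved above that $T^{b}_\varphi$ is measure-preserving for every bijection $\varphi$ of $\naturalN$, $\Omega$ is an isomorphism of measured spaces, and the delocations are affine, so each composite transports Lebesgue measure onto a positive multiple of itself; hence $\de{prom}$ is a genuine balanced project.

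The heart of the argument is the computation of the execution, along the path analysis sketched before the statement. Fix $\de{a}=(0,A)\in\cond{\phi(A)}$ and $\de{f}=(0,F)\in\cond{A\multimap B}$ balanced, and let $\de{c}=(0,C)=\delta^{-1}(\de{a})\in\cond{A}$, so that $\de{f}\plug\de{c}\in\cond{B}$ is a one-sliced project of carrier $V^{B}$. Tracing an alternating path $\cdots f_i\,t\,a_i\,t^\ast\,f_{i+1}\cdots$ through the $\embed{\cdot}$-construction exactly as in that discussion, one sees that the $T$-edges $t,t^\ast$ make the dialect code of $\oc_\Omega F$ sit on the copy $\naturalN_0$ and that of $\oc_\Omega C$ on $\naturalN_1$, so that these subpaths reproduce the dialectal interaction underlying $F\plug C$; the edges of the second block then re-encode the resulting pair of codes onto $\naturalN_0$ (deploying $\naturalN_2$ over $\naturalN_1,\naturalN_2$) and relocate along $\psi$. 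Assembling this yields a weight-preserving bijection between the alternating paths, and the alternating circuits, of $\mathrm{Prom}\plug(\oc_\Omega A\cup\oc_\Omega F)$ and the $\embed{\cdot}$-image of those of $\psi(F\plug C)$; hence $\de{prom}\plug(\oc_\Omega\de{a}\otimes\oc_\Omega\de{f})$ equals, up to renaming of the edge set and of the dialect index (hence up to universal equivalence, by \autoref{equivunivaretes}, \autoref{univequivindex}, \autoref{univequivdialecte} and \autoref{fromETtoEequiv}), the promotion $\oc_\Omega(\psi(\de{f}\plug\de{c}))$ of $\psi(\de{f}\plug\de{c})\in\cond{\psi(B)}$. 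Therefore $\de{prom}\plug(\oc_\Omega\de{a}\otimes\oc_\Omega\de{f})\in\oc_\Omega\cond{\psi(B)}=\cond{\oc\psi(B)}$, and the reduction of the first paragraph concludes.

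The main obstacle is the middle of the third paragraph: verifying rigorously that, read through $\embed{\cdot}$ and $\Omega$, the combinatorics of $\tau$ and $\theta$ on the three copies of $\naturalN$ realizes precisely ``disjointify the two dialect codes, form the alternating paths, re-encode them on a single copy'', and---crucially---that the execution is again a \emph{one-sliced, wager-free} promoted project, with no spurious circuit surviving. This last point is where the non-surjectivity of the promotion map $\psi$ (so that $\psi_0,\psi_1,\psi_2$ never fill $\naturalN$) is essential: it guarantees that after the $T_\tau$-swap the two promoted graphings act on disjoint, fresh blocks of binary digits, and that $T_\theta$ leaves the re-encoded graphing enough room to remain in the image of a promotion. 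The measure-theoretic bookkeeping---that every map occurring is measure-preserving and the supports match up---is routine given the lemmas on $T^{b}_\varphi$ and the preservation results for $\embed{\varphi}$, but the path-and-circuit bijection calls for a careful case analysis of how the base-$2$ digits are permuted at each step.
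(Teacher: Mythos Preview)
Your proposal is correct and follows essentially the same approach as the paper: the same project $\de{prom}$ built as the disjoint union of a $T_\tau$-block on the $A$-side and a $T_\theta$-block on the $B$-side, the same reduction to generators via \autoref{ethtenscondtens} and \autoref{propositionethsendcondsend}, and the same path-tracing argument showing that the execution against $\oc_\Omega\de{a}\otimes\oc_\Omega\de{f}$ yields (the delocation of) $\oc_\Omega(\de{f}\plug\de{c})$. Your write-up is in fact a bit more careful than the paper's in two places: you make the delocations $\delta,\delta'$ explicit in the realizations (the paper's $\Omega(\text{Id}\times T_\tau)$ leaves this implicit), and you flag the need to check that no spurious circuit survives so that the wager remains $0$ --- the paper silently assumes this when it asserts $\de{prom}\plug(\de{a}\otimes\de{f})\in\sharp\psi(B)$.
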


\begin{proof}
Let $\de{f}\in\cond{A\multimap B}$ be a balanced project, $\phi,\psi$ two delocations of $\cond{A}$ and $\cond{B}$ respectively. We define the graphings $T=\{(1,\Omega(\text{Id}\times T_{\tau})),(1,(\Omega(\text{Id}\times T_{\tau}))^{-1})\}$ of carrier $V^{\phi(A)}\cup V^{A}$ and $P=\{(1,\Omega(\text{Id}\times T_{\theta})),(1,(\Omega(\text{Id}\times T_{\theta}))^{-1})\}$ of carrier $V^{B}\cup V^{\psi(B)}$. We define $\de{t}=(0,T)$ and $\de{p}=(0,P)$, and the project:
$$\de{prom}=(0,T\cup P)=\de{t}\otimes\de{p}$$ 
We will now show that $\de{prom}$ is an element in $\cond{(\oc\phi(A)\otimes\oc(A\multimap B))\multimap \oc\psi(B)}$.

We can suppose, up to choosing refinements of $A$ and $F$, that for all $e\in E^{A}\cup E^{F}$, $(S_{e})_{2}$ and $(T_{e})_{2}$ are one-elements sets\footnote{The sets $S_{e}$ and $T_{e}$ being subsets of a product, w write $(S_{e})_{2}$ (resp. $(T_{e})_{2}$) the result of their projection on the second component.}.

Pick $\de{a}\in\cond{\sharp\phi(A)}$ and $\de{f}\in\sharp(\cond{A\multimap B})$. Then, by definition $\de{a}=(0,\Omega(\embed{A}))$ and $\de{f}=(0,\Omega(\embed{F}))$ where $A,F$ are graphings of dialects $D^{A},D^{F}$. We get that $\de{a\otimes f}\plug\de{prom}=((\de{a}\plug\de{t})\plug\de{f})\plug\de{p}$ from the associativity and commutativity of $\plug$ (recall that $\de{a\otimes f}=\de{a\plug f}$).

We show that $\embed{A}\plug \embed{T}$ is the graphings composed of the $\oc^{\tau} \phi_{a}$ for $a\in E^{A}$, where $\oc^{\tau}\phi_{a}$ is defined by:
\begin{equation*}
\oc^{\tau}\phi_{a}:(x,y)\mapsto (x',y'),~~~~~\phi_{a}(x,\mathcal{P}_{\{0\}+D^{A}}^{-1}(y))=(x',z), ~~~ y'=T_{\mathcal{P}_{\{0\}+D^{A}}^{-1}(y)}^{(z,1)}(y)
\end{equation*}
This is almost straightforward. An element in $\embed{A}\plug\embed{T}$ is a path of the form $t at^{\ast}$. It is therefore the function $\phi_{t}\circ \embed{\phi_{a}}\circ\phi_{t}^{-1}$. By definition,
\begin{equation*}
\embed{\phi_{a}}:(x,y)\mapsto(x',y')~~~~n=\mathcal{P}_{A}^{-1}(y),~~ \phi_{a}(x,n)=(x',k),~ y'=T_{n}^{k}(y)
\end{equation*}
But $\phi_{t}:\text{Id}\times T_{\tau}$ and $T_{\tau}$ is a bijection from $\mathcal{P}_{A}(y)$ to $\mathcal{P}_{\{0\}+A}(1,y)$.

We now describe the graphing $G=(\embed{A}\plug\embed{T})\plug\embed{F}$. It is composed of the paths of the shape $\rho=f_{0}(t a_{0}t^{\ast})f_{1}(t a_{1}t^{\ast})f_{2}\dots f_{n-1}(t a_{n-1}t^{\ast})f_{n}$. The associated function is therefore:
\begin{equation*}
\phi_{\rho}=\embed{\phi_{f_{0}}}(\oc^{\tau}\phi_{a_{0}})\embed{\phi_{f_{1}}}\dots\embed{\phi_{f_{n-1}}}(\oc^{\tau}\phi_{a_{n-1}})\embed{\phi_{f_{n}}}
\end{equation*}
Let $\pi=f_{0}a_{0}f_{1}\dots f_{n-1}a_{n-1}f_{n}$ be the corresponding path in $F\plug A$. The function $\phi_{\pi}$ has, by definition, as domain and codomain measurable subsets of $\realN\times D^{F}\times D^{A}$. We define, for such a function, the function $\co\phi_{\pi}$ by:
\begin{eqnarray*}
&\co\phi_{\pi}: (x,y)\mapsto(x',y')\\
&(n,m)=\mathcal{P}_{D^{F}+D^{A}}^{-1}(y),~~\phi_{\pi}(x,n,m)=(x',k,l),~~y'=T_{(n,m)}^{(k,l)}(y)
\end{eqnarray*}
One can then check that $\co\phi_{\pi}=\phi_{\rho}$.

Finally, $G\plug \embed{P}$ is the graphing composed of paths that have the shape $p\rho p^{\ast}$ where $\rho$ is a path in $G$. But $\phi_{p}=\text{Id}\times T_{\theta}$ applies a bijection, for all couple $(k,l)\in D^{F}\times D^{A}$, from the set $\mathcal{P}_{D^{F}+D^{A}}^{k,l}$ to the set $\mathcal{P}_{\theta(D^{F}+ D^{A})}^{\theta(k,l)}$ where:
\begin{equation*}
\theta(D^{F}+D^{A})=\{\theta(f,a)~|~f\in D^{F},a\in D^{A}\}
\end{equation*}
We deduce that:
\begin{eqnarray*}
&\phi_{p\rho p^{\ast}}: (x,y)\mapsto (x',y')\\
&n=\theta(k,l)=\mathcal{P}_{\theta(D^{F}+D^{A})}^{-1}(y)~~~\phi_{\pi}(x,k,l)=(x',k',l')~~~y'=T_{n}^{\theta(k',l')}(y)
\end{eqnarray*}
Modulo the bijection $\mu: D^{F}\times D^{A}\rightarrow\theta(D^{F}+D^{A})\subset\naturalN$, we get that $G\plug\embed{P}$ is the delocation (along $\psi$) of the graphing $\oc (F\plug A)$.

Therefore, for all $\de{a},\de{f}$ in $\cond{\sharp A},\cond{\sharp (A\multimap B)}$ respectively there exists a project $\de{b}$ in $\cond{\sharp \psi(B)}$ such that $\de{prom}\plug(\de{a}\otimes\de{f})=\de{b}$. We showed that for all $\de{g}\in\cond{\sharp{A}\odot\sharp(A\multimap B)}$, one has $\de{prom}\plug\de{g}\in\cond{\oc B}$, and thus $\de{prom}$ is an element in $(\sharp A\odot\sharp(A\multimap B))^{\pol\pol}\multimap \cond{B}$ by \autoref{propositionethsendcondsend}. But $(\sharp A\odot\sharp(A\multimap B))^{\pol\pol}=\cond{\oc A\otimes\oc(A\multimap B)}$ by \autoref{ethtenscondtens}.
\end{proof}

\begin{figure}
\centering
\subfigure[Global Picture]{
\centering
\begin{tikzpicture}[x=1.1cm,scale=0.9]
	\draw[-] (0,0) -- (2,0) node [midway,below] {$\oc \phi(A)$};
			\draw[-,fill,red,opacity=0.4] (0,4) -- (2,4) node [midway,above] {$\phi(V^{A})\times[0,1]$} -- (2,2) -- (0,2) -- (0,4) {};
	\draw[-] (3,0) -- (7,0) node [midway,below] {$\oc F$};
		\draw[-,blue] (3,0.1) -- (4.9,0.1) node [midway,below] {\small{$\oc V^{A}$}};
		\draw[-,blue] (5.1,0.1) -- (7,0.1) node [midway,below] {\small{$\oc V^{B}$}};
			\draw[-,fill,red,opacity=0.4] (3,4) -- (7,4) node [midway,above] {$(V^{A}\cup V^{B})\times[0,1]$} -- (7,2) -- (3,2) -- (3,4) {};
	\draw[-] (8,0) -- (10,0) node [midway,below] {$\oc \psi(B)$};
			\draw[-,fill,red,opacity=0.4] (8,4) -- (10,4) node [midway,above] {$\psi(V^{B})\times[0,1]$} -- (10,2) -- (8,2) -- (8,4) {};
			
	\draw[<->] (1,0) -- (1,2) node [midway,left] {$\Omega$};
	\draw[<->] (5,0) -- (5,2) node [midway,left] {$\Omega$};
	\draw[<->] (9,0) -- (9,2) node [midway,left] {$\Omega$};
	
	\draw[<->] (2,3) -- (3,3) node [midway,above] {\small{$\text{Id}\times T_{\tau}$}};
	\draw[<->] (7,3) -- (8,3) node [midway,above] {\small{$\text{Id}\times T_{\theta}$}};
\end{tikzpicture}
}
\subfigure[Action of $T_{\tau}$]{
\begin{tikzpicture}[x=0.9cm,y=0.9cm]
	\draw[-,fill,blue,opacity=0.4] (-3,5) -- (-1,5) node [midway,above] {$\phi(V^{A})\times D^{A}$} -- (-1,3) -- (-3,3) -- (-3,5) {};
	\draw[-,fill,blue,opacity=0.4] (1,5) -- (3,5) node [midway,above] {$\phi(V^{A})\times D^{A}$} -- (3,3) -- (1,3) -- (1,5) {};

	\draw[-,fill,red,opacity=0.4] (-3,2) -- (-1,2) -- (-1,0) -- (-3,0)  node [midway,below] {$\phi(V^{A})\times[0,1]$} -- (-3,2) {};
	\draw[-,fill,red,opacity=0.4] (1,2) -- (3,2) -- (3,0) -- (1,0) node [midway,below] {$\phi(V^{A})\times[0,1]$} -- (1,2) {};	
	
	\draw[->] (-2,3) -- (-2,2) node [midway,left] {$\mathcal{P}_{D^{A}}$};
	\draw[->] (2,3) -- (2,2) node [midway,left] {$\mathcal{P}_{\{0\}+D^{A}}$};
	\draw[->] (-1,1) -- (1,1) node [midway,below] {$\text{Id}\times T_{\tau}$};
	\draw[->] (-1,4) -- (1,4) node [midway,below] {$\text{Id}\times \tau$};
	
\end{tikzpicture}
}
\subfigure[Action of $T_{\theta}$]{
\begin{tikzpicture}[x=0.9cm,y=0.9cm]
	\draw[-,fill,blue,opacity=0.4] (-3,5) -- (-1,5) node [midway,above] {$\psi(V^{B})\times D^{F}\times D^{A}$} -- (-1,3) -- (-3,3) -- (-3,5) {};
	\draw[-,fill,blue,opacity=0.4] (1,5) -- (3,5) node [midway,above] {$\psi(V^{B})\times \theta(D^{F}+D^{A})$} -- (3,3) -- (1,3) -- (1,5) {};

	\draw[-,fill,red,opacity=0.4] (-3,2) -- (-1,2) -- (-1,0) -- (-3,0)  node [midway,below] {$\phi(V^{A})\times[0,1]$} -- (-3,2) {};
	\draw[-,fill,red,opacity=0.4] (1,2) -- (3,2) -- (3,0) -- (1,0) node [midway,below] {$\phi(V^{A})\times[0,1]$} -- (1,2) {};	
	
	\draw[->] (-2,3) -- (-2,2) node [midway,left] {$\mathcal{P}_{D^{F}+D^{A}}$};
	\draw[->] (2,3) -- (2,2) node [midway,left] {$\mathcal{P}_{\theta(D^{F}+D^{A})}$};
	\draw[->] (-1,1) -- (1,1) node [midway,below] {$\text{Id}\times T_{\theta}$};
	\draw[->] (-1,4) -- (1,4) node [midway,below] {$\text{Id}\times \theta$};
	
\end{tikzpicture}
}
\caption{Functorial Promotion}
\end{figure}
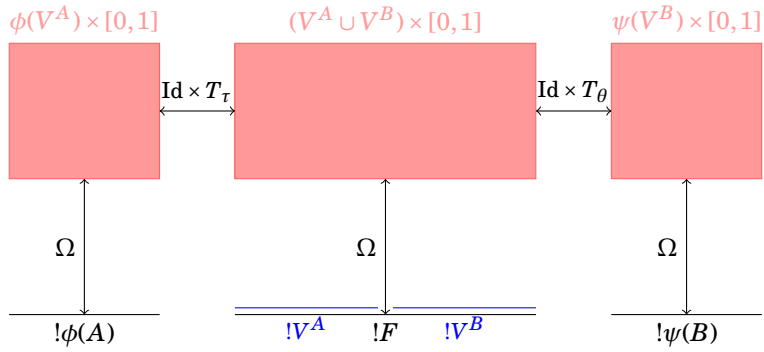
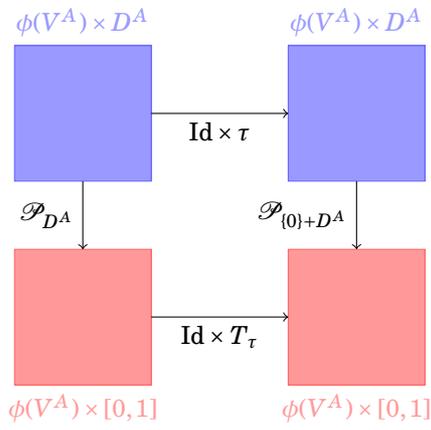
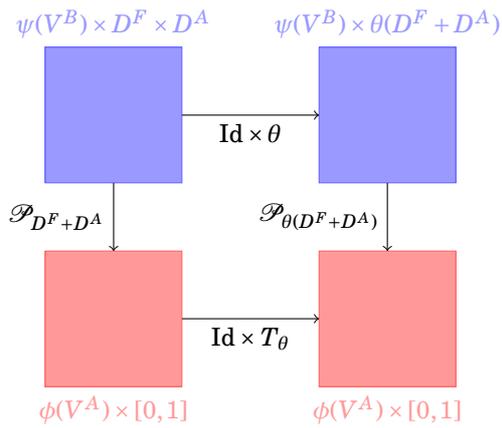

In the setting of its hyperfinite geometry of interaction \cite{goi5}, Girard shows how one can obtain the exponentials isomorphism as an equality between the conducts $\oc(\cond{A\with B})$ and $\cond{\oc A\otimes\oc B}$. Things are however quite different here. Indeed, if the introduction of behaviours in place of Girard's negative/positive conducts is very interesting when one is interested in the additive connectives, this leads to a (small) complication when dealing with exponentials. The first thing to notice is that the proof of the implication $\oc A\otimes\oc B\multimap \oc (A\with B)$ in a sequent calculus with functorial promotion and without dereliction and digging rules cannot be written if the weakening rule is restrained to the formulas of the form $\wn A$:

\begin{prooftree}
\AxiomC{}
\RightLabel{\scriptsize{ax}}
\UnaryInfC{$\vdash A,A^{\pol}$}
\RightLabel{\scriptsize{weak}}
\UnaryInfC{$\vdash A,B^{\pol},A^{\pol}$}
\AxiomC{}
\RightLabel{\scriptsize{ax}}
\UnaryInfC{$\vdash B,B^{\pol}$}
\RightLabel{\scriptsize{weak}}
\UnaryInfC{$\vdash B,B^{\pol},A^{\pol}$}
\RightLabel{\scriptsize{$\with$}}
\BinaryInfC{$\vdash B^{\pol},A^{\pol},A\with B$}
\RightLabel{\scriptsize{$\oc$}}
\UnaryInfC{$\vdash \wn B^{\pol},\wn A^{\pol},\oc(A\with B)$}
\doubleLine
\UnaryInfC{$\vdash \oc A\otimes\oc B\multimap \oc (A\with B)$}
\end{prooftree}

In Girard's setting, weakening is available for all positive conducts (the conducts on which one can apply the $\wn$ modality), something which is coherent with the fact that the inclusion $\cond{\oc A\otimes\oc B\subset \oc (A\with B)}$ is satisfied. In our setting, however, weakening is never available for behaviours and we think the latter inclusion is therefore not satisfied. This question stays however open.

Concerning the converse inclusion, it does not seem clear at first that it is satisfied in our setting either. This issue comes from the contraction rule. Indeed, since the latter does not seem to be satisfied in full generality (see \autoref{remcontraction}),  one could think the inclusion $\cond{\oc (A\with B)\subset \oc A\otimes\oc B}$ is not satisfied either. We will show however in \autoref{polarise}, through the introduction of alternative ``additive connectives'', that it does hold (a result that will not be used until the last section).

\begin{proposition}
The conduct $\cond{1}$ is a perennial conduct, equal to $\oc\cond{T}$.
\end{proposition}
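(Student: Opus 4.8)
The plan is to present $\cond{1}$ as the bi-orthogonal closure of a single wager-free one-sliced project living on the empty carrier, and then to observe that applying $\oc_{\Omega}$ to any balanced project of the full behavior of empty carrier produces exactly that same project.

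\emph{$\cond{1}$ is perennial.} Recall (from the multiplicative model recalled in \autoref{recallmall}; see \cite{seiller-goim}) that $\cond{1}$ is the conduct $\{\de{1}\}^{\pol\pol}$, where $\de{1}=(0,\emptyset)$ is the project of empty carrier whose underlying graphing is the empty graphing; we choose for it the representative whose index and dialect are both one-element sets, so that $\de{1}$ is wager-free and one-sliced. Taking the generating set equal to $\{\de{1}\}$ in the definition of a perennial conduct, both defining clauses hold at once, so $\cond{1}$ is a perennial conduct. (One could instead chain the closure results already proved — $\cond{1}$ is the $\otimes$-unit and tensors of perennial conducts are perennial — but this direct witness is shorter.)

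\emph{$\cond{1}=\oc_{\Omega}\cond{T}_{\emptyset}$.} Since $\oc_{\Omega}$ sends a conduct of carrier $V\subset\realN\times D$ to a conduct of carrier $\Omega(\embed{V})\subset\realN$, and $\cond{1}$ has empty carrier, the $\cond{T}$ of the statement must be the full behavior $\cond{T}_{\emptyset}$ of empty carrier. Unfold $\oc_{\Omega}\cond{T}_{\emptyset}=\{\oc_{\Omega}\de{a}~|~\de{a}=(0,A)\in\cond{T}_{\emptyset}\text{ balanced}\}^{\pol\pol}$. By definition $\cond{T}_{\emptyset}$ is exactly the set of projects of support $\emptyset$, and such projects do exist in balanced form, e.g. the edgeless thick graphing on $\emptyset$ with coefficient $1$; moreover any project of support $\emptyset$ has underlying graphing $A$ with no edges (up to the identifications on graphings, any edge would have empty source). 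By the definition of $\oc_{\Omega}$, the graphing $\oc_{\Omega}A$ is indexed by $E^{A}$ and carried by $\Omega(\embed{V^{A}})=\Omega(\embed{\emptyset})=\Omega(\emptyset)=\emptyset$; as $E^{A}=\emptyset$, it is therefore the empty graphing on the empty carrier, i.e. $\oc_{\Omega}\de{a}=(0,\emptyset)=\de{1}$. Hence the generating set is precisely $\{\de{1}\}$, and $\oc_{\Omega}\cond{T}_{\emptyset}=\{\de{1}\}^{\pol\pol}=\cond{1}$.

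The bookkeeping around the empty carrier is routine; the one point calling for a little care is the identification — up to the quotient of graphings by refinement and almost-everywhere equality, together with the convention that trivial indices and dialects carry no information — of the empty graphing produced by $\oc_{\Omega}$ with the distinguished generator $\de{1}$ of $\cond{1}$, and correspondingly the fact that all balanced projects of support $\emptyset$ are identified with $\de{1}$ before applying $\oc_{\Omega}$. I expect this to be the only place where the argument is not an immediate unfolding of the definitions of $\oc_{\Omega}$, of $\cond{T}_{V}$, and of a perennial conduct.
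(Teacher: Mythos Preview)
Your proof is correct and follows essentially the same approach as the paper: both recognize $\cond{1}=\{(0,\emptyset)\}^{\pol\pol}$ as a perennial conduct by definition, then observe that every balanced project in $\cond{T}_{\emptyset}$ is of the form $(0,\emptyset)$ (for varying dialects) and that $\oc_{\Omega}$ sends each such project back to $(0,\emptyset)$, whence $\sharp_{\Omega}\cond{T}=\{(0,\emptyset)\}$ and $\oc\cond{T}=\cond{1}$. Your version is a bit more explicit about the empty-carrier bookkeeping, but the argument is the same.
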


\begin{proof}
By definition, $\cond{1}=\{(0,\emptyset)\}^{\pol\pol}$ is a perennial conduct. Moreover, the balanced projects in $\cond{T}$ are the projects of the shape $\de{t}_{D}=(0,\emptyset)$ with dialects $D\subset\naturalN$. Each of these satisfy $\oc\de{t}_{D}=(0,\emptyset)$. Thus $\sharp\cond{T}=\{(0,\emptyset)\}$ and $\cond{\oc T}=\cond{1}$.
\end{proof}

\begin{corollary}
The conduct $\cond{\bot}$ is a co-perennial conduct, equal to $\wn\cond{0}$.
\end{corollary}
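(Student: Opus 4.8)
The plan is to derive everything by taking orthogonals of the preceding proposition. First I recall that, by definition of the multiplicative units in this framework, $\cond{\bot}$ is the unit dual to $\cond{1}$, i.e. $\cond{\bot}=\cond{1}^{\pol}$ (both carried by the empty set). The preceding proposition asserts that $\cond{1}$ is a perennial conduct; hence, directly by the definition of a \emph{co-perennial conduct} as the orthogonal of a perennial one, $\cond{1}^{\pol}=\cond{\bot}$ is co-perennial. This settles the first half of the statement with no further work.

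For the identification $\cond{\bot}=\wn\cond{0}$, I would unfold the definition of the dual exponential. By definition $\cond{\wn_{\Omega}A}=\cond{(\sharp_{\Omega}A^{\pol})^{\pol}}$, which — using that $\cond{\oc_{\Omega}A^{\pol}}$ is the bi-orthogonal closure of $\sharp_{\Omega}\cond{A^{\pol}}$ together with $X^{\pol\pol\pol}=X^{\pol}$ — is the same as $\cond{\wn_{\Omega}A}=(\cond{\oc_{\Omega}A^{\pol}})^{\pol}$. Applying this with $\cond{A}=\cond{0}$ and using that the full behavior is by definition the orthogonal of the empty one, $\cond{0}^{\pol}=\cond{T}$, yields $\wn\cond{0}=(\oc\cond{T})^{\pol}$. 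The preceding proposition says precisely that $\oc\cond{T}=\cond{1}$, whence $\wn\cond{0}=\cond{1}^{\pol}=\cond{\bot}$.

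The argument is entirely formal. The only point requiring a little care is the bookkeeping of carriers: all the conducts involved, namely $\cond{1}$, $\cond{\bot}$, $\cond{0}$, $\cond{T}$ and $\oc\cond{T}$, live on the empty carrier, so the orthogonality operations and the equalities are taken within the lattice of conducts of empty carrier and no delocation issues arise. I do not expect any genuine obstacle: the whole corollary reduces to the content of the preceding proposition together with the three definitional facts $\cond{\bot}=\cond{1}^{\pol}$, $\cond{T}=\cond{0}^{\pol}$, and $\cond{\wn_{\Omega}A}=(\cond{\oc_{\Omega}A^{\pol}})^{\pol}$.
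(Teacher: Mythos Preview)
Your proof is correct and follows essentially the same approach as the paper: both arguments reduce to the chain $\cond{\bot}=\cond{1}^{\pol}=(\oc\cond{T})^{\pol}=(\sharp\cond{T})^{\pol}=(\sharp\cond{0}^{\pol})^{\pol}=\wn\cond{0}$, invoking the preceding proposition and the involutivity of orthogonality. Your additional remark that co-perenniality is immediate from the definition, and your carrier bookkeeping, are welcome but not strictly needed.
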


\begin{proof}
This is straightforward: 
\begin{equation*}
\cond{\bot}=\cond{1}^{\pol}=\cond{(\oc T)^{\pol}}=\cond{(\sharp T)^{\pol\pol\pol}}=\cond{(\sharp T)^{\pol}}=\cond{(\sharp 0^{\pol})^{\pol}}=\cond{\wn 0}\tag*{\qedhere}
\end{equation*}
\end{proof}

\subsection{Impossibility results for linear logic}

We now finish this section with a few negative results showing that the obtained model is not a model of linear logic (with \emph{full} exponential connectives) in an obvious manner. These results support the hypothesis that the generalisation to continuous dialects considered in \cite{seiller-goif} is necessary for modelling linear logic exponentials.

\begin{theorem}\label{nodereliction}
Let $\cond{A}$ be a behaviour. No project $\de{der}$ in $\cond{\oc_{\Omega} A\multimap A}$ is such that $\de{der\plug \oc_{\Omega} a}=\de{a}$.
\end{theorem}

\begin{proof}
The proof is quite easy. Suppose there exists such a project $\de{der}=(d,D)$. First, let us notice that $d=0$ since $\cond{\oc_{\Omega} A\multimap A}$ is a behaviour. By definition for any balanced project $\de{a}=(0,A)$ in $\cond{A}$, it maps $(0,\oc_{\Omega} A)$ to $(0,A)$, i.e.\ $(\meas{D,\oc_{\Omega} A},D\plug \oc_{\Omega} A)=(0,A)$. That implies  $\meas{D,\oc_{\Omega} A}=0$, but more importantly, it implies that $D\plug \oc_{\Omega} A$ is equal to $A$. Since $\oc_{\Omega} A $ is single-sliced, the execution $D\plug \oc_{\Omega} A$ has the same dialect as $D$. By taking a project $\de{a}=(0,A)$ with a dialect different from this one, we obtain a contradiction.
\end{proof}

Let us now discuss the possible existence of digging. If we had an adequate interpretation of digging, one should be able to provide a single project $\de{dig}_{V}$ such that for all conduct $\cond{A}$ of carrier $V$, $\de{dig}_{V}\in\cond{\oc_{\Omega}\oc_{\Omega} A\multimap \oc_{\Omega} A}$. Indeed, this condition corresponds to the requirement of \emph{naturality} in categorical models. We now show that this is equivalent to a technical result on Borel automorphisms.

\begin{lemma}
There exists such a project $\de{dig}_{V}$ such that $\de{dig\plug \oc_{\Omega}\oc_{\Omega} a}=\de{\oc_{\Omega} a}$ for all $\de{a}$ in $\cond{\oc_{\Omega}\oc_{\Omega} A\multimap \oc_{\Omega} A}$ if and only if there is a Borel automorphism $B: \realN\times[0,1]\times[0,1]\rightarrow \realN\times[0,1]$ such that for all $\phi:  \realN\times[0,1]\rightarrow \realN\times[0,1]$ measurable, $B\circ\phi\circ B^{-1}=\phi$.
\end{lemma}

\begin{proof}
Since it maps single-sliced projects to single-sliced projects, the project $\de{dig}$ should be single-sliced. Let us write $\de{dig}=(0,D)$; we first show that $D$ can be written as a pair of two edges: $B: V^{\oc_{\Omega}A}\rightarrow V^{\oc_{\Omega}\oc_{\Omega}A}$ and $C: V^{\oc_{\Omega}\oc_{\Omega}A}\rightarrow V^{\oc_{\Omega}A}$. 

The first thing to notice is that $D$ should not have any edge with both the source and target in $V$ (resp. in $V^{!}$) because it maps wager-free projects to wager-free projects: if such an edge existed, then it is not hard to build a project $\de{a}$ such that $\de{\oc_{\Omega} a}$ (resp. $\de{\oc_{\Omega}\oc_{\Omega} a}$) possesses a single edge such that $\sca{\oc_{\Omega}a}{dig}\neq 0$. Then since it \enquote{maps} a single edged graphing (e.g. the identity) to a single edge graphing, $D$ should be \emph{deterministic}, i.e.\ the set $\{x\in V\cup V^{\oc}~|~\exists e\neq e'\in E^{D}, x\in S^{D}_{e}\cap S_{e'}^{D}\}$ is of measure zero. These two facts show that $D$ can be written as a a pair of edges $B: V^{\oc_{\Omega}A}\rightarrow V^{\oc_{\Omega}\oc_{\Omega}A}$ and $C: V^{\oc_{\Omega}\oc_{\Omega}A}\rightarrow V^{\oc_{\Omega}A}$. It should be clear that $B,C$ are necessarily non-singular measurable-preserving maps since $D$ \enquote{maps} characteristic functions of measurable sets to characteristic functions of measurable sets with equal measure.

Let us write $\identity[V]$ the project $(0,\identity[V])$ for all measurable set $V$. Then on one hand $D\plug \identity[V^{\oc_{\Omega}}]$ is equal to $B\circ C$ by definition of the execution, and on the other hand it is equal to $\identity[V^{\oc_{\Omega}\oc_{\Omega}}]$. Thus $B\circ C=\identity[V^{\oc_{\Omega}\oc_{\Omega}}]$; similarly $C\circ B$ is equal to $\identity[V^{\oc_{\Omega}}]$. Therefore $C=B^{-1}$, and by the way we have shown that $B$ is a Borel automorphism. Summing up, we have shown that $D\plug \phi$, where $\phi$ is a single edged graphing, computes the conjugation of $\phi$ w.r.t. a Borel automorphism $B$. 

Now, remark that $\oc_{\Omega}\oc_{\Omega} \phi=\Omega\circ (\oc_{\Omega}\phi \times\identity)\circ \Omega^{-1}=\Omega\circ ((\Omega\circ \embed{\phi}\circ \omega^{-1}) \times\identity)\circ \Omega^{-1}$. Up to conjugation w.r.t. $\Omega$, we can show that $B$ is such that for all measurable map $\phi:\realN\times[0,1]\rightarrow\realN\times[0,1]$, $B\circ\phi\circ B^{-1}=\phi\times\identity$. This shows the first implication.

The converse implication is straightforward: if such a Borel automorphism exists, then one can build a graphing $D$ consisting of a pair of edges realised by $B$ and $B^{-1}$ (conjugated by $\Omega$) such that $D\plug \phi=\Omega\circ (\phi \times\identity)\circ \Omega^{-1}$, i.e.\ $D\plug \oc_{\Omega}\phi=\oc_{\Omega}\oc_{\Omega}\phi$.
\end{proof}

Using the latter lemma, we can now prove that no satisfactory interpretation of digging can be found in the model we consider.

\begin{lemma}
Let $\measure{X}$ be a Borel space. There are no Borel automorphisms $B:  \measure{X}\times[0,1]\times[0,1]\rightarrow \measure{X}\times[0,1]$ such that for all $\phi:  \measure{X}\times[0,1]\rightarrow \measure{X}\times[0,1]$ measurable, $B\circ\phi\circ B^{-1}=\phi$.
\end{lemma}

\begin{proof}
Suppose such a Borel automorphism exists, and recall that $\measurable{\measure{Y}}$ denotes the set of all measurable maps $\measure{Y}\rightarrow\measure{Y}$. Let us notice that a Borel automorphism induces an isomorphism\footnote{Let us notice that this argument can be modified to work with the more classical groups of Borel endomorphisms on the spaces.} between the monoid of measurable maps $\measurable{\measure{X}\times[0,1]\times[0,1]}$ and the monoid of measurable maps $\measurable{\measure{X}\times[0,1]\times[0,1]}$. Indeed, the conjugation by $B$, i.e.\ $\phi\mapsto B\circ\phi\circ B^{-1}$ defines such an isomorphism. In particular, it should be onto, which contradicts the hypothesis that $B\circ\phi\circ B^{-1}=\phi$ since not all measurable function in $\measurable{\measure{X}\times[0,1]\times[0,1]}$ is of the form $\psi\times\identity$ with $\psi\in\measurable{\measure{X}\times[0,1]}$.
\end{proof}

As a corollary of the two previous lemmas, we obtain a negative result concerning the possibility of interpreting digging.

\begin{theorem}\label{nodigging}
There do not exists a project $\de{dig}_{V}$ such that $\de{dig\plug \oc_{\Omega}\oc_{\Omega} a}=\de{\oc_{\Omega} a}$ for all $\de{a}$ in $\cond{\oc_{\Omega}\oc_{\Omega} A\multimap \oc_{\Omega} A}$.
\end{theorem}

\section{Soundness for behaviours}\label{ellbehav}

\subsection{Sequent Calculus}

To deal with the three kinds of conducts we are working with (behaviours, perennial and co-perennial conducts), we introduce three types of formulas.

\begin{definition}
We define three types of formulas, (B)ehaviors,  (N)egative -- perennial, and (P)ositive -- co-perennial, inductively defined by the following grammar:
\begin{eqnarray*}
B &:=& \cond{T}~|~\cond{0}~|~X ~|~ X^{\pol} ~|~ B\otimes B ~|~ B\parr B~|~ B\oplus B~|~ B\with B~|~\forall X~B~|~\exists X~B~|~ N\otimes B~|~ P\parr B\\
N &:=& \cond{1}~|~P^{\pol}~|~\oc B ~|~ N\otimes N ~|~ N\oplus N\\
P &:=& \cond{\bot}~|~N^{\pol}~|~\wn B ~|~P\parr P~|~ P\with P
\end{eqnarray*}
We will denote by $\freevar{\Gamma}$ the set of free variables in $\Gamma$, where $\Gamma$ is a sequence of formulas (of any type).
\end{definition}

\begin{definition}
We define \emph{pre-sequents} $\Delta \pdash \Gamma; \Theta$ where $\Delta,\Theta$ contain negative (perennial) formulas, $\Theta$ containing at most one formula, and $\Gamma$ contains only behaviours.
\end{definition}

\label{remcontraction} \autoref{contraction} supposes that we are working with behaviours, and cannot be used to interpret contraction in full generality. It is however possible to show in a similar way that contraction can be interpreted when the sequent contains at least one behaviour (this is the next proposition). This restriction of the context is necessary: without behaviours in the sequent one cannot interpret the contraction since the inflation property is essential for showing that $(1/2)\phi(\oc\de{a})\otimes\psi(\oc\de{a})+(1/2)\de{0}$ is an element of $\cond{\phi(\oc A)\otimes\psi(\oc A)}$. 

\begin{proposition}\label{contractiongen}
Let $\cond{A}$ be a conduct and $\phi,\psi$ be disjoint delocations of $\oc V^{A}$. Let $\cond{C}$ be a behaviour and $\theta$ a delocation disjoint from $\phi$ and $\psi$. Then the project $\de{Ctr}_{\phi\cup\theta}^{\psi}$ is an element of the behaviour:
$$\cond{(\oc A\otimes C)\multimap (\phi(\oc A)\otimes \psi(\oc A)\otimes\theta(C))}$$
\end{proposition}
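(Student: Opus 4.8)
The statement to prove is \autoref{contractiongen}: with $\cond{A}$ a conduct, $\cond{C}$ a behavior, and $\phi,\psi,\theta$ pairwise disjoint delocations, the contraction project $\de{Ctr}_{\phi\cup\theta}^{\psi}$ lies in $\cond{(\oc A\otimes C)\multimap(\phi(\oc A)\otimes\psi(\oc A)\otimes\theta(C))}$. The approach is to mimic the proof of the earlier contraction proposition (the one for a single behavior, using $\de{Ctr}^{\psi}_{\phi}$), but now computing the interaction of the contraction graphing with an arbitrary project $\de{a}\otimes\de{c}$ where $\de{a}=(a,A)\in\cond{\oc A}$ and $\de{c}=(c,C)\in\cond{C}$. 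Since $\cond{\oc A}$ is (by construction) generated by one-sliced balanced projects, and using \autoref{propositionethsendcondsend}, it suffices to check that for every balanced one-sliced $\de{a}\in\sharp_{\Omega}\cond{A}$ and every $\de{c}\in\cond{C}$, the execution $\de{Ctr}_{\phi\cup\theta}^{\psi}\plug(\de{a}\otimes\de{c})$ lands in $\cond{\phi(\oc A)\otimes\psi(\oc A)\otimes\theta(C)}$.

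\textbf{Key steps.} First I would unfold $\de{Ctr}_{\phi\cup\theta}^{\psi}$ following \autoref{contraction}: it is the superimposition of two copies of $\de{Fax}$, one staying in slice $1$ and one changing slice from $1$ to $2$, now with the domain $V^{A}$ replaced by $\oc V^{A}\cup V^{C}$ and the bijections $\phi\cup\theta$, $\psi$. Second, I would compute the plugging $(A\cup C)\plugepais\text{Ctr}$ essentially as in the earlier proof: because $\de{a}$ is one-sliced (and $\de{c}$ is a slice of $C$), the relevant variant computation produces a two-sliced result containing the delocated copy $\phi(A)\cup\psi(A)\cup\theta(C)$ in slice $1$ and an empty graph in slice $2$. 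Third, by \autoref{fromETtoEequiv} this is universally equivalent to $\tfrac12\de{\phi(a)\otimes\psi(a)\otimes\theta(c)}+\tfrac12\de{0}$. Fourth — and this is the new ingredient compared to the single-behavior proof — I would invoke the fact that $\cond{C}$ is a behavior (hence so is $\cond{\phi(\oc A)\otimes\psi(\oc A)\otimes\theta(C)}$, since $\cond{\phi(\oc A)\otimes\psi(\oc A)}$ is perennial by the tensor-of-perennials proposition and $\cond{\text{perennial}\otimes\text{behavior}}$ is a behavior by \autoref{tenseurperennecomp}), so the behavior is closed under adding $\lambda\de{0}$, and combined with the homothety Lemma (\autoref{homothetie}) applied to $\de{\phi(a)\otimes\psi(a)\otimes\theta(c)}\in\cond{\phi(\oc A)\otimes\psi(\oc A)\otimes\theta(C)}$, we conclude $\tfrac12\de{\phi(a)\otimes\psi(a)\otimes\theta(c)}+\tfrac12\de{0}$ lies in that behavior. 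Finally, \autoref{propositionethsendcondsend} upgrades the membership from all one-sliced $\de{a\otimes c}$ to the bi-orthogonal $\cond{\oc A\otimes C}$, giving $\de{Ctr}_{\phi\cup\theta}^{\psi}\in\cond{(\oc A\otimes C)\multimap(\phi(\oc A)\otimes\psi(\oc A)\otimes\theta(C))}$.

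\textbf{Main obstacle.} The main subtlety — and the reason the context formula $\cond{C}$ must be a behavior rather than an arbitrary conduct — is precisely step four: passing from $\tfrac12\de{\phi(a)\otimes\psi(a)\otimes\theta(c)}+\tfrac12\de{0}$ back into the target conduct requires the inflation/$0$-absorption closure that only behaviors provide, as is stressed in the remark preceding the statement. I would need to be careful that the tensor $\cond{\phi(\oc A)\otimes\psi(\oc A)\otimes\theta(C)}$ is genuinely a behavior: this follows by associativity of $\otimes$, \autoref{tenseurperennecomp} (perennial $\otimes$ behavior is a behavior), and the fact that tensor of perennials is perennial. A secondary bookkeeping point is verifying that the slice-changing structure of $\de{Ctr}_{\phi\cup\theta}^{\psi}$ still collapses correctly when the ``$A$-part'' $\de{a}$ is one-sliced and the ``$C$-part'' is an arbitrary slice of $C$ — i.e. that mixing a perennial component with a behavior component in the hypothesis does not disturb the variant computation. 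This is routine given that one-slicedness of $\de{a}$ is exactly the hypothesis $D^{A}\cong\{1\}$ used in the earlier proof, and $\oc A$ is generated by such projects by definition of the exponential.
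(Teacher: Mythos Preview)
Your proposal is correct and follows essentially the same route as the paper's own proof: compute $\de{Ctr}_{\phi\cup\theta}^{\psi}\plug(\de{a}\otimes\de{c})$ for one-sliced $\de{a}\in\sharp_{\Omega}\cond{A}$ and arbitrary $\de{c}\in\cond{C}$, show universal equivalence with $\tfrac12\,\phi(\oc\de{a})\otimes\psi(\oc\de{a})\otimes\theta(\de{c})+\tfrac12\,\de{0}$, then use that $\cond{\phi(\oc A)\otimes\psi(\oc A)\otimes\theta(C)}$ is a behavior (perennial tensored with behavior, via \autoref{tenseurperennecomp}) together with homothety to place the result in the target, and conclude by \autoref{propositionethsendcondsend}. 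The paper's proof is terser but identical in substance; you have simply made explicit the invocations of \autoref{homothetie}, \autoref{tenseurperennecomp}, and \autoref{propositionethsendcondsend} that the paper leaves implicit.
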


\begin{proof}
The proof follows the proof of \autoref{contraction}. We show in a similar manner that the project $\de{Ctr}_{\phi\cup\theta}^{\psi}\plug(\de{a\otimes c})$ is universally equivalent to:
$$\frac{1}{2}\phi(\oc\de{a})\otimes\psi(\oc\de{a})\otimes\theta(\cond{C})+\frac{1}{2}\de{0}$$
Since $\cond{\oc A}$ is a perennial conduct and $\cond{C}$ is a behaviour, $\cond{(\phi(\oc A)\otimes \psi(\oc A)\otimes\theta(C))}$ is a behaviour. Thus $\de{Ctr}_{\phi\cup\theta}^{\psi}\plug(\de{a\otimes c})$ is an element in $\cond{(\phi(\oc A)\otimes \psi(\oc A)\otimes\theta(C))}$. Finally we showed that the project $\de{Ctr}_{\phi\cup\theta}^{\psi}$ is an element of $\cond{(\oc A\otimes C)\multimap (\phi(\oc A)\otimes \psi(\oc A)\otimes\theta(C))}$, and that the latter is a behaviour.
\end{proof}

In a similar way, the proof of distributivity relies on the property that $\cond{A+B}\subset\cond{A\with B}$ which is satisfied for behaviours but not in general. It is therefore necessary to restrict to pre-sequents that contain at least one behaviour in order to interpret the $\with$ rule. Indeed, we can think of a pre-sequent $\Delta\pdash \Gamma;\Theta$ as the conduct\footnote{This will actually be the exact definition of its interpretation.}
$$\left(\bigparr_{N\in \Delta}N^{\pol}\right)\parr\left(\bigparr_{B\in\Gamma} B\right)\parr\left(\bigparr_{N\in\Theta} N\right)$$
Such a conduct is a behaviour when the set $\Gamma$ is non-empty and the set $\Theta$ is empty, but it is neither a perennial conduct nor a co-perennial conduct when $\Gamma=\emptyset$. We will therefore restrict to pre-sequents such that $\Gamma\neq\emptyset$ and $\Theta=\emptyset$.

\begin{definition}[Sequents]
A sequent $\Delta\vdash\Gamma;$ is a pre-sequent $\Delta\pdash\Gamma;\Theta$ such that $\Gamma$ is non-empty and $\Theta$ is empty. 
\end{definition}

\begin{definition}[The Sequent Calculus $\textnormal{ELL}_{\textnormal{comp}}$]
A proof in the sequent calculus $\textnormal{ELL}_{\textnormal{comp}}$ is a derivation tree constructed from the derivation rules shown in \autoref{ellcomp} page \pageref{ellcomp}.
\end{definition}

\begin{figure}
\centering
\subfigure[Identity Group]{
\framebox{
\centering
\begin{tabular}{cc}
\begin{minipage}{4.4cm}
\begin{prooftree}
\AxiomC{}
\RightLabel{\scriptsize{ax}}
\UnaryInfC{$\vdash C^{\pol},C;$}
\end{prooftree}
\end{minipage}
&
\begin{minipage}{5.95cm}
\begin{prooftree}
\AxiomC{$\Delta_{1} \vdash \Gamma_{1},C;$}
\AxiomC{$\Delta_{2} \vdash \Gamma_{2},C^{\pol};$}
\RightLabel{\scriptsize{cut}}
\BinaryInfC{$\Delta_{1},\Delta_{2}\vdash \Gamma_{1},\Gamma_{2};$}
\end{prooftree}
\end{minipage}
\end{tabular}
}
}
\subfigure[Multiplicative Group]{
\framebox{
\begin{tabular}{c}
\begin{tabular}{cc}
\begin{minipage}{5cm}
\begin{prooftree}
\AxiomC{$\Delta_{1}\vdash \Gamma_{1},C_{1};$}
\AxiomC{$\Delta_{2}\vdash \Gamma_{2},C_{2};$}
\RightLabel{\scriptsize{$\otimes$}}
\BinaryInfC{$\Delta_{1},\Delta_{2}\vdash\Gamma_{1},\Gamma_{2},C_{1}\otimes C_{2};$}
\end{prooftree}
\end{minipage}
&
\begin{minipage}{4.95cm}
\begin{prooftree}
\AxiomC{$\Delta \vdash \Gamma,C_{1},C_{2};$}
\RightLabel{\scriptsize{$\parr$}}
\UnaryInfC{$\Delta\vdash \Gamma,C_{1}\parr C_{2};$}
\end{prooftree}
\end{minipage}
\\~\\
\begin{minipage}{4cm}
\begin{prooftree}
\AxiomC{$\Delta, N_{1},N_{2} \vdash \Gamma;$}
\RightLabel{\scriptsize{$\otimes^{\mathrm{pol}}_{g}$}}
\UnaryInfC{$\Delta,N_{1}\otimes N_{2}\vdash \Gamma;$}
\end{prooftree}
\end{minipage}
&
\begin{minipage}{4cm}
\begin{prooftree}
\AxiomC{$\Delta,P^{\pol} \vdash \Gamma,C;$}
\RightLabel{\scriptsize{$\parr^{\mathrm{mix}}$}}
\UnaryInfC{$\Delta \vdash \Gamma,P\parr C ;$}
\end{prooftree}
\end{minipage}
\\~\\
\begin{minipage}{2.83cm}
\begin{prooftree}
\AxiomC{$\Delta\vdash \Gamma,C;$}
\RightLabel{\scriptsize{$\cond{1}_{d}$}}
\UnaryInfC{$\Delta\vdash \Gamma,C\otimes \cond{1};$}
\end{prooftree}
\end{minipage}
&
\begin{minipage}{2.83cm}
\begin{prooftree}
\AxiomC{$\Delta \vdash \Gamma;$}
\RightLabel{\scriptsize{$\cond{1}_{g}$}}
\UnaryInfC{$\Delta, \cond{1} \vdash \Gamma;$}
\end{prooftree}
\end{minipage}
\end{tabular}
\end{tabular}
}
}
\subfigure[Additive Group]{
\framebox{
\begin{tabular}{cc}
\begin{minipage}{5cm}
\begin{prooftree}
\AxiomC{$\Delta\vdash\Gamma,C_{i};$}
\RightLabel{\scriptsize{$\oplus_{i}$}}
\UnaryInfC{$\Delta\vdash \Gamma,C_{1}\oplus C_{2};$}
\end{prooftree}
\end{minipage}
&
\begin{minipage}{5.4cm}
\begin{prooftree}
\AxiomC{$\Delta\vdash \Gamma, C_{1};$}
\AxiomC{$\Delta\vdash \Gamma, C_{2};$}
\RightLabel{\scriptsize{$\with$}}
\BinaryInfC{$\Delta\vdash \Gamma, C_{1}\with C_{2};$}
\end{prooftree}
\end{minipage}
\\~\\
\begin{minipage}{4cm}
\begin{prooftree}
\AxiomC{}
\RightLabel{\scriptsize{$\top$}}
\UnaryInfC{$\Delta\vdash \Gamma, \top;$}
\end{prooftree}
\end{minipage}
&
\begin{minipage}{4cm}
\centering
No rules for $0$.
\end{minipage}
\end{tabular}
}
}
\subfigure[Exponential Group]{
\framebox{
\begin{tabular}{c}
\begin{minipage}{5.4cm}
\begin{prooftree}
\AxiomC{$\Delta_{1}\vdash \Gamma_{1}, C_{1};$}
\AxiomC{$\Delta_{2} \vdash \Gamma_{2},C_{2};$}
\RightLabel{\scriptsize{$\oc$}}
\BinaryInfC{$\Delta_{1},\oc \Delta_{2}, \oc\Gamma_{2}^{\pol} \vdash \Gamma_{1},C_{1}\otimes \oc C_{2} ;$}
\end{prooftree}
\end{minipage}
\\
\begin{tabular}{cc}
\begin{minipage}{5cm}
\begin{prooftree}
\AxiomC{$\Delta,\oc A,\oc A\vdash \Gamma;$}
\RightLabel{\scriptsize{ctr ($\Gamma\neq\emptyset$)}}
\UnaryInfC{$\Delta,\oc A\vdash \Gamma;$}
\end{prooftree}
\end{minipage}
&
\begin{minipage}{5cm}
\begin{prooftree}
\AxiomC{$\Delta\vdash\Gamma;$}
\RightLabel{\scriptsize{weak}}
\UnaryInfC{$\Delta,N\vdash \Gamma;$}
\end{prooftree}
\end{minipage}
\end{tabular}
\end{tabular}
}
}
\subfigure[Quantifier Group]{
\framebox{
\begin{tabular}{c}
\begin{tabular}{cc}
\begin{minipage}{5cm}
\begin{prooftree}
\AxiomC{$\vdash \Gamma,C;$}
\AxiomC{$X\not\in \freevar{\Gamma}$}
\RightLabel{\scriptsize{$\forall$}}
\BinaryInfC{$\vdash\Gamma,\forall X~ C;$}
\end{prooftree}
\end{minipage}
&
\begin{minipage}{4.95cm}
\begin{prooftree}
\AxiomC{$\vdash \Gamma,C[A/X];$}
\RightLabel{\scriptsize{$\exists$}}
\UnaryInfC{$\vdash \Gamma,\exists X~C;$}
\end{prooftree}
\end{minipage}
\end{tabular}
\end{tabular}
}
}
\caption{Rules for the sequent calculus $\textnormal{ELL}_{\textnormal{comp}}$}\label{ellcomp}
\end{figure}

\subsection{Truth}

The notion of success is the natural generalisation of the corresponding notion on graphs \cite{seiller-goim,seiller-goiadd}. The graphing of a successful project will therefore be a disjoint union of ``transpositions''. Such a graphing can be represented as a graph with a set of vertices that could be infinite, but since we are working with equivalence classes of graphings one can always find a simpler representation: a graphing with exactly two edges.

\begin{definition}
A project $\de{a}=(a,A)$ is \emph{successful} when it is balanced, $a=0$ and $A$ is a disjoint union of transpositions:
\begin{itemize}[noitemsep,nolistsep]
\item for all $e\in E^{A}$, $\omega^{A}_{e}=1$;
\item for all $e\in E^{A}$, $\exists e^{\ast}\in E^{A}$ such that $\phi^{A}_{e^{\ast}}=(\phi_{e}^{A})^{-1}$ -- in particular $S_{e}^{A}=T_{e^{\ast}}^{A}$ and $T_{e}^{A}=S_{e^{\ast}}^{A}$;
\item for all $e,f\in E^{A}$ with $f\not\in\{e,e^{\ast}\}$, $S^{A}_{e}\cap S^{A}_{f}$ and $T^{A}_{e}\cap T^{A}_{f}$ are of null measure;
\end{itemize}
A conduct $\cond{A}$ is \emph{true} when it contains a successful project.
\end{definition}

%

The following results were shown in our previous paper \cite{seiller-goig}. They ensure that the given definition of truth is coherent.

\begin{proposition}[Consistency]
The conducts $\cond{A}$ and $\cond{A}^{\pol}$ cannot be simultaneously true.
\end{proposition}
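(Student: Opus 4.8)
The plan is to argue by contradiction, exploiting the fact that the pairing $\sca{\cdot}{\cdot}$ between two \emph{successful} projects can only take the value $0$ or $\infty$, so that two such projects are never orthogonal. Concretely, suppose both $\cond{A}$ and $\cond{A}^{\pol}$ are true. Then there are successful projects $\de{a}=(0,A)\in\cond{A}$ and $\de{b}=(0,B)\in\cond{A}^{\pol}$, and since conducts have a fixed carrier, $A$ and $B$ are sliced thick graphings on one and the same carrier. Because $\de{b}\in\cond{A}^{\pol}$ and $\de{a}\in\cond{A}$, the two projects must be orthogonal, i.e. $\sca{a}{b}\neq 0,\infty$; the whole point is to contradict this.

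Next I would compute $\sca{a}{b}$ using the recalled definition of the pairing. Both projects are \emph{balanced} (hence one-sliced, with index coefficient $1$) and wager-free ($a=b=0$), so the two wager terms in the pairing vanish and only the measurement term survives: $\sca{a}{b}$ equals $\meas{A,B}$ up to the strictly positive normalisation constant coming from the dialects. By definition of the measurement of interaction between thick graphings, $\meas{A,B}$ is (a positive multiple of) the integral of $m(\omega(\pi))$ over the measurable set $\circuitsepais{A,B}$ of $1$-circuits between $A$ and $B$.

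The third step is to evaluate this integral from the shape of successful projects. Every edge of $A$ and of $B$ has weight $1$ in the weight monoid (this is part of the definition of ``disjoint union of transpositions''), so every $1$-circuit $\pi$ between $A$ and $B$ has weight $\omega(\pi)=1$, whence $m(\omega(\pi))=m(1)=\infty$. Consequently, if $\circuitsepais{A,B}$ is of null measure then $\meas{A,B}=0$, while if it has positive measure then $\meas{A,B}=\infty$. In either case $\sca{a}{b}\in\{0,\infty\}$, contradicting $\sca{a}{b}\neq 0,\infty$, which finishes the proof.

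The only genuinely delicate point is the bookkeeping of the second step: one must check that ``successful'' really collapses the sliced structure to a single term, that the dialect normalisation factor is a fixed strictly positive scalar (so that it cannot turn $0$ or $\infty$ into a finite nonzero value), and that the wager contributions are indeed absent; the rest is immediate from the definitions. I would also flag explicitly that the argument uses the property $m(1)=\infty$ that the circuit-quantifying map is required to satisfy for the notion of success to make sense (the same convention fixed in \autoref{sec_Exp}): if $m(1)$ were a finite positive number, a single transposition would be orthogonal to itself and the statement would fail.
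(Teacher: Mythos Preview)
Your argument is correct and follows essentially the same route as the paper: assume successful projects $\de{a}\in\cond{A}$ and $\de{b}\in\cond{A}^{\pol}$, reduce $\sca{a}{b}$ to $\meas{A,B}$ using that successful projects are balanced and wager-free, and then observe that since all weights equal $1$ and $m(1)=\infty$, the measurement is $0$ or $\infty$ according to whether the set of alternating circuits has null or positive measure. The only minor imprecision is the phrase ``up to a strictly positive normalisation constant'': with the paper's conventions the dialect normalisation is already absorbed into $\meas{\cdot,\cdot}$, so in fact $\sca{a}{b}=\meas{A,B}$ on the nose, but this does not affect the conclusion.
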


\begin{proof}
We suppose that $\de{a}=(0,A)$ and $\de{b}=(0,B)$ are successful project in the conducts $\cond{A}$ and $\cond{A}^{\pol}$ respectively. Then:
\begin{equation*}
\sca{a}{b}=\meas{A,B}
\end{equation*}
If there exists a cycle whose support is of strictly positive measure between $A$ and $B$, then $\meas{A,B}=\infty$. Otherwise, $\meas{A,B}=0$. In both cases we obtained a contradiction since $\de{a}$ and $\de{b}$ cannot be orthogonal.
\end{proof}

\begin{proposition}[Compositionnality]
If $\cond{A}$ and $\cond{A \multimap B}$ are true, then $\cond{B}$ is true.
\end{proposition}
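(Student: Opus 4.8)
The plan is to produce a successful project in $\cond{B}$ by \emph{executing} a witness of $\cond{A\multimap B}$ against a witness of $\cond{A}$. So let $\de{a}=(0,A)$ be a successful project in $\cond{A}$ and $\de{f}=(0,F)$ a successful project in $\cond{A\multimap B}$, and consider $\de{f}\plug\de{a}$. By the very definition of $\multimap$, since $\de{a}\in\cond{A}$ we have $\de{f}\plug\de{a}\in\cond{B}$; the whole problem is thus to check that $\de{f}\plug\de{a}$ is again successful.

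First, $\de{f}\plug\de{a}$ is balanced: $\de{a}$ and $\de{f}$ are balanced, so the execution has one-element index with trivial coefficient, and its dialect is the product dialect. Next, its underlying graphing, which is $F\plug A$, is again a disjoint union of transpositions with all weights equal to $1$. Indeed, since $A$ and $F$ are each disjoint unions of unit-weight transpositions, every alternating path $\pi=e_{1}\dots e_{n}$ between them admits the reverse path $\pi^{\ast}=e_{n}^{\ast}\dots e_{1}^{\ast}$, which is again alternating (as $e_{i}^{\ast}\in E^{F}$ iff $e_{i}\in E^{F}$), satisfies $\phi_{\pi^{\ast}}=\phi_{\pi}^{-1}$, exchanges source and target, and has weight $\prod_{i}\omega_{e_{i}}=1$; and, exactly as in the corresponding statement for graphs (the notion of success here being the direct generalisation of the one in \cite{seiller-goim,seiller-goia}), two alternating paths whose sources meet on a set of positive measure must agree up to the step at which one of them exits $V^{F}\cap V^{A}$, so that distinct pairs $\{\pi,\pi^{\ast}\}$ have pairwise disjoint sources, and targets, up to null sets.

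It remains to see that the wager of $\de{f}\plug\de{a}$ vanishes. By definition this wager is $\sca{f}{a}$, and since $\de{a}$ and $\de{f}$ are wager-free it equals $\meas{F,A}$. Every alternating circuit between $F$ and $A$ has weight $1$, so, because $m(1)=\infty$, either there is no such circuit of positive measure and $\meas{F,A}=0$, or there is one and $\meas{F,A}=\infty$. If $\cond{B}=\cond{T}_{V^{B}}$ then $\cond{B}$ is visibly true and we are done. Otherwise $\cond{B}^{\pol}$ is non-empty (a conduct distinct from $\cond{T}$ has non-empty orthogonal), and for any $\de{b'}\in\cond{B}^{\pol}$ the project $\de{f}\plug\de{a}\in\cond{B}=\cond{B}^{\pol\pol}$ is orthogonal to $\de{b'}$, so $\sca{f\plug a}{b'}\neq\infty$; since $\sca{f\plug a}{b'}=(\text{wager of }\de{f}\plug\de{a})\,\unit{B'}+b'\,\unit{F\plug A}+\meas{F\plug A,B'}$, an infinite wager for $\de{f}\plug\de{a}$ is incompatible with this (it would force every element of $\cond{B}^{\pol}$ to have $\unit{\cdot}=0$, which fails for a proper conduct; this can also be read off the adjunction identity $\sca{f\plug a}{b'}=\sca{f}{a\otimes b'}$ coming from $\de{f}\plug(\de{a}\otimes\de{b'})=(\de{f}\plug\de{a})\plug\de{b'}$). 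Hence $\meas{F,A}=0$, the project $\de{f}\plug\de{a}$ is successful, and $\cond{B}$ is true.

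The delicate step is precisely the last one: guaranteeing that composing two successful projects creates no alternating circuit of positive measure — the exact analogue of livelock-freeness for the composition of winning strategies. Two unrelated disjoint unions of transpositions can certainly close up into circuits, so this cannot be argued from the shapes of $A$ and $F$ alone; it is the hypothesis $\de{f}\in\cond{A\multimap B}$, used through orthogonality with $\cond{B}^{\pol}$, that makes it go through.
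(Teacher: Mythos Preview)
Your argument is correct and follows the same route as the paper: take successful witnesses $\de{a}$ and $\de{f}$, show $F\plug A$ is again a disjoint union of unit-weight transpositions, and handle the wager by the dichotomy $\meas{F,A}\in\{0,\infty\}$, with the $\infty$ case forcing $\cond{B}=\cond{T}_{V^{B}}$ (you phrase this contrapositively). One small remark: your aside ``which fails for a proper conduct'' is not quite the right justification since nothing tells you $\cond{B}$ is proper, but the adjunction identity you give immediately after is exactly the correct argument (if $\de{b'}\in\cond{B}^{\pol}$ then $\sca{f}{a\otimes b'}\neq\infty$, and numerical adjunction gives $\sca{f}{a\otimes b'}=b'\unit{F}+\meas{F\plug A,B'}+\meas{F,A}$, so $\meas{F,A}\neq\infty$); the paper's proof asserts this step without spelling it out.
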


\begin{proof}
Let $\de{a}\in\cond{A}$ and $\de{f}\in\cond{A\multimap B}$ be successful projects. Then:
\begin{itemize}[noitemsep,nolistsep]
\item If $\sca{a}{f}=\infty$, the conduct $\cond{B}$ is equal to $\cond{T}_{V^{B}}$, which is a true conduct since it contains $(0,\emptyset)$;
\item Otherwise $\sca{a}{f}=0$ (this is shown in the same manner as in the preceding proof) and it is sufficient to show that $F\plug A$ is a disjoint union of transpositions. But this is straightforward: to each path there corresponds an opposite path and the weights of the paths are all equal to $1$, the conditions on the source and target sets $S_{\pi}$ and $T_{\pi}$ are then easily checked.
\end{itemize}
Finally, if $\cond{A}$ and $\cond{A\multimap B}$ are true, then $\cond{B}$ is true.
\end{proof}

\subsection{Interpretation of proofs}

To prove soundness, we will follow the proof technique used in our previous papers \cite{seiller-goim,seiller-goiadd,seiller-goig}. We will first define a localised sequent calculus and show a result of full soundness for it. The soundness result for the non-localised calculus is then obtained by noticing that one can always \emph{localise} a derivation. We will consider here that the variables are defined with the carrier equal to an interval in $\realN$ of the form $[i,i+1[$.

\begin{definition}
We fix a set $\mathcal{V}=\{X_{i}(j)\}_{i,j\in\naturalN\times\integerN}$ of \emph{localised variables}. For $i\in\naturalN$, the set $X_{i}=\{X_{i}(j)\}_{j\in\integerN}$ will be called the \emph{variable name $X_{i}$}, and an element of $X_{i}$ will be called a \emph{variable of name $X_{i}$}.
\end{definition}
For $i,j\in\naturalN\times\integerN$ we define the \emph{location} $\sharp X_{i}(j)$ of the variable $X_{i}(j)$ as the set $$\{x\in\realN~|~ 2^{i}(2j+1)\leqslant x< 2^{i}(2j+1)+1\}$$
Let us notice that this particular encoding could be replaced with any other $\naturalN\times\integerN$-indexed partition of (a subset of) the real line as measurable subsets of equal measure\footnote{It would also work with the more general case where the partition is such that for all $i$, $j$ and $j'$, the subsets corresponding to $(i,j)$ and to $(i,j')$ have the same measure.}. The reader can convince himself that a suitable definition of encoding of variables can be defined in order to insure that all the results of this section hold in a more general setting.

\begin{definition}[Formulas of $\textnormal{locELL}_{\textnormal{comp}}$]
We inductively define the formulas of \emph{localised polarised elementary linear logic} $\textnormal{locELL}_{\textnormal{comp}}$ as well as their \emph{locations} as follows:
\begin{itemize}[noitemsep,nolistsep]
\item \textbf{Behaviours}:
\begin{itemize}[noitemsep,nolistsep]
\item A variable $X_{i}(j)$ of name $X_{i}$ is a behaviour whose location is defined as $\sharp X_{i}(j)$;
\item If $X_{i}(j)$ is a variable of name $X_{i}$, then $(X_{i}(j))^{\pol}$ is a behaviour whose location is $\sharp X_{i}(j)$.
\item The constants $\cond{T}_{\sharp \Gamma}$ are behaviours whose location is defined as $\sharp\Gamma$;
\item The constants $\cond{0}_{\sharp\Gamma}$ are behaviours whose location is defined as $\sharp\Gamma$.
\item If $A,B$ are behaviours with respective locations $X,Y$ such that $X\cap Y=\emptyset$, then $A\otimes B$ (resp. $A\parr B$, resp. $A\with B$, resp. $A\oplus B$) is a behaviour whose location is $X\cup Y$;
\item If $X_{i}$ is a variable name, and $A(X_{i})$ is a behaviour of location $\sharp A$, then $\forall X_{i}~A(X_{i})$ and $\exists X_{i}~A(X_{i})$ are behaviours of location $\sharp A$.
\item If $A$ is a perennial conduct with location $X$ and $B$ is a behaviour whose location is $Y$ such that $X\cap Y=\emptyset$, then $A\otimes B$ is a behaviour with location $X\cup Y$;
\item If $A$ is a co-perennial conduct whose location is $X$ and $B$ is a behaviour with location $Y$ such that $X\cap Y=\emptyset$, then $A\parr B$ is a behaviour and its location is $X\cup Y$;
\end{itemize}
\item \textbf{Perennial conducts}:
\begin{itemize}[noitemsep,nolistsep]
\item The constant $\cond{1}$ is a perennial conduct and its location is $\emptyset$;
\item If $A$ is a behaviour or a perennial conduct and its location is $X$, then $\oc A$ is a perennial conduct and its location is $\Omega(X\times[0,1])$;
\item If $A,B$ are perennial conducts with respective locations $X,Y$ such that $X\cap Y=\emptyset$, then $A\otimes B$ (resp. $A\oplus B$) is a perennial conduct whose location is $X\cup Y$;
\end{itemize}
\item \textbf{Co-perennial conducts}:
\begin{itemize}[noitemsep,nolistsep]
\item The constant $\cond{\bot}$ is a co-perennial conduct;
\item If $A$ is a behaviour or a co-perennial conduct and its location is $X$, then $\wn A$ is a co-perennial conduct whose location is $\Omega(X\times[0,1])$;
\item If $A,B$ are co-perennial conducts with respective locations $X,Y$ such that $X\cap Y=\emptyset$, then $A\parr B$ (resp. $A\with B$) is a co-perennial conduct whose location is $X\cup Y$;
\end{itemize}
\end{itemize}
If $A$ is a formula, we will denote by $\sharp A$ the location of $A$. A sequent $\Delta\vdash \Gamma;$ of $\textnormal{locELL}_{\textnormal{comp}}$ must satisfy the following conditions:
\begin{itemize}[noitemsep,nolistsep]
\item the formulas of $\Gamma\cup\Delta$ have pairwise disjoint locations;
\item the formulas of $\Delta$ are all perennial conducts;
\item $\Gamma$ is non-empty and contains only behaviours.
\end{itemize}
\end{definition}

\begin{definition}[Interpretations]
An \emph{interpretation basis} is a function $\Phi$ which associates to each variable name $X_{i}$ a behaviour of carrier $[0,1[$.
\end{definition}

\begin{definition}[Interpretation of $\textnormal{locELL}_{\textnormal{comp}}$ formulas]
Let $\Phi$ be an interpretation basis. We define the interpretation $I_{\Phi}(F)$ along $\Phi$ of a formula $F$ inductively:
\begin{itemize}[noitemsep,nolistsep]
\item If $F=X_{i}(j)$, then $I_{\Phi}(F)$ is the delocation (i.e.\ a behaviour) of $\Phi(X_{i})$ defined by the function $x\mapsto 2^{i}(2j+1)+x$;
\item If $F=(X_{i}(j))^{\pol}$, we define the behaviour $I_{\Phi}(F)=(I_{\Phi}(X_{i}(j)))^{\pol}$;
\item If $F=\cond{T}_{\sharp\Gamma}$ (resp. $F=\cond{0}_{\sharp\Gamma}$), we define $I_{\Phi}(F)$ as the behaviour $\cond{T}_{\sharp\Gamma}$ (resp. $\cond{0}_{\sharp\Gamma}$);
\item If $F=\cond{1}$ (resp. $F=\cond{\bot}$), we define $I_{\Phi}(F)$ as the behaviour $\cond{1}$ (resp. $\cond{\bot}$);
\item If $F=A\otimes B$, we define the conduct $I_{\Phi}(F)=I_{\Phi}(A)\otimes I_{\Phi}(B)$;
\item If $F=A\parr B$, we define the conduct $I_{\Phi}(F)=I_{\Phi}(A)\parr I_{\Phi}(B)$;
\item If $F=A\oplus B$, we define the conduct $I_{\Phi}(F)=I_{\Phi}(A)\oplus I_{\Phi}(B)$;
\item If $F=A\with B$, we define the conduct $I_{\Phi}(F)=I_{\Phi}(A)\with I_{\Phi}(B)$;
\item If $F=\forall X_{i} A(X_{i})$, we define the conduct $I_{\Phi}(F)=\cond{\forall X_{i}} I_{\Phi}(A(X_{i}))$;
\item If $F=\exists X_{i} A(X_{i})$, we define the conduct $I_{\Phi}(F)=\cond{\exists X_{i}} I_{\Phi}(A(X_{i}))$.
\item If $F=\oc A$ (resp. $\wn A$), we define the conduct $I_{\Phi}(F)=\oc I_{\Phi}(A)$ (resp. $\wn I_{\Phi}(A)$).
\end{itemize}
Moreover, a sequent $\Delta\vdash \Gamma;$ will be interpreted as the $\parr$ of formulas in $\Gamma$ and negations of formulas in $\Delta$, which will be written $\bigparr\Delta^{\pol}\parr\bigparr \Gamma$. This formulas can also be written in the equivalent form $\bigotimes\Delta\multimap (\bigparr\Gamma)$.
\end{definition}

\begin{definition}[Interpretation of $\textnormal{locELL}_{\textnormal{comp}}$ proofs]\label{interpretationpreuvesellcomp}
Let $\Phi$ be an interpretation basis. We define the interpretation $I_{\Phi}(\pi)$ -- a project -- of a proof $\pi$ inductively:
\begin{itemize}[noitemsep,nolistsep]
\item if $\pi$ is a single axiom rule introducing the sequent $\vdash (X_{i}(j))^{\pol},X_{i}(j')$, we define $I_{\Phi}(\pi)$ as the project $\de{Fax}$ defined by the translation $x \mapsto 2^{i}(2j'-2j)+x$;
\item if $\pi$ is composed of a single rule $\cond{T}_{\sharp \Gamma}$, we define $I_{\Phi}(\pi)=\de{0}_{\sharp\Gamma}$;
\item if $\pi$ is obtained from $\pi'$ by using a $\parr$ rule, a $\parr^{\mathrm{mix}}$ rule, a $\otimes_{g}^{\mathrm{pol}}$ rule, or a $\cond{1}$ rule, then $I_{\Phi}(\pi)=I_{\Phi}(\pi')$;
\item if $\pi$ is obtained from $\pi_{1}$ and $\pi_{2}$ by performing a $\otimes$ rule, we define $I_{\Phi}(\pi)=I_{\Phi}(\pi_{1})\otimes I_{\Phi}(\pi')$;
\item if $\pi$ is obtained from $\pi'$ using a $\text{weak}$ rule or a $\oplus_{i}$ rule introducing a formula of location $V$, we define $I_{\Phi}(\pi)=I_{\Phi}(\pi')\otimes\de{0}_{V}$;
\item if $\pi$ of conclusion $\vdash \Gamma, A_{0}\with A_{1}$ is obtained from $\pi_{0}$ and $\pi_{1}$ using a $\with$ rule, we define the interpretation of $\pi$ in the same way it was defined in our previous paper \cite{seiller-goiadd};
\item If $\pi$ is obtained from a $\forall$ rule applied to a derivation $\pi'$, we define $I_{\Phi}(\pi)=I_{\Phi}(\pi')$;
\item If $\pi$ is obtained from a $\exists$ rule applied to a derivation $\pi'$ replacing the formula $\cond{A}$ by the variable name $X_{i}$, we define $I_{\Phi}(\pi)=I_{\Phi}(\pi')\plug (\bigotimes [e^{-1}(j)\leftrightarrow X_{i}(j)])$, using the notations of our previous paper \cite{seiller-goig};
\item if $\pi$ is obtained from $\pi_{1}$ and $\pi_{2}$ through the use of a promotion rule $\oc$, we think of this rule as the following ``derivation of pre-sequents'':

\begin{prooftree}
\AxiomC{$\vdots^{\pi_{1}}$}
\noLine
\UnaryInfC{$\Delta_{1}\vdash \Gamma_{1},C_{1};$}
\AxiomC{$\vdots^{\pi_{2}}$}
\noLine
\UnaryInfC{$\Delta_{2} \vdash \Gamma_{2},C_{2};$}
\RightLabel{\scriptsize{$\oc$}}
\UnaryInfC{$\oc\Delta_{2},\oc\Gamma_{2}^{\pol}\pdash ;\oc C_{2}$}
\RightLabel{\scriptsize{$\otimes^{\mathrm{mix}}$}}
\BinaryInfC{$\oc \Delta_{2},\Delta_{1}, \oc\Gamma_{2}^{\pol} \vdash \Gamma_{1},C_{1}\otimes \oc C_{2} ;$}
\end{prooftree}
As a consequence, we first define a delocation of $\oc I_{\Phi}(\pi)$ to which we apply the implementation of the functorial promotion. Indeed, the interpretation of
$$\bigparr\Delta^{\pol}\parr\bigparr\Gamma$$
can be written as a sequence of implications. The exponential of a well-chosen delocation is then represented as:
\begin{equation*}
\cond{\oc (\phi_{1}(A_{1})\multimap (\phi_{2}(A_{2})\multimap \dots (\phi_{n}(A_{n})\multimap \phi_{n+1}(A_{n+1}))\dots))}
\end{equation*}
Applying $n$ instances of the project implementing the functorial promotion to the interpretation of $\pi$, we obtain a project $\de{p}$ in:
\begin{equation*}
\cond{\oc (\phi_{1}(A_{1}))\multimap \oc(\phi_{2}(A_{2}))\multimap \dots \oc(\phi_{n}(A_{n}))\multimap \oc(\phi_{n+1}(A_{n+1}))}
\end{equation*}
which is the same conduct as the one interpreting the conclusion of the promotion ``rule'' in the ``derivation of pre-sequents'' we have shown. Now we are left with taking the tensor product of the interpretation of $\pi_{2}$ with the project $\de{p}$ to obtain the interpretation of the $\oc$ rule;
\item if $\pi$ is obtained from $\pi$ using a contraction rule $ctr$, we write the conduct interpreting the premise of the rule as $\cond{(\oc A\otimes \oc A)\multimap D}$. We then define a delocation of the latter in order to obtain $\cond{(\phi(\oc A)\otimes \psi(\oc A))\multimap D}$, and take its execution with $\de{ctr}$ in $\cond{\oc A\multimap (\oc A\otimes \oc A)}$;
\item if $\pi$ is obtained from $\pi_{1}$ and $\pi_{2}$ by applying a $\text{cut}$ rule or a $\text{cut}^{\mathrm{pol}}$ rule, we define $I_{\Phi}(\pi)=I_{\Phi}(\pi_{1})\exec I_{\Phi}(\pi_{2})$.
\end{itemize}
\end{definition}

\begin{theorem}[$\textnormal{locELL}_{\textnormal{comp}}$ soundness]
Let $\Phi$ be an interpretation basis. Let $\pi$ be a derivation in $\textnormal{locELL}_{\textnormal{comp}}$ of conclusion $\Delta\vdash\Gamma;$. Then $I_{\Phi}(\pi)$ is a successful project in $I_{\Phi}(\Delta\vdash\Gamma;)$.
\end{theorem}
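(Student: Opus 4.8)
The plan is to argue by induction on the derivation $\pi$, establishing for every inference rule of $\textnormal{locELL}_{\textnormal{comp}}$ the two conjuncts of the statement: that the project returned by \autoref{interpretationpreuvesellcomp} belongs to the conduct $I_{\Phi}(\Delta\vdash\Gamma;)=\bigparr\Delta^{\pol}\parr\bigparr\Gamma$ interpreting the conclusion, and that it is again successful (balanced, wager $0$, and a disjoint union of transpositions). The reason for working with the \emph{localized} calculus is that the side conditions on localized sequents — pairwise disjoint locations, $\Delta$ perennial, $\Gamma$ nonempty and made of behaviors — make all the disjointness hypotheses required by the definitions of the connectives and by the propositions of \autoref{sec_Exp} automatically true, so the inductive step reduces to bookkeeping plus an application of the relevant earlier result.

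For the base cases, the axiom rule is interpreted by a $\de{Fax}$, i.e. a delocation of the graphing with two mutually inverse edges realizing a translation; it is manifestly successful, and it lies in the behavior interpreting $\vdash (X_{i}(j))^{\pol},X_{i}(j');$ because pairing it with a project of the orthogonal behavior amounts to running the identity. The $\cond{T}$ rule is interpreted by $\de{0}_{\sharp\Gamma}=(0,\emptyset)$, trivially successful and a member of the full behavior $\cond{T}_{\sharp\Gamma}$. The rules $\parr$, $\parr^{mix}$, $\otimes^{pol}_{g}$, $\cond{1}_{g}$, $\cond{1}_{d}$, $\forall$ do not alter the underlying graphing (at most tensoring with the empty graphing on the empty carrier), so successfulness is immediate and membership follows by unfolding the definitions of $\parr$, $\otimes$, $\cond{1}$ and $\cond{\forall}$ (for $\forall$ the eigenvariable condition ensures the bound location occurs nowhere else). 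The binary $\otimes$ rule yields $I_{\Phi}(\pi_{1})\otimes I_{\Phi}(\pi_{2})$, a disjoint union of transpositions on disjoint carriers — hence successful — and an element of $\cond{C_{1}\otimes C_{2}}$ by definition. For the additive rules, $\oplus_{i}$ and $\text{weak}$ tensor the inductive project with some $\de{0}_{V}$ (using \autoref{compintoplus}, resp. the weakening proposition for perennial conducts), $\top$ reuses the $\de{0}$ project, and $\with$ is handled as in \cite{seiller-goia}: the two inductive projects are merged into a single thick graphing whose dialect is the disjoint union of the two dialects, which remains balanced and is a disjoint union of transpositions on each slice of the dialect, with membership in $\cond{C_{1}\with C_{2}}$ following from \autoref{distributivity} restricted to the context. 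The $\exists$ rule composes with a tensor of renaming $\de{Fax}$'s and so preserves successfulness. Finally, the $\text{cut}$ and $\text{cut}^{pol}$ rules use the execution: one checks, exactly as in the proof of the Compositionality proposition, that the alternating paths between two disjoint unions of transpositions again come in mutually inverse pairs, so the execution of two successful projects is successful, and membership follows from the adjunction.

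The main obstacle is the promotion rule $\oc$. Following \autoref{interpretationpreuvesellcomp}, one must (i) delocate $I_{\Phi}(\pi_{1})$ so that the conduct interpreting its conclusion takes the shape $\cond{\oc(\phi_{1}(A_{1})\multimap(\phi_{2}(A_{2})\multimap\dots\multimap\phi_{n+1}(A_{n+1})))}$, (ii) apply $n$ successive instances of the functorial promotion project $\de{prom}$ of \autoref{promfonct} to land in $\cond{\oc(\phi_{1}(A_{1}))\multimap\dots\multimap\oc(\phi_{n+1}(A_{n+1}))}$ — the conduct interpreting the conclusion of the promotion ``rule'' rewritten as a derivation of pre-sequents — and (iii) tensor with $I_{\Phi}(\pi_{2})$. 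The delicate point is the bookkeeping of locations: one must verify that the iterated use of $\Omega(\cdot\times[0,1])$ in the location of $\oc A$ is compatible with the delocations and with the combinatorics encoded by the transformations $T_{\tau},T_{\theta}$ inside $\de{prom}$, so that \autoref{promfonct} really applies at each of the $n$ steps. Successfulness is then inherited: $\de{prom}=(0,T\cup P)$ is itself a disjoint union of transpositions, each of $T$ and $P$ consisting of an edge realized by $\Omega\circ(\text{Id}\times T_{\tau})\circ\Omega^{-1}$ (resp. $\Omega\circ(\text{Id}\times T_{\theta})\circ\Omega^{-1}$) together with its inverse; since the execution of successful projects is successful, so is the result of iterating $\de{prom}$ and tensoring with $I_{\Phi}(\pi_{2})$. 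The contraction rule $ctr$ is settled in the same spirit, by executing the inductive project against the contraction project $\de{Ctr}^{\psi}_{\phi\cup\theta}$ of \autoref{contractiongen}, the hypothesis $\Gamma\neq\emptyset$ providing the behavior that proposition requires. Once every rule is checked, the induction closes.
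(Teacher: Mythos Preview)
Your proposal is correct and follows the same approach as the paper: an induction on the derivation in which the multiplicative, additive and quantifier cases are discharged by the results of the earlier papers, while the exponential rules are handled by observing that the functorial promotion project $\de{prom}$ of \autoref{promfonct} and the contraction project of \autoref{contractiongen} are themselves successful (so that executing against them preserves successfulness), and the unit rules leave the interpretation unchanged. The paper's own argument is much terser---it simply points to these earlier results and to \autoref{promfonct} and \autoref{contractiongen}---but your expanded case analysis is exactly the content that this pointer unfolds to.
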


\begin{proof}
The proof is a simple consequence of of the proposition and theorems proved before hand. Indeed, the case of the rules of multiplicative additive linear logic was already treated in our previous papers \cite{seiller-goim, seiller-goiadd}. The only rules we are left with are the rules dealing with exponential connectives and the rules about the multiplicative units. But the implementation of the functorial promotion (\autoref{promfonct}) uses a successful project do not put any restriction on the type of conducts we are working with, and the contraction project (\autoref{contraction} and \autoref{contractiongen}) is successful. Concerning the multiplicative units, the rules that introduce them do not change the interpretations.
\end{proof}

As it was remarked in our previous papers, one can choose an enumeration of the occurrences of variables in order to ``localise'' any formula $A$ and any proof $\pi$ of $\textnormal{ELL}_{\textnormal{comp}}$: we then obtain formulas $A^{e}$ and proofs $\pi^{e}$ of $\textnormal{locELL}_{\textnormal{comp}}$. The following theorem is therefore a direct consequence of the preceding one.

\begin{theorem}[Full $\textnormal{ELL}_{\textnormal{comp}}$ Soundness]
Let $\Phi$ be an interpretation basis, $\pi$ an $\textnormal{ELL}_{\textnormal{comp}}$ proof of conclusion $\Delta\vdash \Gamma;$ and $e$ an enumeration of the occurrences of variables in the axioms in $\pi$. Then $I_{\Phi}(\pi^{e})$ is a successful project in $I_{\Phi}(\Delta^{e}\vdash \Gamma^{e};)$.
\end{theorem}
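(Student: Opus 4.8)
The plan is to obtain this statement from the $\textnormal{locELL}_{\textnormal{comp}}$ soundness theorem by the localization procedure already used in the earlier papers of the series \cite{seiller-goim,seiller-goia,seiller-goig}. The whole mathematical content having been discharged in that theorem, what remains is purely a matter of transporting an $\textnormal{ELL}_{\textnormal{comp}}$ derivation into the localized calculus and reading off the conclusion.

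First I would recall the localization map. Given a derivation $\pi$ of $\Delta\vdash\Gamma;$ in $\textnormal{ELL}_{\textnormal{comp}}$ and an enumeration $e$ of the occurrences of variables appearing in the axiom rules of $\pi$, one assigns to each such occurrence a distinct localized variable $X_{i}(j)$, and then propagates this assignment through the derivation tree: every formula of $\textnormal{ELL}_{\textnormal{comp}}$ occurring in $\pi$ acquires a location built inductively from the locations of the atomic occurrences it contains, exactly as prescribed by the clauses defining the formulas of $\textnormal{locELL}_{\textnormal{comp}}$ (with $\oc$ and $\wn$ sending a location $X$ to $\Omega(X\times[0,1])$). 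Whenever a binary rule ($\otimes$, $\text{cut}$, $\text{cut}^{pol}$, $\oc$) combines two subderivations whose localized conclusions would share locations, one first delocates one of the two subderivations so that the disjointness requirement of $\textnormal{locELL}_{\textnormal{comp}}$ sequents is met; for the $\with$ rule the two premises must on the contrary be localized coherently on a common carrier, again as in \cite{seiller-goia}. This produces a derivation $\pi^{e}$ of $\textnormal{locELL}_{\textnormal{comp}}$ whose conclusion is a localized version $\Delta^{e}\vdash\Gamma^{e};$ of $\Delta\vdash\Gamma;$.

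Then I would check that $\pi^{e}$ is indeed a legal $\textnormal{locELL}_{\textnormal{comp}}$ derivation: every rule of $\textnormal{ELL}_{\textnormal{comp}}$ lifts to the corresponding rule (or, for the $\oc$ rule, to the small derivation of pre-sequents displayed in \autoref{interpretationpreuvesellcomp}) once locations are fixed; the B/N/P typing discipline is preserved since the localization does not alter the logical structure of formulas, only decorates atoms with carriers; the side conditions ($\Gamma\neq\emptyset$, the eigenvariable condition $X\notin\freevar{\Gamma}$ for $\forall$, the pairwise disjointness of locations in each sequent) are all satisfied by construction. The only delicate point is the bookkeeping of locations through the exponential rule and the quantifier rules, but this is exactly the routine verification carried out in the previous papers; I expect it to be the sole, and mild, obstacle.

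Finally, applying the $\textnormal{locELL}_{\textnormal{comp}}$ soundness theorem to the derivation $\pi^{e}$ yields that $I_{\Phi}(\pi^{e})$ is a successful project in $I_{\Phi}(\Delta^{e}\vdash\Gamma^{e};)$, which is precisely the statement to be proved.
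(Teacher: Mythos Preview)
Your proposal is correct and follows exactly the paper's approach: the paper states that this theorem is a direct consequence of the $\textnormal{locELL}_{\textnormal{comp}}$ soundness theorem via the localization procedure of the earlier papers, and gives no further proof. If anything, you have spelled out more of the localization bookkeeping than the paper itself does.
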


As explained in the introduction, this sequent calculus is not ideal for an \ELL-like system. Indeed, while this system is subject to a restriction similar in spirit to that of Elementary Linear Logic (i.e.\ no digging and dereliction rules), it has a intuitionnistic flavour that is somehow orthogonal to what makes \ELL an interesting system. Indeed, the modified promotion rule is in fact the rule for intuitionnistic implication, disguised under linear logic apparatus. The system $\textnormal{ELL}_{\textnormal{comp}}$ is therefore a restriction of the usual \ELL sequent calculus to a set of \enquote{intuitionnistic} proofs. However the main property of \ELL, namely that it provides a characterisation of elementary time function \cite{danosjoinet}, uses non-intuitionnistic (in this sense) proofs in an essential way, as pointed out to the author by Damiano Mazza. It is therefore important to obtain a less restricted sequent calculus. This is done in the next section, through the introduction of a notion of polarised conducts that generalise the notions of perennial/coperennial conducts.

\section{Contraction and Soundness for Polarised Conducts}\label{polarise}

\subsection{Definitions and Properties}

In this section, we consider a variation on the definition of additive connectives, which is constructed from the definition of the formal sum $\de{a+b}$ of projects. Let us first try to explain the difference between the usual additives $\with$ and $\oplus$ considered until now and the new additives $\tilde{\with}$ and $\tilde{\oplus}$ defined in this section. The conduct $\cond{A\with B}$ contains all the tests that are necessary for the set $\{\de{a'}\otimes\de{0}~|~\de{a'}\in\cond{A}^{\pol}\}\cup\{\de{b'}\otimes\de{0}~|~\de{b'}\in\cond{B}^{\pol}\}$ to generate the conduct $\cond{A\oplus B}$, something for which the set $\de{a+b}$ is not sufficient. For the variant of additives considered in this section, it is the contrary that happens: the conduct $\cond{A\tilde{\with} B}$ is generated by the projects of the form $\de{a+b}$, but it is therefore necessary to add to the conduct $\cond{A\tilde{\oplus} B}$ all the needed tests.

\begin{definition}
Let $\cond{A,B}$ be conducts of disjoint carriers. We define $\cond{A\tilde{\with} B}=\cond{(A+B)^{\pol\pol}}$. Dually, we define $\cond{A\tilde{\oplus} B}=\cond{(A^{\pol}\tilde{\with} B^{\pol})^{\pol}}$.
\end{definition}

These connectives will be useful for showing that the inclusion $\oc\cond{(A\tilde{\with}B)}\subset \cond{\oc A\otimes\oc B}$ holds when $\cond{A,B}$ are behaviours. We will first dwell on some properties of these connectives before showing this inclusion. Notice that if one of the two conducts $\cond{A,B}$ is empty, then $\cond{A\tilde{\with} B}$ is empty. Therefore, the behaviour $\cond{0}_{\emptyset}$ is a kind of absorbing element for $\tilde{\with}$. But the latter connective also has a neutral element, namely the neutral element $\cond{1}$ of the tensor product! Notice that the fact that $\tilde{\with}$ and $\otimes$ share the same unit appeared in Girard's construction\footnote{Our construction \cite{seiller-goiadd} differs slightly from Girard's, which explains why our additives don't share the same unit as the multiplicatives.} of geometry of interaction in the hyperfinite factor \cite{goi5}.

Notice that at the level of denotational semantics, this connective is almost the same as the usual $\with$ (apart from units). The differences between them are erased in the quotient operation.

\begin{proposition}
Distributivity for $\tilde{\with}$ and $\tilde{\oplus}$ is satisfied for behaviours.
\end{proposition}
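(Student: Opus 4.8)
The plan is to re‑use the proof of \autoref{distributivity}. That proposition provides a distributivity project $\de{distr}$ lying in the behavior
\begin{equation*}
\cond{((\phi(A)\!\multimap\! \theta(B))\!\with\! (\psi(A)\!\multimap\! \rho(C)))\!\multimap\! (A\!\multimap\! (B\!\with\! C))},
\end{equation*}
and what is to be shown is the analogous statement with $\tilde{\with}$ in place of $\with$ throughout — for behaviors $\cond{A,B,C}$ and pairwise disjoint delocations $\phi,\psi,\theta,\rho$ — together with its De Morgan dual for $\tilde{\oplus}$, which then follows by orthogonality from the definition $\cond{A\tilde{\oplus}B}=\cond{(A^{\pol}\tilde{\with}B^{\pol})^{\pol}}$.

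The main observation is that the construction of $\de{distr}$ in \autoref{distributivity} — a combination of delocated $\de{Fax}$'s sharing the $\cond{A}$-location across the two additive slices — is purely geometric and indifferent to which flavour of additive sits at the target; the one point where that proof genuinely uses a property of $\with$ is, as recorded in \autoref{ellbehav}, the inclusion $\cond{A+B}\subseteq\cond{A\with B}$, which holds only for behaviors. For the $\tilde{\with}$-variant this inclusion becomes the tautology that a set of projects is contained in its bi-orthogonal closure, since by definition $\cond{A\tilde{\with}B}=\cond{(A+B)^{\pol\pol}}$, so this step costs nothing. Concretely, I would reproduce the verification of \autoref{distributivity} with $\tilde{\with}$ substituted, and check — using that plugging distributes over the formal sum of projects, so that $\de{distr}\plug(\de{u}+\de{v})\plug\de{a}$ reduces (up to a scalar and universal equivalence) to the sum $\de{u}\plug\de{a}+\de{v}\plug\de{a}$ — that the result lands in $\cond{\theta(B)+\rho(C)}\subseteq\cond{\theta(B)\tilde{\with}\rho(C)}$, hence in the delocated conduct $\cond{B\tilde{\with}C}$; one then concludes that $\de{distr}$ lies in the required $\multimap$-conduct by \autoref{propositionethsendcondsend} together with \autoref{ethtenscondtens}.

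Two further points complete the argument. First, the behavior hypothesis on $\cond{A,B,C}$ is still needed, but now only to guarantee that the conduct $\cond{((\phi(A)\!\multimap\!\theta(B))\!\tilde{\with}\!(\psi(A)\!\multimap\!\rho(C)))\!\multimap\!(A\!\multimap\!(B\!\tilde{\with}\!C))}$ is itself a behavior — which is what the statement asserts — and this one checks by the familiar bookkeeping on wagers: inflation and wager-freeness of the orthogonals (\autoref{fellorth}, \autoref{wagerfreeorth}) together with \autoref{homothetie}; the degenerate behaviors $\cond{0}$ and $\cond{\top}$, where $\cond{A\tilde{\with}B}$ can collapse to the empty behavior, are dispatched by hand, distributivity being vacuous there. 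Second, an alternative and perhaps cleaner route is to prove the comparison $\cond{U\tilde{\with}V}\subseteq\cond{U\with V}$ and $\cond{U\with V}\subseteq\cond{U\tilde{\with}V}$ for behaviors $\cond{U},\cond{V}$ of disjoint carriers — the first being exactly the recorded inclusion, the second its dual — after which the very same $\de{distr}$ of \autoref{distributivity} works verbatim, $\multimap$ being antitone in its domain and monotone in its codomain. I expect the main obstacle to be the same in either presentation: checking that the extra ``tests'' carried by $\with$ but absent from $\tilde{\with}$ are compensated by the inflation and wager-freeness properties of behaviors — in the reduction route this is precisely the reverse inclusion $\cond{U\with V}\subseteq\cond{U\tilde{\with}V}$, whose proof requires an explicit description of $(U+V)^{\pol}$ and genuinely fails outside the class of behaviors (as the mismatch of units $\cond{1}$ versus $\cond{\top}$ already makes clear).
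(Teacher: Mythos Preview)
Your main approach is correct and matches the paper's: the paper's entire proof reads ``Using the same project than in the proof of \autoref{distributivity}, the proof consists in a simple computation,'' and your first route is exactly this, with the key observation that the inclusion $\cond{A+B}\subset\cond{A\tilde{\with}B}$ is now a tautology rather than a property of behaviors.

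Two small comments. First, you slightly over-read the statement: ``satisfied for behaviors'' means the \emph{hypothesis} is that $\cond{A,B,C}$ are behaviors, not that the resulting $\multimap$-conduct must be shown to be a behavior; the paper does not claim or check the latter here. Second, your alternative route --- proving $\cond{U\tilde{\with}V}=\cond{U\with V}$ as conducts for behaviors $\cond{U},\cond{V}$ --- is more dangerous than you suggest: the paper explicitly remarks just before this proposition that the two connectives coincide only ``at the level of denotational semantics'' (i.e.\ after quotienting by observational equivalence), with ``the differences between them erased in the quotient operation.'' That strongly indicates the conduct-level equality you would need genuinely fails, so this route should be abandoned rather than pursued as a cleaner variant.
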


\begin{proof}
Using the same project than in the proof of \autoref{distributivity}, the proof consists in a simple computation.
\end{proof}

\begin{proposition}
Let $\cond{A,B}$ be behaviours. Then $$\{\de{a}\otimes\de{0}_{V^{B}}~|~\de{a}\in\cond{A}\}\cup\{\de{b}\otimes\de{0}_{V^{A}}~|~\de{b}\in\cond{B}\}\subset\cond{A\tilde{\oplus} B}$$
\end{proposition}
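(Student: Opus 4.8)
The goal is to show that for behaviors $\cond{A,B}$ of disjoint carriers, every $\de{a}\otimes\de{0}_{V^{B}}$ with $\de{a}\in\cond{A}$ — and symmetrically every $\de{b}\otimes\de{0}_{V^{A}}$ with $\de{b}\in\cond{B}$ — lies in $\cond{A\tilde{\oplus} B}=\cond{(A^{\pol}\tilde{\with} B^{\pol})^{\pol}}=(({A^{\pol}}+{B^{\pol}})^{\pol\pol})^{\pol}=({A^{\pol}}+{B^{\pol}})^{\pol}$. So, unwinding the definitions, it suffices to prove that for every $\de{a'}\in\cond{A}^{\pol}$ and every $\de{b'}\in\cond{B}^{\pol}$, the project $\de{a'+b'}$ is orthogonal to $\de{a}\otimes\de{0}_{V^{B}}$ (the case of $\de{b}\otimes\de{0}_{V^A}$ being symmetric). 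The plan is to compute the pairing $\sca{(a\otimes 0)}{(a'+b')}$ directly from the definitions of the formal sum $\de{a'+b'}$, the tensor $\de{a}\otimes\de{0}_{V^B}$, and the orthogonality pairing $\sca{\cdot}{\cdot}$, and to reduce it to $\sca{a}{a'}$, which is nonzero and finite because $\de{a}\in\cond{A}$ and $\de{a'}\in\cond{A}^{\pol}$.

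First I would recall that $\de{a}\otimes\de{0}_{V^B}$ is, by the definition of $\otimes$ on projects, the project whose underlying graphing is that of $\de{a}$ placed on carrier $V^A\cup V^B$ with no edges touching $V^B$, and whose wager is that of $\de{a}$ times the appropriate unit factor (and $0$ has wager $0$). Since $V^B$-components carry no edges, any alternating circuit between $\de{a}\otimes\de{0}_{V^B}$ and $\de{a'+b'}$ must live entirely over $V^A$: no circuit can use the part of $\de{a'+b'}$ supported on $V^B$, because there is nothing on the $V^B$ side of $\de{a}\otimes\de{0}_{V^B}$ to alternate with. Hence the set of circuits — and therefore the measurement $\meas{\cdot,\cdot}$ term — coincides with the set of circuits between $\de{a}$ and (the $V^A$-part of) $\de{a'+b'}$, which is just the set of circuits between $\de{a}$ and $\de{a'}$. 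The wager terms likewise reduce: the $\de{b'}$ contribution is multiplied by the total weight $\sum_i \alpha^{A\otimes 0}_i$ of the $\de{a}\otimes\de{0}$ side restricted to how it meets $V^B$, but since $\de{0}_{V^B}$ contributes only a zero wager and $V^B$ carries no graphing, this contribution vanishes; what survives is exactly the expression defining $\sca{a}{a'}$. Putting these together gives $\sca{(a\otimes 0)}{(a'+b')}=\sca{a}{a'}\neq 0,\infty$, so $\de{a}\otimes\de{0}_{V^B}\in(A^{\pol}+B^{\pol})^{\pol}=\cond{A\tilde{\oplus}B}$.

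The main obstacle, and the place where care is needed, is bookkeeping the wager/unit coefficients: the pairing $\sca{\cdot}{\cdot}$ involves the sums $\sum_i\alpha_i$ of the formal-sum coefficients of each side, and the formal sum $\de{a'+b'}$ must be read with the correct carrier $V^A\cup V^B$ (so that it is a legitimate project orthogonal-candidate) and correct index set; one must check that the $\de{b'}$-half genuinely contributes nothing to the pairing with $\de{a}\otimes\de{0}_{V^B}$, i.e. both its wager cross-term and its circuit cross-term drop out. This is exactly the kind of elementary-but-fiddly computation carried out in the proofs of \autoref{compintoplus} and \autoref{tenseurperennecomp}, and it follows the same pattern: disjointness of carriers forces the relevant circuits and cross terms to vanish, and one is left with the pairing over $V^A$ alone. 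Once that computation is in place the result is immediate, and the symmetric statement for $\de{b}\otimes\de{0}_{V^A}$ follows by exchanging the roles of $\cond{A}$ and $\cond{B}$.
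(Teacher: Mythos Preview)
Your approach is essentially the paper's: unwind $\cond{A\tilde\oplus B}=(\cond{A}^{\pol}+\cond{B}^{\pol})^{\pol}$ and show $\sca{a\otimes 0}{a'+b'}=\sca{a}{a'}$ by splitting along the formal sum. The paper does exactly this in two lines, using bilinearity of the pairing in the second argument to write $\sca{a\otimes 0}{a'+b'}=\sca{a\otimes 0}{a'}+\sca{a\otimes 0}{b'}$, then observing that the second summand is zero.

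One imprecision in your write-up deserves correction. You attribute the vanishing of the $\de{b'}$-contribution to ``$\de{0}_{V^B}$ contributing only a zero wager'' and to ``disjointness of carriers.'' Disjointness of carriers does kill the measurement term $\meas{A\cup 0,B'}$ (no alternating circuit can exist between graphings with disjoint supports), but that is not enough by itself: the pairing $\sca{a\otimes 0}{b'}$ still contains the wager cross-terms $(\text{wager of }\de{a\otimes 0})\cdot\unit{B'}$ and $b'\cdot\unit{A\otimes 0}$. These vanish because both $\de{a}$ and $\de{b'}$ are wager-free, and \emph{that} is exactly where the hypothesis that $\cond{A},\cond{B}$ are behaviors enters (elements of a proper behavior and of its orthogonal have null wager). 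The paper's proof makes this explicit in its one justificatory clause: ``Using the fact that $\de{g}$ and $\de{a}$ have null wagers.'' Without invoking the behavior hypothesis here, the computation does not close; so state it rather than folding it into a remark about $\de{0}_{V^B}$.
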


\begin{proof}
We will show only one of the inclusions, the other one can be obtained by symmetry. Choose $\de{f+g}\in\cond{A^{\pol}+B^{\pol}}$ and $\de{a}\in\cond{A}$. Then:
\begin{eqnarray*}
\sca{f+g}{a\otimes 0}&=&\sca{f}{a\otimes 0}+\sca{g}{a\otimes 0}\\
&=&\sca{f}{a}
\end{eqnarray*}
Using the fact that $\de{g}$ and $\de{a}$ have null wagers.
\end{proof}

Recall (this notion is defined and studied in our second paper \cite{seiller-goiadd}) that a behaviour $\cond{A}$ is \emph{proper} if both $\cond{A}$ and its orthogonal $\cond{A^{\pol}}$ are non-empty. Proper behaviour can be characterised as those conducts $\cond{A}$ such that:
\begin{itemize}
\item $(a,A)\in\cond{A}$ implies that $a=0$;
\item for all $\de{a}\in\cond{A}$ and $\lambda\in\realN$, the project $\de{a}+\lambda\de{0}\in\cond{A}$;
\item $\cond{A}$ is non-empty.
\end{itemize}

\begin{proposition}
Let $\cond{A,B}$ be proper behaviours. Then every element in $\cond{A\tilde{\oplus}B}$ is observationally equivalent to an element in $\{\de{a}\otimes\de{0}_{V^{B}}~|~\de{a}\in\cond{A}\}\cup\{\de{b}\otimes\de{0}_{V^{A}}~|~\de{b}\in\cond{B}\}\subset\cond{A\tilde{\oplus} B}$.
\end{proposition}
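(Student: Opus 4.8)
\emph{Set-up.} The plan is to unfold the definition of $\tilde{\oplus}$, to exploit the fact that in the formal sum $\de{f+g}$ the $V^{A}$-part and the $V^{B}$-part of a test sit on separate slices, and then to derive a dichotomy forcing $\de{h}$ to be, up to observational equivalence, concentrated on one side. First I would record that $\cond{A\tilde{\oplus}B}=\cond{(A^{\pol}+B^{\pol})^{\pol}}$ (a triple orthogonal being a simple one) and, dually, $\cond{A\tilde{\oplus}B}^{\pol}=\cond{A^{\pol}\tilde{\with}B^{\pol}}=\cond{(A^{\pol}+B^{\pol})^{\pol\pol}}$. Since $\cond{A},\cond{B}$ are proper behaviors so are $\cond{A^{\pol}},\cond{B^{\pol}}$, so $A^{\pol}+B^{\pol}$ is non-empty, wager-free, and stable under $\de{x}\mapsto\de{x}+\lambda\de{0}$; \autoref{fellorth} and \autoref{wagerfreeorth} then give that every project of $\cond{A\tilde{\oplus}B}$ and of $\cond{A\tilde{\oplus}B}^{\pol}$ is wager-free. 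By the preceding proposition the set $\{\de{a}\otimes\de{0}_{V^{B}}\mid\de{a}\in\cond{A}\}\cup\{\de{b}\otimes\de{0}_{V^{A}}\mid\de{b}\in\cond{B}\}$ is already contained in $\cond{A\tilde{\oplus}B}$, so what must be produced, for a fixed $\de{h}=(0,H)\in\cond{A\tilde{\oplus}B}$, is one such project observationally equivalent to it.

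\emph{Decomposition and dichotomy.} I would refine $H$ so that each edge has its source and its target each contained in $V^{A}$ or in $V^{B}$, writing $H=H^{AA}\cup H^{AB}\cup H^{BA}\cup H^{BB}$. The key point is that in $\de{f+g}$ the slices inherited from $F$ (edges inside $V^{A}$) and those inherited from $G$ (edges inside $V^{B}$) are kept disjoint, so an alternating cycle counted in $\meas{H_{i},(F+G)_{k}}$ lies entirely inside $V^{A}$ and uses only $H^{AA}$-edges when $k$ is an $F$-slice, and entirely inside $V^{B}$ using only $H^{BB}$-edges when $k$ is a $G$-slice. Writing $\de{h^{AA}},\de{h^{BB}}$ for the restrictions of $(0,H^{AA})$ to $V^{A}$ and of $(0,H^{BB})$ to $V^{B}$, this yields $\sca{h}{f+g}=\sca{h^{AA}}{f}+\sca{h^{BB}}{g}$ for all $\de{f}\in\cond{A^{\pol}},\de{g}\in\cond{B^{\pol}}$. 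Using homothety (\autoref{homothetie}) on $\cond{A^{\pol}}$ and $\cond{B^{\pol}}$, the value sets $S_{A}=\{\sca{h^{AA}}{f}\}$ and $S_{B}=\{\sca{h^{BB}}{g}\}$ are stable under multiplication by nonzero reals, hence each equals $\{0\}$ or contains every nonzero real (the value $\infty$ cannot occur in either, for that would contradict $\de{h}\in\cond{A\tilde{\oplus}B}$). Orthogonality of $\de{h}$ with every $\de{f+g}$ means $0\notin S_{A}+S_{B}$, which is incompatible with both $S_{A}$ and $S_{B}$ containing a nonzero element; so, exchanging the roles of $\cond{A}$ and $\cond{B}$ if necessary, I may assume $S_{B}=\{0\}$, and then $\sca{h^{AA}}{f}\neq 0,\infty$ for every $\de{f}\in\cond{A^{\pol}}$ says exactly that $\de{a_{1}}:=\de{h^{AA}}$ is orthogonal to all of $\cond{A^{\pol}}$, i.e.\ $\de{a_{1}}\in\cond{A}$. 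The candidate is $\de{h'}=\de{a_{1}}\otimes\de{0}_{V^{B}}\in\cond{A\tilde{\oplus}B}$.

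\emph{Observational equivalence, and the main obstacle.} It remains to check $\sca{h}{c}=\sca{h'}{c}$ for every $\de{c}=(0,C)\in\cond{A\tilde{\oplus}B}^{\pol}$. Testing $\de{c}$ against the projects $\de{a_{0}}\otimes\de{0}_{V^{B}}$ and $\de{0}_{V^{A}}\otimes\de{b_{0}}$ (which belong to $\cond{A\tilde{\oplus}B}$) and running the same slicing argument gives, defining $\de{c^{AA}},\de{c^{BB}}$ as before, that $\de{c^{AA}}\in\cond{A^{\pol}}$ and $\de{c^{BB}}\in\cond{B^{\pol}}$. Splitting the alternating cycles between $H$ and $C$, those lying entirely in $V^{A}$ use only $H^{AA},C^{AA}$ and their total weight is exactly $\sca{a_{1}}{c^{AA}}=\sca{h'}{c}$, since $\de{h'}$ has no edge in $V^{B}$. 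Hence everything reduces to showing that the cycles meeting $V^{B}$ contribute $0$ — this is the step I expect to be the main obstacle. The argument I would use is to re-express the total weight of those cycles, by repeated use of the adjunction/trefoil property (\autoref{numadjunct}) to "execute away" the $V^{A}$-segments and the transition edges $H^{AB},H^{BA},C^{AB},C^{BA}$, as a scalar product of $\de{h^{BB}}$ against a project which, by construction and by the constraints already established on $\de{c}$, lies in $\cond{B^{\pol}}$; since $S_{B}=\{0\}$ means precisely that $\de{h^{BB}}$ measures to $0$ against every element of $\cond{B^{\pol}}$, this contribution vanishes. Making this rewriting precise — in particular, checking that the graphing assembled from the $V^{A}$-side together with the transition edges is indeed orthogonal to $\cond{B}$ — is the delicate point; alternatively, one may transfer the analogous decomposition of $\cond{A\oplus B}$ proved in~\cite{seiller-goia}, since the difference between $\tilde{\oplus}$ and $\oplus$ is erased by observational equivalence. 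Granting the vanishing, $\de{h}$ is observationally equivalent to $\de{h'}$, which lies in the required set, and (by the symmetric case when $S_{A}=\{0\}$) the proposition follows.
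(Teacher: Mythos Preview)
Your dichotomy argument is the paper's, only longer. The paper does not decompose $H$ into $H^{AA},H^{AB},H^{BA},H^{BB}$; it simply uses bilinearity of the pairing over formal sums to write $\sca{c}{\lambda a+\mu b}=\lambda\sca{c}{a}+\mu\sca{c}{b}$ for $\de{a}\in\cond{A^{\pol}}$, $\de{b}\in\cond{B^{\pol}}$, and then applies homothety exactly as you do to force one summand to vanish identically. Your identity $\sca{h}{f+g}=\sca{h^{AA}}{f}+\sca{h^{BB}}{g}$ is the same equation, since any alternating cycle between $H$ and a graphing supported on $V^{A}$ can only use $H^{AA}$-edges; the edge-splitting is correct but adds nothing. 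For the candidate, the paper (after the $A\leftrightarrow B$ swap relative to your choice) takes $\de{c\plug 0_{V^{A}}}$, which is exactly your $\de{h^{BB}}$, and checks membership via the adjunction $\sca{b\otimes 0_{V^{A}}}{c}=\sca{b}{c\plug 0_{V^{A}}}$.

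The observational-equivalence step is where your proposal has a genuine gap, and your own diagnosis of it is accurate. The paper is terse here (it simply asserts $\de{c\plug 0}\cong_{\cond{A\tilde{\oplus}B}}\de{c}$), but your proposed fix does not close the gap either. You want to repackage the cycles between $H$ and a general test $\de{t}\in\cond{(A\tilde{\oplus}B)^{\pol}}$ that meet $V^{B}$ as $\sca{h^{BB}}{(\text{something in }\cond{B^{\pol}})}$, so that $S_{B}=\{0\}$ kills them. But the ``something'' is built from $T$ together with the transition pieces $H^{AB},H^{BA}$ and the $V^{A}$-side of both graphings, and there is no reason this composite should be orthogonal to \emph{all} of $\cond{B}$: that is a statement about every $\de{b}\in\cond{B}$, not about the fixed $\de{h^{BB}}$, and nothing in your argument establishes it. The trefoil/adjunction rewriting you invoke will certainly let you move pieces around, but it will not by itself produce a membership in $\cond{B^{\pol}}$. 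Your fallback of importing the $\oplus$ analysis from \cite{seiller-goia} is more promising, but then you must first make precise how $\tilde{\oplus}$ and $\oplus$ compare at the level of observational equivalence, which is close to the statement you are trying to prove.
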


\begin{proof}
Let $\de{c}\in\cond{A\tilde{\oplus} B}$. Since $\cond{(A^{\pol}+B^{\pol})^{\pol}}=\cond{A\tilde{\oplus} B}$, we know that $\de{c}\poll\de{a+b}$ for all $\de{a+b}\in\cond{A^{\pol}+B^{\pol}}$. By the homothety lemma (\autoref{homothetie}), we obtain, for all $\lambda,\mu$ non-zero real numbers $0$:
\begin{eqnarray*}
\sca{c}{\lambda a+\mu b}=\lambda\sca{c}{a}+\mu\sca{c}{b}\neq0,\infty
\end{eqnarray*}
We deduce that one expression among $\sca{c}{a}$ and $\sca{c}{b}$ is equal to $0$. Suppose, without loss of generality, that it is $\sca{c}{a}$. Then $\sca{c}{a'}=0$ for all $\de{a'}\in\cond{A^{\pol}}$. Thus $\sca{b}{c}\neq0,\infty$ for all $\de{b}\in\cond{B^{\pol}}$. But $\sca{b\otimes 0}{c}=\sca{b}{c\plug 0}$. We finally have that $\de{c\plug 0}\in\cond{B^{\pol}}$ and $\de{c\plug 0}\cong_{\cond{A\tilde{\oplus} B}}\de{c}$.
\end{proof}

\begin{proposition}
Let $\cond{A,B}$ be proper behaviours. Then $\cond{A\tilde{\with} B}$ is a proper behaviour.
\end{proposition}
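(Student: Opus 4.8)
The plan is to verify the three conditions in the characterisation of proper behaviors recalled just above, applied to $\cond{A\tilde{\with} B}=(A+B)^{\pol\pol}$, where $A+B=\{\de{a+b}\mid\de{a}\in\cond{A},\de{b}\in\cond{B}\}$. Everything will follow from \autoref{fellorth} and \autoref{wagerfreeorth}, applied first to the set $A+B$ and then to its orthogonal $(A+B)^{\pol}$; the only genuinely non-formal point is the non-emptiness of that orthogonal.

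First I would record that $A+B$ is a non-empty, wager-free set of projects of carrier $V^{A}\cup V^{B}$ which is moreover closed under adding empty slices. Non-emptiness (of $A+B$ and hence of $\cond{A\tilde{\with} B}$) is immediate: as $\cond{A},\cond{B}$ are proper they are non-empty, so any $\de{a}\in\cond{A}$, $\de{b}\in\cond{B}$ yield $\de{a+b}\in A+B\subseteq(A+B)^{\pol\pol}=\cond{A\tilde{\with} B}$. Wager-freeness of $A+B$ is immediate too: the characterisation of proper behaviors forces every project of $\cond{A}$ and of $\cond{B}$ to be wager-free, so $\de{a+b}$ has wager $0$. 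Closure under $+\lambda\de{0}$ uses that $\cond{A}$ is a behavior: for $\de{a+b}\in A+B$ and $\lambda\in\realN$ one has $\de{a}+\lambda\de{0}_{V^{A}}\in\cond{A}$ and $\de{(a+\lambda 0)+b}=\de{a+b}+\lambda\de{0}$, so the shifted project is again in $A+B$ (here it is essential, and consistent, that $+\lambda\de{0}$ alters the total weight $\unit{\cdot}$ but not the wager, which is exactly what makes both general lemmas applicable). Now feed these into the lemmas: \autoref{fellorth} applied to $A+B$ gives that $(A+B)^{\pol}$ is closed under $+\lambda\de{0}$, and \autoref{wagerfreeorth} applied to $A+B$ gives that every project of $(A+B)^{\pol}$ is wager-free.

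Next, granting that $(A+B)^{\pol}$ is non-empty, I would apply the same two lemmas to the set $(A+B)^{\pol}$ itself: by \autoref{fellorth} the conduct $\cond{A\tilde{\with} B}=(A+B)^{\pol\pol}$ is closed under $+\lambda\de{0}$, and by \autoref{wagerfreeorth} every $(c,C)\in\cond{A\tilde{\with} B}$ has $c=0$. Together with non-emptiness from the previous step, the three conditions characterising proper behaviors are met, and we conclude. It remains to handle the non-emptiness of $(A+B)^{\pol}=(\cond{A\tilde{\with} B})^{\pol}$, which is the only step that is not pure bookkeeping. Since $\cond{A},\cond{B}$ are behaviors one has $(A+B)^{\pol}=\cond{A^{\pol}\tilde{\oplus} B^{\pol}}$ by unfolding the definition of $\tilde{\oplus}$ and using biorthogonal closure of $\cond{A},\cond{B}$; and, as $\cond{A}^{\pol},\cond{B}^{\pol}$ are again behaviors, the inclusion $\{\de{a'}\otimes\de{0}_{V^{B}}\mid\de{a'}\in\cond{A}^{\pol}\}\cup\{\de{b'}\otimes\de{0}_{V^{A}}\mid\de{b'}\in\cond{B}^{\pol}\}\subseteq\cond{A^{\pol}\tilde{\oplus} B^{\pol}}$ established above shows $(A+B)^{\pol}$ is non-empty, because $\cond{A}^{\pol}\neq\emptyset$ by properness of $\cond{A}$. (Alternatively, one checks directly that for $\de{a'}\in\cond{A}^{\pol}$ the project $\de{a'}\otimes\de{0}_{V^{B}}$ is orthogonal to each $\de{a+b}\in A+B$, since $\sca{a'\otimes 0_{V^{B}}}{a+b}=\sca{a'}{(a+b)\plug\de{0}_{V^{B}}}$ and $(a+b)\plug\de{0}_{V^{B}}$ equals $\de{a}$ up to empty slices, hence still lies in $\cond{A}$.) I expect this non-emptiness point to be the main obstacle; the rest is a routine chain of invocations of \autoref{fellorth} and \autoref{wagerfreeorth}.
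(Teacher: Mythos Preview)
Your proof is correct and follows essentially the same route as the paper: establish that $A+B$ is non-empty, wager-free and closed under $+\lambda\de{0}$, apply \autoref{fellorth} and \autoref{wagerfreeorth} to obtain the corresponding properties for $(A+B)^{\pol}$, and use the preceding inclusion proposition (applied to $\cond{A}^{\pol},\cond{B}^{\pol}$) for the non-emptiness of $(A+B)^{\pol}=\cond{A^{\pol}\tilde{\oplus} B^{\pol}}$. The only cosmetic difference is that the paper stops once $(A+B)^{\pol}$ is shown to be a proper behavior and concludes by stability under orthogonality, whereas you unfold that step by re-applying \autoref{fellorth} and \autoref{wagerfreeorth} a second time.
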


\begin{proof}
By definition, $\cond{A\tilde{\with} B}=\cond{(A+B)}^{\pol\pol}$, where $\cond{A,B}$ are non empty and contain only single-sliced wager-free projects. Thus $\cond{A+B}$ is non empty and contains only single-sliced wager-free projects. As a consequence, $\cond{(A+B)^{\pol}}$ satisfies the inflation property. Moreover, since $\cond{A}$ has the inflation property, $\cond{A+B}$ has the inflation property: if $\de{a}+\de{b}\in\cond{A+B}$, we have that\ $\de{a+b+\lambda 0}=\de{(a+\lambda 0)+b}$. Thus $\cond{(A+B)^{\pol}}$ contains only wager-free projects. Moreover, $\cond{(A+B)^{\pol}}=\cond{A^{\pol}\tilde{\oplus}B^{\pol}}$ and it is therefore non-empty by the preceding proposition (because $\cond{A}^{\pol},\cond{B^{\pol}}$ are non empty). Then $\cond{(A+B)^{\pol}}$ is a proper behaviour, which allows us to conclude.
\end{proof}

\begin{proposition}\label{expiso}
Let $\cond{A,B}$ be behaviours. Then $\oc\cond{(A\tilde{\with}B)}\subset \cond{\oc A\otimes\oc B}$.
\end{proposition}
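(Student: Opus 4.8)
The plan is to reduce the inclusion to a statement about the generators of $\oc\cond{(A\tilde{\with}B)}$, to identify those generators with the exponentials of ``merged'' projects, and then to realise each such exponential as $\oc_{\Omega}\de a\otimes\oc_{\Omega}\de b$ conjugated by measure-inflating maps. Note first that there is no hope of routing the argument through contraction and functorial promotion, since the contraction rule of $\textnormal{ELL}_{\textnormal{comp}}$ is available only when the conclusion contains a behavior, whereas $\cond{\oc A\otimes\oc B}$ is merely perennial; so the proof must be carried out directly on graphings. If $\cond A$ or $\cond B$ is empty then $A+B=\emptyset$, hence $\cond{A\tilde{\with}B}=\cond{(A+B)^{\pol\pol}}$ and its exponential are both determined by $\emptyset^{\pol}$ and the inclusion is checked by hand; the case of a full behavior reduces to $\cond{\oc T}=\cond1$ and bookkeeping for empty graphings. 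So assume $\cond A,\cond B$ are proper behaviors, whence $\cond{A\tilde{\with}B}$ is a proper behavior (preceding proposition) and all projects below are wager-free. Write $E_{A}=\{\oc_{\Omega}\de a\ |\ \de a=(0,A)\in\cond A\text{ balanced}\}$, and likewise $E_{B}$ and $E_{A\tilde{\with}B}$, so that $\cond{\oc A}=E_{A}^{\pol\pol}$, $\cond{\oc B}=E_{B}^{\pol\pol}$, $\oc\cond{(A\tilde{\with}B)}=E_{A\tilde{\with}B}^{\pol\pol}$, and $\cond{\oc A\otimes\oc B}=(E_{A}\odot E_{B})^{\pol\pol}$ by \autoref{ethtenscondtens}. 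Since $\cond{\oc A\otimes\oc B}$ is a conduct it therefore suffices to prove $E_{A\tilde{\with}B}\subseteq\cond{\oc A\otimes\oc B}$, i.e. $\oc_{\Omega}\de c\in\cond{\oc A\otimes\oc B}$ for every balanced $\de c=(0,C)\in\cond{A\tilde{\with}B}$; the carriers agree because $\Omega$ is a bijection, so $\Omega((V^{A}\cup V^{B})\times[0,1])=\Omega(V^{A}\times[0,1])\cup\Omega(V^{B}\times[0,1])$.

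Next I would pin down which $\de c$ actually need treatment. Since $\cond{A\tilde{\with}B}=\cond{(A+B)^{\pol\pol}}$, and — modulo the universal-equivalence identifications of \autoref{seccontraction}, notably \autoref{fromETtoEequiv} and \autoref{univequivdialecte}, together with \autoref{homothetie} to normalise scalars — each generator $\de a+\de b$, after refining $\de a,\de b$ to dialects of equal size, is universally equivalent to the balanced project $\de c_{a,b}$ obtained by collapsing its two slices into one slice of dialect $D^{a}\sqcup D^{b}$, one argues that $E_{A\tilde{\with}B}^{\pol\pol}$ is already generated by $\{\oc_{\Omega}\de c_{a,b}\ |\ \de a\in\cond A,\de b\in\cond B\text{ balanced}\}$. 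A direct reading of the $\embed{\cdot}$/$\Omega$ construction then shows that $\oc_{\Omega}\de c_{a,b}$ encodes $D^{a}\sqcup D^{b}$ into a partition of the $[0,1]$-factor with $|D^{a}|+|D^{b}|$ cells; its edges coming from $\de a$ act over the union of the $|D^{a}|$ ``$\de a$-cells'' inside $\Omega(V^{A}\times[0,1])$ exactly as the cells of $\oc_{\Omega}\de a$ act over the whole of $[0,1]$, and symmetrically for $\de b$ (this is the non-surjectivity of promotion made concrete). Consequently there are maps $\rho_{A},\rho_{B}\in\mathfrak{mi}$ realising the stretchings of the ``$\de a$-cells'', resp. ``$\de b$-cells'', onto the full $[0,1]$-factor over $\Omega(V^{A}\times[0,1])$, resp. $\Omega(V^{B}\times[0,1])$, such that, with $\rho=\rho_{A}\cup\rho_{B}$, the graphing of $\oc_{\Omega}\de c_{a,b}$ equals $\rho^{-1}(\oc_{\Omega}A\cup\oc_{\Omega}B)\rho$, i.e. the graphing of $\oc_{\Omega}\de a\otimes\oc_{\Omega}\de b=(0,\oc_{\Omega}A\cup\oc_{\Omega}B)$ conjugated by $\rho$.

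Finally I would conclude by testing. Let $\de z=(z,Z)\in(\cond{\oc A\otimes\oc B})^{\pol}=(E_{A}\odot E_{B})^{\pol}$, so $\de z$ is orthogonal to every $\oc_{\Omega}\de a'\otimes\oc_{\Omega}\de b'$ with $\de a'\in\cond A,\de b'\in\cond B$ balanced. Conjugation by $\rho$ induces a weight-preserving bijection between the alternating circuits of $Z$ with $\rho^{-1}(\oc_{\Omega}A\cup\oc_{\Omega}B)\rho$ and those of $\rho Z\rho^{-1}$ with $\oc_{\Omega}A\cup\oc_{\Omega}B$; since $\unit{\oc_{\Omega}C}=1$ this gives $\sca{z}{\oc_{\Omega}c_{a,b}}=z+\meas{Z,\rho^{-1}(\oc_{\Omega}A\cup\oc_{\Omega}B)\rho}=\sca{(z,\rho Z\rho^{-1})}{\oc_{\Omega}\de a\otimes\oc_{\Omega}\de b}$. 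Running the same computation against an arbitrary $\oc_{\Omega}\de a'\otimes\oc_{\Omega}\de b'$, and using that $\rho^{-1}(\oc_{\Omega}A'\cup\oc_{\Omega}B')\rho$ is, modulo the graphing quotient, again the graphing of the exponential of a merged project of $\cond{A\tilde{\with}B}$, shows that $(z,\rho Z\rho^{-1})$ still lies in $(\cond{\oc A\otimes\oc B})^{\pol}$. Since $\oc_{\Omega}\de a\otimes\oc_{\Omega}\de b\in\cond{\oc A\otimes\oc B}$, the two are orthogonal, whence $\sca{z}{\oc_{\Omega}c_{a,b}}\neq0,\infty$. As $\de z$ was arbitrary, $\oc_{\Omega}\de c_{a,b}\in(\cond{\oc A\otimes\oc B})^{\pol\pol}=\cond{\oc A\otimes\oc B}$, as required.

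The main obstacle is the combinatorics behind the second and third paragraphs, which amount to an inverse of the functorial promotion of \autoref{promfonct}. One must verify that the $\embed{\cdot}$/$\Omega$ encoding sends a merged dialect $D^{a}\sqcup D^{b}$ to a partition of $[0,1]$ that uniformly refines the partitions attached to $D^{a}$ and to $D^{b}$ — so that the stretchings $\rho_{A},\rho_{B}$ genuinely belong to $\mathfrak{mi}$ and conjugate $\oc_{\Omega}\de c_{a,b}$ to $\oc_{\Omega}\de a\otimes\oc_{\Omega}\de b$ — and, in the converse direction, that conjugating an arbitrary test $\de z$ by $\rho$ keeps it in $(\cond{\oc A\otimes\oc B})^{\pol}$; the delicate point here is that the tests $\oc_{\Omega}\de a'\otimes\oc_{\Omega}\de b'$ against which $\de z$ is checked range over all dialect sizes, so the compressed graphings $\rho^{-1}(\oc_{\Omega}A'\cup\oc_{\Omega}B')\rho$ must be recognised, up to the equivalence of \autoref{seccontraction}, as exponentials of merges in $\cond{A\tilde{\with}B}$. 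A minor additional issue is the wager $z$ of $\de z$, nonzero in general because $(\cond{\oc A\otimes\oc B})^{\pol}$ is only co-perennial; but, as the displayed equalities make clear, it is transported unchanged throughout and causes no difficulty.
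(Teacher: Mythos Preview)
Your overall plan---reduce to exponentials of merged generators, then place each such exponential inside $\cond{\oc A\otimes\oc B}$---matches the paper's. The reduction step (your ``one argues that $E_{A\tilde\with B}^{\pol\pol}$ is already generated by the $\oc_\Omega\de c_{a,b}$'') is exactly what the paper's final sentence asserts, and neither you nor the paper spells it out further.

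Where you diverge is in how you place $\oc_\Omega\de c_{a,b}$ in $\cond{\oc A\otimes\oc B}$, and your route is both more involved and not clearly sound. You compare $\oc_\Omega\de c_{a,b}$ (built from the merged dialect $D^a\sqcup D^b$, hence a partition of $[0,1]$ into $|D^a|+|D^b|$ cells) with $\oc_\Omega\de a\otimes\oc_\Omega\de b$ (built from the separate dialects, hence partitions into $|D^a|$ and $|D^b|$ cells), and bridge them by a stretching $\rho=\rho_A\cup\rho_B$. But $\rho_A$ scales by $(|D^a|+|D^b|)/|D^a|$ while $\rho_B$ scales by $(|D^a|+|D^b|)/|D^b|$; unless $|D^a|=|D^b|$, $\rho$ is not even uniformly measure-inflating, and in any case conjugation by a non-measure-preserving map does not preserve the graphing measurement $\meas{\cdot,\cdot}$ (which integrates over circuit supports). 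Your displayed equality $\sca{z}{\oc_\Omega c_{a,b}}=\sca{(z,\rho Z\rho^{-1})}{\oc_\Omega\de a\otimes\oc_\Omega\de b}$ is therefore off by a scaling on the $\meas{\cdot,\cdot}$ term but not on the wager $z$, and the claim that $(z,\rho Z\rho^{-1})$ remains in $(\cond{\oc A\otimes\oc B})^{\pol}$ inherits the same defect. Your phrase ``refining $\de a,\de b$ to dialects of equal size'' does not fix this: refinement splits edges but leaves the dialect untouched.

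The paper avoids all of this with a simpler device: rather than stretching, it \emph{pads} the dialects. Given balanced $\de f=(0,F)\in\cond A$ and $\de g=(0,G)\in\cond B$, form $F'$ with dialect $D^F\sqcup D^G$ (equal to $F$ on the $D^F$-part, empty on the $D^G$-part), and symmetrically $G'$. Because $\cond A$ is a behavior, homothety plus inflation together with \autoref{fromETtoEequiv} give $(0,F')\in\cond A$ iff $\de f\in\cond A$; likewise $(0,G')\in\cond B$. Now $F'$, $G'$ and the merge $F\sqcup G$ all carry the \emph{same} dialect $D^F\sqcup D^G$, so $\oc_\Omega$ uses the same partition for all three, and one gets the literal equality $\oc_\Omega(F\sqcup G)=\oc_\Omega F'\cup\oc_\Omega G'$, i.e.\ $\oc(\text{merge})=\oc(0,F')\otimes\oc(0,G')\in\sharp\cond A\odot\sharp\cond B$. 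No conjugation, no testing against $(\cond{\oc A\otimes\oc B})^{\pol}$, and none of the obstacles you flag in your last paragraph.
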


\begin{proof}
If one of the behaviours among $\cond{A,B}$ is empty, $\cond{\oc(A\tilde{\with} B)}=\cond{0}=\cond{\oc A\otimes\oc B}$. We will now suppose that $\cond{A,B}$ are both non empty.

Choose $\de{f}=(0,F)$ a single-sliced wager-free project. We have that $\de{f}'=n_{F}/(n_{F}+n_{G})\de{f}\in\cond{A}$ if and only if $\de{f}\in\cond{A}$ from the homothety lemma (\autoref{homothetie}). Moreover, since $\cond{A}$ is a behaviour, $\de{f'}\in\cond{A}$ is equivalent\footnote{The implication $\de{a}\in\cond{A}\Rightarrow \de{a+\lambda 0}\in\cond{A}$ comes from the definition of behaviours, its reciprocal is shown by noticing that $\de{a+\lambda 0-\lambda 0}$ is equivalent to $\de{a}$.} to $\de{f''}=\de{f'}+\sum_{i\leqslant n_{G}}(1/(n_{F}+n_{G}))\de{0}\in\cond{A}$. Since the weighted thick and sliced graphing $\frac{n_{F}}{n_{F}+n_{G}}F+\sum_{i=1}^{n_{G}}\frac{1}{n_{F}+n_{G}}\emptyset$ is universally equivalent to (\autoref{equivalenceuniv}) a single-sliced weighted thick and sliced graphing $F'$, we obtain finally that the project $(0,F')$ is an element of $\cond{A}$ if and only if $\de{f}\in\cond{A}$. We define in a similar way, being given a project $\de{g}$, a weighted graphing with a single slice $G'$ such that $(0,G')\in\cond{B}$ if and only if $\de{g}\in\cond{B}$.

We are now left to show that $\oc (0,F')\otimes\oc (0,G')=\oc(\de{f+g})$. By definition, the graphing of $\oc (0,F')\otimes \oc(0,G')$ is equal to $\oc_{\Omega} F' \disjun \oc_{\Omega} G'$. By definition again, the graphing of $\oc(\de{f+g})$ is equal to $\oc_{\Omega}(F\disjun G)=\oc_{\Omega} F^{\iota_{1}}\disjun \oc_{\Omega} G^{\iota_{2}}$, where $\iota_{1}$ (resp. $\iota_{2}$) denotes the injection of $D^{F}$ (resp. $D^{G}$) into $D^{F}\disjun D^{G}$. We now are left to notice that $\oc_{\Omega} F^{\iota_{1}}=\oc_{\Omega} F'$ since $F^{\iota_{1}}$ and $F'$ are variants one of the other. Similarly, $\oc_{\Omega} G^{\iota_{2}}=\oc_{\Omega} G'$. Finally, we have that $\sharp\cond{(A+B)}\subset\cond{\sharp A\odot\sharp B}$ which is enough to conclude.
\end{proof}

\begin{lemma}
Let $\cond{A}$ be a conduct, and $\phi,\psi$ disjoint delocations. There exists a successful project in the conduct $$\cond{A\multimap \phi(A)\tilde{\with}\psi(A)}$$
\end{lemma}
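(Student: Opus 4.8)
The plan is to exhibit an explicit successful project $\de{contr}$ and verify it lives in the stated conduct. The guiding idea is that $\cond{A\multimap\phi(A)\,\tilde{\with}\,\psi(A)}$ should be inhabited by the same kind of "superimposition of two $\de{Fax}$" that was used in \autoref{contraction}, but where the two delocated copies now go into the \emph{same} slice rather than into two different slices — which is exactly what the $\tilde{\with}$ connective (generated by sums $\de{a+b}$, not by sums with $\de{0}$) asks for. Concretely, following the recipe of \autoref{distributivity} restricted to the location of contexts, I would take $\de{contr}=(0,\mathrm{Fax}^{\phi}\cup\mathrm{Fax}^{\psi})$ where $\mathrm{Fax}^{\phi}$ (resp. $\mathrm{Fax}^{\psi}$) is the graph of the $\de{Fax}$ realizing the delocation $V^{A}\to\phi(V^{A})$ (resp. $V^{A}\to\psi(V^{A})$), both placed in a single slice. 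By the paragraph after \autoref{equivalenceuniv} we may work modulo renaming of edges; so this is a well-defined one-sliced (balanced) project of carrier $V^{A}\cup\phi(V^{A})\cup\psi(V^{A})$.

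First I would check that $\de{contr}$ is \emph{successful} in the sense of the definition in \autoref{ellbehav}: it is balanced with wager $0$; each $\de{Fax}$-edge has weight $1$ and comes with its formal inverse; and since $\phi,\psi$ are disjoint delocations the sources (resp. targets) of distinct non-paired edges are disjoint (an edge of $\mathrm{Fax}^{\phi}$ has source/target inside $V^{A}\cup\phi(V^{A})$, an edge of $\mathrm{Fax}^{\psi}$ inside $V^{A}\cup\psi(V^{A})$, and within each $\de{Fax}$ the transposition condition holds). Hence $\de{contr}$ is a disjoint union of transpositions.

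Next I would show $\de{contr}\in\cond{A\multimap\phi(A)\,\tilde{\with}\,\psi(A)}$. Pick any balanced $\de{a}=(0,A)\in\cond{A}$; since $\cond{A}=\cond{A}^{\pol\pol}$ it suffices to show $\de{contr}\plug\de{a}\in\cond{\phi(A)\,\tilde{\with}\,\psi(A)}$ for all balanced $\de{a}$, because an arbitrary element of $\cond{A}$ is universally equivalent to one of the form covered by \autoref{fromETtoEequiv} and the standard arguments of \autoref{recallmall} reduce the general case to balanced projects (alternatively argue directly via \autoref{propositionethsendcondsend} against a generating set of one-sliced projects of $\cond{A}^{\pol}$). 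Plugging $\de{a}$ against $\mathrm{Fax}^{\phi}$ yields the delocation $\phi(A)$ in the single slice, and against $\mathrm{Fax}^{\psi}$ yields $\psi(A)$ in the \emph{same} slice — exactly as in the computation drawn in the proof of \autoref{contraction}, except that here there is no slice change, so no empty second slice appears. Thus $\de{contr}\plug\de{a}$ is universally equivalent (\autoref{equivalenceuniv}) to $\de{\phi(a)+\psi(a)}$, which by definition of $\tilde{\with}$ lies in $\cond{\phi(A)+\psi(A)}\subseteq\cond{(\phi(A)+\psi(A))^{\pol\pol}}=\cond{\phi(A)\,\tilde{\with}\,\psi(A)}$. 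Applying this for every balanced $\de{a}\in\cond{A}$ and invoking \autoref{propositionethsendcondsend} gives $\de{contr}\in\cond{A\multimap\phi(A)\,\tilde{\with}\,\psi(A)}$.

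The main obstacle I expect is the reduction from "works for balanced projects in $\cond{A}$" to "belongs to the conduct $\cond{A\multimap\dots}$": one must be careful that $\cond{\phi(A)\,\tilde{\with}\,\psi(A)}$ is defined as a bi-orthogonal of sums $\de{a+b}$, so membership of $\de{contr}\plug\de{a}$ there is immediate only for $\de{a}$ balanced, and the extension to all of $\cond{A}$ should be routed through \autoref{propositionethsendcondsend} applied to a generating set of one-sliced projects together with the homothety and universal-equivalence lemmas, rather than by manipulating $\cond{A}$ directly. Everything else is the same bookkeeping on delocations and transpositions already carried out for \autoref{contraction}.
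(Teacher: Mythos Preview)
Your proposed project $\de{contr}=(0,\mathrm{Fax}^{\phi}\cup\mathrm{Fax}^{\psi})$, with both faxes living in a \emph{single} slice, does not work, for two related reasons.

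First, it is not successful. The edge $v\to\phi(v)$ of $\mathrm{Fax}^{\phi}$ and the edge $v\to\psi(v)$ of $\mathrm{Fax}^{\psi}$ have the \emph{same} source $v\in V^{A}$; your argument that ``an edge of $\mathrm{Fax}^{\phi}$ has source/target inside $V^{A}\cup\phi(V^{A})$, an edge of $\mathrm{Fax}^{\psi}$ inside $V^{A}\cup\psi(V^{A})$'' overlooks that these two sets share $V^{A}$. So the disjointness condition in the definition of successful project fails.

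Second, and more fundamentally, the execution against $\de{a}$ does not yield $\phi(\de{a})+\psi(\de{a})$. With both faxes in the same slice, an alternating path can enter $V^{A}$ from $\phi(V^{A})$ via $\mathrm{Fax}^{\phi}$, traverse an edge of $A$, and then exit to $\psi(V^{A})$ via $\mathrm{Fax}^{\psi}$. Thus $\de{contr}\plug\de{a}$ contains cross-edges between $\phi(V^{A})$ and $\psi(V^{A})$ and is not (universally equivalent to) the formal sum $\phi(\de{a})+\psi(\de{a})$.

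The paper avoids both problems by placing the two faxes in \emph{distinct} slices: it takes
\[
\de{c}=\de{Fax}_{\phi}\otimes\de{0}_{\psi(V^{A})}+\de{Fax}_{\psi}\otimes\de{0}_{\phi(V^{A})},
\]
a two-component formal sum. Execution distributes over this sum, so for \emph{any} $\de{a}\in\cond{A}$ one gets directly $\de{c}\plug\de{a}=\phi(\de{a})\otimes\de{0}+\psi(\de{a})\otimes\de{0}\in\cond{\phi(A)+\psi(A)}\subset\cond{\phi(A)\tilde{\with}\psi(A)}$, with no reduction to balanced projects and no appeal to \autoref{propositionethsendcondsend} needed. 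The separation into slices is precisely what makes the sources of the two faxes disjoint (they now sit over $V^{A}\times\{1\}$ and $V^{A}\times\{2\}$) and what kills the cross-paths. This is the same mechanism as in the contraction project of \autoref{contraction}; contrary to what you write, the slice structure there was not an artefact to be removed but the essential ingredient.
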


\begin{proof}
We define $\de{c}=\de{Fax}_{\phi}\otimes\de{0}_{\psi(V^{A})}+\de{Fax}_{\psi}\otimes\de{0}_{\phi(V^{A})}$. Then for all $\de{a}\in\cond{A}$:
\begin{equation*}
\de{c}\plug\de{a}=\phi(\de{a})\otimes\de{0}_{\psi(V^{A})}+\psi(\de{a})\otimes\de{0}_{\phi(V^{A})}
\end{equation*}
Thus $\de{c}\in\cond{A\multimap \phi(A)\tilde{\with}\psi(A)}$. Moreover, $\de{c}$ is obviously successful.
\end{proof}

\begin{proposition}\label{contractionpresquegen}
Let $\cond{A}$ be a behaviour, and $\phi,\psi$ be disjoint delocations. There exists a successful project in the conduct $$\cond{\wn \phi(A)\parr\wn \psi(A)\multimap\wn A}$$
\end{proposition}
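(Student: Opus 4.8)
The plan is to dualize the lemma just proved. We want a successful project in $\cond{\wn\phi(A)\parr\wn\psi(A)\multimap\wn A}$. By definition of $\wn$ and linear negation, this conduct is (up to the usual manipulations) the orthogonal of $\cond{\oc(A^{\pol})\otimes\,?}$; more precisely, since $\cond{\wn C}=\cond{(\sharp_{\Omega}C^{\pol})^{\pol}}$, we have $\cond{\wn\phi(A)\parr\wn\psi(A)} = \cond{(\oc\phi(A^{\pol})\otimes\oc\psi(A^{\pol}))^{\pol}}$, so that
\begin{equation*}
\cond{\wn\phi(A)\parr\wn\psi(A)\multimap\wn A} = \cond{(\oc\phi(A^{\pol})\otimes\oc\psi(A^{\pol}))^{\pol}\multimap(\oc A^{\pol})^{\pol}} = \cond{\oc A^{\pol}\multimap(\oc\phi(A^{\pol})\otimes\oc\psi(A^{\pol}))}
\end{equation*}
using that $\cond{X\multimap Y}=\cond{Y^{\pol}\multimap X^{\pol}}$ and $\cond{X^{\pol\pol}}=\cond{X}$. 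Now apply the preceding lemma to the behavior $\cond{A^{\pol}}$ (which is a behavior since $\cond{A}$ is, $A^{\pol}$ playing the role of "$A$" there) and the delocations $\phi,\psi$: there is a successful project $\de{c}$ in $\cond{A^{\pol}\multimap\phi(A^{\pol})\tilde{\with}\psi(A^{\pol})}$.

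The next step is to promote this. Since $\de{c}$ is successful, it is in particular balanced and wager-free, so $\oc_{\Omega}\de{c}$ is defined, and by functorial promotion (\autoref{promfonct}) applied along the appropriate delocations we obtain a project in $\cond{\oc A^{\pol}\multimap\oc(\phi(A^{\pol})\tilde{\with}\psi(A^{\pol}))}$. Then \autoref{expiso}, applied to the behaviors $\cond{\phi(A^{\pol})}$ and $\cond{\psi(A^{\pol})}$, gives the inclusion $\oc\cond{(\phi(A^{\pol})\tilde{\with}\psi(A^{\pol}))}\subset\cond{\oc\phi(A^{\pol})\otimes\oc\psi(A^{\pol})}$, so composing (i.e. precomposing the $\oc A^{\pol}\multimap\oc(\cdots)$ project with the identity, or simply observing that a project in $\cond{X\multimap Y}$ with $\cond{Y}\subseteq\cond{Z}$ is a project in $\cond{X\multimap Z}$) we land in $\cond{\oc A^{\pol}\multimap(\oc\phi(A^{\pol})\otimes\oc\psi(A^{\pol}))}$, which is the conduct we computed above. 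Finally one checks the resulting project is still successful: functorial promotion preserves successfulness (as used in the soundness theorem — the $\de{prom}$ project is itself a disjoint union of transpositions built from the $T_{\tau},T_{\theta}$ and their inverses, and the execution of successful projects is successful), and the inclusion step does not alter the project.

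The main obstacle I expect is bookkeeping rather than mathematics: one must get the delocations right so that all carriers are pairwise disjoint as required by \autoref{promfonct} and \autoref{expiso}, and one must carefully track the identifications $\cond{\wn C}=\cond{(\sharp_{\Omega}C^{\pol})^{\pol}}$, $\cond{\oc C^{\pol}}^{\pol}$ versus $\cond{\wn C}$, and the de Morgan/currying rewritings to confirm that the target conduct really is $\cond{\oc A^{\pol}\multimap(\oc\phi(A^{\pol})\otimes\oc\psi(A^{\pol}))}$. A secondary subtlety is the hypothesis that $\cond{A}$ be a behavior (not merely a conduct), which is what licenses using \autoref{expiso} — this matches the fact, noted in \autoref{remcontraction}, that contraction-type statements need behaviors around, and here the behavior hypothesis on $\cond{A}$ feeds straight into the behavior hypotheses of \autoref{expiso}. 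Once these alignments are in place, the proof is the three-step chain: dualize, promote, apply $\oc(\tilde{\with})\subseteq\oc\otimes\oc$.
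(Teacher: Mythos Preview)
Your proposal is correct and follows essentially the same approach as the paper: apply the preceding lemma to $\cond{A^{\pol}}$, promote via \autoref{promfonct}, and then invoke the inclusion of \autoref{expiso}. The only cosmetic difference is that you dualize the target conduct up front to $\cond{\oc A^{\pol}\multimap\oc\phi(A^{\pol})\otimes\oc\psi(A^{\pol})}$, whereas the paper keeps the $\wn$ formulation and argues elementwise that any $\de{f}\in\cond{\wn\phi(A)\parr\wn\psi(A)}$ lies in $\wn\cond{(\phi(A)\tilde{\oplus}\psi(A))}$ before composing with the promoted project; the underlying three-step chain is identical.
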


\begin{proof}
If $\de{f}\in\cond{\wn \phi(A)\parr \wn \psi(A)}$, then we have $\de{f}\in\wn\cond{(\phi(A)\tilde{\oplus} \psi(A))}$ by \autoref{expiso}. Moreover, we have a successful project $\de{c}$ in $\cond{A^{\pol}}\multimap\cond{\phi(A^{\pol})\tilde{\with} \psi(A^{\pol})}$ using the preceding lemma. Using the successful project implementing functorial promotion we obtain a successful project $\de{c'}\in\cond{\oc A^{\pol}\multimap \oc (\phi(A^{\pol})\tilde{\with} \psi(A^{\pol}))}$. Thus $\de{c'}$ is a successful project in $\wn\cond{\phi(A)\tilde{\oplus} \psi(A)}\multimap \cond{\wn A}$. Finally, we obtain, by composition, that $\de{f}\plug\de{c'}$ is a successful project in $\cond{\wn A}$.
\end{proof}

We also state here an important property of the models, although the proof makes use of the sequent calculus defined later on. We chose to make it appear here as it is a property of the model, and not a consequence of the sequent calculus: one can find a (much more involved) direct proof of this that do not make use of the sequent calculus, for instance by defining by hand the project interpreting the derivation considered and showing that it has the right type and is successful. 

\begin{corollary}
Let $\cond{A,B}$ be behaviours, and $\phi,\psi$ be respective delocations of $\cond{A}$ and $\cond{B}$. There exists a successful project in the conduct $$\cond{\oc(A\with B)\multimap \oc \phi(A)\otimes \oc \psi(B)}$$
\end{corollary}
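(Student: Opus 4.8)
The plan is to factor the required morphism through the ``sum--generated'' additive $\tilde{\with}$, so that \autoref{expiso} converts it into a tensor. I would first produce a successful project
$$\de{s}\in\cond{(A\with B)\multimap(\phi(A)\tilde{\with}\psi(B))},$$
then apply functorial promotion (\autoref{promfonct}): as in the proofs of the soundness theorem and of \autoref{contractionpresquegen}, promotion of a successful project is successful, since $\de{prom}$ is itself a disjoint union of transpositions, $\oc_{\Omega}$ preserves this shape, and the execution of two such projects is again of this shape with all weights $1$. After a harmless $\de{Fax}$ to adjust locations, this yields a successful $\de{S}\in\cond{\oc(A\with B)\multimap\oc(\phi(A)\tilde{\with}\psi(B))}$. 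Since $\phi(\cond{A})$ and $\psi(\cond{B})$ are behaviors, \autoref{expiso} gives $\cond{\oc(\phi(A)\tilde{\with}\psi(B))}\subseteq\cond{\oc\phi(A)\otimes\oc\psi(B)}$, whence $\de{S}\in\cond{\oc(A\with B)\multimap\oc\phi(A)\otimes\oc\psi(B)}$ (using $\cond{X\multimap Y}\subseteq\cond{X\multimap Y'}$ whenever $\cond{Y}\subseteq\cond{Y'}$), which is the claim. The degenerate case where one of $\cond{A},\cond{B}$ equals $\cond{0}$ is immediate: then $\cond{A\with B}=\cond{0}$ and both sides of the implication are $\cond{0}$, so the conduct is the full behavior and contains $(0,\emptyset)$.

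For $\de{s}$ I would use the two $\with$--projections, written out explicitly so that success is visible. Let $\de{\pi_{A}}$ be the one-sliced project of carrier $V^{A}\cup V^{B}\cup\phi(V^{A})$ consisting of the $\de{Fax}$ realizing the delocation $\phi$ between $V^{A}$ and $\phi(V^{A})$, together with the empty graphing over $V^{B}$; define $\de{\pi_{B}}$ symmetrically with $\psi$, $V^{B}$, $\psi(V^{B})$. Each is wager-free and a disjoint union of transpositions, hence successful. I then take $\de{s}$ to be the one-sliced thick project with two-element dialect carrying $\de{\pi_{A}}\otimes\de{0}_{\psi(V^{B})}$ on its first copy and $\de{\pi_{B}}\otimes\de{0}_{\phi(V^{A})}$ on its second; it is again wager-free, balanced and a disjoint union of transpositions, so successful, and (\autoref{fromETtoEequiv}, up to the usual normalization factor absorbed by \autoref{homothetie}) it behaves like the formal sum $(\de{\pi_{A}}\otimes\de{0}_{\psi(V^{B})})+(\de{\pi_{B}}\otimes\de{0}_{\phi(V^{A})})$. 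For any $\de{g}\in\cond{A\with B}$, computing slice-wise shows that $\de{s}\plug\de{g}$ is, up to universal equivalence, the formal sum $(\de{\pi_{A}}\plug\de{g})_{\uparrow}+(\de{\pi_{B}}\plug\de{g})_{\uparrow}$, which is a generator of $\cond{(\phi(A)+\psi(B))^{\pol\pol}}=\cond{\phi(A)\tilde{\with}\psi(B)}$ provided $\de{\pi_{A}}\plug\de{g}\in\cond{\phi(A)}$ and $\de{\pi_{B}}\plug\de{g}\in\cond{\psi(B)}$; hence, by the definition of $\multimap$ as a set of projects, $\de{s}\in\cond{(A\with B)\multimap(\phi(A)\tilde{\with}\psi(B))}$.

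The step I expect to be the genuine obstacle is exactly $\de{\pi_{A}}\plug\de{g}\in\cond{\phi(A)}$: $\with$ is only a product in the quotient category of behaviors, so its projections are a priori only defined up to observational equivalence, and an arbitrary $\de{g}\in\cond{A\with B}$ may carry edges crossing between $V^{A}$ and $V^{B}$, which forbids naively ``restricting $\de{g}$ to $V^{A}$''. The fix is to argue through orthogonality. One first checks that $\de{\pi_{A}}\plug\de{g}$ is the $\phi$--delocation of the project obtained from $\de{g}$ by deleting the edges with an endpoint in $V^{B}$ — these cannot be prolonged into an alternating path, since $\de{\pi_{A}}$ has no edge over $V^{B}$ — and that, consequently, for every $\de{a'}\in\cond{A}^{\pol}$ the alternating cycles between $\de{\pi_{A}}\plug\de{g}$ and $\phi(\de{a'})$ correspond weight-preservingly to those between $\de{g}$ and $(\de{a'})_{\uparrow V^{A}\cup V^{B}}$ (both families stay inside $V^{A}$, resp. its $\phi$--image), so that
$$\sca{\de{\pi_{A}}\plug\de{g}}{\phi(\de{a'})}=\sca{\de{g}}{(\de{a'})_{\uparrow V^{A}\cup V^{B}}}.$$
Since $\cond{A\with B}\subseteq\{(\de{a'})_{\uparrow}\mid\de{a'}\in\cond{A}^{\pol}\}^{\pol}$ by the definition of $\with$, the right-hand side is never $0$ nor $\infty$; thus $\de{\pi_{A}}\plug\de{g}$ is orthogonal to all of $\phi(\cond{A}^{\pol})=\cond{\phi(A)}^{\pol}$, i.e. $\de{\pi_{A}}\plug\de{g}\in\cond{\phi(A)}$, and symmetrically for $\de{\pi_{B}}$. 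With this the chain of the first paragraph closes and produces the desired successful project.
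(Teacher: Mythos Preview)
Your proof is correct, but it is organized differently from the paper's. The paper simply exhibits the derivation
\begin{center}
\AxiomC{$\vdash A,A^{\pol}$}
\UnaryInfC{$\vdash A^{\pol}\oplus B^{\pol},A$}
\UnaryInfC{$\oc(A\with B)\pdash;\oc A$}
\AxiomC{$\vdash B,B^{\pol}$}
\UnaryInfC{$\vdash A^{\pol}\oplus B^{\pol},B$}
\UnaryInfC{$\oc(A\with B)\pdash;\oc B$}
\BinaryInfC{$\oc(A\with B),\oc(A\with B)\pdash;\oc A\otimes\oc B$}
\UnaryInfC{$\oc(A\with B)\pdash;\oc A\otimes\oc B$}
\DisplayProof
\end{center}
in $\textnormal{ELL}_{\textnormal{pol}}$ and invokes the soundness theorem (\autoref{adeqforteellpol}) to conclude that its interpretation is successful. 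Under the hood this unfolds to: interpret each projection, promote each one separately, tensor, then apply the contraction project of \autoref{contractionpresquegen}, whose own construction passes through the Lemma on $\cond{A\multimap\phi(A)\tilde{\with}\psi(A)}$ and \autoref{expiso}.

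You use the same ingredients but compose them in a different order: you first merge the two projections into a single $\de{s}\in\cond{(A\with B)\multimap(\phi(A)\tilde{\with}\psi(B))}$, promote \emph{once}, and then apply \autoref{expiso}. This is slightly more economical (one promotion instead of the three hidden in the paper's route) and avoids the sequent-calculus detour, at the price of re-doing by hand two things the soundness theorem packages: preservation of success under promotion and execution, and the fact that your $\de{\pi_A}$ really lies in $\cond{(A\with B)\multimap\phi(A)}$. The latter is not a ``genuine obstacle'': it is precisely the soundness of the $\oplus$-introduction rule (your $\de{\pi_A}$ is exactly $\de{Fax}\otimes\de{0}_{V^B}$, the interpretation of $\vdash A^{\pol}\oplus B^{\pol},A$), so your orthogonality argument just reproves that instance. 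Both arguments are valid; the paper's is shorter because it cashes in on the soundness result, while yours is more self-contained and makes the role of $\tilde{\with}$ and \autoref{expiso} explicit at the level of the corollary itself.
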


\begin{proof}
It is obtained as the interpretation of the following derivation (well formed in the sequent calculus we define later on):

\begin{prooftree}
\AxiomC{}
\RightLabel{\scriptsize{ax}}
\UnaryInfC{$\pdash A,A^{\pol};$}
\RightLabel{\scriptsize{$\oplus_{d,2}$}}
\UnaryInfC{$\pdash A^{\pol}\oplus B^{\pol},A$}
\RightLabel{\scriptsize{$\oc$}}
\UnaryInfC{$\oc (A\with B)\pdash ;\oc A$}
\AxiomC{}
\RightLabel{\scriptsize{ax}}
\UnaryInfC{$\pdash B,B^{\pol};$}
\RightLabel{\scriptsize{$\oplus_{d,1}$}}
\UnaryInfC{$\pdash A^{\pol}\oplus B^{\pol},B$}
\RightLabel{\scriptsize{$\oc$}}
\UnaryInfC{$\oc (A\with B)\pdash ;\oc B$}
\RightLabel{\scriptsize{$\otimes^{\mathrm{pol}}$}}
\BinaryInfC{$\oc (A\with B),\oc (A\with B)\pdash ;\oc A\otimes\oc B$}
\RightLabel{\scriptsize{ctr}}
\UnaryInfC{$\oc (A\with B)\pdash ;\oc A\otimes\oc B$}
\end{prooftree}
The fact that it is successful is a consequence of the soundness theorem (\autoref{adeqforteellpol}).
\end{proof}

\subsection{Polarised conducts}


The notions of perennial and co-perennial conducts are not completely satisfactory. In particular, we are not able to show that an implication $\cond{A\multimap B}$ is either perennial or co-perennial when $\cond{A}$ is a perennial conduct (resp. co-perennial) and $\cond{B}$ is a co-perennial conduct (resp. perennial). This is an important issue when one considers the sequent calculus: the promotion rule has to be associated with a rule involving behaviours in order to in the setting of behaviours (using \autoref{tenseurperennecomp}). Indeed, a sequent $\vdash \wn \Gamma,\oc A$ would be interpreted by a conduct which is neither perennial nor co-perennial in general. The sequents considered are for this reason restricted to pre-sequent containing behaviours.

We will define now the notions of negative and positive conducts. The idea is to relax the notion of perennial conduct in order to obtain a notion \emph{negative conduct}. The main interest of this approach is that positive/negative conducts will share the important properties of perennial/co-perennial conducts while interacting in a better way with connectives. In particular, we will be able to interpret the usual functorial promotion (not associated to a $\otimes$ rule), and we will be able to use the contraction rule without all the restrictions we had in the previous section. 

\begin{definition}[Polarised Conducts]
A positive conduct $\cond{P}$ is a conduct satisfying the inflation property and containing all daemons:
\begin{itemize}[noitemsep,nolistsep]
\item $\de{p}\in\cond{P}\Rightarrow\de{p+\lambda 0}\in\cond{P}$;
\item $\forall \lambda\in\realN-\{0\},~\de{Dai}_{\lambda}=(\lambda,(V^{P},\emptyset))\in\cond{P}$.
\end{itemize}
A conduct $\cond{N}$ is negative when its orthogonal $\cond{N}^{\pol}$ is a positive conduct.
\end{definition}

\begin{proposition}
A perennial conduct is negative. A co-perennial conduct is positive.
\end{proposition}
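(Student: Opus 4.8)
The plan is to observe that the two assertions are dual, so that it suffices to prove that every co-perennial conduct is positive and then to read off the statement about perennial conducts by orthogonality. Recall that a positive conduct is, by definition, a conduct which satisfies the inflation property and contains all daemons $\de{Dai}_{\lambda}=(\lambda,(V,\emptyset))$ for $\lambda\neq 0$, and that a conduct $\cond{N}$ is negative exactly when $\cond{N}^{\pol}$ is positive.

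First I would treat the co-perennial case. Let $\cond{B}$ be a co-perennial conduct; it is a conduct by definition, so only the two closure conditions remain to be checked. The inflation property for $\cond{B}$ is precisely \autoref{coperinflation}, and the fact that $\cond{B}$ contains every daemon $\de{Dai}_{a}$ with $a\neq 0$ is precisely \autoref{coperdemon}. Hence $\cond{B}$ meets both requirements and is positive.

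For the perennial case, let $\cond{A}$ be a perennial conduct. By the very definition of co-perennial conduct, $\cond{A}^{\pol}$ is co-perennial, hence positive by the previous paragraph; therefore $\cond{A}$ is negative, again by definition. This closes the argument.

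I do not expect a genuine obstacle here: the statement is a direct corollary of \autoref{coperinflation} and \autoref{coperdemon} together with the definitions. The only points requiring a little care are the bookkeeping of the degenerate cases — when the generating set of one-sliced wager-free projects is empty, so that the co-perennial conduct is a full behavior $\cond{T}_{V}$ — which are already dispatched inside the proofs of \autoref{coperinflation} and \autoref{coperdemon} (there one simply notes that $\cond{T}_{V}$ trivially satisfies the inflation property and contains every daemon), and making sure the exact form of the inflation property invoked is the one appearing in the definition of positive conduct, which in the non-degenerate case ultimately rests on \autoref{fellorth}.
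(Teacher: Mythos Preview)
Your proof is correct and follows essentially the same approach as the paper: invoke \autoref{coperinflation} and \autoref{coperdemon} to see that every co-perennial conduct is positive, then deduce by duality that every perennial conduct is negative. Your write-up is in fact more carefully organized than the paper's terse one-line proof.
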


\begin{proof}
We already showed that the perennial conducts satisfy the inflation property (\autoref{coperinflation}) and contain daemons (\autoref{coperdemon}).
\end{proof}

\begin{proposition}
A conduct $\cond{A}$ is negative if and only if:
\begin{itemize}[noitemsep,nolistsep]
\item $\cond{A}$ contains only wager-free projects;
\item $\de{a}\in\cond{A}\Rightarrow \unit{A}\neq 0$.
\end{itemize}
\end{proposition}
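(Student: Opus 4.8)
The plan is to unfold the definition of ``negative'': $\cond{A}$ is negative precisely when $\cond{A}^{\pol}$ is a positive conduct, i.e. when $\cond{A}^{\pol}$ (a) is closed under $\de{p}\mapsto\de{p+\lambda 0}$ for every $\lambda\in\realN$, and (b) contains every daemon $\de{Dai}_{\lambda}=(\lambda,(V^{A},\emptyset))$ for $\lambda\neq 0$. I will prove the two implications of the equivalence separately. Throughout, the only computational ingredient is the explicit form of the pairing, $\sca{a}{b}=a\,\unit{B}+b\,\unit{A}+\meas{A,B}$ (so that each wager is weighted by the \emph{other} project's total coefficient), together with the remark that the empty graphing carries no circuits: $\meas{A,(V^{A},\emptyset)}=0$, and more generally adding an empty slice to a graphing changes $\unit{\cdot}$ but leaves $\meas{\cdot,\cdot}$ untouched.

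For the forward direction, assume $\cond{A}$ is negative, so $\cond{A}^{\pol}$ is positive. First I would read off wager-freeness of $\cond{A}$: since $\cond{A}^{\pol}$ contains the daemons it is non-empty, and since it is stable under $\de{p}\mapsto\de{p+\lambda 0}$ it satisfies the hypothesis of \autoref{wagerfreeorth} (applied with that set in the role of $A$ and $V^{A}$ as carrier); hence every project in $(\cond{A}^{\pol})^{\pol}=\cond{A}$ is wager-free, which is clause (i). For clause (ii), take any $\de{a}=(0,A)\in\cond{A}$ (wager-free by the previous step) and any $\lambda\neq 0$. Orthogonality of $\de{a}$ with $\de{Dai}_{\lambda}\in\cond{A}^{\pol}$ means $\sca{a}{Dai_{\lambda}}\neq 0,\infty$; expanding and using $a=0$ and $\meas{A,(V^{A},\emptyset)}=0$, this collapses to $\sca{a}{Dai_{\lambda}}=\lambda\,\unit{A}$, and since $\lambda\neq 0$ we conclude $\unit{A}\neq 0$.

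For the converse, assume (i) and (ii) and verify the two defining clauses of positivity for $\cond{A}^{\pol}$. For the daemons: given $\lambda\neq 0$ and an arbitrary $\de{a}=(a,A)\in\cond{A}$, clause (i) gives $a=0$, so $\sca{Dai_{\lambda}}{a}=\lambda\,\unit{A}$; clause (ii) gives $\unit{A}\neq 0$, and since $\unit{A}$ is a finite sum of real coefficients the product $\lambda\,\unit{A}$ lies in $\realN\setminus\{0\}$, in particular it is $\neq 0,\infty$. As $\de{a}$ was arbitrary, $\de{Dai}_{\lambda}\in\cond{A}^{\pol}$. For closure under $+\lambda\de{0}$: given $\de{p}=(p,P)\in\cond{A}^{\pol}$, $\lambda\in\realN$, and $\de{a}=(0,A)\in\cond{A}$, bilinearity of the measurement and the fact that the added empty slice contributes no circuits give $\sca{p+\lambda 0}{a}=p\,\unit{A}+\meas{P,A}=\sca{p}{a}\neq 0,\infty$, so $\de{p+\lambda 0}\in\cond{A}^{\pol}$. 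Hence $\cond{A}^{\pol}$ is positive and $\cond{A}$ is negative. The degenerate case $\cond{A}=\emptyset$ is consistent on both sides: (i) and (ii) hold vacuously, and $\cond{A}^{\pol}=\cond{T}_{V^{A}}$ is visibly positive.

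The only place where any care is needed is the bookkeeping around the pairing: keeping straight which project's $\unit{\cdot}$ multiplies which wager, that $+\lambda\de{0}$ affects $\unit{\cdot}$ but not $\meas{\cdot,\cdot}$, and that finiteness of $\unit{A}$ makes ``$\neq 0$'' already imply ``$\neq 0,\infty$''. I also want to be explicit that the ``inflation property'' used here is the $\de{p}\mapsto\de{p+\lambda 0}$ stability appearing in the definition of positive conduct, not the $\de{b}\mapsto\de{b+\lambda b}$ formulation of \autoref{coperinflation}. Once wager-freeness is established, every awkward term in the orthogonality computations vanishes, so there is no genuine obstacle beyond this routine verification.
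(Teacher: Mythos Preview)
Your proof is correct and follows essentially the same route as the paper's. The only cosmetic difference is that for the converse direction you verify closure of $\cond{A}^{\pol}$ under $\de{p}\mapsto\de{p+\lambda 0}$ by a direct computation of $\sca{p+\lambda 0}{a}=\sca{p}{a}$, whereas the paper simply invokes \autoref{fellorth}; the content is the same, and your explicit handling of the empty case and of the pairing bookkeeping is if anything more careful than the original.
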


\begin{proof}
If $\cond{A}^{\pol}$ is a positive conduct, then it is non-empty and satisfies the inflation property, thus $\cond{A}$ contains only wager-free projects by \autoref{wagerfreeorth}. As a consequence, if $\de{a}\in\cond{A}$, we have that $\sca{a}{Dai_{\text{$\lambda$}}}=\lambda\unit{A}$ thus the condition $\sca{a}{Dai}\neq 0$ implies that $\unit{A}\neq 0$.\\
Conversely, if $\cond{A}$ satisfies that stated properties, we distinguish two cases. If $\cond{A}$ is empty, then is it clear that $\cond{A}^{\pol}$ is a positive conduct. Otherwise, $\cond{A}$ is a non-empty conduct containing only wager-free projects, thus $\cond{A}^{\pol}$ satisfies the inflation property (\autoref{fellorth}). Moreover, $\sca{a}{Dai}=\unit{A}\lambda\neq 0$ as a consequence of the second condition and therefore $\de{Dai}\in\cond{A}^{\pol}$. Finally, $\cond{A}^{\pol}$ is a positive conduct, which implies that $\cond{A}$ is a negative conduct.
\end{proof}

The polarised conducts do not interact very well with the connectives $\tilde{\with}$ and $\tilde{\oplus}$. Indeed, if $\cond{A,B}$ are negative conducts, the conduct $\cond{A\tilde{\with}B}$ is generated by a set of wager-free projects, but it does not satisfy the second property needed to be a negative conduct. Similarly, if $\cond{A,B}$ are positive conducts, then $\cond{A\tilde{\with}B}$ will obviously have the inflation property, but it will contain the project $\de{Dai}_{0}$ (which implies that any element $\de{c}$ in its orthogonal is such that $\unit{C}=0$). We are also not able to characterise in any way the conduct $\cond{A\tilde{\with} B}$ when $\cond{A}$ is a positive conduct and $\cond{B}$ is a negative conduct, except that it is has the inflation property. However, the notions of positive and negative conducts interacts in a nice way with the connectives $\otimes,\with,\parr,\oplus$.


\begin{proposition}
The tensor product of negative conducts is a negative conduct. The $\with$ of negative conducts is a negative conduct. The $\oplus$ of negative conducts is a negative conduct.
\end{proposition}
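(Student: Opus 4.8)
The plan is to rely throughout on the characterization of negative conducts just proved (a conduct $\cond{A}$ is negative iff it contains only wager-free projects and every $\de{a}=(a,A)\in\cond{A}$ has $\unit{A}\neq 0$), together with \autoref{fellorth}, \autoref{wagerfreeorth} and \autoref{compintoplus}. The tensor and the $\oplus$ case will be handled by one common remark: \emph{if a conduct $\cond{C}$ of carrier $V$ equals $E^{\pol\pol}$ for a non-empty set $E$ of wager-free projects of carrier $V$, each of whose graphing $A$ satisfies $\unit{A}\neq 0$, then $\cond{C}$ is negative.} To prove the remark, first observe that each daemon $\de{Dai}_{\lambda}=(\lambda,(V,\emptyset))$ is orthogonal to $E$: for $\de{e}=(0,A)\in E$ one has $\sca{e}{Dai_{\lambda}}=\lambda\unit{A}$, which is finite and non-zero. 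Hence $E^{\pol}$ is non-empty, and by \autoref{fellorth} it is closed under adding $\lambda\de{0}_{V}$; \autoref{wagerfreeorth} then shows that $\cond{C}=E^{\pol\pol}$ contains only wager-free projects. Finally, for $\de{c}=(0,C)\in\cond{C}$ the orthogonality $\de{c}\poll\de{Dai}_{\lambda}$ gives $\lambda\unit{C}\neq 0$, so $\unit{C}\neq 0$.

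For $\cond{A\otimes B}$ I would apply the remark with $E=\{\de{a}\otimes\de{b}~|~\de{a}\in\cond{A},\de{b}\in\cond{B}\}$, so that $\cond{A\otimes B}=E^{\pol\pol}$ by definition. Each $\de{a}\otimes\de{b}=\de{a\plug b}$ is wager-free because the carriers are disjoint, and its graphing has $\unit{\cdot}=\unit{A}\unit{B}\neq 0$ since $\cond{A},\cond{B}$ are negative. For $\cond{A\oplus B}$ I would instead take $E=\{\de{a}\otimes\de{0}_{V^{B}}~|~\de{a}\in\cond{A}\}\cup\{\de{0}_{V^{A}}\otimes\de{b}~|~\de{b}\in\cond{B}\}$, which is legitimate since $\cond{A\oplus B}=E^{\pol\pol}$ by \autoref{compintoplus}; here $\de{a}\otimes\de{0}_{V^{B}}$ is just the extension $\de{a}_{\uparrow V^{A}\cup V^{B}}$ (using $\unit{\de{0}_{V}}=1$), carrying the same graphing and index as $\de{a}$, hence wager-free with $\unit{\cdot}=\unit{A}\neq 0$, and symmetrically for the other family. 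In the degenerate case where $E=\emptyset$ (some $\cond{A},\cond{B}$ empty), the resulting conduct is $\cond{0}$, which is negative because $\cond{0}^{\pol}=\cond{T}$ is positive.

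For $\cond{A\with B}$ the generating set is no longer wager-free, so a variant argument is needed. Write $\cond{A\with B}=F^{\pol}$ where $F=\{\de{a}_{\uparrow V^{A}\cup V^{B}}~|~\de{a}\in\cond{A^{\pol}}\}\cup\{\de{b}_{\uparrow V^{A}\cup V^{B}}~|~\de{b}\in\cond{B^{\pol}}\}$ (directly from the definition of $\with$, or via \autoref{compintoplus} applied to $\cond{A^{\pol}\oplus B^{\pol}}$). Since $\cond{A},\cond{B}$ are negative, $\cond{A^{\pol}},\cond{B^{\pol}}$ are positive, hence non-empty, closed under adding $\lambda\de{0}$, and contain every daemon; because extension commutes with adding $\lambda\de{0}$ and the extension of a daemon is a daemon, $F$ inherits all three properties. \autoref{wagerfreeorth} then shows $F^{\pol}=\cond{A\with B}$ is wager-free, and for $\de{c}=(0,C)\in F^{\pol}$ the orthogonality to the daemons lying in $F$ yields $\lambda\unit{C}\neq 0$, so $\unit{C}\neq 0$; thus $\cond{A\with B}$ is negative.

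I expect the delicate points to be: keeping track of the normalization $\unit{\cdot}$, in particular the identities $\unit{A\otimes B}=\unit{A}\unit{B}$ and the invariance of $\unit{\cdot}$ under extension, and — the genuinely load-bearing step — transporting the inequality $\unit{\cdot}\neq 0$ from the generators to the whole bi-orthogonal closure by testing against the daemons. The $\with$ case adds the wrinkle that wager-freeness of $\cond{A\with B}$ cannot be read off from the generators themselves (which contain daemons) but must come from the closure of $F$ under $+\lambda\de{0}$ via \autoref{wagerfreeorth}.
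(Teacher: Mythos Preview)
Your proof is correct and follows essentially the same route as the paper: in each case you identify a generating set of wager-free projects with $\unit{\cdot}\neq 0$, show the daemons lie in its orthogonal, and conclude negativity. Your ``common remark'' neatly factors out what the paper does separately for $\otimes$ and $\oplus$; for $\with$ the paper instead observes that an intersection of negative conducts is negative and applies this to $(\cond{A}^{\pol}{\uparrow})^{\pol}\cap(\cond{B}^{\pol}{\uparrow})^{\pol}$, whereas you take the equivalent dual view $F^{\pol}$ with $F$ the union --- a purely cosmetic difference.
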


\begin{proof}
We know that $\cond{A\otimes B}=\emptyset$ if one of the two conducts $\cond{A}$ and $\cond{B}$ is empty, which leaves us to treat the non-empty case. In this case, $\cond{A\otimes B}=(\cond{A}\odot\cond{B})^{\pol\pol}$ is the bi-orthogonal of a non-empty set of wager-free projects. Thus $(\cond{A\otimes B})^{\pol}$ satisfies the inflation property. Moreover $\sca{a\otimes b}{Dai}=\unit{B}\unit{A}\lambda$ which is different from zero since $\unit{A},\unit{B}$ both are different from zero. Thus $\de{Dai}\in(\cond{A\otimes B})^{\pol}$, which shows that $\cond{A\otimes B}$ is a negative conduct since $\cond{(A\otimes B)^{\pol}}$ is a positive conduct.\\
The set $\cond{A}^{\pol}{\uparrow_{{B}}}$ contains all daemons $\de{Dai}_{\lambda}\otimes \de{0}=\de{Dai}_{\lambda}$, and $\de{Dai}\in\cond{A}^{\pol}$. It has the inflation property since $\de{(b+\lambda 0)\otimes 0}=\de{b}\otimes \de{0}+\lambda\de{0}$. Thus $((\cond{A}^{\pol}){\uparrow_{{B}}})^{\pol}$ is a negative conduct. Similarly, $((\cond{B}^{\pol}){\uparrow_{{B}}})^{\pol}$ is a negative conduct, and their intersection is a negative conduct since the properties defining negative conducts are are preserved by intersection. As a consequence, $\cond{A\with B}$ is a negative conduct.\\
In the case of $\oplus$, we will use the fact that $\cond{A\oplus B}=(\cond{A}{\uparrow_{{B}}}\cup\cond{B}{\uparrow_{{A}}})^{\pol}$. If $\de{a}\in\cond{A}$, $\de{a\otimes 0}=\de{b}$ has a null wager and $\unit{B}=\unit{A}\neq 0$. If $\cond{A}$ is empty, $(\cond{A}{\uparrow_{{B}}})^{\pol}$ is a positive conduct. If $\cond{A}$ is non-empty, then \autoref{fellorth} allows us to state that $(\cond{A}{\uparrow_{{B}}})^{\pol}$ has the inflation property. Moreover, the fact that all elements in $\de{a\otimes 0}=\de{b}$ satisfy $\unit{B}\neq0$ implies that $\de{Dai}_{\lambda}\in(\cond{A}{\uparrow_{{B}}})^{\pol}$ for all $\lambda\neq 0$. Therefore, $(\cond{A}{\uparrow_{{B}}})^{\pol}$ is a positive conduct. As a consequence, $\cond{A}{\uparrow_{{B}}}$ is a negative conduct. We show in a similar way that $\cond{B}{\uparrow_{{A}}}$ is a negative conduct. We can deduce from this that $\cond{A}{\uparrow_{{B}}}\cup\cond{B}{\uparrow_{{A}}}$ contains only projects $\de{c}$ with zero wager and such that $\unit{C}\neq 0$. Finally, we showed that $\cond{A\oplus B}$ is a negative conduct.
\end{proof}

\begin{corollary}
The $\parr$ of positive conducts is a positive conduct, the $\with$ of positive conducts is a positive conduct, and the $\oplus$ of positive conducts is a positive conduct.
\end{corollary}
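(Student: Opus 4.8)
The plan is to obtain the three claims directly from the preceding proposition by de Morgan duality, using the fact that conducts are closed under biorthogonality. There is genuinely nothing to compute beyond what has already been done for the negative case, so the corollary is a bookkeeping exercise in polarity reversal.

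First I would record the elementary observation that a conduct $\cond{C}$ is positive if and only if $\cond{C}^{\pol}$ is negative. One direction is the very definition of a negative conduct; for the converse, since $\cond{C}$ is a conduct we have $\cond{C}^{\pol\pol}=\cond{C}$, so that $\cond{C}$ positive means $(\cond{C}^{\pol})^{\pol}$ positive, which is exactly the condition for $\cond{C}^{\pol}$ to be negative. Next I would invoke the de Morgan identities valid in the model (they follow from the definitions of the connectives recalled in \autoref{recallmall} together with the involutivity of $(\cdot)^{\pol}$ on conducts): for conducts of disjoint carriers, $\cond{(A\parr B)^{\pol}}=\cond{A^{\pol}\otimes B^{\pol}}$, $\cond{(A\with B)^{\pol}}=\cond{A^{\pol}\oplus B^{\pol}}$ and $\cond{(A\oplus B)^{\pol}}=\cond{A^{\pol}\with B^{\pol}}$. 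Then the argument is one line per connective: if $\cond{A},\cond{B}$ are positive of disjoint carriers, then $\cond{A}^{\pol},\cond{B}^{\pol}$ are negative, so $\cond{(A\parr B)^{\pol}}=\cond{A^{\pol}\otimes B^{\pol}}$ is a tensor of negative conducts, hence negative by the preceding proposition, whence $\cond{A\parr B}$ is positive; $\cond{(A\with B)^{\pol}}=\cond{A^{\pol}\oplus B^{\pol}}$ is an $\oplus$ of negative conducts, hence negative, whence $\cond{A\with B}$ is positive; and $\cond{(A\oplus B)^{\pol}}=\cond{A^{\pol}\with B^{\pol}}$ is a $\with$ of negative conducts, hence negative, whence $\cond{A\oplus B}$ is positive.

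The only point requiring any attention — the nearest thing to an obstacle — is checking that these de Morgan dualities really are at hand with the particular definitions of $\with$ and $\oplus$ used here, which (unlike Girard's) are not set up symmetrically with respect to the multiplicatives. This is routine: the identities needed are precisely those already used implicitly throughout \autoref{recallmall}, and no fresh control on wagers or on $\unit{\cdot}$ is required, since all of that was carried out in the proof of the proposition for negative conducts.
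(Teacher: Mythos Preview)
Your proposal is correct and is precisely the argument the paper intends: the corollary is stated without proof immediately after the proposition on negative conducts, and the implied argument is exactly the de Morgan duality you spell out. Your care in noting that positivity of $\cond{C}$ is equivalent to negativity of $\cond{C}^{\pol}$ and that the de Morgan identities hold for the specific definitions of $\with$ and $\oplus$ used here is appropriate and does not go beyond what the paper assumes.
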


\begin{proposition}
Let $\cond{A}$ be a positive conduct and $\cond{B}$ be a negative conduct. Then $\cond{A\otimes B}$ is a positive conduct.
\end{proposition}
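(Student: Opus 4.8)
The plan is to pass to the orthogonal. Since a conduct is positive precisely when its orthogonal is negative (immediate from the definition of negative conduct and the fact that conducts are bi-orthogonally closed), and since $\cond{(A\otimes B)}^{\pol}=(\cond{A}\otimes(\cond{B}^{\pol})^{\pol})^{\pol}=\cond{A\multimap B^{\pol}}$ by the identity $\cond{A\multimap C}=(\cond{A}\otimes\cond{C}^{\pol})^{\pol}$, it suffices to prove that $\cond{A\multimap B^{\pol}}$ is a negative conduct. By the characterisation of negative conducts established just above, I must show that every $\de{f}=(f,F)\in\cond{A\multimap B^{\pol}}$ is wager-free ($f=0$) and satisfies $\unit{F}\neq 0$. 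The degenerate cases where $\cond{A}$ or $\cond{B}$ is empty should be handled first, as in the companion statements on $\otimes$ (note that a positive conduct is automatically non-empty, as it contains all daemons); so I assume $\cond{A},\cond{B}$ non-empty and I fix a project $\de{b}\in\cond{B}$, which by negativity of $\cond{B}$ is wager-free with $\unit{B}\neq 0$.

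First I would settle $\unit{F}\neq 0$. Unfolding $\cond{A\multimap B^{\pol}}=(\cond{A\otimes B})^{\pol}$ and using the adjunction (commutativity of $\plug$ together with the trefoil property for sliced thick graphings), $\de{f}$ being orthogonal to every $\de{a}\plug\de{b}$ with $\de{a}\in\cond{A}$ is equivalent to $\de{f}\plug\de{b}\in\cond{A}^{\pol}$. Now $\cond{A}$ is positive, hence $\cond{A}^{\pol}$ is negative, so $\de{f}\plug\de{b}$ is wager-free and its graphing has non-zero unit. Its wager is $\sca{f}{b}=f\,\unit{B}+\meas{F,B}$ (the term $b\,\unit{F}$ vanishing since $\de{b}$ is wager-free), so wager-freeness gives
$$f\,\unit{B}+\meas{F,B}=0;$$
and since execution of sliced thick graphings is computed slice by slice, the unit of the graphing of $\de{f}\plug\de{b}$ equals $\unit{F}\cdot\unit{B}$, which being non-zero forces $\unit{F}\neq 0$.

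It then remains to prove $f=0$, and here lies the crux: I must exhibit inside $\cond{A}$ a project with which to test $\de{f}$ that is simultaneously \enquote{heavy} (carrying an adjustable wager) and \enquote{transparent} (having no edge and vanishing unit). Positivity of $\cond{A}$ delivers exactly this: for every $\mu\neq 0$ it contains the daemon $\de{Dai}_{\mu}=(\mu,(V^{A},\emptyset))$, and by the inflation property applied with $\lambda=-1$ it also contains $\de{a}_{\mu}:=\de{Dai}_{\mu}+(-1)\de{0}_{V^{A}}$, whose underlying graphing still has no edge but whose unit is $0$ while its wager is $\mu$. As $\de{a}_{\mu}$ and $\de{b}$ have disjoint carriers and $\de{a}_{\mu}$ has no edge, $\de{a}_{\mu}\plug\de{b}$ has wager $\mu\,\unit{B}$, its graphing has unit $\unit{A_{\mu}}\cdot\unit{B}=0$, and $\meas{F,\cdot}$ against that graphing equals $\meas{F,B}$ (the only edges present over $V^{B}$ being those of $B$); hence
$$\sca{f}{a_{\mu}\plug b}=\mu\,\unit{B}\,\unit{F}+\meas{F,B}.$$
Orthogonality of $\de{f}$ with $\de{a}_{\mu}\plug\de{b}$ forces this to differ from $0$ and $\infty$ for every $\mu\neq 0$; since $\unit{B}\,\unit{F}\neq 0$, the affine function $\mu\mapsto\mu\,\unit{B}\,\unit{F}+\meas{F,B}$ misses only the value $\meas{F,B}$, so avoiding $0$ and $\infty$ is possible only if $\meas{F,B}=0$. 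Substituting into the displayed identity $f\,\unit{B}+\meas{F,B}=0$ gives $f\,\unit{B}=0$, whence $f=0$. This proves $\cond{A\multimap B^{\pol}}$ negative, hence $\cond{A\otimes B}$ positive. The genuinely delicate point is this last step — the choice of the test project $\de{a}_{\mu}$, which is exactly where both defining clauses of a positive conduct (containing all daemons, and inflation) are needed at once; the remaining verifications (that the relevant alternating circuits stay over $V^{B}$, multiplicativity of units under $\plug$, and the dialect normalisations folded into $\meas{\cdot,\cdot}$) are routine and mirror the computations in the proof of \autoref{tenseurperennecomp}.
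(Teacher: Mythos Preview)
Your first step (showing $\unit{F}\neq 0$ via $\de{f\plug b}\in\cond{A}^{\pol}$, which is negative) is correct and coincides with the paper's. The genuine gap is in your argument for $f=0$. You compute $\sca{f}{a_\mu\plug b}=\mu\,\unit{B}\,\unit{F}+\meas{F,B}$, but this is wrong: the measurement term is $0$, not $\meas{F,B}$. The point is that $\meas{F,\cdot}$ is linear in the slice coefficients of the second argument, and the sliced graphing $A_\mu\plug B$ carries each slice $B_k$ of $B$ twice, once with coefficient $\beta_k$ (paired with the $+1$ slice of $A_\mu$) and once with coefficient $-\beta_k$ (paired with the $-1$ slice); these cancel. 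Equivalently, via the adjunction,
\[
\sca{f}{a_\mu\otimes b}=\sca{f\plug b}{a_\mu}=\bigl(f\,\unit{B}+\meas{F,B}\bigr)\cdot\unit{A_\mu}+\mu\,\unit{F}\,\unit{B}+\meas{F\plug B,A_\mu}=\mu\,\unit{F}\,\unit{B},
\]
since $\unit{A_\mu}=0$ kills the first term and $A_\mu$ has no edge, so the last term vanishes. Consequently the orthogonality $\sca{f}{a_\mu\otimes b}\neq 0,\infty$ holds automatically for every $\mu\neq 0$ and yields no constraint on $\meas{F,B}$ or on $f$; your deduction that $\meas{F,B}=0$, and hence that $f=0$, collapses. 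The very device you used to force $\unit{A_\mu}=0$ (the empty slice with weight $-1$) is exactly what erases the $\meas{F,B}$ contribution you were relying on.

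For comparison, the paper does not introduce any auxiliary test project at this point. It works directly with the wager of $\de{f\plug b}$, which it records as $\unit{B}f+\unit{F}b$, and concludes $f=0$ from $b=0$ and $\unit{B}\neq 0$.
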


\begin{proof}
Pick $\de{f}\in\cond{(A\otimes B)^{\pol}}=\cond{B\multimap A^{\pol}}$. Then for all $\de{b}\in\cond{B}$, $\de{f\plug b}=(\unit{B} f+\unit{F} b,F\plug B)$ is an element of $\cond{A}^{\pol}$. Since $\cond{A}^{\pol}$ is a negative conduct, we have that $\unit{F}\unit{B}\neq 0$ and $\unit{B}f+\unit{F}b=0$. Thus $\unit{F}\neq 0$. Moreover, $\cond{B}$ is a negative conduct, therefore $\unit{B}\neq 0$ and $b=0$. The condition $\unit{B}f+\unit{F}b=0$ then becomes $\unit{B}f=0$, i.e.\ $f=0$.

Thus $\cond{(A\otimes B)^{\pol}}$ is a negative conduct, which implies that $\cond{A\otimes B}$ is a positive conduct.
\end{proof}

\begin{corollary}
If $\cond{A}$ is a positive conduct and $\cond{B}$ is a positive conduct, $\cond{A\multimap B}=\cond{(A\otimes B^{\pol})^{\pol}}$ is a positive conduct.
\end{corollary}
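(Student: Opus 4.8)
The plan is to reduce the corollary to the two propositions just proved. First I would observe that since $\cond{B}$ is a positive conduct it is in particular a conduct, so $\cond{B}=\cond{B}^{\pol\pol}$; hence $\cond{B}^{\pol}$ is a \emph{negative} conduct by the very definition of negativity. Then I would apply the immediately preceding proposition to the positive conduct $\cond{A}$ and the negative conduct $\cond{B}^{\pol}$, obtaining that $\cond{A}\otimes\cond{B}^{\pol}$ is a positive conduct. Since by definition $\cond{A\multimap B}=\cond{(A\otimes B^{\pol})^{\pol}}$, this exhibits $\cond{A\multimap B}$ as the orthogonal of a positive conduct, and the only thing left would be to read off the polarity of that orthogonal.

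That last step is where I expect the real obstacle to lie, and it is a genuine one. The orthogonal of a positive conduct is, by the definition of negative conduct, \emph{negative}, not positive; and it cannot also be positive except in degenerate cases, since a positive conduct contains the daemons $\de{Dai}_{\lambda}=(\lambda,(V,\emptyset))$ with $\lambda\neq0$, whereas a negative conduct contains only wager-free projects (by the characterisation proved just above). Hence the argument sketched actually establishes that $\cond{A\multimap B}$ is a \emph{negative} conduct, and proving it positive, as the statement asserts, does not seem possible. I would read this as evidence that the conclusion of the corollary should say ``negative conduct''.

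If, on the other hand, the intended content is that $\cond{A\multimap B}$ is positive, then the natural fix is to weaken the hypothesis on the domain to ``$\cond{A}$ is a negative conduct'', keeping $\cond{B}$ positive: with $\cond{B}^{\pol}$ again negative, $\cond{A}\otimes\cond{B}^{\pol}$ becomes a tensor of \emph{two} negative conducts, hence negative by the proposition on tensor products of negative conducts, and therefore $\cond{A\multimap B}=\cond{(A\otimes B^{\pol})^{\pol}}$ is positive as the orthogonal of a negative conduct. In both readings the argument is pure polarity bookkeeping, already packaged in the tensor-product propositions; no new computation is needed, and the whole difficulty is getting the polarities of the hypotheses and the conclusion to match.
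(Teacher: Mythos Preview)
Your analysis is correct, and in fact the paper gives no explicit proof for this corollary: it is stated as an immediate consequence of the preceding proposition, so the intended argument is exactly the polarity bookkeeping you carry out. And you have correctly spotted that this bookkeeping does not yield the stated conclusion. With $\cond{A}$ positive and $\cond{B}$ positive, $\cond{B}^{\pol}$ is negative, the proposition gives $\cond{A\otimes B^{\pol}}$ positive, and hence $\cond{A\multimap B}=\cond{(A\otimes B^{\pol})^{\pol}}$ is \emph{negative}. The paper's own summary table (Figure ``Connectives and Polarization'') confirms this: the $\parr$ entry for $(N,P)$ is $N$, and $\cond{A\multimap B}=\cond{A^{\pol}\parr B}$ with $\cond{A^{\pol}}$ negative and $\cond{B}$ positive.

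So the statement contains a typo, and both of your proposed repairs are consistent with the surrounding material. Reading the conclusion as ``negative'' matches the stated hypotheses and the immediate application of the proposition; reading the hypothesis on $\cond{A}$ as ``negative'' matches the $P\parr P=P$ entry of the table and would require instead the earlier proposition that the tensor of two negative conducts is negative. Either way, your diagnosis and your argument are the right ones.
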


\begin{corollary}
If $\cond{A,B}$ are negative conducts, then $\cond{A\multimap B}$ is a negative conduct.
\end{corollary}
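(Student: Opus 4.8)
The plan is to reduce the statement to the preceding proposition on the tensor of a positive conduct with a negative conduct. First I would unfold the definition $\cond{A\multimap B}=\cond{(A\otimes B^{\pol})^{\pol}}$ and take orthogonals on both sides: since every conduct equals its bi-orthogonal closure, $\cond{(A\multimap B)^{\pol}}=\cond{(A\otimes B^{\pol})^{\pol\pol}}=\cond{A\otimes B^{\pol}}$. Hence, by the very definition of negative conduct, proving that $\cond{A\multimap B}$ is negative amounts to proving that $\cond{A\otimes B^{\pol}}$ is a positive conduct.

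Next I would observe that the hypotheses supply exactly what is needed. Since $\cond{B}$ is negative, $\cond{B^{\pol}}$ is positive by definition of negativity; and $\cond{A}$ is negative by assumption. The tensor product is symmetric at the level of projects (because $\de{a\plug b}=\de{b\plug a}$), so $\cond{A\otimes B^{\pol}}=\cond{B^{\pol}\otimes A}$, and I can invoke the proposition asserting that the tensor of a positive conduct with a negative conduct is positive, applied to the positive conduct $\cond{B^{\pol}}$ and the negative conduct $\cond{A}$. This yields that $\cond{A\otimes B^{\pol}}$ is positive, hence that $\cond{A\multimap B}$ is negative, which concludes the argument.

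There is essentially no obstacle here: the statement is a direct corollary, exactly parallel to the previously stated corollary for positive conducts. The only two points deserving a word of justification are (i) that the orthogonal of $\cond{A\multimap B}$ is literally $\cond{A\otimes B^{\pol}}$, which rests on the fact that $\cond{A\otimes B^{\pol}}$, being defined as a bi-orthogonal closure, is already a conduct and thus stable under $(\cdot)^{\pol\pol}$; and (ii) that the earlier proposition may be applied with its two arguments in the order dictated by our situation, which follows from the commutativity of $\otimes$ on projects. Neither requires any computation.
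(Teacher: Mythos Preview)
Your argument is correct and matches the paper's own proof: write $\cond{A\multimap B}=\cond{(A\otimes B^{\pol})^{\pol}}$, note that $\cond{B^{\pol}}$ is positive and $\cond{A}$ is negative, and invoke the proposition that the tensor of a positive conduct with a negative one is positive. The paper is terser (it does not spell out the bi-orthogonal step or the commutativity of $\otimes$), but the route is identical.
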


\begin{proof}
We know that $\cond{A\multimap B}=\cond{(A\otimes B^{\pol})^{\pol}}$. We also just showed that $\cond{A\otimes B^{\pol}}$ is a positive conduct, thus $\cond{A\multimap B}$ is a negative conduct.
\end{proof}

\begin{proposition}
The tensor product of a negative conduct and a behaviour is a behaviour.
\end{proposition}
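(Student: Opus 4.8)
The plan is to mirror the proof of \autoref{tenseurperennecomp}, of which this is the negative-conduct analogue, using in place of the identity $\unit{A}=1$ (which holds for one-sliced wager-free generators) the characterization of negative conducts established above: a conduct $\cond{N}$ is negative if and only if every $\de{a}\in\cond{N}$ is wager-free and satisfies $\unit{A}\neq 0$. As we will see, the hypothesis $\unit{N}=1$ of the perennial argument is used only to divide by $\unit{N}$, so $\unit{N}\neq 0$ is exactly what is needed.

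First I would dispose of the degenerate cases. If $\cond{N}$ or $\cond{B}$ is empty then $\cond{N}\odot\cond{B}=\emptyset$, so $\cond{N\otimes B}=\emptyset^{\pol\pol}=\cond{0}$, which is a behavior. If $\cond{B}$ is the full behavior $\cond{T}_{V^B}$ — the only behavior that is not wager-free — then $\cond{(N\otimes B)^{\pol}}=\cond{N\multimap B^{\pol}}$ with $\cond{B}^{\pol}=\cond{0}_{V^B}$ empty, and this conduct is empty because $\cond{N}$ is non-empty while $\cond{0}_{V^B}$ contains no project; hence $\cond{N\otimes B}=\cond{T}_{V^N\cup V^B}$, again a behavior.

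The substantive case is $\cond{N},\cond{B}$ both non-empty with $\cond{B}$ proper, hence wager-free by \autoref{wagerfreeorth}. Put $E=\cond{N}\odot\cond{B}$; it is non-empty, and each $\de{n}\otimes\de{b}\in E$ is wager-free since $\de{n}$ and $\de{b}$ are and they have disjoint carriers. By \autoref{fellorth}, $\cond{(N\otimes B)^{\pol}}=E^{\pol}$ satisfies the inflation property. Next I would show $\cond{(N\otimes B)^{\pol}}$ is wager-free: assuming $\de{f}=(f,F)\in\cond{(N\otimes B)^{\pol}}$ with $f\neq 0$, for every $\de{n}\in\cond{N}$ and $\de{b}\in\cond{B}$ one has $\sca{f}{n\otimes b}=f\,\unit{N}\unit{B}+\meas{F,N\cup B}\neq 0,\infty$ (in particular $\meas{F,N\cup B}$ is finite). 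Since $\cond{B}$ is a behavior, $\de{b}+\mu\de{0}\in\cond{B}$ for every real $\mu$, and $\sca{f}{n\otimes(b+\mu 0)}=f\,\unit{N}(\unit{B}+\mu)+\meas{F,N\cup B}$; choosing $\mu=-\meas{F,N\cup B}/(f\unit{N})-\unit{B}$ — legitimate precisely because $f\neq 0$ and $\unit{N}\neq 0$, the lone appeal to negativity of $\cond{N}$ — makes this scalar zero, contradicting $\de{f}\poll\de{n}\otimes(b+\mu 0)$. Finally, if $\cond{(N\otimes B)^{\pol}}$ is non-empty, applying \autoref{fellorth} to this wager-free, inflation-closed set shows $\cond{N\otimes B}$ is inflation-closed too, so both $\cond{N\otimes B}$ and its orthogonal are, i.e.\ $\cond{N\otimes B}$ is a behavior; and if $\cond{(N\otimes B)^{\pol}}$ is empty then $\cond{N\otimes B}=\cond{T}_{V^N\cup V^B}$ is a behavior.

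No step is genuinely difficult, since the argument is essentially a transcription of \autoref{tenseurperennecomp}; the points needing care are the bookkeeping of the degenerate cases — in particular isolating $\cond{B}=\cond{T}$, for which $\cond{N}\odot\cond{T}$ fails to be wager-free so that \autoref{fellorth} does not apply directly — and verifying that the single division by $\unit{N}$ is the only place where negativity of $\cond{N}$, as opposed to $\cond{N}$ being merely a conduct of wager-free projects, is actually used.
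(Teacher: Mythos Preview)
Your proposal is correct and follows essentially the same route as the paper's proof: dispose of empty cases, use \autoref{fellorth} on the wager-free generating set $\cond{N}\odot\cond{B}$ to obtain inflation on the orthogonal, then derive a contradiction from a non-zero wager in $\cond{(N\otimes B)^{\pol}}$ by inflating $\de{b}$ with a suitable $\mu$ chosen via division by $f\unit{N}$. Your treatment is in fact slightly more careful than the paper's: you isolate the case $\cond{B}=\cond{T}$ (where $\cond{B}$ is not wager-free, so \autoref{fellorth} would not apply to $\cond{N}\odot\cond{B}$), whereas the paper simply asserts that $\cond{A},\cond{B}$ contain only wager-free projects; and your choice $\mu=-\meas{F,N\cup B}/(f\unit{N})-\unit{B}$ is the correct one, the paper's $\mu=-\meas{F,A\cup B}/(\unit{A}f)$ omitting the $-\unit{B}$ term being a minor slip that the subsequent computation silently corrects.
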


\begin{proof}
Let $\cond{A}$ be a negative conduct and $\cond{B}$ be a behaviour. If either $\cond{A}$ or $\cond{B}$ is empty (or both), $\cond{(A\otimes B)^{\pol}}$ equals $\cond{T}_{V^{A}\cup V^{B}}$ and we are done. We now suppose that $\cond{A}$ and $\cond{B}$ are both non empty.

Since $\cond{A,B}$ contain only wager-free projects, the set $\{\de{a\otimes b}~|~\de{a}\in\cond{A},\de{b}\in\cond{B}\}$ contains only wager-free projects. Thus $\cond{(A\otimes B)^{\pol}}$ has the inflation property: this is a consequence of \autoref{fellorth}. Suppose now that there exists $\de{f}\in\cond{(A\otimes B)^{\pol}}$ such that $f\neq 0$. Choose $\de{a}\in\cond{A}$ and $\de{b}\in\cond{B}$. Then $\sca{f}{a\otimes b}=f\unit{B}\unit{A}+\meas{F,A\plug B}$. Since $\unit{A}\neq 0$, we can define $\mu=-\meas{F,A\cup B}/(\unit{A}f)$, and $\de{b+\mu 0}\in\cond{B}$ since $\cond{B}$ has the inflation property. We then have:
\begin{eqnarray*}
\sca{f}{a\otimes(b+\mu0)}&=&f\unit{A}(\unit{B}+\mu)+\meas{F,A\plug(B+\mu 0)}\\
&=&f\unit{A}\frac{-\meas{F,A\cup B}}{\unit{A}f}+\meas{F,A\plug B}\\
&=&0
\end{eqnarray*}
This is a contradiction, since $\de{f}\in\cond{(A\otimes B)^{\pol}}$. Thus $f=0$.

Finally, we have shown that $\cond{(A\otimes B)^{\pol}}$ has the inflation property and contains only wager-free projects.
\end{proof}

\begin{corollary}
If $\cond{A}$ is a negative conduct and $\cond{B}$ is a behaviour, $\cond{A\multimap B}$ is a behaviour.
\end{corollary}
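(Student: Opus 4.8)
The plan is to derive this corollary by composing two facts already in hand: the immediately preceding proposition (the tensor product of a negative conduct and a behavior is a behavior) and the elementary observation that the notion of \emph{behavior} is self-dual, so that passing to orthogonals preserves it. First I would recall the defining identity for linear implication, namely $\cond{A\multimap B}=\cond{(A\otimes B^{\pol})^{\pol}}$, exactly as in the definitions of \autoref{recallmall} and as used in the analogous earlier corollaries (e.g. the one deducing that $\cond{A\multimap B}$ is a behavior when $\cond{A}$ is perennial and $\cond{B}$ is a behavior).

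Next I would observe that since $\cond{B}$ is a behavior, its orthogonal $\cond{B}^{\pol}$ is again a behavior: the defining clause for behaviors is stated symmetrically for $\cond{A}$ and $\cond{A}^{\pol}$ (both are required to be closed under adding $\lambda\de{0}$ for $\lambda\in\realN$), so $\cond{B^{\pol}}$ inherits the same closure properties. Consequently $\cond{A\otimes B^{\pol}}$ is the tensor product of the negative conduct $\cond{A}$ with the behavior $\cond{B}^{\pol}$, and the preceding proposition applies verbatim — including the degenerate cases, which that proposition already disposes of by identifying $\cond{(A\otimes B^{\pol})^{\pol}}$ with a full behavior $\cond{T}$ when one of the factors is empty. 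Hence $\cond{A\otimes B^{\pol}}$ is a behavior.

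Finally, $\cond{A\multimap B}=\cond{(A\otimes B^{\pol})^{\pol}}$ is the orthogonal of the behavior $\cond{A\otimes B^{\pol}}$, and is therefore itself a behavior by the self-duality just noted. There is no substantive obstacle in this argument: it is a one-line composition of established results. The only point to keep straight is the self-dual character of the behavior condition, which is what licenses both the passage from $\cond{B}$ to $\cond{B}^{\pol}$ at the start and the passage from $\cond{A\otimes B^{\pol}}$ to its orthogonal at the end.
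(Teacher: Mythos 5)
Your proof is correct and follows exactly the route the paper intends: rewrite $\cond{A\multimap B}$ as $\cond{(A\otimes B^{\pol})^{\pol}}$, note that $\cond{B}^{\pol}$ is a behavior by the self-duality of the behavior condition, apply the preceding proposition to conclude $\cond{A\otimes B^{\pol}}$ is a behavior, and finish by taking orthogonals. This mirrors the paper's explicit proof of the earlier analogous corollary (the perennial case), and the present corollary is left without a written proof precisely because this one-line composition is the intended argument.
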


\begin{figure}
\centering
\subfigure[Tensor]{
\begin{tabular}{c||c|c}
$\otimes$ & N & P\\
\hline\hline
N & N & P\\
P & P & ?
\end{tabular}
}
\subfigure[Parr]{
\begin{tabular}{c||c|c}
$\parr$ & N & P\\
\hline\hline
N & ? & N\\
P & N & P
\end{tabular}
}
\subfigure[With(1)]{
\begin{tabular}{c||c|c}
$\with$ & N & P\\
\hline\hline
N & N & ?\\
P & ? & P
\end{tabular}
}
\subfigure[Plus(1)]{
\begin{tabular}{c||c|c}
$\oplus$ & N & P\\
\hline\hline
N & N & ?\\
P & ? & P
\end{tabular}
}
\caption{Connectives and Polarisation}
\end{figure}

\begin{proposition}
The weakening (on the left) of negative conducts holds.
\end{proposition}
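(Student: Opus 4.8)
The plan is to transcribe, almost verbatim, the argument already used for \autoref{wagerfreeorth}'s companion result on the weakening of perennial conducts: the only feature of a perennial conduct exploited there was that its members are wager-free and have nonzero total dialectal weight $\unit{\cdot}$, and the characterisation of negative conducts established just above guarantees exactly this. Concretely, given conducts $\cond{A},\cond{B}$ and a negative conduct $\cond{N}$, I would fix $\de{f}\in\cond{A\multimap B}$ and show that the project $\de{f}\otimes\de{0}_{V^{N}}$ lies in $\cond{A\otimes N\multimap B}$; this is precisely the left weakening rule $\cond{N}$ participates in.

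First I would dispose of the degenerate case: if either $\cond{A}$ or $\cond{N}$ is empty, then $\cond{A\otimes N}$ is empty, so $\cond{A\otimes N\multimap B}=\cond{T}_{V^{A}\cup V^{N}\cup V^{B}}$ contains every project of that carrier and there is nothing to prove. Assume then both $\cond{A}$ and $\cond{N}$ are non-empty. Since $\cond{A\otimes N}=(\cond{A}\odot\cond{N})^{\pol\pol}$ by definition, and $\cond{A}\odot\cond{N}$ is a non-empty set of projects all of carrier $V^{A}\cup V^{N}$, it suffices by \autoref{propositionethsendcondsend} to verify that $\de{(f\otimes 0)\plug(a\otimes n)}\in\cond{B}$ for every $\de{a}\in\cond{A}$ and every $\de{n}\in\cond{N}$.

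The core step is the computation of $\sca{(f\otimes 0)\plug(a\otimes n)}{b'}$ for an arbitrary $\de{b'}\in\cond{B^{\pol}}$. Reordering the tensor factors using commutativity and associativity of $\otimes$ and $\plug$, expanding the pairing, and using that $\de{n}$ is wager-free (so its wager is $0$) together with the fact that the empty graphing $\de{0}_{V^{N}}$ contributes no cycle to the measurement, the whole expression collapses to $\unit{N}\,\sca{f}{a\otimes b'}$ — the same cancellation pattern as in the perennial case. Since $\unit{N}\neq 0$ by the characterisation of negative conducts, $\sca{(f\otimes 0)\plug(a\otimes n)}{b'}\neq 0,\infty$ if and only if $\sca{f\plug a}{b'}\neq 0,\infty$; and the latter holds because $\de{f}\in\cond{A\multimap B}$ forces $\de{f\plug a}\in\cond{B}$. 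Hence $\de{(f\otimes 0)\plug(a\otimes n)}\in\cond{B}$, and \autoref{propositionethsendcondsend} then delivers $\de{f\otimes 0}\in\cond{A\otimes N\multimap B}$.

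I do not expect any real obstacle here, since the proof is a transcription of the perennial-conduct case. The single point deserving a word of care — and the only place the argument genuinely differs from that earlier proof — is that for a perennial conduct one had $\unit{N}=1$ on the nose, whereas a negative conduct only provides $\unit{N}\neq 0$; since the computation of $\sca{(f\otimes 0)\plug(a\otimes n)}{b'}$ only ever factors out or cancels $\unit{N}$, this weaker fact suffices, and no other change to the argument is required.
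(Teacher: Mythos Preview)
Your proposal is correct and follows essentially the same approach as the paper's own proof: reduce to checking $\de{(f\otimes 0)\plug(a\otimes n)}\in\cond{B}$ via \autoref{propositionethsendcondsend}, compute $\sca{(f\otimes 0)\plug(a\otimes n)}{b'}=\unit{N}\,\sca{f}{a\otimes b'}$ using that $\de{n}$ is wager-free, and conclude using $\unit{N}\neq 0$. Your explicit handling of the degenerate case where $\cond{A}$ or $\cond{N}$ is empty is a small addition the paper leaves implicit.
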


\begin{proof}
Let $\cond{A,B}$ be conducts, $\cond{N}$ be a negative conduct, and pick $\de{f}\in\cond{A\multimap B}$. We will show that $\de{f}\otimes \de{0}_{V^{N}}$ is an element of $\cond{A\otimes N\multimap B}$. For this, we pick $\de{a}\in\cond{A}$ and $\de{n}\in\cond{N}$. Then for all $\de{b'}\in\cond{B^{\pol}}$,
\begin{eqnarray*}
\lefteqn{\sca{(f\otimes 0)\plug(a\otimes n)}{b'}}\\
&=&\sca{f\otimes 0}{(a\otimes n)\otimes b}\\
&=&\sca{f\otimes 0}{(a\otimes b')\otimes n}\\
&=&\unit{F}(\unit{A}\unit{B'}n+\unit{N}\unit{A}b'+\unit{N}\unit{B'}a)+\unit{N}\unit{A}\unit{B'}f+\meas{F\cup 0,A\cup B'\cup N}\\
&=&\unit{F}(\unit{N}\unit{A}b'+\unit{N}\unit{B'}a)+\unit{N}\unit{A}\unit{B'}f+\meas{F\cup 0,A\cup B'\cup N}\\
&=&\unit{N}(\unit{F}(\unit{A}b'+\unit{B'}a)+\unit{A}\unit{B'}f)+\unit{N}\meas{F,A\cup B'}\\
&=&\unit{N}\sca{f}{a\otimes b'}
\end{eqnarray*}
Since $\unit{N}\neq 0$, $\sca{(f\otimes 0)\plug(a\otimes n)}{b'}\neq 0,\infty$ if and only if  $\sca{f\plug a}{b'}\neq0,\infty$. Therefore, for all $\de{a\otimes n}\in\cond{A\odot N}$, $\de{(f\otimes 0)\plug (a\otimes n)}\in\cond{B}$. This shows that $\de{f\otimes 0}$ is an element of $\cond{A\otimes N\multimap B}$ by \autoref{propositionethsendcondsend}.
\end{proof}

\subsection{Sequent Calculus and Soundness}

We now describe a sequent calculus which is much closer to the usual sequent calculus for Elementary Linear Logic. We introduce once again three types of formulas: (B)ehaviors, (P)ositive, (N)egative. The sequents we will be working with will be the equivalent to the notion of pre-sequent introduced earlier.

\begin{definition}
We once again define three types of formulas -- (B)ehavior, (P)ositive, (N)egative -- by the following grammar:
\begin{eqnarray*}
B&:=& X~|~X^{\pol}~|~\cond{0}~|~\cond{T}~|~B\otimes B~|~B\parr B~|~B\oplus B~|~B\with B~|~\forall X~B~|~\exists X~B~|~B\otimes N~|~B\parr P\\
N&:=& \cond{1}~|~\oc B~|~\oc N~|~N\otimes N~|~ N\with N~|~N\oplus N~|~N\parr P\\
P&:=&  \cond{\bot}~|~\wn B~|~\wn P~|~P\parr P~|~ P\with P~|~P\oplus P~|~N\otimes P
\end{eqnarray*}
\end{definition}

\begin{definition}
A sequent $\Delta\pdash \Gamma;\Theta$ is such that $\Delta,\Theta$ contain only negative formulas, $\Theta$ containing at most one formula and $\Gamma$ containing only behaviours.
\end{definition}

\begin{definition}[The System $\textnormal{ELL}_{\textnormal{pol}}$]
A proof in the system $\textnormal{ELL}_{\textnormal{pol}}$ is a derivation tree constructed from the derivation rules shown in \autoref{ellpol}.
\end{definition}

\begin{figure}
\centering
\subfigure[Identity Group]{
\framebox{
\begin{tabular}{c}
\begin{minipage}{5.1cm}
\begin{prooftree}
\AxiomC{}
\RightLabel{\scriptsize{ax}}
\UnaryInfC{$\pdash B^{\pol},B;$}
\end{prooftree}
\end{minipage}
\\~\\
\begin{tabular}{cc}
\begin{minipage}{5.8cm}
\begin{prooftree}
\AxiomC{\!\!$\Delta_{1} \pdash \Gamma_{1};N$\!\!}
\AxiomC{\!\!$\Delta_{2},N \pdash \Gamma_{2};\Theta$\!\!}
\RightLabel{\scriptsize{cut$^{\mathrm{pol}}$}}
\BinaryInfC{$\Delta_{1},\Delta_{2}\pdash \Gamma_{1},\Gamma_{2};\Theta$}
\end{prooftree}
\end{minipage}
&
\begin{minipage}{5.4cm}
\begin{prooftree}
\AxiomC{\!\!$\Delta_{1} \pdash \Gamma_{1},B;\Theta$\!\!}
\AxiomC{\!\!$\Delta_{2} \pdash \Gamma_{2},B^{\pol};$\!\!}
\RightLabel{\scriptsize{cut}}
\BinaryInfC{$\Delta_{1},\Delta_{2}\pdash \Gamma_{1},\Gamma_{2};\Theta$}
\end{prooftree}
\end{minipage}
\end{tabular}
\end{tabular}
}
}
\subfigure[Multiplicative Group]{
\framebox{
\begin{tabular}{cc}
\begin{minipage}{5.6cm}
\begin{prooftree}
\AxiomC{\!\!$\Delta_{1}\pdash \Gamma_{1},B_{1};\Theta$\!\!}
\AxiomC{\!\!$\Delta_{2}\pdash \Gamma_{2},B_{2};$\!\!}
\RightLabel{\scriptsize{$\otimes$}}
\BinaryInfC{$\Delta_{1},\Delta_{2}\pdash\Gamma_{1},\Gamma_{2},B_{1}\otimes B_{2};\Theta$}
\end{prooftree}
\end{minipage}
&
\begin{minipage}{5.95cm}
\begin{prooftree}
\AxiomC{$\Delta \pdash \Gamma,B_{1},B_{2};\Theta$}
\RightLabel{\scriptsize{$\parr$}}
\UnaryInfC{$\Delta\pdash \Gamma,B_{1}\parr B_{2};\Theta$}
\end{prooftree}
\end{minipage}
\\~\\
\begin{minipage}{5.6cm}
\begin{prooftree}
\AxiomC{$\Delta, N_{1},N_{2} \pdash \Gamma;\Theta$}
\RightLabel{\scriptsize{$\otimes^{\mathrm{pol}}_{g}$}}
\UnaryInfC{$\Delta,N_{1}\otimes N_{2}\pdash \Gamma;\Theta$}
\end{prooftree}
\end{minipage}
&
\begin{minipage}{5.95cm}
\begin{prooftree}
\AxiomC{\!$\Delta_{1} \pdash \Gamma_{1};N_{1}$\!}
\AxiomC{\!$\Delta_{2} \pdash \Gamma_{2};N_{2}$\!}
\RightLabel{\scriptsize{$\otimes^{\mathrm{pol}}_{d}$}}
\BinaryInfC{$\Delta_{1},\Delta_{2}\pdash \Gamma_{1},\Gamma_{2};N_{1}\otimes N_{2}$}
\end{prooftree}
\end{minipage}
\\~\\
\begin{minipage}{5.6cm}
\begin{prooftree}
\AxiomC{$\Delta, P_{1}^{\pol} \pdash \Gamma;N_{2}$}
\RightLabel{\scriptsize{$\parr^{\mathrm{pol}}_{d}$}}
\UnaryInfC{$\Delta \pdash \Gamma ; P_{1}\parr N_{2}$}
\end{prooftree}
\end{minipage}
&
\begin{minipage}{5.95cm}
\begin{prooftree}
\AxiomC{\!\!$\Delta_{1} \pdash\Gamma_{1} ;P^{\pol}_{1}$\!\!}
\AxiomC{\!\!$\Delta_{2},N_{2} \pdash\Gamma_{2} ;\Theta$\!\!}
\RightLabel{\scriptsize{$\parr^{\mathrm{pol}}_{g}$}}
\BinaryInfC{$\Delta_{1},\Delta_{2},P_{1}\parr N_{2}\pdash \Gamma_{1},\Gamma_{2} ;\Theta$}
\end{prooftree}
\end{minipage}
\\~\\
\begin{minipage}{5.6cm}
\begin{prooftree}
\AxiomC{$\Delta,P^{\pol} \pdash \Gamma,B;\Theta$}
\RightLabel{\scriptsize{$\parr^{\mathrm{mix}}$}}
\UnaryInfC{$\Delta \pdash \Gamma,P\parr B ; \Theta$}
\end{prooftree}
\end{minipage}
&
\begin{minipage}{5.95cm}
\begin{prooftree}
\AxiomC{\!$\Delta_{1} \pdash \Gamma_{1};N$\!}
\AxiomC{\!$\Delta_{2} \pdash \Gamma_{2},B;\Theta$\!}
\RightLabel{\scriptsize{$\otimes^{\mathrm{mix}}$}}
\BinaryInfC{$\Delta_{1},\Delta_{2} \pdash \Gamma_{1},\Gamma_{2},N\otimes B ; \Theta$}
\end{prooftree}
\end{minipage}
\\~\\
\begin{minipage}{5.6cm}
\begin{prooftree}
\AxiomC{}
\RightLabel{\scriptsize{$\cond{1}_{d}$}}
\UnaryInfC{$\pdash ;\cond{1}$}
\end{prooftree}
\end{minipage}
&
\begin{minipage}{5.95cm}
\begin{prooftree}
\AxiomC{$\Delta \pdash \Gamma;\Theta$}
\RightLabel{\scriptsize{$\cond{1}_{g}$}}
\UnaryInfC{$\Delta, \cond{1} \pdash \Gamma;\Theta$}
\end{prooftree}
\end{minipage}
\end{tabular}
}
}
\subfigure[Additive Group]{
\framebox{
\begin{tabular}{cc}
\begin{minipage}{5.6cm}
\begin{prooftree}
\AxiomC{$\Delta\pdash\Gamma,B_{i};\Theta$}
\RightLabel{\scriptsize{$\oplus_{i}$}}
\UnaryInfC{$\Delta\pdash \Gamma,B_{1}\oplus B_{2};\Theta$}
\end{prooftree}
\end{minipage}
&
\begin{minipage}{5.95cm}
\begin{prooftree}
\AxiomC{$\Delta\pdash \Gamma, B_{1};\Theta$}
\AxiomC{$\Delta\pdash \Gamma, B_{2};\Theta$}
\RightLabel{\scriptsize{$\with$}}
\BinaryInfC{$\Delta\pdash \Gamma, B_{1}\with B_{2};\Theta$}
\end{prooftree}
\end{minipage}
\\~\\
\begin{minipage}{5.6cm}
\begin{prooftree}
\AxiomC{}
\RightLabel{\scriptsize{$\top$}}
\UnaryInfC{$\Delta\pdash \Gamma, \top;\Theta$}
\end{prooftree}
\end{minipage}
&
\begin{minipage}{5.95cm}
\centering
No rules for $0$.
\end{minipage}
\end{tabular}
}
}
\subfigure[Exponential Group]{
\framebox{
\begin{tabular}{cc}
\begin{minipage}{5.6cm}
\begin{prooftree}
\AxiomC{$\Delta\pdash \Gamma;N$}
\RightLabel{\scriptsize{$\oc^{\mathrm{pol}}$}}
\UnaryInfC{$\oc \Delta,\oc \Gamma^{\pol}\pdash ;\oc N$}
\end{prooftree}
\end{minipage}
&
\begin{minipage}{5.95cm}
\begin{prooftree}
\AxiomC{$\Delta \pdash \Gamma,B;$}
\RightLabel{\scriptsize{$\oc$}}
\UnaryInfC{$\oc \Delta,\oc \Gamma^{\pol}\pdash ;\oc B$}
\end{prooftree}
\end{minipage}
\\~\\
\begin{minipage}{5.6cm}
\begin{prooftree}
\AxiomC{$\Delta,\oc B,\oc B\pdash \Gamma;\Theta$}
\RightLabel{\scriptsize{ctr (B\text{ behaviour})}}
\UnaryInfC{$\Delta,\oc B\pdash \Gamma;\Theta$}
\end{prooftree}
\end{minipage}
&
\begin{minipage}{5.95cm}
\begin{prooftree}
\AxiomC{$\Delta\pdash\Gamma;\Theta$}
\RightLabel{\scriptsize{weak}}
\UnaryInfC{$\Delta,N\pdash \Gamma;\Theta$}
\end{prooftree}
\end{minipage}
\end{tabular}
}
}
\subfigure[Quantifier Group]{
\framebox{
\begin{tabular}{c}
\begin{tabular}{cc}
\begin{minipage}{5.6cm}
\begin{prooftree}
\AxiomC{$\Delta\pdash \Gamma,C;\Theta$}
\AxiomC{$X\not\in \freevar{\Gamma,\Delta,\Theta}$}
\RightLabel{\scriptsize{$\forall$}}
\BinaryInfC{$\Delta\pdash\Gamma,\forall X~ C;\Theta$}
\end{prooftree}
\end{minipage}
&
\begin{minipage}{5.6cm}
\begin{prooftree}
\AxiomC{$\Delta\pdash \Gamma,C[A/X];\Theta$}
\RightLabel{\scriptsize{$\exists$}}
\UnaryInfC{$\Delta\pdash \Gamma,\exists X~C;\Theta$}
\end{prooftree}
\end{minipage}
\end{tabular}
\end{tabular}
}
}
\caption{Sequent Calculus $\textnormal{ELL}_{\textnormal{pol}}$}\label{ellpol}
\end{figure}

\begin{remark}
Even though one can consider the conduct $\cond{A\with B}$ when $\cond{A,B}$ are negative conducts, no rule of the sequent calculus $\textnormal{ELL}_{\textnormal{pol}}$ allows one to construct such a formula. The reason for that is simple: since in this case the set $\cond{A+B}$ is not necessarily included in the conduct $\cond{A\with B}$, one cannot interpret the rule in general (since distributivity does not necessarily holds). The latter can be interpreted when the context contains at least one behaviour, but imposing such a condition on the rule could lead to difficulties when considering the cut-elimination procedure (in case of commutations). We therefore whose to work with a system in which one introduces additive connectives only between behaviours. Notice however that a formula built with an additive connective between negative sub-formulas can still be introduced by a weakening rule.
\end{remark}

It is quite clear that the obtained system is a subsystem of a decorated version of linear logic. As such it inherits linear logic cut-elimination procedure and its properties. In particular, let us recall that proof nets for linear logic were shown to have the strong normalisation property \cite{PaganiSN}. As a consequence, linear logic sequent calculus (and the subsystem considered here) is not only weakly normalising, but satisfies some stronger property. Indeed, thinking about proof nets as equivalence classes of sequent calculus proofs modulo commutation rules, we deduce that any non-normalising sequence of reduction in the sequent calculus gives rise to an eventually constant sequence of reduction on the proof net side. i.e.\ a non-terminating sequence of reduction is a finite sequence of reduction followed by a sequence of commutation rules only. For the purpose of stating the next theorem, let us call this property \emph{strong normalisation up to commutations}.

\begin{proposition}
The system $\textnormal{ELL}_{\textnormal{pol}}$ possesses a cut-elimination procedure which is strongly normalising up to commutations.
\end{proposition}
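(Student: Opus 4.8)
The plan is to run the standard Gentzen-style argument. First I would fix a set of cut-reduction rules on $\textnormal{ELL}_{\textnormal{pol}}$ derivations, check that every reduction preserves the conclusion sequent together with all the well-formedness conditions of \autoref{ellpol} --- polarities of formulas, the restriction of the $\mathrm{ctr}$ rule to exponentials of behaviors, of $\mathrm{weak}$ to negative formulas, and the constraints on the shape of sequents --- and then exhibit a terminating measure, from which weak normalization to a cut-free proof (which is all the statement demands) follows. As always the reductions split into \emph{key} steps, where the cut formula is principal in the last rule of both premises, and \emph{commutative} steps, where it is a side formula of the last rule of one premise and the cut is permuted above that rule. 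The only mild overhead compared with ordinary linear logic is that there are two cut rules, $\mathrm{cut}$ on behaviors and $\mathrm{cut}^{pol}$ on negative formulas, but the two case analyses mirror each other.

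For the key steps: an axiom cut collapses to the other premise's derivation; $\otimes$ against $\parr$, and likewise the polarized and mixed variants ($\otimes^{pol}_{d}$ against $\otimes^{pol}_{g}$, $\parr^{pol}_{d}$ against $\parr^{pol}_{g}$, $\otimes^{mix}$ against $\parr^{mix}$, $\cond{1}_{d}$ against $\cond{1}_{g}$), unfold into one or two cuts on the immediate sub-formulas; $\oplus_{i}$ against $\with$ keeps the $i$-th premise of the $\with$ rule and discards the other --- legal because the two $\with$ premises share their context --- and again yields a strictly smaller cut; $\exists$ against $\forall$ substitutes the witness formula for the quantified variable throughout the sub-derivation above the $\forall$ rule, the eigenvariable condition $X\not\in\freevar{\Gamma,\Delta,\Theta}$ guaranteeing the substituted derivation is still legal. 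The exponential key step is a $\mathrm{cut}^{pol}$ whose producer premise ends in a promotion ($\oc$ or $\oc^{pol}$), presenting the cut formula $\oc B$ (resp. $\oc N$) as a box over a waved context $\oc\Delta,\oc\Gamma^{\pol}$: since there is no dereliction rule, on the consumer side $\oc B$ can only be handled structurally, and the box is propagated accordingly --- erased and its context re-introduced by weakenings when $\oc B$ is weakened, duplicated into two cuts when $\oc B$ is contracted (which forces $B$ to be a behavior), and permuted inside the two boxes, producing a box over a cut, when $\oc B$ is itself re-promoted below. One also notes that the phenomenon flagged in the remark after \autoref{ellpol} cannot bite: no rule of $\textnormal{ELL}_{\textnormal{pol}}$ builds a $\with$ or an $\oplus$ of negative formulas, so such a connective is never principal and the corresponding problematic commutations never arise.

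For termination I would use the usual two-level argument: work on a cut of maximal degree $\delta$ (size of the cut formula) that is uppermost among cuts of degree $\delta$, so that every cut strictly above it has degree $<\delta$; if it is a key step it is replaced by cuts of strictly smaller degree (possibly duplicating such smaller cuts), decreasing the number of degree-$\delta$ cuts, and if it is a commutative step an inner induction on the height of the sub-derivation above the cut shows the process cannot loop. I expect the single real obstacle, and the point where the syntactic discipline of the calculus is actually used, to be the exponential key step in which a contraction on the consumer side duplicates an entire promotion box: this also duplicates the box's waved context $\oc\Delta,\oc\Gamma^{\pol}$, and although the behavior part $\oc\Gamma^{\pol}$ is contracted back by the $\mathrm{ctr}$ rule, the part $\oc\Delta$ consisting of exponentials of negative formulas is not directly contractible, so one has to check carefully that it is re-absorbed, by permuting the resulting cuts further upward and exploiting exactly the restrictions placed on the promotion and structural rules. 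Underlying all of this --- and the reason a terminating measure exists at all --- is that $\textnormal{ELL}_{\textnormal{pol}}$ has neither dereliction nor digging, so reduction never increases the nesting depth of promotion boxes; a finer version of the same measure, in the style of Danos and Joinet, would in addition yield the elementary time bound, though that is not needed for the present statement.
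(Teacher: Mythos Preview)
Your approach matches the paper's: the paper gives no proof beyond the single line ``obtained easily by standard proof techniques,'' so your Gentzen-style sketch is precisely the intended argument, only spelled out in far more detail than the author provides. The one point you flag as the ``single real obstacle'' --- that duplicating a promotion box on a contraction step also duplicates the context $\oc\Delta$ of exponentials of \emph{negative} formulas, which the $\mathrm{ctr}$ rule (restricted to $\oc B$ with $B$ a behavior) cannot directly re-contract --- is a genuine subtlety of this particular calculus that the paper does not address, and your proposal to resolve it by permuting cuts upward is plausible but would need to be written out in full to be convincing; since the paper offers nothing here, you are already ahead of it.
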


We now define the interpretation of the formulas and proofs of the localised sequent calculus in the model $\vaguemodel{\Omega}{mi}{m}$.

\begin{definition}
We fix $\mathcal{V}=\{X_{i}(j)\}_{i,j\in\naturalN\times\integerN}$ a set of \emph{localised variables}. For $i\in\naturalN$, the set $X_{i}=\{X_{i}(j)\}_{j\in\integerN}$ will be referred to as \emph{the name of the variable $X_{i}$}, and an element of $X_{i}$ will be referred to as a \emph{variable of name $X_{i}$}.
\end{definition}
For $i,j\in\naturalN\times\integerN$ we define the \emph{location} $\sharp X_{i}(j)$ of the variable $X_{i}(j)$ as the set $$\{x\in\realN~|~ 2^{i}(2j+1)\leqslant x< 2^{i}(2j+1)+1\}$$

\begin{definition}[Formulas of $\textnormal{locELL}_{\textnormal{pol}}$]
We inductively define the formulas of $\textnormal{locELL}_{\textnormal{pol}}$ together with their \emph{locations} as follows:
\begin{itemize}[noitemsep,nolistsep]
\item \textbf{Behaviours}:
\begin{itemize}[noitemsep,nolistsep]
\item A variable $X_{i}(j)$ of name $X_{i}$ is a behaviour whose location is defined as $\sharp X_{i}(j)$;
\item If $X_{i}(j)$ is a variable of name $X_{i}$, then $(X_{i}(j))^{\pol}$ is a behaviour of location $\sharp X_{i}(j)$.
\item The constants $\cond{T}_{\sharp \Gamma}$ are behaviours of location $\sharp\Gamma$;
\item The constants $\cond{0}_{\sharp\Gamma}$ are behaviours of location $\sharp\Gamma$.
\item If $A,B$ are behaviours of respective locations $X,Y$ such that $X\cap Y=\emptyset$, then $A\otimes B$ (resp. $A\parr B$, resp. $A\with B$, resp. $A\oplus B$) is a behaviour of location $X\cup Y$;
\item If $X_{i}$ is a variable name, and $A(X_{i})$ is a behaviour of location $\sharp A$, then $\forall X_{i}~A(X_{i})$ and $\exists X_{i}~A(X_{i})$ are behaviours of location $\sharp A$.
\item If $A$ is a negative conduct of location $X$ and $B$ is a behaviour of location $Y$ such that $X\cap Y=\emptyset$, then $A\otimes B$is a behaviour of location $X\cup Y$;
\item If $A$ is a positive conduct of location $X$ and $B$ is a behaviour of location $Y$ such that $X\cap Y=\emptyset$, then $A\parr B$ is a behaviour of location $X\cup Y$;
\end{itemize}
\item \textbf{Negative Conducts}:
\begin{itemize}[noitemsep,nolistsep]
\item The constant $\cond{1}$ is a negative conduct;
\item If $A$ is a behaviour or a negative conduct of location $X$, then $\oc A$ is a negative conduct of location $\Omega(X\times[0,1])$;
\item If $A,B$ are negative conducts of locations $X,Y$ such that $X\cap Y=\emptyset$, then $A\otimes B$ (resp. $A\oplus B$, resp. $A\with B$) is a negative conduct of location $X\cup Y$;
\item If $A$ is a negative conduct of location $X$ and $B$  is a positive conduct of location $Y$, $A\parr B$ is a negative conduct of location $X\cup Y$.
\end{itemize}
\item \textbf{Positive Conducts}:
\begin{itemize}[noitemsep,nolistsep]
\item The constant $\cond{\bot}$ is a positive conduct;
\item If $A$ is a behaviour or a positive conduct of location $X$, then $\wn A$ is a positive conduct of location $\Omega(X\times[0,1])$;
\item If $A,B$ are positive conducts of locations  $X,Y$ such that $X\cap Y=\emptyset$, then $A\parr B$ (resp. $A\with B$, resp. $A\oplus B$) is a positive conduct of location $X\cup Y$;
\item If $A$ is a negative conduct of location $X$ and $B$ is a positive conduct of location $Y$, $A\otimes B$ is a positive conduct of location $X\cup Y$.
\end{itemize}
\end{itemize}
If $A$ is a formula, we will denote by $\sharp A$ its location. We also define sequents $\Delta\pdash \Gamma;\Theta$ of $\textnormal{locELL}_{\textnormal{pol}}$ when:
\begin{itemize}[noitemsep,nolistsep]
\item formulas in $\Gamma\cup\Delta\cup\Theta$ have pairwise disjoint locations;
\item formulas in $\Delta$ and $\Theta$ are negative conducts;
\item there is at most one formula in $\Theta$;
\item $\Gamma$ contains only behaviours.
\end{itemize}
\end{definition}

\begin{definition}[Interpretations]
We define an \emph{interpretation basis} as a function $\Phi$  which maps every variable name $X_{i}$ to a behaviour of carrier $[0,1[$.
\end{definition}

\begin{definition}[Interpretation of $\textnormal{locELL}_{\textnormal{pol}}$ formulas]
Let $\Phi$ be an interpretation basis. We define the interpretation $I_{\Phi}(F)$ along $\Phi$ of a formula $F$ inductively:
\begin{itemize}[noitemsep,nolistsep]
\item If $F=X_{i}(j)$, then $I_{\Phi}(F)$ is the delocation (i.e.\ a behaviour) of $\Phi(X_{i})$ along the function $x\mapsto 2^{i}(2j+1)+x$;
\item If $F=(X_{i}(j))^{\pol}$, we define the behaviour $I_{\Phi}(F)=(I_{\Phi}(X_{i}(j)))^{\pol}$;
\item If $F=\cond{T}_{\sharp\Gamma}$ (resp. $F=\cond{0}_{\sharp\Gamma}$), we define $I_{\Phi}(F)$ as the behaviour $\cond{T}_{\sharp\Gamma}$ (resp. $\cond{0}_{\sharp\Gamma}$);
\item If $F=\cond{1}$ (resp. $F=\cond{\bot}$), we define $I_{\Phi}(F)$ as the behaviour $\cond{1}$ (resp. $\cond{\bot}$);
\item If $F=A\otimes B$, we define the conduct $I_{\Phi}(F)=I_{\Phi}(A)\otimes I_{\Phi}(B)$;
\item If $F=A\parr B$, we define the conduct $I_{\Phi}(F)=I_{\Phi}(A)\parr I_{\Phi}(B)$;
\item If $F=A\oplus B$, we define the conduct $I_{\Phi}(F)=I_{\Phi}(A)\oplus I_{\Phi}(B)$;
\item If $F=A\with B$, we define the conduct $I_{\Phi}(F)=I_{\Phi}(A)\with I_{\Phi}(B)$;
\item If $F=\forall X_{i} A(X_{i})$, we define the conduct $I_{\Phi}(F)=\cond{\forall X_{i}} I_{\Phi}(A(X_{i}))$;
\item If $F=\exists X_{i} A(X_{i})$, we define the conduct $I_{\Phi}(F)=\cond{\exists X_{i}} I_{\Phi}(A(X_{i}))$.
\item If $F=\oc A$ (resp. $\wn A$), we define the conduct $I_{\Phi}(F)=\oc I_{\Phi}(A)$ (resp. $\wn I_{\Phi}(A)$).
\end{itemize}
Moreover a sequent $\Delta\vdash \Gamma;\Theta$ will be interpreted as the $\parr$ of the formulas in $\Gamma$ and $\Theta$ and the negations of formulas in $\Delta$, which we will write $\bigparr\Delta^{\pol}\parr\bigparr \Gamma\parr\bigparr\Theta$. We will also represent this formula by the equivalent formula $\bigotimes\Delta\multimap (\bigparr\Gamma\parr\bigparr\Theta)$.
\end{definition}

\begin{definition}[Interpretation of $\textnormal{locELL}_{\textnormal{pol}}$ proofs]\label{interpretationpreuvesellpol}
Let $\Phi$ be an interpretation basis. We define the interpretation $I_{\Phi}(\pi)$ -- a project -- of a proof $\pi$ inductively:
\begin{itemize}[noitemsep,nolistsep]
\item if $\pi$ consists in an axiom rule introducing $\vdash (X_{i}(j))^{\pol},X_{i}(j')$, we define $I_{\Phi}(\pi)$ as the project $\de{Fax}$ defined by the translation $x \mapsto 2^{i}(2j'-2j)+x$;
\item if $\pi$ consists solely in a $\cond{T}_{\sharp \Gamma}$ rule, we define $I_{\Phi}(\pi)=\de{0}_{\sharp\Gamma}$;
\item if $\pi$ consists solely in a $\cond{1}_{d}$ rule, we define $I_{\Phi}(\pi)=\de{0}_{\emptyset}$;
\item if $\pi$ is obtained from $\pi'$ by a $\parr$ rule, a $\otimes_{g}^{\mathrm{pol}}$ rule, a $\parr_{d}^{\mathrm{pol}}$ rule, a $\parr^{\mathrm{mix}}$ rule, or a $\cond{1}_{g}$ rule, then $I_{\Phi}(\pi)=I_{\Phi}(\pi')$;
\item if $\pi$ is obtained from $\pi_{1}$ and $\pi_{2}$ by applying a $\otimes$ rule, a $\otimes^{\mathrm{pol}}_{d}$ rule,  a $\parr_{g}^{\mathrm{pol}}$ rule or a $\otimes^{\mathrm{mix}}$ rule, we define $I_{\Phi}(\pi)=I_{\Phi}(\pi_{1})\otimes I_{\Phi}(\pi')$;
\item if $\pi$ is obtained from $\pi'$ by a $\text{weak}$ rule or a $\oplus_{i}$ rule introducing a formula of location $V$, we define $I_{\Phi}(\pi)=I_{\Phi}(\pi')\otimes\de{0}_{V}$;
\item if $\pi$ of conclusion $\vdash \Gamma, A_{0}\with A_{1}$ is obtained from $\pi_{0}$ and $\pi_{1}$ by applying a $\with$ rule, we define the interpretation of $\pi$ as it was done in our earlier paper \cite{seiller-goiadd}: ;
\item If $\pi$ is obtained from a $\forall$ rule applied to a derivation $\pi'$, we define $I_{\Phi}(\pi)=I_{\Phi}(\pi')$;
\item If $\pi$ is obtained from a $\exists$ rule applied to a derivation $\pi'$ replacing the formula $\cond{A}$ by the variable name $X_{i}$, we define $I_{\Phi}(\pi)=I_{\Phi}(\pi')\plug (\bigotimes [e^{-1}(j)\leftrightarrow X_{i}(j)])$, using the notations of our previous paper \cite{seiller-goig} for the \emph{measure-inflating faxes} $[e^{-1}(j)\leftrightarrow X_{i}(j)]$ where $e$ is an enumeration of the occurrences of $\cond{A}$ in $\pi'$;
\item if $\pi$ is obtained from $\pi'$ by applying a promotion rule $\oc$ or $\oc^{\mathrm{pol}}$, we apply the implementation of the functorial promotion rule to the project $\oc I_{\Phi}(\pi')$ $n-1$ times, where $n$ is the number of formulas in the sequent;
\item if $\pi$ is obtained from $\pi$ by applying a contraction rule $ctr$, we define the interpretation of $\pi$ as the execution between the interpretation of $\pi'$ and the project implementing contraction described in \autoref{contractiongen};
\item if $\pi$ is obtained from $\pi_{1}$ and $\pi_{2}$ by applying a $cut$ rule or a $cut^{\mathrm{pol}}$ rule, we define $I_{\Phi}(\pi)=I_{\Phi}(\pi_{1})\exec I_{\Phi}(\pi_{2})$.
\end{itemize}
\end{definition}

Once again, one can choose an enumeration $e$ of the occurrences of variables in order to ``localise'' any formula $A$ and any proof $\pi$ of $\textnormal{ELL}_{\textnormal{pol}}$: and define formulas $A^{e}$ and proofs $\pi^{e}$ of $\textnormal{locELL}_{\textnormal{pol}}$. One easily shows a soundness result for the localised calculus $\textnormal{locELL}_{\textnormal{pol}}$ which implies the following result.

\begin{theorem}\label{adeqforteellpol}
Let $\Phi$ be an interpretation basis, $\pi$ a proof of $\textnormal{ELL}_{\textnormal{pol}}$ of conclusion $\Delta\pdash \Gamma;\Theta$, and $e$ an enumeration of the occurrences of variables in the axioms of $\pi$. Then $I_{\Phi}(\pi^{e})$ is a successful project in $I_{\Phi}(\Delta^{e}\vdash \Gamma^{e};\Theta^{e})$.
\end{theorem}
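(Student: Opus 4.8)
The plan is to follow the same two-step strategy already used for the full $\textnormal{ELL}_{\textnormal{comp}}$ soundness theorem: first prove a soundness statement for the localized calculus $\textnormal{locELL}_{\textnormal{pol}}$ by induction on derivations, then deduce the general statement by localization. For the localization step one fixes the enumeration $e$ of the occurrences of variables appearing in the axioms of $\pi$ and reads off locations for every formula of $\pi$ exactly as prescribed by the definition of $\textnormal{locELL}_{\textnormal{pol}}$ formulas; the resulting $\pi^{e}$ is a legal $\textnormal{locELL}_{\textnormal{pol}}$ derivation whose conclusion $\Delta^{e}\vdash\Gamma^{e};\Theta^{e}$ interprets the same conduct as $\Delta\vdash\Gamma;\Theta$ up to a delocation, so a successful project for $\pi^{e}$ transports to one for $\pi$. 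This reduction is routine and identical in spirit to the argument already given for $\textnormal{ELL}_{\textnormal{comp}}$.

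The core is then the induction showing that, for every $\textnormal{locELL}_{\textnormal{pol}}$ derivation $\rho$ of conclusion $\Delta\vdash\Gamma;\Theta$, the project $I_{\Phi}(\rho)$ is successful and belongs to the conduct $\bigotimes\Delta\multimap(\bigparr\Gamma\parr\bigparr\Theta)$. The base cases ($\de{Fax}$ for the axiom, $\de{0}_{\sharp\Gamma}$ for the $\cond{T}$ rule, $\de{0}_{\emptyset}$ for $\cond{1}_{d}$) are successful by inspection — each is a disjoint union of transpositions or the empty graphing — and they land in the required conduct by the definitions of the connectives. The purely multiplicative-additive rules ($\otimes$, $\parr$, $\oplus_{i}$, $\with$, $\forall$, $\exists$) were already shown sound in \cite{seiller-goia,seiller-goig}; one re-uses those arguments together with the observations that the tensor of two successful projects is successful, that the execution of successful projects is successful, and the Consistency and Compositionnality propositions for the cut and $\text{cut}^{pol}$ rules. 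The polarity side-conditions are maintained along the way by the closure propositions of \autoref{polarise}: the tensor, $\with$ and $\oplus$ of negative conducts are negative, the $\parr$, $\with$ and $\oplus$ of positive conducts are positive, $\cond{A}\otimes\cond{B}$ is a behavior when $\cond{A}$ is negative and $\cond{B}$ is a behavior, $\cond{A}\otimes\cond{B}$ is positive when $\cond{A}$ is positive and $\cond{B}$ negative, and weakening on the left of a negative conduct holds; these guarantee that at every node of $\rho$ the interpreted sequent is of the right shape (negative conducts in $\Delta$ and $\Theta$, behaviors in $\Gamma$).

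For the exponential rules one invokes the two main constructions of \autoref{sec_Exp}: the project $\de{prom}$ implementing functorial promotion (\autoref{promfonct}), which is successful and, crucially, imposes no restriction on the polarity of the conducts involved, so iterating it as many times as there are formulas in the sequent turns $\oc I_{\Phi}(\pi')$ into a project of the conclusion of the $\oc$ (or $\oc^{pol}$) rule; and the contraction project $\de{Ctr}^{\psi}_{\phi}$ of \autoref{contractiongen}, which is successful, and whose inflation argument applies precisely because the $ctr$ rule of \autoref{ellpol} requires $B$ in $\oc B$ to be a behavior so that the behavior needed in \autoref{contractiongen} is present. The $\text{weak}$ rule multiplies the interpretation by $\de{0}_{V}$, which keeps it successful and, by the weakening-on-the-left-of-negative-conducts proposition, lands it in the right conduct.

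I expect the main obstacle to be the bookkeeping for the genuinely polarized multiplicative rules — $\otimes^{pol}_{d}$, $\otimes^{pol}_{g}$, $\parr^{pol}_{d}$, $\parr^{pol}_{g}$, $\otimes^{mix}$, $\parr^{mix}$ — which shuffle formulas among the $\Gamma$, $\Delta$ and $\Theta$ zones: one has to check that the interpretation (a tensor, or an unchanged project) realizes the conduct interpreting the conclusion, which amounts to unfolding $\bigotimes\Delta\multimap(\bigparr\Gamma\parr\bigparr\Theta)$ as a nested implication and matching it with the corresponding decomposition for the premise(s), using the $\ast$-autonomous structure and the polarity-closure lemmas to see that the two conducts coincide. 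The promotion rule $\oc^{pol}$, where $\Theta$ carries a negative conduct $N$ rather than a behavior, is the delicate point: one must check that the iterated functorial promotion produces a project of $\oc N$ itself, and this is exactly where it matters that \autoref{promfonct} is stated for arbitrary conducts $\cond{A},\cond{B}$ rather than only behaviors.
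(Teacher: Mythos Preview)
Your proposal is correct and follows exactly the approach the paper intends: the paper itself gives no proof at all beyond the sentence ``One easily shows a soundness result for the localized calculus $\textnormal{locELL}_{\textnormal{pol}}$ which implies the following result'', so your induction-on-derivations sketch together with the localization step is precisely what is implicit there.

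One small point deserves correction. For the contraction rule you invoke \autoref{contractiongen} and justify it by saying that ``$B$ in $\oc B$ is a behavior so that the behavior needed in \autoref{contractiongen} is present''. That is not how \autoref{contractiongen} works: the behavior $\cond{C}$ required there is the \emph{context}, not the formula being contracted, and in $\textnormal{ELL}_{\textnormal{pol}}$ the context $\Gamma$ may well be empty (unlike in $\textnormal{ELL}_{\textnormal{comp}}$). The result that actually handles contraction in $\textnormal{ELL}_{\textnormal{pol}}$ without that restriction is \autoref{contractionpresquegen}, whose hypothesis is exactly that the formula $B$ under $\oc$ is a behavior; its proof goes through \autoref{expiso} and is the whole point of introducing the connectives $\tilde{\with},\tilde{\oplus}$ in \autoref{polarise}. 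So your reasoning is right in spirit (the side-condition ``$B$ Behavior'' on the $ctr$ rule is what makes it go through), but the proposition to cite is \autoref{contractionpresquegen}, not \autoref{contractiongen}.
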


\section{Conclusion and Perspectives}

In this paper, we extended the setting of Interaction Graphs in order to deal with all connectives of linear logic. We showed how one can obtain a soundness result for two versions of Elementary Linear Logic. The first system, which is conceived so that the interpretation of sequents are behaviours, seems to lack expressivity and it may appear that elementary functions cannot be typed in this system. The second system is however very close to usual \ELL sequent calculus and, even though one we do not provide a formal proof of it here, should lead to a characterisation of elementary functions from natural numbers to natural numbers, as it is the case with traditional Elementary Linear Logic \cite{danosjoinet}. 

Though the generalisation from graphs to graphings may seem a big effort, we believe the resulting framework to be extremely interesting. We should stress that with little work on the definition of exponentials, one should be able to show that interpretations of proofs can be described by finite means. Indeed, the only operation that seems to turn an interval into an infinite number of intervals is the promotion rule. One should however be able to show that, up to a suitable delocation, the promotion of a project defined on a finite number of rational intervals is defined on a finite number of rational intervals. 


More generally, now that this framework has been defined and that we have shown its interest by providing a construction for \emph{elementary exponentials}, we believe the definition and study of other exponential connectives may be a work of great interest. First, these new exponentials would co-exist with each other, making it possible to study their interactions. Secondly, even if the definition of exponentials for full linear logic may be a complicated task, the definition of low-complexity exponentials may be of great interest.

Finally, we explained in our previous paper how the systematic construction of models of linear logic based on graphings \cite{seiller-goig} give rise to a hierarchy of models mirroring subtle distinctions concerning computational principles. In particular, it gives rise to a hierarchy of models characterising complexity classes \cite{seiller-lcc14, seiller-goinda} by adapting results obtained using operator theory \cite{seiller-conl,seiller-lsp,seiller-masas}. The present work could lead to characterisations of larger complexity classes such as \textbf{Ptime} or \textbf{Exptime} predicates and/or functions, following the work of Baillot \cite{baillot}.


\bibliographystyle{alpha}
\bibliography{thomas}

\end{document}